\newcommand\reallywidehat[1]{%
\savestack{\tmpbox}{\stretchto{%
  \scaleto{%
    \scalerel*[\widthof{\ensuremath{#1}}]{\kern.1pt\mathchar"0362\kern.1pt}%
    {\rule{0ex}{\textheight}}
  }{\textheight}%
}{2.4ex}}%
\stackon[-6.9pt]{#1}{\tmpbox}%
}
\newcommand{\Diag}{\textsf{Diag}}
\newcommand{\Sq}{\textsf{Sq}}
\newcommand{\Cof}{\textsf{Cof}}
\newcommand{\txtis}{\textsf{is}}
\newcommand{\Path}{\textsf{Path}}
\newcommand{\via}{\textsf{via}}
\newcommand{\hetPath}{\textsf{hetPath}}
\newcommand{\hetvia}{\textsf{hetvia}}
\newcommand{\app}{\textsf{app}}
\newcommand{\tapp}{\widetilde{\textsf{app}}}
\renewcommand{\min}{\textsf{min}}
\renewcommand{\max}{\textsf{max}}
\newcommand{\pathres}{\textsf{pathres}}
\newcommand{\Fill}{\textsf{Fill}}
\renewcommand{\HIso}{\textsf{HIso}}
\newcommand{\GenHIso}{\textsf{GenHIso}}
\newcommand{\HIsoExt}{\textsf{HIsoExt}}
\newcommand{\El}{\textsf{El}}
\newcommand{\tEl}{\widetilde{\El}}
\renewcommand{\Unit}{\mathsf{Unit}}
\renewcommand{\Id}{\mathsf{Id}}
\newcommand{\src}{\textsf{src}}
\newcommand{\tgt}{\textsf{tgt}}
\renewcommand{\trv}{\textsf{trv}}
\newcommand{\Ext}{\textsf{Ext}}
\newcommand{\Glue}{\textsf{Glue}}
\newcommand{\glue}{\textsf{glue}}
\newcommand{\unglue}{\textsf{unglue}}
\newcommand{\Fib}{\textsf{Fib}}
\renewcommand{\ev}{\textsf{ev}}
\renewcommand{\const}{\textsf{const}}
\newcommand{\isConst}{\textsf{isConst}}
\renewcommand{\id}{\textsf{id}}
\newcommand{\hP}{\widehat{P}}
\newcommand\GenGlueProb{\textsf{GenGlueProb}}
\title{Yet another cubical type theory, but via a semantic approach}
\author{Krzysztof Kapulkin \and Yufeng Li}
\begin{abstract}
  We propose a new cubical type theory, termed (self-deprecatingly) the naive cubical type theory, and study its semantics using the universe category framework, which is similar to Uemura's categories with representable morphisms.
  In particular, we show that this new type theory admits an interpretation in a wide variety of settings, including simplicial sets and cartesian cubical sets.
\end{abstract}
\begin{document}

\maketitle

\section*{Introduction}

The univalence axiom, proposed by Voevodsky around 2009, was a major breakthrough in dependent type theory, as it proposed a completely different approach to foundations of mathematics, based on homotopy theory.
The resulting system, called homotopy type theory, has since attracted significant attention from both mathematical and computer science communities.
Proposing an axiom, however, meant that the computational content of the resulting type theory may no longer be taken for granted, and, indeed, Voevodsky's homotopy canonicity conjecture, asserting, loosely speaking, that the univalence axiom ``computes,'' became the central open problem in the field.
The first solution and a major breakthrough in this enterprise was the introduction of cubical type theory \cite{cchm15}, an extension of homotopy type theory with strong computational properties \cite{chs22} in which the univalence axiom is simply a theorem.

At present, cubical type theories, cubical proof assistants, and cubical libraries abound; examples include: \cite{cchm15, op16, ang+21, bir+16, ahh18, ch21, ch19, akgb24, ak18, vma19, achhs18, lops18}.
This list is, of course, far from exhaustive and we apologize to those whose contributions were omitted.
With a few notable exceptions, these developments are primarily driven by syntactic considerations, namely providing a computational interpretation of the univalence axiom.
In terms of semantics, these type theories can be interpreted in a variety of models based on cubical sets, including the original model in de Morgan cubical sets of \cite{cchm15}, and the cartesian cubical sets of \cite{ang+21}.
\citeauthor{op16} \cite{op16}, subsequently joined by Licata and Spitters \cite{lops18}, abstracted the key elements of the original model of cubical type theory in the category of de Morgan cubical sets to give a set of axioms required of a topos to admit an interpretation of cubical type theory.

One notable omission in all of these developments is an abstract definition of a model of cubical type theory, which we aim to rectify in the present work.
To this end, we propose yet another cubical type theory via the framework of universe categories, which closely resembles that of Uemura's categories with representable morphisms, and study its semantics.
We call it, self-deprecatingly, the naive cubical type theory.
At its core, the naive cubical type theory closely resembles the CCHM cubical type theory, but with three notable exceptions: first, it contains more cofibrations, second, our treatment of the filling operation is different, and third, we require a strong homotopy equivalence extension operation.

Regarding the first point, all sections of fibrations are cofibrations in our approach.
This can be justified semantically, since a large class of presheaf models, such as simplicial sets, satisfy this stronger condition.
As for our treatment of filling, it reverses the now-standard approach of first defining filling and then declaring the fibrations to be those maps that admit the filling operation.
In our framework, fibrations come first, and filling is a structure that needs to be selected afterwards.
Combined, these allow us to prove that path types satisfy the definitional computation rules, making them the identity types.
This differs from the treatment of \cite{op16,cchm15}, where path types can be used to construct identity types (by equipping with the data of a cofibration encoding where the path is constant), but are not themselves the identity types.

Finally, we address our strengthening of the equivalence extension property.
In \cite{cchm15} and other approaches, one typically extends a weak equivalence along a cofibration, whereas we require that homotopy equivalences (including all homotopy data) extend in this way.
There are numerous benefits to this approach.
First, it allows us to prove a corresponding strengthening of the univalence axiom, introduced in \cite{axm-univalence}, called \emph{pointed univalence}, whereas the framework of \cite{cchm15} only allows one to prove what is called there \emph{book univalence}.
In pointed univalence, one has that the inverse of the canonical map from the identity type to the type of equivalences takes the identity equivalence to the reflexivity term.
As explained in \cite{axm-univalence}, pointedness axiomatizes the fact that the map selecting the trivial equivalence is in the left class generated by fibrations.
And as the left class of fibrations generally axiomatizes pattern matching, this strengthening suggests the possibility of using pattern matching on equivalences, making it particularly user friendly.

We propose two versions of naive cubical type theory: one in which the filling operation satisfies the computation rule up to definitional equality, and one in which the rule holds only propositionally.
All of our models are models of the latter, but we do not know whether they also satisfy the stronger former condition.
That said, the lack of definitional computation for filling does not affect homotopy canonicity of the resulting type theory \cite{chs22}.

Our approach to models is based on the notion of a Quillen model category \cite{qui67,hov99}, and we provide a fairly general approach of turning Cisinski model structures \cite{cis06} on presheaf categories over elegant Reedy categories \cite{bergner-rezk} into models of naive cubical type theory.
We instantiate our approach with a model in simplicial sets and two cubical models, building on the work of Cisinski: minimal cubical sets and cubical sets with connections.
All of these have been thoroughly studied in classical homotopy theory \cite{gj99,kan:abstract-homotopy,kan:abstract-homotopy-ii,cis06,cis14,mal09,dkls24,carranza-kapulkin:homotopy-groups-of-cubical-sets}.
Finally, building on the work of Awodey, we show that the category of cartesian cubical sets \cite{awo23b} also admits a model of naive cubical type theory.

Our approach to semantics differs from the seminal work of \citeauthor{op16} \cite{op16}, and their collaborators \cite{lops18}, although many of our models are subsumed by theirs.
First off, while \cite{op16,lops18} work in the internal language of a topos, our work is more explicit in dealing with the categorical structure.
Second, we start off from a general definition of a model before turning our attention to toposes specifically.
Third, we more prominently rely on the use of Quillen model categories, which in \cite{op16,lops18} only feature in the background.
Lastly, the present paper owes a great deal to the work of \citeauthor{sat17} \cite{sat17} on the equivalence extension property, which we describe in \cref{subsec:wk-eqv-ext}.

For us, an important reason to consider such a semantic approach to cubical type theory is to extend the definition of Morita equivalence to cubical type theories.
Let us recall here informally that two dependent type theories are Morita equivalent if their categories of models are Quillen equivalent as left semi-model categories.
The notion of Morita equivalence has proven quite fruitful in investigating the expressive power of different type theories, e.g., the well-known result of Hofmann on conservativity of extensional type theory over intensional type theory with function extensionality and uniqueness of identity proofs can be strengthened to saying that the two are in fact Morita equivalent.
With that, it is natural to ask whether different cubical type theories are Morita equivalent, or, as posed in \cite[Section 8]{uem23}, whether they are Morita equivalent to homotopy type theory.
The first step in such an enterprise must inevitably be understanding what is an abstract model of cubical type theory and what structure the category of all such models possesses, with the former question being explicitly addressed in the present paper.

This paper is organized as follows.
We begin in \cref{sec:background}, by reviewing the necessary background on polynomial functors and universe models of dependent type theory.
Then, in \cref{sec:univ-axioms}, we extend that framework to provide a universe category axiomatization of cubical type theory (with the exception of the computation rules) up to definitional or propositional equality for the filling operation which is discussed in \cref{sec:filling-computation}.
We then turn our attention to the homotopy isomorphism extension operation (or the gluing operation, in the language of \cite{cchm15}), which we address in \cref{subsec:hiso-ext-op}.
In \cref{sec:naive-ctt}, we formulate the naive cubical type theory in the regime of universe categories and describe how it compares to the work of \cite{cchm15,op16}.
We briefly describe the equivalence extension property in \cref{sec:equiv-extension} before concluding, in \cref{sec:presheaf-models}, with a discussion of presheaf models.


\tableofcontents

\section{Background}\label{sec:background}
In this paper we work in the framework of universe category models of type theory.
In particular, to model cubical type theory, we require three universal maps to model context extension by a type, a cofibration and an interval variable.
Modelling context extension by a dependent type in the framework of universe
categories is well established.
Therefore, we begin by recalling universe category axiomatisations of dependent
type theory and some of the background with polynomial functors required to work
with them.

\subsection{Polynomial Functors}
In order to model type constructors parameterised by dependent types, we use
polynomial functors.
\begin{definition}\label{def:pushforward}
  A map $p \colon E \to B$ in a category $\bC$ is \emph{exponentiable} if
  pullbacks along $p$ exist and the pullback functor
  $p^* \colon \sfrac{\bC}{B} \to \sfrac{\bC}{E}$ admits a right adjoint
  $p_* \colon \sfrac{\bC}{E} \to \sfrac{\bC}{B}$ which refer to as the
  \emph{pushforwards}.
\end{definition}

Using pushforwards, one can define the polynomial functor associated with an
exponentiable map.
\begin{definition}\label{def:expn-polynomial}
  The \emph{polynomial functor} associated with a map $p \colon E \to B$ in a
  locally cartesian closed category $\bC$, whose pushforward is denoted $p_*$,
  is the composite functor
  \begin{equation*}
    \bP_P \coloneqq \left(\bC \xrightarrow{E \times -} \sfrac{\bC}{E} \xrightarrow{p_*} \sfrac{\bC}{B} \xrightarrow{B_!} \bC\right)
  \end{equation*}
  where the last map $\sfrac{\bC}{B} \xrightarrow{B_!} \bC$ just forgets away
  the base.
\end{definition}

Using the polynomial functor associated with an exponential map, one constructs
the generic post-composite with an exponentiable map.
\begin{construction}\label{constr:gen-comp}
  Let $p \colon E \to B$ and $p' \colon E' \to B'$ be two maps in a locally
  cartesian closed category $\bC$ and let $p_*$ be a pushforward along $p$.
  Then, the map
  \begin{equation*}
   \GenComp(p',p) \colon \ev^*(E \times E') \to p^*p_*(E \times B') \to p_*(E \times B')
  \end{equation*}
  is constructed as follows as the composite of the two pullbacks of $E' \to B'$
  and $E \to B$ respectively.
  \begin{equation*}
    \begin{tikzcd}[cramped]
      {\ev^*(E \times E')} \\
      {p^*p_*(E \times B')} & {E \times E'} & {E'} \\
      & {E \times B'} & {B'} \\
      {p_*(E \times B')} && E \\
      && B
      \arrow[from=1-1, to=2-1]
      \arrow[from=1-1, to=2-2]
      \arrow["\ev"{description}, from=2-1, to=3-2]
      \arrow[from=2-1, to=4-1]
      \arrow[curve={height=18pt}, from=2-1, to=4-3]
      \arrow[from=2-2, to=2-3]
      \arrow[from=2-2, to=3-2]
      \arrow[from=2-3, to=3-3]
      \arrow[from=3-2, to=3-3]
      \arrow["{\proj_1}"{description, pos=0.3}, from=3-2, to=4-3]
      \arrow[from=4-1, to=5-3]
      \arrow["p", from=4-3, to=5-3]
      \arrow["\lrcorner"{anchor=center, pos=0.15, scale=1.5}, draw=none, from=1-1, to=3-2]
      \arrow["\lrcorner"{anchor=center, pos=0.15, scale=1.5}, draw=none, from=2-1, to=5-3]
      \arrow["\lrcorner"{anchor=center, pos=0.15, scale=1.5}, draw=none, from=2-2, to=3-3]
    \end{tikzcd}
  \end{equation*}
\end{construction}

The generic composite constructed in the above \Cref{constr:gen-comp} lives up
to its name.
\begin{lemma}[{\cite[Lemma 1.4]{axm-univalence}}]\label{lem:gen-comp}
  Let $p \colon E \to B \in \bC$ be an exponentiable map in a category $\bC$
  where all the products with $E$ exist.
  Take $p' \colon E' \to B' \in \bC$ to be just any map.

  Suppose that one has pullbacks as on the left.
  Then the composite $X_2 \to X_1 \to X_0$ arises as a pullback of
  $\GenComp(p,p') \colon \ev^*(E \times E') \to p^*p_*(E \times B') \to p_*(E
  \times B')$ as on the right.
  \begin{center}
    \begin{minipage}{0.45\linewidth}
      \begin{equation*}
        \begin{tikzcd}[cramped]
          {X_2} & {E'} \\
          {X_1} & B' \\
          {X_0} & {E} \\
          & B
          \arrow["{\widetilde{\ceil{X_2}}}", from=1-1, to=1-2]
          \arrow["{q_2}"', from=1-1, to=2-1]
          \arrow[from=1-2, to=2-2]
          \arrow["{\ceil{X_2}}"{description}, from=2-1, to=2-2]
          \arrow["{q_1}"', from=2-1, to=3-1]
          \arrow["{\widetilde{\ceil{X_1}}}"{description}, from=2-1, to=3-2]
          \arrow["{\ceil{X_1}}"', from=3-1, to=4-2]
          \arrow[from=3-2, to=4-2]
          \arrow["\lrcorner"{anchor=center, pos=0.15, scale=1.5}, draw=none, from=1-1, to=2-2]
          \arrow["\lrcorner"{anchor=center, pos=0.15, scale=1.5}, draw=none, from=2-1, to=4-2]
        \end{tikzcd}
      \end{equation*}
    \end{minipage}
    \begin{minipage}{0.45\linewidth}
      \begin{equation*}
        \begin{tikzcd}[cramped]
          {X_2} & {\ev^*(E' \times E')} \\
          {X_1} & {p^*p_*(E \times B')} \\
          {X_0} & {p_*(E \times B')}
          \arrow[dashed, from=1-1, to=1-2]
          \arrow["{q_2}"', from=1-1, to=2-1]
          \arrow[from=1-2, to=2-2]
          \arrow[dashed, from=2-1, to=2-2]
          \arrow["{q_1}"', from=2-1, to=3-1]
          \arrow[from=2-2, to=3-2]
          \arrow["{\ceil{X_1}.\ceil{X_2}}"', dashed, from=3-1, to=3-2]
          \arrow["\lrcorner"{anchor=center, pos=0.05, scale=1.5}, draw=none, from=1-1, to=3-2]
          \arrow["\lrcorner"{anchor=center, pos=0.05, scale=1.5}, draw=none, from=2-1, to=3-2]
        \end{tikzcd}
      \end{equation*}
    \end{minipage}
  \end{center}
  such that
  $(X_0 \xrightarrow{\ceil{X_1}.\ceil{X_2}} p_*(E \times B) \to B)
  = (X_0 \xrightarrow{\ceil{X_1}} B)$
  and
  $(X_0 \xrightarrow{\ceil{X_1}.\ceil{X_2}} p_*(E \times B'))^\dagger
  = (X_1 \xrightarrow{(\widetilde{\ceil{X_1}}, \ceil{X_2})} E \times B')$.
  \def\endingmark{\qedsymbol}
\end{lemma}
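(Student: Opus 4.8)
The plan is to reduce the statement to the universal property of the pushforward, namely the adjunction $p^* \dashv p_*$ between $\sfrac{\bC}{B}$ and $\sfrac{\bC}{E}$. As an object of $\bC$, $p_*(E \times B')$ is the total space of $\bP_P(B')$, and, by transposing along $p^* \dashv p_*$ and then discarding the determined $E$-component, maps $X_0 \to p_*(E \times B')$ whose composite with $p_*(E \times B') \to B$ is a fixed $f \colon X_0 \to B$ correspond bijectively to maps $f^*E \to B'$, where $f^*E \coloneqq X_0 \times_B E$ is the pullback of $p$ along $f$. First I would use this correspondence to \emph{define} the map $\ceil{X_1}.\ceil{X_2} \colon X_0 \to p_*(E \times B')$: the lower-left pullback square of the hypothesis exhibits an isomorphism $(q_1, \widetilde{\ceil{X_1}}) \colon X_1 \xrightarrow{\ \sim\ } \ceil{X_1}^*E$, so the map $\ceil{X_2} \colon X_1 \to B'$ is precisely the data of a map $\ceil{X_1}^*E \to B'$, and we let $\ceil{X_1}.\ceil{X_2}$ be the corresponding map lying over $\ceil{X_1}$. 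The two displayed equations then hold by construction: the first records that $\ceil{X_1}.\ceil{X_2}$ lies over $\ceil{X_1}$, and the second is exactly the identification of its transpose $(\ceil{X_1}.\ceil{X_2})^\dagger \colon X_1 \cong \ceil{X_1}^*E \to E \times B'$ with $(\widetilde{\ceil{X_1}}, \ceil{X_2})$.

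Next I would identify the lower square on the right. The object $p^*p_*(E \times B')$ is the pullback $E \times_B p_*(E \times B')$, and the map $p^*p_*(E \times B') \to p_*(E \times B')$ of \cref{constr:gen-comp} is its projection, so pulling that projection back along $\ceil{X_1}.\ceil{X_2}$ produces $E \times_B X_0$, the pullback of $p$ along the composite $X_0 \to p_*(E \times B') \to B$. By the first displayed equation this composite is $\ceil{X_1}$, so what we obtain is the lower-left pullback square of the hypothesis; in particular $X_1$ is the claimed pullback, and the induced comparison map $X_1 \to p^*p_*(E \times B')$ has $E$-component $\widetilde{\ceil{X_1}}$ and $p_*(E \times B')$-component equal to $q_1$ followed by $\ceil{X_1}.\ceil{X_2}$. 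Composing this comparison map with the counit $\ev \colon p^*p_*(E \times B') \to E \times B'$ then yields, by the triangle identity for $p^* \dashv p_*$, exactly the transpose $(\ceil{X_1}.\ceil{X_2})^\dagger = (\widetilde{\ceil{X_1}}, \ceil{X_2})$ computed above.

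Then I would identify the upper square on the right. By \cref{constr:gen-comp}, the left leg $\ev^*(E \times E') \to p^*p_*(E \times B')$ of $\GenComp(p,p')$ is the pullback of $\id_E \times p' \colon E \times E' \to E \times B'$ along $\ev$. Pulling it back along the comparison map $X_1 \to p^*p_*(E \times B')$ is therefore, by pasting of pullbacks together with the previous step, the pullback of $E \times E' \to E \times B'$ along $(\widetilde{\ceil{X_1}}, \ceil{X_2}) \colon X_1 \to E \times B'$; since the $E$-component is shared, this is the pullback of $p' \colon E' \to B'$ along the $B'$-component $\ceil{X_2} \colon X_1 \to B'$, which is $X_2$ together with $q_2$ by the upper-left pullback square of the hypothesis. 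Pasting the two right-hand squares with the factorisation $\GenComp(p,p') = \bigl(\ev^*(E \times E') \to p^*p_*(E \times B') \to p_*(E \times B')\bigr)$ then exhibits $X_2 \to X_1 \to X_0$ as a pullback of $\GenComp(p,p')$, as required.

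The main obstacle is keeping the adjunction bookkeeping straight — in particular, verifying that the comparison map $X_1 \to p^*p_*(E \times B')$ postcomposed with $\ev$ is genuinely the transpose of $\ceil{X_1}.\ceil{X_2}$ (which is where the triangle identity enters), and tracking which pullbacks are formed over $E$, over $B$, and over $B'$. Once the isomorphism $X_1 \cong \ceil{X_1}^*E$ is fixed, everything else is a diagram chase using only the pasting law for pullbacks.
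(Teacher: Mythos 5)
The paper states this lemma without a local proof (the statement carries a \textsf{QED} end-mark and cites the external reference), so there is no in-paper proof to compare against. Your argument is correct and is the natural one: you characterise maps $X_0 \to p_*(E\times B')$ over a fixed $\ceil{X_1}\colon X_0\to B$ as maps $\ceil{X_1}^*E\to B'$ via the adjunction $p^*\dashv p_*$, define $\ceil{X_1}.\ceil{X_2}$ accordingly so that both displayed equations hold by construction, then recognise the lower-right square as the pullback of $p$ along $\ceil{X_1}$ (which is the hypothesised lower-left square), and finally use the pasting law and the transpose formula $g^\dagger=\ev\circ p^*g$ to reduce the upper-right square to the pullback of $p'$ along $\ceil{X_2}$, which is the hypothesised upper-left square; all the bookkeeping you worried about does go through. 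The only cosmetic slip is the attribution of $g^\dagger=\ev\circ p^*g$ to ``the triangle identity'' --- it is really the standard formula for the adjunct (which follows from naturality of the hom-set bijection, of which the triangle identities are a consequence) --- but this does not affect the argument.
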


Finally, we often speak of pullback-Homs, which are the internalisations of the
map taking a diagonal to the lifting problem it solves.
We fix some notation for this.
\begin{definition}\label{def:pullback-Hom}
  Let $\bC$ be a cartesian closed category.
  Fix maps $i$ and $p$ in $\bC$ as follows.
  \begin{equation*}
    \begin{tikzcd}[cramped]
      U & E \\
      V & B
      \arrow["i"', from=1-1, to=2-1]
      \arrow["p", from=1-2, to=2-2]
    \end{tikzcd}
  \end{equation*}
  We define the map and objects
  \begin{equation*}
    \Diag_\bC(i,p) \eqqcolon [V,E]_\bC
    \xrightarrow{\reallywidehat{[i,p]}_\bC} \Sq_\bC(i,p) \in \bC
  \end{equation*}
  to be the dashed pullback-Hom map as below.
  \begin{equation*}
    \begin{tikzcd}[cramped]
      {[V,E]_\bC} \\
      & {\Sq_\bC(i,p)} & {[V,B]} \\
      & {[U,E]} & {[U,B]}
      \arrow["{\reallywidehat{[i,p]}_\bC}"{description}, dashed, from=1-1, to=2-2]
      \arrow[curve={height=-12pt}, from=1-1, to=2-3]
      \arrow[curve={height=12pt}, from=1-1, to=3-2]
      \arrow[from=2-2, to=2-3]
      \arrow[from=2-2, to=3-2]
      \arrow["\lrcorner"{anchor=center, pos=0.15, scale=1.5}, draw=none, from=2-2, to=3-3]
      \arrow["{[i,B]}", from=2-3, to=3-3]
      \arrow["{[U,p]}"', from=3-2, to=3-3]
    \end{tikzcd}
  \end{equation*}

  If $\bC$ is locally cartesian closed and $C \in \bC$, then for
  $i \colon U \to V \in \sfrac{\bC}{C}$ and
  $p \colon E \to B \in \sfrac{\bC}{C}$, we denote by
  \begin{equation*}
    \Diag_C(i,p) \eqqcolon [V,E]_C \xrightarrow{\reallywidehat{[i,p]}_C} \Sq_C(i,p) \in \sfrac{\bC}{C}
  \end{equation*}
  the above definition applied to $\sfrac{\bC}{C}$.
\end{definition}

These internal commutative squares and diagonals are stable under pullback.

\begin{lemma}\label{lem:pullback-Hom-stable}
  Let $\bC$ be a locally cartesian closed category.
  Fix a map $f \colon D \to C \in \bC$ along with maps
  $i \colon U \to V \in \sfrac{\bC}{C}$ and
  $p \colon E \to B \in \sfrac{\bC}{C}$.
  Then, one obtains an iterated pullback square as follows.
  \begin{equation*}
    \begin{tikzcd}[cramped, row sep=small, column sep=small]
      {\Sq_D(f^*i,f^*p)} & {\Diag_C(i,p)} \\
      {\Sq_D(f^*i,f^*p)} & {\Sq_C(i,p)} \\
      D & C
      \arrow[from=1-1, to=1-2]
      \arrow[from=1-1, to=2-1]
      \arrow[from=1-2, to=2-2]
      \arrow[from=2-1, to=2-2]
      \arrow[from=2-1, to=3-1]
      \arrow["\lrcorner"{anchor=center, pos=0.15, scale=1.5}, draw=none, from=1-1, to=2-2]
      \arrow["\lrcorner"{anchor=center, pos=0.15, scale=1.5}, draw=none, from=2-1, to=3-2]
      \arrow[from=2-2, to=3-2]
      \arrow["f"', from=3-1, to=3-2]
    \end{tikzcd}
  \end{equation*}
\end{lemma}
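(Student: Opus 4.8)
The plan is to exploit the fact that the pullback-Hom of \Cref{def:pullback-Hom} is assembled entirely inside the slice $\sfrac{\bC}{C}$ out of finite limits and internal homs, and that base change $f^*\colon\sfrac{\bC}{C}\to\sfrac{\bC}{D}$ commutes with all of this. Concretely, I would first recall two facts. Since $\bC$ is locally cartesian closed there are adjoints $f_!\dashv f^*\dashv f_*$, so $f^*$ preserves finite limits; and, unwinding definitions, for $X\in\sfrac{\bC}{C}$ the object $f^*X$ of $\bC$ is the pullback $D\times_C X$ with $f^*X\to X$ the projection. Moreover $f^*$ is a \emph{cartesian closed} functor: the canonical comparison $f^*[X,Y]_C\to[f^*X,f^*Y]_D$ is an isomorphism, natural in $X$ and $Y$ and compatible with evaluation --- equivalently, $f^*$ carries precomposition maps $[j,Y]_C$ to $[f^*j,f^*Y]_D$ and postcomposition maps $[X,q]_C$ to $[f^*X,f^*q]_D$. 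I expect this second fact --- the standard statement that base change preserves the locally cartesian closed structure --- to be the one genuinely non-formal ingredient; I would either cite it or derive it from the Frobenius reciprocity isomorphism $f_!(Z\times_D f^*X)\cong f_!Z\times_C X$ together with $f_!\dashv f^*$.

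With these in hand, I would apply $f^*$ to the defining pullback of $\reallywidehat{[i,p]}_C$ from \Cref{def:pullback-Hom}. Writing $U'\coloneqq f^*U$, $V'\coloneqq f^*V$, $E'\coloneqq f^*E$, $B'\coloneqq f^*B$ (equivalently, the pullbacks of $U,V,E,B$ along $f$), so that $f^*i=(i'\colon U'\to V')$ and $f^*p=(p'\colon E'\to B')$: since $f^*$ preserves pullbacks and internal homs, $f^*\Sq_C(i,p)$ is the pullback of $[i',B']_D$ and $[U',p']_D$, which is $\Sq_D(f^*i,f^*p)$; likewise $f^*\Diag_C(i,p)=f^*[V,E]_C\cong[V',E']_D=\Diag_D(f^*i,f^*p)$; and since $f^*$ preserves the map induced into a pullback together with the structure maps $[i,B]_C$ and $[U,p]_C$, the morphism $f^*\bigl(\reallywidehat{[i,p]}_C\bigr)$ satisfies the universal property of $\reallywidehat{[f^*i,f^*p]}_D$ and therefore equals it. The upshot is that $f^*$ sends the morphism $\reallywidehat{[i,p]}_C\colon\Diag_C(i,p)\to\Sq_C(i,p)$ of $\sfrac{\bC}{C}$, together with its whole defining diagram, to the corresponding data for $(f^*i,f^*p)$.

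Finally, I would translate this back into $\bC$. For any morphism $g\colon A\to A'$ in $\sfrac{\bC}{C}$ with codomain structure map $A'\to C$, the first recalled fact says that in $\bC$ one has $f^*A'=D\times_C A'$ and $f^*A=D\times_C A=(D\times_C A')\times_{A'}A$; hence the square with corners $f^*A',A',D,C$ is a pullback and, stacked above it, so is the square with corners $f^*A,A,f^*A',A'$. Applying this with $g=\reallywidehat{[i,p]}_C$ --- so $A'=\Sq_C(i,p)$ and $A=\Diag_C(i,p)$ --- and substituting the isomorphisms of the previous step for $f^*A'$, $f^*A$, and $f^*g$ turns these two stacked pullbacks into precisely the iterated pullback square of the statement, completing the argument. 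As indicated, the only step requiring real work is the cartesian-closedness of base change; everything after that is bookkeeping with universal properties.
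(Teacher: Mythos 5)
Your argument is exactly the paper's: the proof there is the one-line observation that pullbacks preserve pullbacks and internal homs (i.e., that base change is a cartesian closed functor), and you have simply spelled out that observation step by step. The only thing worth noting is that the statement's top-left corner is a typo for $\Diag_D(f^*i,f^*p)$, which your computation of $f^*\Diag_C(i,p)$ correctly supplies.
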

\begin{proof}
  This follows from the standard fact that pullbacks preserve pullbacks and
  internal-Homs.
\end{proof}

\subsection{Universe Category Models of Intensional Type Theory}\label{subsec:itt-types}
Universe category models \cite{voe15} of type theory are general high
power-to-weight ratio frameworks that can make models of dependent type theory
where context extension is modelled as certain pullback-stable projection maps
arise as special cases of them.
As previous mentioned, we will also use them to axiomatise cubical type theory,
which will be equipped with three universes.
Because cubical type theory shares numerous similar type formers with
intensional type theory, we first recall how various type formers are
axiomatised in universe categories.

\begin{definition}[{\cite[Definition 2.1]{voe15}}]\label{def:univ-struct}
  Let $\bC$ be a locally cartesian closed category.
  A \emph{universe structure} on a map $\pi \colon \tMcU \to \McU$ is a specific
  choice of a right adjoint $\var \colon \sfrac{\bC}{\McU} \to \sfrac{\bC}{\tMcU}$
  \begin{equation*}
    \begin{tikzcd}[cramped]
      {\sfrac{\bC}{\McU}} && {\sfrac{\bC}{\tMcU}}
      \arrow[""{name=0, anchor=center, inner sep=0}, "\var"{}, shift left=2, dashed, from=1-1, to=1-3]
      \arrow[""{name=1, anchor=center, inner sep=0}, "{\pi_!}", shift left=2, from=1-3, to=1-1]
      \arrow["\dashv"{anchor=center, rotate=90}, draw=none, from=1, to=0]
    \end{tikzcd}
  \end{equation*}
  of $\pi_! \colon \sfrac{\bC}{\tMcU} \to \sfrac{\bC}{\McU}$ the
  post-composition functor (i.e. just a choice of pullbacks along $\pi$).

  For each $A \colon \Gamma \to \McU \in \sfrac{\bC}{\McU}$, we denote by
  $\pi_A$ the counit at $A$ and $\Gamma.A$ the domain of $\pi_A$
  so that one has a pullback square
  \begin{equation*}
    \begin{tikzcd}[cramped]
      {\Gamma.A} & \tMcU \\
      \Gamma & \McU
      \arrow["{\var_A}", from=1-1, to=1-2]
      \arrow["{\pi_A}"', from=1-1, to=2-1]
      \arrow["\lrcorner"{anchor=center, pos=0.15, scale=1.5}, draw=none, from=1-1, to=2-2]
      \arrow[from=1-2, to=2-2]
      \arrow["A"', from=2-1, to=2-2]
    \end{tikzcd}
  \end{equation*}
  The map $\pi_A$ is also the \emph{selected pullback} by the universe
  structure.

  The map $\pi \colon \tMcU \to \McU$ equipped with such a universe structure is
  called a \emph{universal map}.
  A \emph{universe structure} on a category $\bC$ consists of a choice of a
  distinguished universal map $\tMcU \to \McU$ and a universe structure on it.
\end{definition}

Given a universal map $\tMcU \to \McU$, one may wish to talk about arbitrary
pullbacks of it, as opposed to the specific pullback maps selected by its
universe structure.
We fix some vocabulary for this purpose.
\begin{definition}\label{def:univ-fibration}
  Let $\pi \colon \tMcU \to \McU$ be a universal map in a locally cartesian
  closed category $\bC$.

  The \emph{$\pi$-name} and \emph{$\pi$-point} of a map $E \to B$ are maps
  $\ceil{E}$ and $\widetilde{\ceil{E}}$ as below such that one has a pullback
  square as follows.
  \begin{equation*}
    \begin{tikzcd}[cramped]
      E & \tMcU \\
      B & \McU
      \arrow["{\widetilde{\ceil{E}}}", from=1-1, to=1-2]
      \arrow[from=1-1, to=2-1]
      \arrow["\lrcorner"{anchor=center, pos=0.15, scale=1.5}, draw=none, from=1-1, to=2-2]
      \arrow[from=1-2, to=2-2]
      \arrow["{\ceil{E}}"', from=2-1, to=2-2]
    \end{tikzcd}
  \end{equation*}
  Together, the $\pi$-(name, point) pair $(\ceil{E}, \widetilde{\ceil{E}})$ forms
  a \emph{$\pi$-fibrancy structure} for $E \to B$.
  Maps that can be equipped with a $\pi$-fibrancy structure are called
  \emph{$\pi$-fibrations} and are denoted $E \twoheadrightarrow B$.
  As an object in the slice over $B$, we say
  $E \twoheadrightarrow B \in \sfrac{\bC}{B}$ is a $\pi$-fibrant object.

  Although $\pi$-fibrancy structures are not necessarily unique, given
  $A \colon \Gamma \to \McU$, the selected pullback $\Gamma.A \to \Gamma$ admits
  a \emph{canonical} $\pi$-fibrancy structure $(A,\var_A)$.
  Furthermore, given a $\pi$-fibration $E \twoheadrightarrow B$ with fibrancy
  structure $(\ceil{E},\widetilde{\ceil{E}})$, its \emph{canonical $\pi$-fibrant
    replacement} is the isomorphic map
  $\pi_{\ceil{E}} \colon B.\ceil{E} \cong E \to B$.
\end{definition}

We now recall how a universal map is equipped with various logical structures.

\subsubsection{Id-type Structure}
$\Id$-type structures are syntactical version of a generic fibred path object
for fibrations.
Their axiomatisation is described in \cite{voe15a}, which we now recall.
\begin{definition}[{\cite[Definition 2.7]{voe15a}}]\label{def:pre-id-type}
  Let $\pi \colon \tMcU \to \McU$ be a universal map in a locally cartesian
  closed category $\bC$.

  A \emph{pre-$\Id$-type structure} (also known as a
  \emph{$\textsf{J1}$-structure} in \cite[Definition 2.7]{voe15a}) on
  $\tMcU \to \McU$ consists of a pair of dashed maps $(\Id,\ceil{\refl})$ as
  below from the diagonal $\tMcU \to \tMcU \times_\McU \tMcU$ to
  $\tMcU \to \McU$.
  \begin{equation*}\label{eqn:pre-Id-def}\tag{\textsc{pre-$\Id$-def}}
    \begin{tikzcd}[cramped]
      \tMcU & \tMcU \\
      {\tMcU \times_\McU \tMcU} & \McU
      \arrow["{\ceil{\refl}}", dashed, from=1-1, to=1-2]
      \arrow[from=1-1, to=2-1]
      \arrow[from=1-2, to=2-2]
      \arrow["\Id"', dashed, from=2-1, to=2-2]
    \end{tikzcd}
  \end{equation*}
\end{definition}

\begin{construction}[{\cite[Equation (7)]{voe15a}}]\label{constr:J-prob}
  Let $\pi \colon \tMcU \to \McU$ be a universal map in a locally cartesian
  closed category $\bC$ equipped with a pre-$\Id$-type structure
  $(\Id,\ceil{\refl})$ as in \Cref{eqn:pre-Id-def}.

  Denote by
  $\ev_\partial \colon \Id_\McU(\tMcU) \twoheadrightarrow \tMcU \times_\McU
  \tMcU$ the selected pullback of $\pi$ along
  $\Id \colon \tMcU \times_\McU \tMcU \to \McU$ by the universe structure so
  that there is a map $\refl \colon \tMcU \hookrightarrow \Id_\McU(\tMcU)$ into
  the pullback, as below.
  \begin{equation*}\label{eqn:Id-def}\tag{\textsc{$\Id$-def}}
    \begin{tikzcd}[cramped]
      \tMcU \\
      & {\Id_\McU(\tMcU)} & \tMcU \\
      & {\tMcU \times_\McU \tMcU} & \McU
      \arrow["\refl"{description}, hook, dashed, from=1-1, to=2-2]
      \arrow["{\ceil{\refl}}", curve={height=-12pt}, from=1-1, to=2-3,dashed]
      \arrow["\Delta"', curve={height=12pt}, from=1-1, to=3-2]
      \arrow["{\widetilde{\Id}}"{description}, from=2-2, to=2-3,dashed]
      \arrow["{\ev_\partial}"{description}, from=2-2, to=3-2, two heads]
      \arrow["\lrcorner"{anchor=center, pos=0.15, scale=1.5}, draw=none, from=2-2, to=3-3]
      \arrow[from=2-3, to=3-3, two heads]
      \arrow["\Id"', from=3-2, to=3-3,dashed]
    \end{tikzcd}
  \end{equation*}
  Then, over $\McU$, we have the pair of maps
  $\refl \colon \tMcU \hookrightarrow \Id_\McU(\tMcU)$ and the rebased map
  $\tMcU \times \McU \to \McU \times \McU$, as below.
  \begin{equation*}
    \begin{tikzcd}[cramped, column sep=small]
      \tMcU && {\McU \times \tMcU} \\
      {\Id_\McU(\tMcU)} && {\McU \times \tMcU} \\
      & \McU
      \arrow["\refl"', hook, from=1-1, to=2-1]
      \arrow["{\McU \times \pi}", from=1-3, to=2-3]
      \arrow["{\Id \cdot \ev_\partial}"', from=2-1, to=3-2]
      \arrow["{\proj_1}", from=2-3, to=3-2]
    \end{tikzcd}
  \end{equation*}

  We then apply \Cref{def:pullback-Hom} to obtain the pullback-Hom map of $\refl$ with $\tMcU \times \pi$ in the
  slice over $\McU$ as follows
  \begin{equation*}\label{eqn:J-prob}\tag{\textsc{$\textsf{J}$-prob}}\small
    \left(\begin{tikzcd}[cramped]
        {\Diag_\McU(\refl, \tMcU \times \pi)} \\
        {\Sq_\McU(\refl, \tMcU \times \pi)}
        \arrow[from=1-1, to=2-1]
      \end{tikzcd}\right)
    =
    \left(
      \begin{tikzcd}[cramped]
        {[\Id_\McU(\tMcU), \McU \times \McU]_\McU} \\
        {[\tMcU, \McU \times \tMcU]_\McU \times_{[\tMcU, \McU \times \tMcU]_\McU} [\Id_\McU(\tMcU), \McU \times \McU]_\McU}
        \arrow[from=1-1, to=2-1]
      \end{tikzcd}
    \right)
  \end{equation*}
  It encodes the generic $\MsJ$-elimination lifting problem.
\end{construction}

\begin{definition}[{\cite[Definition 2.8]{voe15a}}]\label{def:J-elim}
  Let $\pi \colon \tMcU \to \McU$ be a universal map in a locally cartesian
  closed category $\bC$ equipped with a pre-$\Id$-type structure
  $(\Id,\ceil{\refl})$ as in \Cref{eqn:pre-Id-def}.

  A \emph{J-elimination structure} (also known as a \emph{$\textsf{J2}$-structure} in
  \cite[Definition 2.8]{voe15a}) on the pre-$\Id$-type structure
  ($\textsf{J1}$-structure) is a section of the pullback-Hom map of
  \Cref{eqn:J-prob}.
  \begin{equation*}\label{eqn:J-def}\tag{\textsc{$\textsf{J}$-def}}
    \begin{tikzcd}[cramped]
      {\Diag_\McU(\refl, \tMcU \times \pi)} \\
      {\Sq_\McU(\refl, \tMcU \times \pi)}
      \arrow[shift left, from=1-1, to=2-1]
      \arrow["\MsJ", shift left, hook, from=2-1, to=1-1]
    \end{tikzcd}
  \end{equation*}
\end{definition}

\begin{remark}
  The above definition of the $\MsJ$-elimination structure is equivalently
  expressed by saying that $\MsJ$ is a stable lifting structure, in the sense of
  \cite[Definitions 1.4 and 3.1]{struct-lift}, of
  $\refl \colon \tMcU \hookrightarrow \Id_\McU(\tMcU)$ against
  $\McU \times \tMcU \to \McU \times \McU$, in the slice $\sfrac{\bC}{\McU}$.
  \begin{equation*}
    \MsJ \in \left(
      \begin{tikzcd}[cramped]
        \tMcU \ar[d,hook,"{\refl}"'] \\ \Id_\McU(\tMcU)
      \end{tikzcd}
      \fracsquareslash{\McU}
      \begin{tikzcd}[cramped]
        \McU\times\tMcU \ar[d, two heads] \\ \McU\times\McU
      \end{tikzcd}
    \right)
  \end{equation*}
\end{remark}

\begin{definition}\label{def:Id-type}
  Let $\pi \colon \tMcU \to \McU$ be a universal map in a locally cartesian
  closed category $\bC$.

  An \emph{$\Id$-type} structure (also known as a \emph{full $\MsJ$-structure}
  in \cite{voe15a}) on $\tMcU \to \McU$ consists of a pre-$\Id$-structure
  $(\Id,\ceil{\refl})$ as in \Cref{eqn:Id-def} along with a $\MsJ$-elimination
  structure as in \Cref{eqn:J-def}.
  \begin{center}
    \begin{minipage}{0.45\linewidth}
      \begin{equation*}\tag{\ref{eqn:Id-def}}
        \begin{tikzcd}[cramped]
          \tMcU \\
          & {\Id_\McU(\tMcU)} & \tMcU \\
          & {\tMcU \times_\McU \tMcU} & \McU
          \arrow["\refl"{description}, hook, dashed, from=1-1, to=2-2]
          \arrow["{\ceil{\refl}}", curve={height=-12pt}, from=1-1, to=2-3,dashed]
          \arrow["\Delta"', curve={height=12pt}, from=1-1, to=3-2]
          \arrow["{\widetilde{\Id}}"{description}, from=2-2, to=2-3,dashed]
          \arrow["{\ev_\partial}"{description}, from=2-2, to=3-2, two heads]
          \arrow["\lrcorner"{anchor=center, pos=0.15, scale=1.5}, draw=none, from=2-2, to=3-3]
          \arrow[from=2-3, to=3-3, two heads]
          \arrow["\Id"', from=3-2, to=3-3,dashed]
        \end{tikzcd}
      \end{equation*}
    \end{minipage}
    \begin{minipage}{0.45\linewidth}
      \begin{equation*}\tag{\text{\ref{eqn:J-def}}}
        \begin{tikzcd}[cramped]
          {\Diag_\McU(\refl, \tMcU \times \pi)} \\
          {\Sq_\McU(\refl, \tMcU \times \pi)}
          \arrow[shift left, from=1-1, to=2-1]
          \arrow["\MsJ", shift left, hook, from=2-1, to=1-1]
        \end{tikzcd}
      \end{equation*}
    \end{minipage}
  \end{center}
\end{definition}

\begin{remark}
  Let $\pi \colon \tMcU \to \McU$ be a universal map in a locally cartesian
  closed category $\bC$ equipped with a (pre-)$\Id$-type structure.

  For each $\pi$-fibrant $E \twoheadrightarrow B$ with $\pi$-fibrancy structure
  $(\ceil{E},\widetilde{\ceil{E}})$ as on the left, we also denote by
  $\Id_{B}(E)$ and $\Id_{\ceil{E}}(\widetilde{\ceil{E}})$ and $\refl_E$ the
  correspondingly labelled maps as on the right.
  \begin{center}
    \begin{minipage}{0.40\linewidth}
      \begin{equation*}
        \begin{tikzcd}
          E & \tMcU \\
          {B} & \McU
          \arrow["{\widetilde{\ceil{E}}}", from=1-1, to=1-2]
          \arrow[from=1-1, to=2-1, two heads]
          \arrow["\lrcorner"{anchor=center, pos=0.15, scale=1.5}, draw=none, from=1-1, to=2-2]
          \arrow[from=1-2, to=2-2, two heads]
          \arrow["{\ceil{E}}"', from=2-1, to=2-2]
        \end{tikzcd}
      \end{equation*}
    \end{minipage}
    \begin{minipage}{0.50\linewidth}
      \begin{equation*}
        \begin{tikzcd}[column sep=large]
          E & \tMcU \\
          {\Id_{B}(E)} & {\Id_\McU(\tMcU)} & \tMcU \\
          {E \times_{B} E} & {\tMcU \times_\McU \tMcU} & \McU \\
          {B} & \McU
          \arrow["\widetilde{\ceil{E}}", from=1-1, to=1-2]
          \arrow["{\refl_E}"{description}, from=1-1, to=2-1, hook]
          \arrow["\lrcorner"{anchor=center, pos=0.15, scale=1.5}, draw=none, from=1-1, to=2-2]
          \arrow["\refl"{description}, from=1-2, to=2-2, hook]
          \arrow["{\Id_{\ceil{E}}(\widetilde{\ceil{E}})}", from=2-1, to=2-2]
          \arrow["{\ev_\partial}"{description}, from=2-1, to=3-1, two heads]
          \arrow["\lrcorner"{anchor=center, pos=0.15, scale=1.5}, draw=none, from=2-1, to=3-2]
          \arrow[from=2-2, to=2-3]
          \arrow["{\ev_\partial}"{description}, from=2-2, to=3-2, two heads]
          \arrow["\lrcorner"{anchor=center, pos=0.15, scale=1.5}, draw=none, from=2-2, to=3-3]
          \arrow[from=2-3, to=3-3, two heads]
          \arrow["{\widetilde{\ceil{E}} \times_{\ceil{E}} \widetilde{\ceil{E}}}", from=3-1, to=3-2]
          \arrow[from=3-1, to=4-1]
          \arrow["\lrcorner"{anchor=center, pos=0.15, scale=1.5}, draw=none, from=3-1, to=4-2]
          \arrow["\Id"', from=3-2, to=3-3]
          \arrow[from=3-2, to=4-2]
          \arrow["{\ceil{E}}"', from=4-1, to=4-2]
        \end{tikzcd}
      \end{equation*}
    \end{minipage}
  \end{center}
  where $\Id_B(E) \twoheadrightarrow E \times_B E$ is chosen by the universe
  structure.
\end{remark}

\subsubsection{$\Sigma$-type Structure}
$\Sigma$-type structures amount to fibrations closed under composition.
\begin{definition}\label{def:sigma-type}
  An \emph{extensional $\Sigma$-type} structure on a universal map
  $\pi \colon \tMcU \to \McU$ in a locally cartesian closed category $\bC$ is a
  $\pi$-fibrancy structure $(\Sigma,\pair)$ on
  $\GenComp(\tMcU \to \McU, \tMcU \to \McU)$.
  \begin{equation*}
    \begin{tikzcd}[row sep=small, column sep=small]
      {\ev^*(\tMcU \times \tMcU)} &&& \tMcU \\
      {\pi^*\pi_*(\tMcU \times \McU)} & {\tMcU \times \tMcU} & \tMcU \\
      & {\tMcU \times \McU} & \McU \\
      {\pi_*(\tMcU \times \McU)} &&& \McU \\
      && \tMcU \\
      && \McU
      \arrow["\pair", dashed, from=1-1, to=1-4]
      \arrow[from=1-1, to=2-1, two heads]
      \arrow[from=1-1, to=2-2]
      \arrow["\pi", from=1-4, to=4-4, two heads]
      \arrow["\ev\relax"{description}, from=2-1, to=3-2]
      \arrow[from=2-1, to=4-1, two heads]
      \arrow[from=2-2, to=2-3]
      \arrow[from=2-2, to=3-2, two heads]
      \arrow[from=2-3, to=3-3, two heads]
      \arrow[from=3-2, to=3-3]
      \arrow["\pi", from=5-3, to=6-3, two heads]
      \arrow[from=4-1, to=6-3]
      \arrow["\Sigma"'{pos=0.8}, dashed, from=4-1, to=4-4]
      \arrow[curve={height=18pt}, from=2-1, to=5-3, crossing over]
      \arrow["{\proj_1}"{description, pos=0.3}, from=3-2, to=5-3, crossing over]
    \end{tikzcd}
  \end{equation*}
\end{definition}

\subsubsection{$\Pi$-type Structure}
$\Pi$-type structures amount to fibrations closed under pushforwards along
fibrations.
\begin{definition}[{\cite[Definition 2.2]{voe17}}]\label{def:pi-type}
  An \emph{extensional $\Pi$-type} structure on a universal map
  $\pi \colon \tMcU \to \McU$ in a locally cartesian
  closed category $\bC$ is a pair of dashed maps $(\lam, \Pi)$ as below
  making the pushforward along $\tMcU \to \McU$ of
  $\tMcU \times \tMcU \to \tMcU \times \McU \in \sfrac{\bC}{\tMcU}$ a
  $\pi$-fibration.
  \begin{equation}\label{eqn:Pi-def}\tag{\textsc{$\Pi$-def}}
    \begin{tikzcd}
      && {\pi^*\pi_*(\tMcU \times \tMcU)} \\
      \tMcU & {\tMcU \times \tMcU} & {\pi^*\pi_*(\tMcU \times \McU)} & {\pi_*(\tMcU \times \tMcU)} & \tMcU \\
      \McU & {\tMcU \times \McU} && {\pi_*(\tMcU \times \McU)} & \McU \\
      && \tMcU \\
      &&& \McU
      \arrow["\tapp"', from=1-3, to=2-1]
      \arrow["\ev"{description}, from=1-3, to=2-2]
      \arrow[from=1-3, to=2-3]
      \arrow[from=1-3, to=2-4]
      \arrow[from=2-1, to=3-1]
      \arrow[from=2-2, to=2-1]
      \arrow["\app"'{pos=0.7}, from=2-3, to=3-1]
      \arrow["\ev"{description}, from=2-3, to=3-2]
      \arrow[from=2-3, to=3-4]
      \arrow[from=2-3, to=4-3]
      \arrow["\lam", dashed, from=2-4, to=2-5]
      \arrow[from=2-4, to=3-4]
      \arrow["\lrcorner"{anchor=center, pos=0.125}, draw=none, from=2-4, to=3-5]
      \arrow["\pi", from=2-5, to=3-5]
      \arrow[from=3-2, to=3-1]
      \arrow[from=3-2, to=4-3]
      \arrow["\Pi"', dashed, from=3-4, to=3-5]
      \arrow[from=3-4, to=5-4]
      \arrow["\pi"', from=4-3, to=5-4]
      \arrow[crossing over, from=2-2, to=3-2]
    \end{tikzcd}
  \end{equation}

  We also refer to the maps
  $\app \colon \pi^*\pi_*(\tMcU \times \McU) \to \tMcU \times \McU \to \McU$ and
  $\tapp \colon \pi^*\pi_*(\tMcU \times \tMcU) \to \tMcU \times \McU \to \tMcU$
  obtained by composing the counit with the second projection as the
  \emph{application maps}.
\end{definition}

\subsubsection{Unit-type Structure}
$\Unit$-type structures amount to fibration structures on identity maps.
\begin{definition}\label{def:unit-type}
  A \emph{$\Unit$-type} structure on a universal map $\pi \colon \tMcU \to \McU$
  in a locally cartesian closed category $\bC$ is a pair of maps
  $(\star, \Unit)$ as below giving a $\pi$-fibration structure on the identity
  at the terminal object.
  \begin{equation*}
    \begin{tikzcd}[cramped]
      1 & \tMcU \\
      1 & \McU
      \arrow["\star", dashed, from=1-1, to=1-2]
      \arrow[from=1-1, to=2-1]
      \arrow["\lrcorner"{anchor=center, pos=0.05, scale=1.5}, draw=none, from=1-1, to=2-2]
      \arrow[from=1-2, to=2-2]
      \arrow["\Unit"', dashed, from=2-1, to=2-2]
    \end{tikzcd}
  \end{equation*}
\end{definition}

\subsubsection{Internal Universe Structure}
Given a map with a universal structure, all pullbacks of it also inherit its
universal structure.
Sometimes, we are interested in such a pulled back map with the inherited
universal structure, as in the case of internal universes.
\begin{definition}\label{def:int-univ}
  Let $\pi \colon \tMcU \to \McU$ be a universal map in a locally cartesian
  closed category $\bC$ with a choice of a terminal object 1.
  An \emph{internal (fibrant) universe} structure on $\tMcU \to \McU$ is a
  universal map $\pi_0 \colon \tMcU_0 \to \McU_0$ such that
  $\tMcU_0 \twoheadrightarrow \McU_0$ and $\McU_0 \twoheadrightarrow 1$ can be
  equipped with \emph{canonical} $\pi$-fibrancy structures $(\El,\tEl)$ and
  $(\ceil{\McU_0}, \widetilde{\ceil{\McU_0}})$.
  \begin{equation*}
    \begin{tikzcd}[cramped]
      & \tMcU \\
      {\tMcU_0} & \McU \\
      {\McU_0} & {\tMcU} \\
      1 & {\McU}
      \arrow[from=1-2, to=2-2, two heads]
      \arrow["\tEl"{description}, from=2-1, to=1-2]
      \arrow[from=2-1, to=3-1, two heads]
      \arrow["\El"{description}, from=3-1, to=2-2]
      \arrow["{\widetilde{\ceil{\McU_0}}}"{description}, from=3-1, to=3-2]
      \arrow[from=3-1, to=4-1, two heads]
      \arrow[from=3-2, to=4-2, two heads]
      \arrow["{\ceil{\McU_0}}"', from=4-1, to=4-2]
      \arrow["\lrcorner"{anchor=center, pos=0.15, scale=1.5, rotate=45}, draw=none, from=2-1, to=2-2]
      \arrow["\lrcorner"{anchor=center, pos=0.15, scale=1.5}, draw=none, from=3-1, to=4-2]
    \end{tikzcd}
  \end{equation*}

  Such an internal universe structure is \emph{faithful} when
  $\El \colon \McU_0 \to \McU$ is a mono.
\end{definition}

The fact that the internal universe $\pi_0$ is $\pi$-fibrant automatically
implies, by universality of $\GenComp$ as in \Cref{lem:gen-comp}, that the
generic composable pair of $\pi_0$-fibrations is also a composable pair of
$\pi$-fibration.
This allows for the following construction, which is useful for defining
internal universes closed under various type-theoretic constructions.

\begin{construction}\label{constr:int-univ-gencomp}
  Let $\pi \colon \tMcU \to \McU$ be a universal map in a locally cartesian
  closed category $\bC$ with a choice of a terminal object 1.
  Fix a $\pi$-fibration $p \colon E \twoheadrightarrow B$.
  We construct connecting dashed maps as below so that the following is a
  diagram of pullback squares.
  \begin{equation*}
    \begin{tikzcd}[cramped]
      {\ev^*(E \times E)} & {\ev^*(\tMcU \times \tMcU)} \\
      {p^*p_*(E \times B)} & {\pi^*\pi_*(\tMcU \times \McU)} \\
      {p_*(E \times B)} & {\pi_*(\tMcU \times \McU)}
      \arrow[dashed, from=1-1, to=1-2]
      \arrow[from=1-1, to=2-1]
      \arrow[from=1-2, to=2-2]
      \arrow[dashed, from=2-1, to=2-2]
      \arrow[from=2-1, to=3-1]
      \arrow[from=2-2, to=3-2]
      \arrow[dashed, from=3-1, to=3-2]
      \arrow["\lrcorner"{anchor=center, pos=0.15, scale=1.5}, draw=none, from=1-1, to=2-2]
      \arrow["\lrcorner"{anchor=center, pos=0.15, scale=1.5}, draw=none, from=2-1, to=3-2]
    \end{tikzcd}
  \end{equation*}

  To do so, we note that by \Cref{constr:gen-comp}, the maps
  \begin{align*}
    \ev^*(E \times E) \to p^*p_*(E \times B)
    &&
    p^*p_*(E \times B) \to p_*(E \times B)
  \end{align*}
  are both $p$-fibrations with $p$-names
  \begin{align*}
    p^*p_*(E \times B) \xrightarrow{\ev} E \times B \to B
    &&
       p_*(E \times B) \to B
  \end{align*}
  Because $p$ has a $\pi$-name $\ceil{E} \colon B \to \McU$, it follows that
  post-composing with $\ceil{E}$ makes the $p$-names above $\pi$-names, as in
  the situation depicted below.
  \begin{equation*}
    \begin{tikzcd}[cramped]
      {\ev^*(E \times E)} \\
      {p^*p_*(E \times B)} & {E \times E} & E & \tMcU \\
      & {E \times B} & B & \McU \\
      {p_*(E \times B)} && E & \tMcU \\
      && B & \McU
      \arrow[from=1-1, to=2-1]
      \arrow[from=1-1, to=2-2]
      \arrow["\ev"{description}, from=2-1, to=3-2]
      \arrow[from=2-1, to=4-1]
      \arrow[curve={height=18pt}, from=2-1, to=4-3]
      \arrow[from=2-2, to=2-3]
      \arrow[from=2-2, to=3-2]
      \arrow[from=2-3, to=2-4]
      \arrow[from=2-3, to=3-3]
      \arrow[from=2-4, to=3-4]
      \arrow[from=3-2, to=3-3]
      \arrow["{\proj_1}"{description, pos=0.3}, from=3-2, to=4-3]
      \arrow[from=3-3, to=3-4]
      \arrow[from=4-1, to=5-3]
      \arrow[from=4-3, to=4-4]
      \arrow["p"', from=4-3, to=5-3]
      \arrow["\pi", from=4-4, to=5-4]
      \arrow[from=5-3, to=5-4]
      \arrow["\lrcorner"{anchor=center, pos=0.15, scale=1.5}, draw=none, from=1-1, to=3-2]
      \arrow["\lrcorner"{anchor=center, pos=0.15, scale=1.5}, draw=none, from=2-1, to=5-3]
      \arrow["\lrcorner"{anchor=center, pos=0.15, scale=1.5}, draw=none, from=2-2, to=3-3]
      \arrow["\lrcorner"{anchor=center, pos=0.15, scale=1.5}, draw=none, from=2-3, to=3-4]
      \arrow["\lrcorner"{anchor=center, pos=0.15, scale=1.5}, draw=none, from=4-3, to=5-4]
    \end{tikzcd}
  \end{equation*}
  By \Cref{lem:gen-comp}, $\GenComp(p,p)$ arises as a pullback of
  $\GenComp(\pi,\pi)$ along a unique map
  \begin{equation*}
    p_*(E \times B) \xrightarrow{\qquad} \pi_*(\tMcU \times \McU)
  \end{equation*}
  whose transpose under $\pi^* \dashv \pi_*$ is the map into $\tMcU \times \McU$
  is the pair whose first component is the $\pi$-point of
  $p^*p_*(E \times B) \to p_*(E \times B)$ and second component is the
  $\pi$-name of $\ev^*(E \times E) \to p^*p_*(E \times B)$.

  In particular, applying this construction to an internal universe
  $\tMcU_0 \to \McU_0$, we obtain a map
  \begin{equation*}
    (\pi_0)_*(\tMcU_0 \times \McU_0) \xrightarrow{\qquad} \pi_*(\tMcU \times \McU)
  \end{equation*}
  along which $\GenComp(\pi,\pi)$ pulls back to $\GenComp(\pi_0,\pi_0)$.
\end{construction}

\begin{remark}\label{rmk:int-univ-gencomp}
  In the context of \Cref{constr:int-univ-gencomp} above
  applied to an internal universe $\tMcU_0 \to \McU_0$ of $\tMcU \to \McU$ , we see that the map
  \begin{equation*}
    (\pi_0)_*(\tMcU_0 \times \McU_0)
    \xrightarrow{\qquad}
    \pi_*(\tMcU \times \McU)
  \end{equation*}
  as a map of presheaves is the family of functions
  \begin{equation*}
    \bC(\Gamma, (\pi_0)_*(\tMcU_0 \times \McU_0))
    \cong \left\{
      (A, B) \middle|
      \begin{tikzcd}[cramped, row sep=small, column sep=small]
        {\Gamma.A} & {\McU_0} \\
        \Gamma & {\tMcU_0} \\
        & {\McU_0}
        \arrow["{B}", dashed, from=1-1, to=1-2]
        \arrow[from=1-1, to=2-1]
        \arrow[dotted, from=1-1, to=2-2]
        \arrow["\lrcorner"{anchor=center, pos=0.15, scale=1.5}, draw=none, from=1-1, to=3-2]
        \arrow["{A}"', dashed, from=2-1, to=3-2]
        \arrow[from=2-2, to=3-2]
      \end{tikzcd}
      \right\}
    \to
    \left\{
      (A, B) \middle|
      \begin{tikzcd}[cramped, row sep=small, column sep=small]
        {\Gamma.A} & \McU \\
        \Gamma & \tMcU \\
        & \McU
        \arrow["{B}", dashed, from=1-1, to=1-2]
        \arrow[from=1-1, to=2-1]
        \arrow[dotted, from=1-1, to=2-2]
        \arrow["\lrcorner"{anchor=center, pos=0.15, scale=1.5}, draw=none, from=1-1, to=3-2]
        \arrow["{A}"', dashed, from=2-1, to=3-2]
        \arrow[from=2-2, to=3-2]
      \end{tikzcd}
    \right\}
    \cong
    \bC(\Gamma, \pi_*(\tMcU \times \McU))
  \end{equation*}
  taking each $(A,B)$ to the postcomposite $(\El(A),\El(B))$ with
  $\El \colon \McU_0 \to \McU$ like so:
  \begin{equation*}
    \begin{tikzcd}[cramped, row sep=small, column sep=small]
      {\Gamma.A} & {\McU_0} & \McU \\
      \Gamma & {\tMcU_0} & \tMcU \\
      & {\McU_0} & \McU
      \arrow["{B}", dashed, from=1-1, to=1-2]
      \arrow[from=1-1, to=2-1]
      \arrow[dotted, from=1-1, to=2-2]
      \arrow["\lrcorner"{anchor=center, pos=0.15, scale=1.5}, draw=none, from=1-1, to=3-2]
      \arrow[from=1-2, to=1-3]
      \arrow["{A}"', dashed, from=2-1, to=3-2]
      \arrow[from=2-2, to=2-3]
      \arrow[from=2-2, to=3-2]
      \arrow["\lrcorner"{anchor=center, pos=0.15, scale=1.5}, draw=none, from=2-2, to=3-3]
      \arrow[from=2-3, to=3-3]
      \arrow[from=3-2, to=3-3]
    \end{tikzcd}
  \end{equation*}
  In other words, the constructed map syntactically decodes the code for an
  internal type.
\end{remark}
\begin{remark}
  If our goal was to construct the usual map
  \begin{equation*}
    (\pi_0)_*(\tMcU_0 \times \McU_0) \xrightarrow{\qquad} \pi_*(\tMcU \times \McU)
  \end{equation*}
  to speak about closure of logical structures, we could have avoided
  entirely the discussion about $\GenComp$ in \Cref{constr:int-univ-gencomp}.
  However, because our formulation of cubical type theory includes a
  \emph{heterogeneous} $\Path$-type as in \Cref{def:htr-path-type}, whose
  definition requires $\GenComp$, axiomatising closure under heterogeneous
  $\Path$-types does require the full generality of
  \Cref{constr:int-univ-gencomp}.
\end{remark}

Applying \Cref{def:Id-type,def:sigma-type,def:pi-type,def:unit-type} to both an
external ambient universe and an internal universe, we are now ready to define
when the various type-theoretic structures in these two universes are compatible
with each other.

\begin{definition}\label{def:int-univ-sigma-pi-id}
  Let $\bC$ be a locally cartesian closed category with a choice of a terminal
  object 1.
  Fix $\Sigma,\Pi$-type structures
  \begin{align*}
    \Sigma,\Pi \colon \pi_*(\tMcU \times \tMcU) \rightrightarrows \McU
    &&
       \Sigma_0,\Pi_0 \colon (\pi_0)_*(\tMcU_0 \times \tMcU_0) \rightrightarrows \McU_0
  \end{align*}
  respectively on a universal map $\pi \colon \tMcU \to \McU$ and an internal
  universe $\tMcU_0 \to \McU_0$ on it with $\pi$-fibrancy structure
  $(\El,\tEl)$.
  The $\Sigma$- and $\Pi$-structures are \emph{compatible} when the diagram on
  the left below commutes.
  \begin{center}
    \begin{minipage}{0.45\linewidth}
      \begin{equation*}
        \begin{tikzcd}[cramped, column sep=5em]
          {(\pi_0)_*(\tMcU_0 \times \tMcU_0)} & {\pi_*(\tMcU \times \tMcU)} \\
          {\tMcU_0} & \tMcU
          \arrow["{\text{\Cref{constr:int-univ-gencomp}}}", from=1-1, to=1-2]
          \arrow["{\Pi_0}"', shift right, from=1-1, to=2-1]
          \arrow["{\Sigma_0}", shift left, from=1-1, to=2-1]
          \arrow["\Pi"', shift right, from=1-2, to=2-2]
          \arrow["\Sigma", shift left, from=1-2, to=2-2]
          \arrow["\El"', from=2-1, to=2-2]
        \end{tikzcd}
      \end{equation*}
    \end{minipage}
    \begin{minipage}{0.45\linewidth}
      \begin{equation*}
        \begin{tikzcd}[cramped]
          {\tMcU_0 \times_{\McU_0} \tMcU_0} & {\tMcU \times_{\McU} \tMcU} \\
          {\tMcU_0} & \tMcU
          \arrow["{\tEl \times_\El \tEl}", from=1-1, to=1-2]
          \arrow["{\Id_0}"', shift right, from=1-1, to=2-1]
          \arrow["\Id", shift right, from=1-2, to=2-2]
          \arrow["\El"'{pos=0.5}, from=2-1, to=2-2]
        \end{tikzcd}
      \end{equation*}
    \end{minipage}
  \end{center}

  Likewise, pullbacks commutes with pullbacks, so if
  \begin{align*}
   \Id \colon \tMcU \times_\McU \tMcU \to \McU
    &&
       \Id_0 \colon \tMcU_0 \times_{\McU_0} \tMcU_0 \to \McU_0
  \end{align*}
  are respective $\Id$-type structures on $\tMcU \to \McU$ and
  $\tMcU_0 \to \McU_0$ then they are \emph{compatible} when the diagram on
  the right above commutes.

  Finally, if
  \begin{align*}
    \Unit \colon 1 \to \McU && \Unit_0 \colon 1 \to \McU_0
  \end{align*}
  are respective $\Unit$-type structures on $\tMcU \to \McU$ and
  $\tMcU_0 \to \McU_0$ then they are \emph{compatible} when
  the composites
  \begin{equation*}
    \El(\Unit_0) = \Unit
  \end{equation*}
  agree.
\end{definition}


\section{Universe Axioms for Cubical Type Theory}\label{sec:univ-axioms}
With the background on universe category models of usual intensional type theory
recalled in the previous section, we next provide the universe category
axiomatisation of cubical type theory.

\begin{assumption}\label{asm:cubical-univs}
  From now, we work in a locally cartesian closed category $\bC$ with three
  universal maps, respectively the universe of fibrations, the universe of
  cofibrations and the interval object
  \begin{center}
    \begin{minipage}{0.3\linewidth}
      \begin{equation*}
        \begin{tikzcd}
          \tMcU \ar[d, "{\pi}"] \\ \McU
        \end{tikzcd}
      \end{equation*}
    \end{minipage}
    \begin{minipage}{0.3\linewidth}
      \begin{equation*}
        \begin{tikzcd}
          \partial\Cof \ar[d, "{\iota}", hook] \\ \Cof
        \end{tikzcd}
      \end{equation*}
    \end{minipage}
    \begin{minipage}{0.3\linewidth}
      \begin{equation*}
        \begin{tikzcd}
          \bI \ar[d, "{!}"] \\ 1
        \end{tikzcd}
      \end{equation*}
    \end{minipage}
  \end{center}
  where we further require $\partial\Cof \hookrightarrow \Cof$ to be a mono.
  %
\end{assumption}

We rephrase some of the wording from \Cref{def:univ-fibration} when discussing
pullbacks of the universe of cofibrations $\partial\Cof \hookrightarrow \Cof$.

\begin{definition}\label{def:univ-cofibration}
  The \emph{$\iota$-name} and \emph{$\iota$-point} of a map $U \to V$ are maps
  $\ceil{U}$ and $\widetilde{\ceil{U}}$ as below such that one has a pullback
  square as follows.
  \begin{equation*}
    \begin{tikzcd}[cramped]
      U & \partial\Cof \\
      V & \Cof
      \arrow["{\widetilde{\ceil{U}}}", from=1-1, to=1-2]
      \arrow[from=1-1, to=2-1, hook]
      \arrow["\lrcorner"{anchor=center, pos=0.15, scale=1.5}, draw=none, from=1-1, to=2-2]
      \arrow[from=1-2, to=2-2, hook]
      \arrow["{\ceil{U}}"', from=2-1, to=2-2]
    \end{tikzcd}
  \end{equation*}
  Together, the $\iota$-(name, point) pair $(\ceil{U}, \widetilde{\ceil{U}})$ forms
  a \emph{$\iota$-cofibrancy structure} for $U \to V$.
  Maps that can be equipped with a $\iota$-cofibrancy structure are called
  \emph{$\iota$-cofibrations}.
  As an object in the slice over $V$, we say
  $U \hookrightarrow V \in \sfrac{\bC}{V}$ is an $\iota$-cofibrant object.
\end{definition}

\subsection{Cofibration Structure}\label{subsec:cofib-struct}
We require the universe of cofibrations $\partial\Cof \hookrightarrow \Cof$ to
have the structure of a lattice of truth values relative to $\tMcU \to \McU$.
Briefly, it needs to be equipped with ``and'' and ``or'' operations as well as
``top'' and ``bottom'' elements, except the left concepts of ``or'' and
``bottom'' are relative to $\tMcU \to \McU$.

\subsubsection{And-Structure}\label{subsubsec:and-struct}
%

\begin{definition}\label{def:cof-and}
  An \emph{and-structure} on $\partial\Cof \hookrightarrow \Cof$ is an
  $\iota$-cofibration structure on the map
  $\partial\Cof \times \partial\Cof \hookrightarrow \Cof \times \Cof$ with
  $\iota$-name $-\wedge- \colon \Cof \times \Cof \to \Cof$.
  \begin{equation*}
    \begin{tikzcd}[cramped]
      {\partial\Cof \times \partial\Cof}
      \ar[r] \ar[d, hook]
      \ar[rd, "\lrcorner"{pos=0.15, scale=1.5}, draw=none]
      &
      {\partial\Cof}
      \ar[d, hook]
      \\
      {\Cof \times \Cof}
      \ar[r, "{- \wedge -}"']
      &
      \Cof
    \end{tikzcd}
  \end{equation*}
\end{definition}

\begin{remark}\label{rmk:cof-and-ptwise}
  We now unfold what it means to have an and-structure on
  $\partial\Cof \hookrightarrow \Cof$.

  To do so, we first note that $\partial\Cof \times \partial\Cof$ is the product
  of $\partial\Cof \times \Cof \to \Cof \times \Cof$ with
  $\Cof \times \partial\Cof \to \Cof \times \Cof$ in the slice over
  $\Cof \times \Cof$.
  And for each object $(\phi_1, \phi_2) \colon \Gamma \to \Cof \times \Cof$ over
  $\Cof \times \Cof$, one has pullbacks
  \begin{align*}
    (\phi_1,\phi_2)^*(\partial\Cof \times \Cof) \cong
    \phi_1^*\partial\Cof \cong \Gamma.\phi_1
    &&
       (\phi_1,\phi_2)^*(\Cof \times \partial\Cof) \cong
       \phi_2^*\partial\Cof \cong \Gamma.\phi_2
  \end{align*}
  as observed in the front and back slanted faces below.
  Because pullbacks commute with pullbacks, over each $\Gamma$, one sees that
  $\Gamma.(\phi_1 \wedge \phi_2) \eqqcolon (\phi_1 \wedge \phi_2)^*\partial\Cof$
  is the fibred product $\Gamma.\phi_1 \times_\Gamma \Gamma.\phi_2$ over
  $\Gamma$, as observed in the left face below.
  \begin{equation*}
    \begin{tikzcd}[column sep=small, row sep=small]
      & {\Gamma.(\phi_1 \wedge \phi_2)} &&& {\partial\Cof \times \partial\Cof} && {\partial\Cof} \\
      && {\Gamma.\phi_2} &&& {\Cof \times \partial\Cof} \\
      {\Gamma.\phi_1} &&& {\partial\Cof \times \Cof} \\
      & \Gamma &&& {\Cof \times \Cof} && \Cof
      \arrow[from=1-2, to=1-5]
      \arrow[hook, from=1-2, to=2-3]
      \arrow[hook, from=1-2, to=3-1]
      \arrow[from=1-5, to=1-7]
      \arrow[hook, from=1-5, to=2-6]
      \arrow[hook, from=1-5, to=3-4]
      \arrow[hook, from=1-5, to=4-5]
      \arrow[hook, from=1-7, to=4-7]
      \arrow[hook, from=2-6, to=4-5]
      \arrow[from=3-1, to=3-4]
      \arrow[hook, from=3-1, to=4-2]
      \arrow[hook, from=3-4, to=4-5]
      \arrow["{(\phi_1,\phi_2)}"', from=4-2, to=4-5]
      \arrow["{- \wedge -}"', from=4-5, to=4-7]
      \arrow[crossing over, hook, from=1-2, to=4-2]
      \arrow[crossing over, from=2-3, to=2-6]
      \arrow[crossing over, hook, from=2-3, to=4-2]
    \end{tikzcd}
  \end{equation*}

  By requiring
  $\partial\Cof \times \partial\Cof \hookrightarrow \Cof \times \Cof$ to arise
  as a pullback of $\partial\Cof \hookrightarrow \Cof$, for all
  $f \colon \Delta \to \Gamma$, we further see that
  $\Gamma.(\phi_1 \wedge \phi_2) \to \Gamma$ pulls back along
  $f \colon \Delta \to \Gamma$ to the map $\Delta.(f^*\phi_1 \wedge f^*\phi_2)$
  obtained by taking the fibred product of the pulled back components
  $\Delta.f^*\phi_1,\Delta.f^*\phi_2 \rightrightarrows \Delta$.
  \begin{equation*}
    \begin{tikzcd}[column sep=small, row sep=small]
      &
      {\begin{matrix}
        \Delta.f^*(\phi_1 \wedge \phi_2) \\
        \rotatebox{90}{=} \\
        \Delta.(f^*\phi_1 \wedge f^*\phi_2)
      \end{matrix}}
      &&& {\Gamma.(\phi_1 \wedge \phi_2)} \\
      && {\Delta.f^*\phi_2} &&& {\Gamma.\phi_2} \\
      {\Delta.f^*\phi_1} &&& {\Gamma.\phi_1} \\
      & \Delta &&& \Gamma
      \arrow[from=1-2, to=1-5]
      \arrow[hook, from=1-2, to=2-3]
      \arrow[hook, from=1-2, to=3-1]
      \arrow[hook, from=1-5, to=2-6]
      \arrow[hook, from=1-5, to=3-4]
      \arrow[hook, from=1-5, to=4-5]
      \arrow[hook, from=2-6, to=4-5]
      \arrow[from=3-1, to=3-4]
      \arrow[hook, from=3-1, to=4-2]
      \arrow[hook, from=3-4, to=4-5]
      \arrow["f"', from=4-2, to=4-5]
      \arrow[crossing over, hook, from=1-2, to=4-2]
      \arrow[crossing over, from=2-3, to=2-6]
      \arrow[crossing over, hook, from=2-3, to=4-2]
    \end{tikzcd}
  \end{equation*}
\end{remark}

\subsubsection{Or-Structure}\label{subsubsec:or-struct}
\begin{definition}\label{def:cof-or}
  A \emph{pre-or-structure} on $\partial\Cof \hookrightarrow \Cof$ consists of:
  \begin{enumerate}
    \item A map $\partial(\Cof \times \Cof) \hookrightarrow \Cof \times \Cof$
    \item Two maps
    $\inj_1 \colon \partial\Cof \times \Cof \hookrightarrow \partial(\Cof \times
    \Cof)$ and
    $\inj_2 \colon \Cof \times \partial\Cof \hookrightarrow \partial(\Cof \times
    \Cof)$
    \item An $\iota$-cofibration structure on
    $\partial(\Cof \times \Cof) \hookrightarrow \Cof \times \Cof$ with name
    denoted by $- \vee - \colon \Cof \times \Cof \to \Cof$.
    \begin{equation*}
      \begin{tikzcd}[cramped]
        {\partial(\Cof \times \Cof)}
        \ar[r] \ar[d, hook]
        \ar[rd, "\lrcorner"{pos=0.15,scale=1.5}, draw=none]
        &
        {\partial\Cof}
        \ar[d, hook]
        \\
        {\Cof \times \Cof}
        \ar[r, "{- \vee -}"']
        &
        \Cof
      \end{tikzcd}
    \end{equation*}
  \end{enumerate}
  Such that the following diagram commutes
  \begin{equation*}
    \begin{tikzcd}
      {\partial \Cof \times \partial \Cof} & {\Cof \times \partial\Cof} \\
      {\partial\Cof \times \Cof} & {\partial(\Cof \times \Cof)} \\
      && {\Cof \times \Cof}
      \arrow[from=1-1, to=1-2, hook]
      \arrow[from=1-1, to=2-1, hook]
      \arrow["{\inj_2}"{description}, hook, from=1-2, to=2-2]
      \arrow[curve={height=-12pt}, from=1-2, to=3-3, hook]
      \arrow["{\inj_1}"{description}, hook, from=2-1, to=2-2]
      \arrow[curve={height=12pt}, from=2-1, to=3-3, hook]
      \arrow[from=2-2, to=3-3, hook]
    \end{tikzcd}
  \end{equation*}
  and that the following diagram is a pullback square
  \begin{equation}\label{eqn:or-cof}\tag{\textsc{or-cof}}
    \small
    \begin{tikzcd}
      {[\Cof \times \partial(\Cof \times \Cof), \partial\Cof \times \Cof^2]_{\Cof^3}} & {[\Cof \times \Cof \times \partial\Cof, \partial\Cof \times \Cof^2]_{\Cof^3}} \\
      {[\Cof \times \partial \Cof \times \Cof, \partial\Cof \times \Cof^2]_{\Cof^3}} & {[\Cof \times \partial\Cof \times \partial\Cof, \partial\Cof \times \Cof^2]_{\Cof^3}}
      \arrow[from=1-1, to=1-2]
      \arrow[from=1-1, to=2-1]
      \arrow["\lrcorner"{anchor=center, pos=0.15, scale=1.5, rotate=0}, draw=none, from=1-1, to=2-2]
      \arrow[from=1-2, to=2-2]
      \arrow[from=2-1, to=2-2]
    \end{tikzcd}
  \end{equation}

  Such a pre-or-structure is an \emph{or-structure relative to} a map $X \to Y$
  when the squares induced by precomposition are pullback for each
  $B \in \set{X, Y}$.
  \begin{equation}\label{eqn:or-fib}\tag{\textsc{or-fib}}
    \small
    \begin{tikzcd}
      {[\partial(\Cof \times \Cof), B \times \Cof^2]_{\Cof^2}} & {[\Cof \times \partial\Cof, B \times \Cof^2]_{\Cof^2}} \\
      {[\partial \Cof \times \Cof, B \times \Cof^2]_{\Cof^2}} & {[\partial\Cof \times \partial\Cof, B \times \Cof^2]_{\Cof^2}}
      \arrow[from=1-1, to=1-2]
      \arrow[from=1-1, to=2-1]
      \arrow["\lrcorner"{anchor=center, pos=0.15, scale=1.5, rotate=0}, draw=none, from=1-1, to=2-2]
      \arrow[from=1-2, to=2-2]
      \arrow[from=2-1, to=2-2]
    \end{tikzcd}
  \end{equation}
\end{definition}

\begin{remark}\label{rmk:or-psh-prod-approx}
  Using the vocabulary of \cite[Definition 4.4]{struct-lift}, we see in that
  with an or-structure relative to $\tMcU \to \McU$, in the slice over
  $\Cof \times \Cof$ the map
  $\partial(\Cof \times \Cof) \hookrightarrow \Cof \times \Cof$
  \emph{structurally approximates} the pushout-product
  \begin{equation*}
    (\Cof \times \partial\Cof \hookrightarrow \Cof \times \Cof) \ltimes_{\Cof
      \times \Cof} (\partial\Cof \times \Cof \hookrightarrow \Cof \times \Cof)
  \end{equation*}
  relative to $\tMcU \times \Cof \times \Cof \to \McU \times \Cof \times \Cof$.

  Likewise,
  $\Cof \times \partial(\Cof \times \Cof) \hookrightarrow \Cof \times \Cof^2$
  structurally approximates the pushout-product
  \begin{equation*}
    (\Cof \times \partial \Cof \times \Cof \hookrightarrow \Cof \times \Cof^2)
    \ltimes_{\Cof \times \Cof^2} (\Cof \times \Cof \times \partial\Cof
    \hookrightarrow \Cof \times \Cof^2) \ltimes_{\Cof \times \Cof^2}
  \end{equation*}
  relative to $\partial\Cof \times \Cof^2 \hookrightarrow \Cof \times \Cof^2$.
\end{remark}

\begin{remark}\label{rmk:cof-and-or-ptwise}
  We now unfold what it means to have both an and-structure and an or-structure
  on $\partial\Cof \hookrightarrow \Cof$ relative to $\tMcU \to \McU$.

  As in \Cref{rmk:cof-and-ptwise}, a pre-or-structures gives, for each
  $(\phi_1, \phi_2) \colon \Gamma \to \Cof \times \Cof$ the
  following diagram
  \begin{equation*}
    \begin{tikzcd}
      {\Gamma.(\phi_1 \wedge \phi_2)} & {\Gamma.\phi_2} \\
      {\Gamma.\phi_1} & {\Gamma.(\phi_1 \vee \phi_2)} \\
      && \Gamma
      \arrow[hook, from=1-1, to=1-2]
      \arrow[hook, from=1-1, to=2-1]
      \arrow[hook, from=1-2, to=2-2]
      \arrow[curve={height=-12pt}, hook, from=1-2, to=3-3]
      \arrow[hook, from=2-1, to=2-2]
      \arrow[curve={height=12pt}, hook, from=2-1, to=3-3]
      \arrow[hook, from=2-2, to=3-3]
    \end{tikzcd}
  \end{equation*}
  such that for each $f \colon \Delta \to \Gamma$, one has
  $\Delta.f^*(\phi_1 \vee \phi_2) = \Delta.(f^*\phi_1 \vee f^*\phi_2)$.

  When this is an or-structure relative to $\tMcU \to \McU$, one applies
  representability to observe a bijection
  \begin{equation*}
    \begin{matrix}
      \sfrac{\bC}{\Cof \times \Cof}(\Gamma.(\phi_1 \vee \phi_2), B \times \Cof \times \Cof)
      \\
      \cong
      \\
      \sfrac{\bC}{\Cof \times \Cof}(\Gamma.\phi_1, B \times \Cof \times \Cof)
      \times_{\sfrac{\bC}{\Cof \times \Cof}(\Gamma.(\phi_1 \wedge \phi_2), B \times \Cof \times \Cof)}
      \sfrac{\bC}{\Cof \times \Cof}(\Gamma.\phi_2, B \times \Cof \times \Cof)
    \end{matrix}
  \end{equation*}
  for $B \in \set{\tMcU, \McU}$.
  By the adjunction between pullback and postcomposition, this is equivalently
  \begin{align*}
    \bC(\Gamma.(\phi_1 \vee \phi_2), B) \cong
    \bC(\Gamma.\phi_1, B) \times_{\bC(\Gamma.(\phi_1 \wedge \phi_2), B)} \bC(\Gamma.\phi_2, B)
  \end{align*}

  In other words, as seen in the diagram on the left below, for each pair of
  maps $(A|_{\phi_i} \colon \Gamma.\phi_i \to \McU)_{i=1,2}$ such that the
  common restriction to $\Gamma.(\phi_1 \wedge \phi_2)$ agree, one obtains a
  common extension
  $(A|_{\phi_1}, A|_{\phi_2}) \colon \Gamma.(\phi_1 \vee \phi_2) \to \McU$.
  Conversely, for each $A \colon \Gamma.(\phi_1 \vee \phi_2) \to \McU$, taking
  the common extension $(A|_{\phi_1}, A|_{\phi_2})$ of their restrictions
  $A|_{\phi_i} \colon \Gamma.\phi_i \to \McU$ is precisely the original map $A$.
  Also as seen in the diagram on the right below, the same bijection also
  applies for pair of maps $(a_i \colon \Gamma.\phi_i \to \tMcU)_{i=1,2}$ and
  maps $a \colon \Gamma.(\phi_1 \vee \phi_2) \to \tMcU$.
  \begin{center}
    \begin{minipage}{0.45\linewidth}
      \begin{equation*}
        \begin{tikzcd}
          {\Gamma.(\phi_1 \wedge \phi_2)} & {\Gamma.\phi_2} \\
          {\Gamma.\phi_1} & {\Gamma.(\phi_1 \vee \phi_2)} \\
          && \McU
          \arrow[hook, from=1-1, to=1-2]
          \arrow[hook, from=1-1, to=2-1]
          \arrow[hook, from=1-2, to=2-2]
          \arrow["{A|_{\phi_2}}", curve={height=-24pt}, from=1-2, to=3-3]
          \arrow[hook, from=2-1, to=2-2]
          \arrow["{A|_{\phi_1}}"', curve={height=24pt}, from=2-1, to=3-3]
          \arrow["{A = (A|_{\phi_1}, A|_{\phi_2})}"{description}, from=2-2, to=3-3]
        \end{tikzcd}
      \end{equation*}
    \end{minipage}
    \begin{minipage}{0.45\linewidth}
      \begin{equation*}
        \begin{tikzcd}
          {\Gamma.(\phi_1 \wedge \phi_2)} & {\Gamma.\phi_2} \\
          {\Gamma.\phi_1} & {\Gamma.(\phi_1 \vee \phi_2)} \\
          && \tMcU
          \arrow[hook, from=1-1, to=1-2]
          \arrow[hook, from=1-1, to=2-1]
          \arrow[hook, from=1-2, to=2-2]
          \arrow["{a|_{\phi_2}}", curve={height=-24pt}, from=1-2, to=3-3]
          \arrow[hook, from=2-1, to=2-2]
          \arrow["{a|_{\phi_1}}"', curve={height=24pt}, from=2-1, to=3-3]
          \arrow["{a = (a|_{\phi_1}, a|_{\phi_2})}"{description}, from=2-2, to=3-3]
        \end{tikzcd}
      \end{equation*}
    \end{minipage}
  \end{center}
  Furthermore, functoriality of the internal-Hom implies that restriction
  commutes with $\pi \colon \tMcU \to \McU$, in that
  $\pi(a|_{\phi_1}, a|_{\phi_2}) = (\pi a|_{\phi_1}, \pi a|_{\phi_2}) \colon
  \Gamma.(\phi_1 \vee \phi_2) \to \McU$.

  Also, as expected, the above data is stable under pullback, in that for any
  $f \colon \Delta \to \Gamma$, one has
  \begin{equation*}
    (f.(\phi_1 \vee \phi_2))^*(A|_{\phi_1}, A|_{\phi_2}) =
    ((f.\phi_1)^*A|_{\phi_1}, (f.\phi_2)^*A|_{\phi_2}) \colon \Delta.(f^*\phi_1
    \vee f^*\phi_2) \to \McU
  \end{equation*}
  and
  \begin{equation*}
    (f.(\phi_1 \vee \phi_2))^*(a|_{\phi_1}, a|_{\phi_2}) =
    ((f.\phi_1)^*a|_{\phi_1}, (f.\phi_2)^*a|_{\phi_2}) \colon \Delta.(f^*\phi_1
    \vee f^*\phi_2) \to \tMcU
  \end{equation*}
\end{remark}

By replacing the the colimiting concepts of pushout-product with their limiting
approximations, we see that or-structures transfer into internal universes.
\begin{lemma}\label{lem:or-struct-pb}
  Let $\partial\Cof \hookrightarrow \Cof$ be equipped with an or-structure
  relative to $\tMcU \to \McU$.
  Suppose that one has a $\pi$-fibration
  \begin{equation*}
    \begin{tikzcd}[cramped]
      E & \tMcU \\
      B & \McU
      \arrow[from=1-1, to=1-2]
      \arrow[from=1-1, to=2-1, two heads]
      \arrow["\lrcorner"{anchor=center, pos=0.15, scale=1.5}, draw=none, from=1-1, to=2-2]
      \arrow[from=1-2, to=2-2, two heads]
      \arrow[from=2-1, to=2-2]
    \end{tikzcd}
  \end{equation*}
  where \Cref{eqn:or-fib} holds for $B$.
  \begin{enumerate}
    \item Then, \Cref{eqn:or-fib} holds for $E$.
    \item So, $\partial\Cof \hookrightarrow \Cof$ also has an or-structure relative to
    $E \to B$.
    \item In particular, if the terminal map $B \to 1$ is also a $\pi$-fibration
    $\tMcU \to \McU$ then $\partial\Cof \hookrightarrow \Cof$ has an
    or-structure relative to $E \to B$
  \end{enumerate}
\end{lemma}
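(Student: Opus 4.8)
The plan is to prove the three items in sequence, with item (1) carrying all the weight and items (2), (3) being immediate consequences. For item (1), I need to show that whenever \eqref{eqn:or-fib} is a pullback square with $B$ at the corner, it is also a pullback square with $E$ at the corner. The key observation is that $E$ is \emph{not} arbitrary: it sits in a pullback square over $B$ exhibiting $E \twoheadrightarrow B$ as a $\pi$-fibration, so $\widetilde{\ceil{E}} \colon E \to \tMcU$ is a pullback of $\pi \colon \tMcU \to \McU$ along $\ceil{E} \colon B \to \McU$. The strategy is therefore to express the $E$-version of \eqref{eqn:or-fib} as a pullback (in the arrow category, or levelwise) of the $\tMcU$-version along the $B$-version, using \Cref{lem:pullback-Hom-stable} to commute internal-Homs past pullbacks, and then invoke the elementary fact that a pullback of a pullback square is a pullback square.

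\textbf{Carrying it out.}
First I would fix the notation: write $j_E$ for the square \eqref{eqn:or-fib} with corner $E \times \Cof^2$, and similarly $j_B$, $j_{\tMcU}$, $j_{\McU}$ for the corners $B$, $\tMcU$, $\McU$. By hypothesis $j_B$ and $j_{\McU}$ are pullbacks (the former because \eqref{eqn:or-fib} holds for $B$, the latter because an or-structure relative to $\tMcU \to \McU$ is assumed, which by definition includes \eqref{eqn:or-fib} for $\McU$). Now each internal-Hom $[X, E \times \Cof^2]_{\Cof^2}$ appearing in $j_E$ can be computed as a pullback: from the defining square $E \cong B \times_\McU \tMcU$, since internal-Homs preserve limits in the codomain variable, one gets $[X, E \times \Cof^2]_{\Cof^2} \cong [X, B \times \Cof^2]_{\Cof^2} \times_{[X, \McU \times \Cof^2]_{\Cof^2}} [X, \tMcU \times \Cof^2]_{\Cof^2}$ naturally in $X$. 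This exhibits the whole commutative square $j_E$ as the pullback $j_B \times_{j_{\McU}} j_{\tMcU}$ in the category of commutative squares. Since $j_B$ and $j_{\tMcU}$ are pullbacks and $j_{\McU}$ is a pullback, and pullbacks of (pointwise) pullback squares are pullback squares, $j_E$ is a pullback square, which is item (1).

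\textbf{Items (2) and (3) and the main obstacle.}
For item (2): an or-structure relative to $E \to B$ is, by \Cref{def:cof-or}, exactly the pre-or-structure (already fixed) together with \eqref{eqn:or-fib} holding for each $B' \in \{E, B\}$; we have \eqref{eqn:or-fib} for $B$ by hypothesis and for $E$ by item (1), so the structure is present. Item (3) is the special case where additionally $B \to 1$ is a $\pi$-fibration: then applying item (1) with the $\pi$-fibration $B \twoheadrightarrow 1$ in place of $E \twoheadrightarrow B$ (whose name is $\ceil{B} \colon 1 \to \McU$) shows \eqref{eqn:or-fib} holds for $B$ as well — here the ``base'' \eqref{eqn:or-fib} for $1$ is just \eqref{eqn:or-fib} for the terminal object, which holds since it is a retract of the $\McU$-case or can be checked directly — and then item (2) applies. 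I expect the main obstacle to be purely bookkeeping: making the identification ``$j_E \cong j_B \times_{j_{\McU}} j_{\tMcU}$ as commutative squares'' fully precise requires invoking \Cref{lem:pullback-Hom-stable}-style naturality of internal-Homs in the slice $\sfrac{\bC}{\Cof^2}$ at each of the four vertices compatibly, and then appealing to the (standard but slightly fussy) lemma that a limit of pullback squares, computed vertexwise, is again a pullback square. No genuinely hard step is anticipated; the content is entirely that $E$ was built as a pullback and internal-Homs commute with pullbacks.
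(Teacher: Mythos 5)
Your proof is correct and follows essentially the same route as the paper's: both arguments express the four vertices of the $E$-version of \eqref{eqn:or-fib} as limits of the corresponding vertices for $B \to \McU \leftarrow \tMcU$ (using that internal-Hom preserves limits in the second variable), and then invoke Fubini/limits-commute-with-limits — the paper does this via an explicit $3\times 3$ grid, you phrase it as a vertexwise limit in the category of commutative squares, which is the same content. The only soft spot is the parenthetical ``retract of the $\McU$-case'' for why \eqref{eqn:or-fib} holds at $1$, which needs a point of $\McU$ and is not generally available; the alternative you also offer (check directly, using that $[X, 1 \times \Cof^2]_{\Cof^2}$ is terminal for all $X$) is the paper's argument and is the one to keep.
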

\begin{proof}
  To prove the first part, one writes
  $C \colon \bC \to \sfrac{\bC}{\Cof\times\Cof}$ for the rebase map, and use the
  usual 3-by-3 argument, as described below.

  In the diagram below, the limits of each row and column in the grid are
  respectively denoted by the objects in the bottom-most row and right-most
  column.
  \begin{equation*}\scriptsize
    \begin{tikzcd}[cramped, column sep=tiny]
      {[\partial\Cof \times \Cof, C(B)]_{\Cof \times \Cof}} & {[\partial\Cof \times \Cof, C(\McU)]_{\Cof \times \Cof}} & {[\partial\Cof \times \Cof, C(\tMcU)]_{\Cof \times \Cof}} & {[\partial\Cof \times \Cof, C(E)]_{\Cof \times \Cof}} \\
      {[\partial\Cof \times \partial\Cof, C(B)]_{\Cof \times \Cof}} & {[\partial\Cof \times \partial\Cof, C(\McU)]_{\Cof \times \Cof}} & {[\partial\Cof \times \partial\Cof, C(\tMcU)]_{\Cof \times \Cof}} & {[\partial\Cof \times \partial\Cof, C(E)]_{\Cof \times \Cof}} \\
      {[\Cof \times \partial\Cof, C(B)]_{\Cof \times \Cof}} & {[\Cof \times \partial\Cof, C(\McU)]_{\Cof \times \Cof}} & {[\Cof \times \partial\Cof, C(\tMcU)]_{\Cof \times \Cof}} & {[\partial\Cof \times \Cof, C(E)]_{\Cof \times \Cof}} \\
      {[\partial(\Cof \times \Cof), C(B)]_{\Cof \times \Cof}} & {[\partial(\Cof \times \Cof), C(\McU)]_{\Cof \times \Cof}} & {[\partial(\Cof \times \Cof), C(\tMcU)]_{\Cof \times \Cof}} & {[\partial(\Cof \times \Cof), C(E)]_{\Cof \times \Cof}}
      \arrow[from=1-1, to=1-2]
      \arrow[from=1-1, to=2-1]
      \arrow[from=1-2, to=2-2]
      \arrow[from=1-3, to=1-2]
      \arrow[from=1-3, to=2-3]
      \arrow[from=1-4, to=2-4]
      \arrow[from=2-1, to=2-2]
      \arrow[from=2-3, to=2-2]
      \arrow[from=3-1, to=2-1]
      \arrow[from=3-1, to=3-2]
      \arrow["{\text{assumption}}"{description}, draw=none, from=3-1, to=4-1]
      \arrow[from=3-2, to=2-2]
      \arrow["{\text{or-structure}}"{description}, draw=none, from=3-2, to=4-2]
      \arrow[from=3-3, to=2-3]
      \arrow[from=3-3, to=3-2]
      \arrow["{\text{or-structure}}"{description}, draw=none, from=3-3, to=4-3]
      \arrow[from=3-4, to=2-4]
      \arrow[from=4-1, to=4-2]
      \arrow[from=4-3, to=4-2]
    \end{tikzcd}
  \end{equation*}
  But we note that the limit of the bottom-most row is
  $[\partial(\Cof \times \Cof), C(E)]_{\Cof \times \Cof}$ as the internal-Hom is
  continuous in the second argument.
  Therefore, $[\partial(\Cof \times \Cof), C(E)]_{\Cof \times \Cof}$ is the
  limit of the entire diagram.
  Limits commute with limits, so the limit of the entire diagram can also be
  equivalently computed by taking the limits of each row and then taking the
  limit of the resulting span.
  Hence, looking at the rightmost column above, one notes
  \begin{equation*}
    [\partial(\Cof \times \Cof), C(E)]_{\Cof \times \Cof}
    \cong
    [\partial\Cof \times \Cof, C(E)]_{\Cof \times \Cof}
    \times_\bullet
    [\Cof \times \partial\Cof, C(E)]_{\Cof \times \Cof}
  \end{equation*}
  as required for \Cref{def:cof-or}.

  With the first part proved, the second part follows by \Cref{def:cof-or}.
  For the third part part, we note that \Cref{eqn:or-fib} holds
  for the terminal object $1$ because $1 \times \Cof \times \Cof$ is the
  terminal object in $\sfrac{\bC}{\Cof \times \Cof}$.
  Then, the result follows by using the second part twice.
\end{proof}

\subsubsection{Top and Bottom Structures}\label{subsubsec:top-bot-strut}
\begin{definition}\label{def:cof-top-bot}
  A \emph{top structure} on the universe of cofibrations
  $\partial\Cof \hookrightarrow \Cof$ is a map $\ceil{\top} \colon 1 \to \Cof$
  along which the selected pullback of $\partial\Cof \hookrightarrow \Cof$ is an
  isomorphism.
  \begin{center}
    \begin{minipage}{0.45\linewidth}
      \begin{equation*}
        \begin{tikzcd}[cramped]
          {\top}
          \ar[r] \ar[d, "\cong"']
          \ar[rd, "\lrcorner"{anchor=center,pos=0.15,scale=1.5}, draw=none]
          &
          {\partial\Cof}
          \ar[d, hook]
          \\
          {1}
          \ar[r, "{\ceil{\top}}"']
          &
          \Cof
        \end{tikzcd}
      \end{equation*}
    \end{minipage}
    \begin{minipage}{0.45\linewidth}
      \begin{equation*}
        \begin{tikzcd}[cramped]
          {\bot}
          \ar[r] \ar[d, hook]
          \ar[rd, "\lrcorner"{anchor=center,pos=0.15,scale=1.5}, draw=none]
          &
          {\partial\Cof}
          \ar[d, hook]
          \\
          {1}
          \ar[r, "{\ceil{\bot}}"']
          &
          \Cof
        \end{tikzcd}
      \end{equation*}
    \end{minipage}
  \end{center}

  A \emph{pre-bottom structure} on the universe of cofibrations
  $\partial\Cof \hookrightarrow \Cof$ is a map $\ceil{\bot} \colon 1 \to \Cof$
  whose object $\bot$ selected by the universal structure satisfies
  \begin{equation*}
    [\bot,\partial\Cof] \cong [\bot,\Cof] \cong 1
  \end{equation*}

  The pre-bottom structure is a \emph{bottom structure relative to}
  a map $X \to Y$ when the object $\bot$ selected by the universe structure
  furthermore satisfies
  \begin{equation*}
    [\bot, X] \cong [\bot, Y] \cong 1
  \end{equation*}
\end{definition}

\begin{lemma}\label{lem:cof-bot}
  If $\partial\Cof \hookrightarrow \Cof$ has a bottom structure relative to
  $\tMcU \to \McU$ and $\phi \colon \Gamma \to \Cof$ factors through
  $\ceil{\bot} \colon 1 \to \Cof$ then
  $[\Gamma.\phi, \McU] \cong [\Gamma.\phi, \tMcU] \cong 1$
  and
  $[\Gamma.\phi, \Cof] \cong [\Gamma.\phi, \partial\Cof] \cong 1$.
\end{lemma}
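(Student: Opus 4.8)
The plan is to reduce both claims to the single computation $[\bot, X] \cong 1$ furnished by the (pre-)bottom structure, using cartesian closure of $\bC$.

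First I would unwind the hypothesis that $\phi \colon \Gamma \to \Cof$ factors through $\ceil{\bot} \colon 1 \to \Cof$. Writing $!_\Gamma \colon \Gamma \to 1$ for the terminal map, so that $\phi = \ceil{\bot} \circ !_\Gamma$, the object $\bot \hookrightarrow 1$ is by definition the selected pullback of $\partial\Cof \hookrightarrow \Cof$ along $\ceil{\bot}$ (see \Cref{def:cof-top-bot}). By the pullback pasting lemma, the selected pullback $\Gamma.\phi$ of $\partial\Cof \hookrightarrow \Cof$ along $\phi$ is then isomorphic to the pullback of $\bot \hookrightarrow 1$ along $!_\Gamma$, which is just $\Gamma \times \bot$. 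Since only an isomorphism is asserted in the statement, the fact that the cleavage given by the universe structure need not compose strictly is harmless here.

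Next, since $\bC$ is locally cartesian closed, hence cartesian closed, the exponential law gives $[\Gamma.\phi, X] \cong [\Gamma \times \bot, X] \cong [\Gamma, [\bot, X]]$ for every $X \in \bC$. Applying this with $X = \McU$ and $X = \tMcU$, the hypothesis that the bottom structure is taken relative to $\tMcU \to \McU$ says precisely $[\bot, \McU] \cong [\bot, \tMcU] \cong 1$, so $[\Gamma.\phi, \McU] \cong [\Gamma.\phi, \tMcU] \cong [\Gamma, 1] \cong 1$, the last isomorphism because $1$ is terminal. Repeating the argument with $X = \Cof$ and $X = \partial\Cof$, the pre-bottom condition $[\bot, \partial\Cof] \cong [\bot, \Cof] \cong 1$ yields $[\Gamma.\phi, \Cof] \cong [\Gamma.\phi, \partial\Cof] \cong 1$.

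There is essentially no obstacle: the only thing to watch is the bookkeeping between selected and arbitrary pullbacks in the first step, and this is immaterial since the conclusion is stated up to isomorphism. (If one insisted on strict equalities, one would instead observe directly that any map out of $\Gamma.\phi \cong \Gamma \times \bot$ into such an $X$ is uniquely determined by a map $\Gamma \to [\bot, X] \cong 1$, but this refinement is not needed.)
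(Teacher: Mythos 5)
Your proof is correct and follows the same route as the paper: identify $\Gamma.\phi \cong \Gamma \times \bot$ via pullback pasting, then apply the exponential law $[\Gamma \times \bot, X] \cong [\Gamma, [\bot, X]] \cong [\Gamma, 1] \cong 1$ using the (pre-)bottom structure. Your version merely spells out the pasting step and the caveat about selected versus arbitrary pullbacks a bit more explicitly than the paper does.
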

\begin{proof}
  In this case, the pullback $\Gamma.\phi \cong \Gamma \times \bot$, so
  $[\Gamma \times \bot, \McU] \cong [\Gamma, [\bot, \McU]] \cong [\Gamma, 1]
  \cong 1$ and likewise for $\tMcU,\partial\Cof,\Cof$.
\end{proof}

Like or-structures in \Cref{lem:or-struct-pb}, bottom structures transfer into
internal universes.
\begin{lemma}\label{lem:bot-struct-pb}
  Let $\partial\Cof \hookrightarrow \Cof$ be equipped with a bottom structure $\bot$
  relative to $\tMcU \to \McU$.
  Suppose that one has a $\pi$-fibration
  \begin{equation*}
    \begin{tikzcd}[cramped]
      E & \tMcU \\
      B & \McU
      \arrow[from=1-1, to=1-2]
      \arrow[from=1-1, to=2-1, two heads]
      \arrow["\lrcorner"{anchor=center, pos=0.15, scale=1.5}, draw=none, from=1-1, to=2-2]
      \arrow[from=1-2, to=2-2, two heads]
      \arrow[from=2-1, to=2-2]
    \end{tikzcd}
  \end{equation*}
  in which $[\bot, B] \cong 1$.
  \begin{enumerate}
    \item Then $\bot$ is also a bottom structure relative to $E \to B$.
    \item In particular, if $B \twoheadrightarrow 1$ is also $\pi$-fibrant then
    $\partial\Cof \hookrightarrow \Cof$ has a bottom structure relative to
    $E \twoheadrightarrow B$.
  \end{enumerate}
\end{lemma}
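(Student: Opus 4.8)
The plan is to reduce everything to the observation that the internal-hom functor $[\bot,-]\colon\bC\to\bC$ is continuous. Indeed, since $\bC$ is locally cartesian closed with a terminal object it is cartesian closed, so $[\bot,-]$ is right adjoint to $-\times\bot$ and hence preserves all limits — in particular pullbacks and the terminal object $1$. Given this, the whole argument is a matter of applying $[\bot,-]$ to the pullback square exhibiting $E\twoheadrightarrow B$ as a $\pi$-fibration.

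For part~(1), I would apply $[\bot,-]$ to the displayed pullback square in the statement, which expresses $E$ as the pullback of $\tMcU\to\McU$ along $B\to\McU$, obtaining $[\bot,E]\cong[\bot,B]\times_{[\bot,\McU]}[\bot,\tMcU]$. By hypothesis $\bot$ is a bottom structure relative to $\tMcU\to\McU$, so $[\bot,\tMcU]\cong[\bot,\McU]\cong 1$, and $[\bot,B]\cong 1$ is assumed; hence $[\bot,E]\cong 1\times_1 1\cong 1$. This gives the relative isomorphisms $[\bot,E]\cong[\bot,B]\cong 1$, and combining them with the pre-bottom data $[\bot,\partial\Cof]\cong[\bot,\Cof]\cong 1$ that already accompanies the assumed bottom structure yields, by \Cref{def:cof-top-bot}, a bottom structure relative to $E\to B$.

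For part~(2), if $B\twoheadrightarrow 1$ is $\pi$-fibrant then $B$ fits into a pullback square of $\tMcU\to\McU$ along a map $1\to\McU$; applying $[\bot,-]$ once more gives $[\bot,B]\cong[\bot,1]\times_{[\bot,\McU]}[\bot,\tMcU]\cong 1\times_1 1\cong 1$, using that $[\bot,-]$ preserves the terminal object. Thus the hypothesis $[\bot,B]\cong 1$ of part~(1) holds for this $B$, and part~(1) applies verbatim.

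I do not expect a genuine obstacle here: the mathematical content is just ``right adjoints preserve limits.'' The only things to be careful about are bookkeeping ones — namely that ``bottom structure relative to $E\to B$'' in \Cref{def:cof-top-bot} packages the pre-bottom conditions on $\partial\Cof\hookrightarrow\Cof$ together with the relative isomorphisms, so that part~(1) delivers a full such structure rather than only its relative half, and that the cartesian closedness of $\bC$ needed to treat $[\bot,-]$ as a continuous functor is indeed in force (it follows from local cartesian closedness together with the presence of the terminal object $1$).
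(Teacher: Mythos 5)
Your proof is correct and takes essentially the same approach as the paper's: both hinge on the internal-Hom $[\bot,-]$ preserving pullbacks and the terminal object, applied to the pullback squares classifying $E\twoheadrightarrow B$ and $B\twoheadrightarrow 1$. The only cosmetic difference is that the paper phrases part (2) as "apply part (1) twice," whereas you compute $[\bot,B]\cong 1$ directly from the pullback square for $B\twoheadrightarrow 1$ and then invoke part (1) once — the same calculation.
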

\begin{proof}
  The internal-Hom preserves pullbacks, so $[\bot, E] \to [\bot, B]$ is a
  pullback of $[\bot, \tMcU] \to [\bot, \McU]$.
  But $[\bot, B] \cong 1 \cong [\bot, \McU]$ by assumption, so the first part
  follows.

  The second part then follows by using the first part twice, and using the fact
  that internal-Hom preserves limits in the second variable.
\end{proof}

\subsubsection{Truth Structures}\label{subsubsec:truth-strut}
\begin{definition}\label{def:cof-truth-lattice}
  A \emph{pre-truth structure on $\partial\Cof \hookrightarrow \Cof$} is
  a tuple of (and, pre-or, top, pre-bottom) structures.
  Such a pre-truth structure is a truth structure \emph{relative to} a map
  $X \to Y$ when the pre-or and pre-bottom structures are or- and
  bottom-structures relative to $X \to Y$.
\end{definition}

We observe the subobject classifier $\Omega$ as a concrete example.
\begin{lemma}\label{lem:omega-truth}
  Suppose $\bC$ is a topos with truth map
  $1 \eqqcolon \partial\Omega \hookrightarrow \Omega$.
  Then $\partial\Omega \hookrightarrow \Omega$ has a truth-structure relative to
  any map $E \to B$.
\end{lemma}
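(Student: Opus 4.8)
I would exhibit the entire pre-truth structure as the internal propositional logic of the topos $\bC$ (with $\partial\Omega\hookrightarrow\Omega$ carrying its canonical, pullback-selecting universe structure), and then observe that every relativity clause collapses to a standard topos fact. Throughout I use $\partial\Omega = 1$, so that $\partial\Omega\times\partial\Omega\cong 1$, $\partial\Omega\times\Omega\cong\Omega\cong\Omega\times\partial\Omega$, and the subobjects of $\Omega\times\Omega$ named $\partial\Omega\times\Omega$ and $\Omega\times\partial\Omega$ are the two coordinate inclusions $x\mapsto(\top,x)$ and $x\mapsto(x,\top)$.

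\textbf{The three easy structures.} For the \emph{and-structure} I would take $-\wedge-\colon\Omega\times\Omega\to\Omega$ to be internal conjunction, i.e.\ the classifying map of $\langle\top,\top\rangle\colon 1\hookrightarrow\Omega\times\Omega$; by the very definition of $\wedge$ this realizes $\partial\Omega\times\partial\Omega\hookrightarrow\Omega\times\Omega$ as the selected pullback of $\partial\Omega\hookrightarrow\Omega$, the $\iota$-point being the unique map $1\to 1$, so \Cref{def:cof-and} holds. For the \emph{top-structure} I would take $\ceil{\top}=\top\colon 1\to\Omega$: since $\top$ is monic, its selected pullback against itself is $1$ and the induced leg $1\to 1$ is an isomorphism, which is exactly \Cref{def:cof-top-bot}. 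For the \emph{pre-bottom structure} I would take $\ceil{\bot}\colon 1\to\Omega$ to be the classifying map of $0\hookrightarrow 1$; the object it selects is then the subobject of $1$ classified by $\ceil{\bot}$, namely the strict initial object, so $\bot\cong 0$. Since $0$ is strict, $\Gamma\times 0\cong 0$ for every $\Gamma$, whence $[\bot,Z]\cong 1$ for every object $Z$; in particular $[\bot,\partial\Omega]\cong[\bot,\Omega]\cong 1$ and $[\bot,X]\cong[\bot,Y]\cong 1$ for any $X\to Y$, so the pre-bottom structure is automatically a bottom structure relative to every map.

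\textbf{The or-structure.} This is the substantive part. I would set $\partial(\Omega\times\Omega)\hookrightarrow\Omega\times\Omega$ to be the union of subobjects $(\partial\Omega\times\Omega)\cup(\Omega\times\partial\Omega)$, let $\inj_1,\inj_2$ be the evident inclusions of the two pieces into the union, and take $-\vee-$ to be internal disjunction, i.e.\ the classifying map of that union; the first commuting square of \Cref{def:cof-or} then holds on the nose, being a diagram of subobject inclusions of $\Omega\times\Omega$. For the pullback conditions I would invoke two standard facts: (i) in a topos the union of two subobjects is the pushout over their intersection (toposes have effective unions), and (ii) a slice internal hom $[-,Z]_C$ sends colimits in its first argument to limits, hence carries such pushouts to pullbacks. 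Applying (i)+(ii) to the span $\partial\Omega\times\Omega\hookleftarrow\partial\Omega\times\partial\Omega\hookrightarrow\Omega\times\partial\Omega$, whose union is $\partial(\Omega\times\Omega)$, with hom-target $B\times\Omega^2$, yields \Cref{eqn:or-fib} for \emph{every} object $B$ --- in particular for both ends of an arbitrary $X\to Y$ and for $\tMcU$ and $\McU$ --- so the pre-or-structure is an or-structure relative to every map. For the absolute condition \Cref{eqn:or-cof} I would first note that $\Omega\times(-)$ preserves unions of subobjects (being a left adjoint), so $\Omega\times\partial(\Omega\times\Omega)$ is the subobject $(\Omega\times\partial\Omega\times\Omega)\cup(\Omega\times\Omega\times\partial\Omega)$ of $\Omega^3$, which by (i) is the pushout over $\Omega\times\partial\Omega\times\partial\Omega$; then (ii) with hom-target $\partial\Omega\times\Omega^2$ gives the required pullback.

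\textbf{Assembly and the main obstacle.} Collecting the four pieces produces a pre-truth structure whose pre-or and pre-bottom parts are or- and bottom-structures relative to the given $E\to B$, hence a truth structure relative to $E\to B$ in the sense of \Cref{def:cof-truth-lattice}. I expect the only real friction to be the bookkeeping in \Cref{eqn:or-cof}: one must pin down precisely which pushout of monos inside $\Omega^3$ is being approximated and confirm that $\Omega\times(-)$ transports the binary union from $\Sub(\Omega\times\Omega)$ to $\Sub(\Omega^3)$; the and-, top-, and bottom-parts are immediate from monicity of $\top$, disjointness of $\top$ and $\ceil{\bot}$, and strictness of the initial object.
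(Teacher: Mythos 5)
Your proof is correct and follows essentially the same route as the paper's: the and-, top-, and bottom-structures come from $\top$ being monic, $0$ being a (strict) initial subobject, and $\partial\Omega\times\partial\Omega\hookrightarrow\Omega\times\Omega$ being a mono, while the or-structure comes from adhesiveness/effective unions making $\partial\Omega\times\Omega\cup_{\partial\Omega\times\partial\Omega}\Omega\times\partial\Omega$ a genuine pushout inside $\Sub(\Omega\times\Omega)$. The only difference is expository: you spell out the verification of \textsc{or-cof}, \textsc{or-fib}, and the bottom-relativity clauses (via strictness of $0$ and the hom-functor turning pushouts into pullbacks), which the paper's terser proof leaves implicit.
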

\begin{proof}
  Isomorphisms are monos so the identity on the terminal object $1$ occurs as a
  pullback of the truth map, thus providing a top-structure.
  Toposes are cocomplete so there is a zero map $0 \hookrightarrow \Omega$, which
  is a mono and therefore a pullback of the truth map, thus providing a
  bottom-structure.
  The and-structure is immediate since
  $\partial\Omega \times \partial\Omega \hookrightarrow \Omega \times \Omega$ is
  a mono.
  Finally, the or-structure follows because toposes are adhesive, so
  $\partial\Omega \times \Omega \cup_{\partial\Omega \times \partial\Omega}
  \Omega \times \partial\Omega$, as a pushout of two subobjects of
  $\Omega \times \Omega$, is a subobject of $\Omega \times \Omega$.
\end{proof}

By the above results on or- and bottom-structures, truth structures transfer
into internal universes.
\begin{lemma}\label{lem:int-univ-truth}
  Suppose that $\partial\Cof \hookrightarrow \Cof$ has a truth structure
  relative to $\pi \colon \tMcU \twoheadrightarrow \McU$ and one has an internal
  universe whose data are given as below
  \begin{equation*}
    \begin{tikzcd}[cramped]
      & \tMcU \\
      {\tMcU_0} & \McU \\
      {\McU_0} & {\tMcU} \\
      1 & {\McU}
      \arrow[from=1-2, to=2-2, two heads]
      \arrow["\tEl"{description}, from=2-1, to=1-2]
      \arrow[from=2-1, to=3-1, two heads]
      \arrow["\El"{description}, from=3-1, to=2-2]
      \arrow["{\widetilde{\ceil{\McU_0}}}"{description}, from=3-1, to=3-2]
      \arrow[from=3-1, to=4-1, two heads]
      \arrow[from=3-2, to=4-2, two heads]
      \arrow["{\ceil{\McU_0}}"', from=4-1, to=4-2]
      \arrow["\lrcorner"{anchor=center, pos=0.15, scale=1.5, rotate=45}, draw=none, from=2-1, to=2-2]
      \arrow["\lrcorner"{anchor=center, pos=0.15, scale=1.5}, draw=none, from=3-1, to=4-2]
    \end{tikzcd}
  \end{equation*}
  then $\partial\Cof \hookrightarrow \Cof$ has a truth structure relative to
  $\tMcU_0 \to \McU_0$.
\end{lemma}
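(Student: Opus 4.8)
The plan is to separate the absolute part of a truth structure from its relative part. By \Cref{def:cof-truth-lattice}, a truth structure on $\partial\Cof \hookrightarrow \Cof$ is simply a tuple consisting of an and-structure, a pre-or-structure, a top-structure and a pre-bottom-structure, none of which mentions an ambient map; the only thing a \emph{relative} truth structure adds is the requirement that the pre-or- and pre-bottom-structures be, respectively, an or- and a bottom-structure relative to that map. So I would keep the very same tuple coming from the given truth structure relative to $\pi \colon \tMcU \twoheadrightarrow \McU$ and only re-certify its or- and bottom-parts relative to $\tMcU_0 \to \McU_0$; the and- and top-structures require nothing further.

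For this re-certification, the key observation is that the internal universe data of \Cref{def:int-univ} furnishes exactly the inputs needed by the two transfer lemmas: it makes $\tMcU_0 \twoheadrightarrow \McU_0$ a $\pi$-fibration (with $\pi$-fibrancy structure $(\El, \tEl)$) whose base $\McU_0$ is itself $\pi$-fibrant over $1$ (with $\pi$-fibrancy structure $(\ceil{\McU_0}, \widetilde{\ceil{\McU_0}})$). First I would apply the third part of \Cref{lem:or-struct-pb} to the $\pi$-fibration $\tMcU_0 \twoheadrightarrow \McU_0$, in the role of $E \twoheadrightarrow B$ there: its hypothesis that the terminal map $B \to 1$ be a $\pi$-fibration is precisely the $\pi$-fibrancy of $\McU_0 \twoheadrightarrow 1$, so the conclusion yields an or-structure relative to $\tMcU_0 \to \McU_0$, built on the same underlying pre-or-structure. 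Then I would apply the second part of \Cref{lem:bot-struct-pb} to the same $\pi$-fibration, whose hypothesis is again the $\pi$-fibrancy of $\McU_0 \twoheadrightarrow 1$, obtaining a bottom-structure relative to $\tMcU_0 \to \McU_0$ on the same underlying pre-bottom-structure.

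Assembling these, the unchanged tuple of and-, pre-or-, top-, and pre-bottom-structures — now with its or- and bottom-parts verified relative to $\tMcU_0 \to \McU_0$ — is a truth structure relative to $\tMcU_0 \to \McU_0$, which is what is wanted.

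There is no substantive obstacle here; the work is purely bookkeeping. The one point to be careful about is matching hypotheses: one must notice that the ``terminal map is a $\pi$-fibration'' side-conditions of \Cref{lem:or-struct-pb} and \Cref{lem:bot-struct-pb} are supplied by the $\pi$-fibrancy of $\McU_0 \twoheadrightarrow 1$ that is baked into \Cref{def:int-univ}, and that the and- and top-structures are genuinely independent of the relative map so that nothing new is needed for them.
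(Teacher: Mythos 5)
Your proof is correct and is exactly the paper's own argument: the paper's proof is simply the one-line citation ``By Lemmas~\ref{lem:or-struct-pb} and~\ref{lem:bot-struct-pb}'', and you have correctly unfolded it by noting that the and- and top-components are absolute, while the or- and bottom-components transfer via the third part of Lemma~\ref{lem:or-struct-pb} and the second part of Lemma~\ref{lem:bot-struct-pb}, whose hypotheses are supplied by the $\pi$-fibrancy of $\tMcU_0 \twoheadrightarrow \McU_0$ and of $\McU_0 \twoheadrightarrow 1$ built into Definition~\ref{def:int-univ}.
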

\begin{proof}
  By \Cref{lem:or-struct-pb,lem:bot-struct-pb}.
\end{proof}

\subsubsection{Cofibrant Sections}\label{subsubsec:cofibrant-sect}
In order for the $\Path$-types subsequently defined in \Cref{def:path-type} to
be identity types, as we will see in \Cref{thm:Path-Id}, we require all sections
of fibrations to be equipped with cofibration structures.
We define the necessary axioms about $\partial\Cof \hookrightarrow \Cof$ for
this to happen.

We start with a generic classifier of sections of pullbacks of $\tMcU \to \McU$,
which follows from the following observation.
\begin{lemma}\label{lem:sect-classifier-master}
  Fix a map $p \colon E \to B$ and a pullback
  $p' \colon E' \to B'$.
  If $s \colon B' \to E'$ is a section of $p'$ as on the left then $s$
  occurs as a pullback of the diagonal as on the right.
  \begin{center}
    \begin{minipage}{0.45\linewidth}
      \begin{equation*}
        \begin{tikzcd}[cramped]
          {E'} & E \\
          {B'} & B
          \arrow["e", from=1-1, to=1-2]
          \arrow["{p'}", from=1-1, to=2-1]
          \arrow["\lrcorner"{anchor=center, pos=0.15, scale=1.5}, draw=none, from=1-1, to=2-2]
          \arrow["p", from=1-2, to=2-2]
          \arrow["s", shift left, curve={height=-6pt}, from=2-1, to=1-1]
          \arrow["b"', from=2-1, to=2-2]
        \end{tikzcd}
      \end{equation*}
    \end{minipage}
    \begin{minipage}{0.45\linewidth}
      \begin{tikzcd}[cramped]
        {B'} & E \\
        {E'} & {E \times_B E}
        \arrow[dashed, from=1-1, to=1-2]
        \arrow["s"', from=1-1, to=2-1]
        \arrow["\lrcorner"{anchor=center, pos=0.15, scale=1.5}, draw=none, from=1-1, to=2-2]
        \arrow["\Delta", from=1-2, to=2-2]
        \arrow[dashed, from=2-1, to=2-2]
      \end{tikzcd}
    \end{minipage}
  \end{center}
\end{lemma}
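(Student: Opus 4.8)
The plan is to trade the section for a lift through the universal property of the pullback. Since $E' \cong B' \times_B E$, giving a section $s$ of $p' \colon E' \to B'$ is the same as giving the map $\sigma \coloneqq e \circ s \colon B' \to E$, which satisfies $p \circ \sigma = p \circ e \circ s = b \circ p' \circ s = b$; conversely $s = (\id_{B'}, \sigma)$ recovers $s$ from any such lift of $b$ along $p$. Accordingly I take the top dashed map of the square to be $\sigma = e \circ s$, and the bottom dashed map $E' \to E \times_B E$ to be $(\sigma \circ p',\, e)$ — equivalently $\sigma \times_B \id_E \colon B' \times_B E \to E \times_B E$ — both of whose components land over $B$ via $b \circ p'$, so this is a legitimate map into the fibre product.

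Next I would verify that the square commutes. The composite $B' \xrightarrow{\sigma} E \xrightarrow{\Delta} E \times_B E$ is $(\sigma, \sigma)$, while $B' \xrightarrow{s} E' \xrightarrow{(\sigma p',\, e)} E \times_B E$ is $(\sigma \circ p' \circ s,\; e \circ s) = (\sigma,\, \sigma)$, using $p' \circ s = \id_{B'}$ and $\sigma = e \circ s$. These agree.

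For the pullback property I would use the standard fact that pulling back the fibred diagonal $\Delta \colon E \to E \times_B E$ along a map $(f, g) \colon X \to E \times_B E$ is the equaliser of $f, g \colon X \rightrightarrows E$. Applied with $X = E'$ and $(f,g) = (\sigma \circ p',\, e)$, the claim reduces to showing $s \colon B' \to E'$ exhibits $B'$ as the equaliser of $\sigma \circ p'$ and $e$. The two composites with $s$ already agree by the commutativity check, so $s$ factors through the equaliser. Conversely, given $t \colon Z \to E'$ with $\sigma \circ p' \circ t = e \circ t$, set $z \coloneqq p' \circ t$; then $s \circ z$ and $t$ have equal components into $B'$ and $E$, since $p'(s z) = (p' s)(p' t) = p' t$ and $e(s z) = (e s)(p' t) = \sigma p' t = e t$, whence $s \circ z = t$ by the universal property of $E' = B' \times_B E$; and $z$ is forced because $s \circ z = t$ implies $z = p' \circ s \circ z = p' \circ t$.

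All of this is bookkeeping with the fibre product $E' = B' \times_B E$; there is no real obstacle. The one place that warrants a moment's care is the last step, where one must use that $t$ \emph{jointly} equalises $\sigma \circ p'$ and $e$ — not merely after post-composing with $p$ — so that both components of $s \circ z$ can be matched with those of $t$.
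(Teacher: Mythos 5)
Your proof is correct but proceeds along a genuinely different route than the paper's. The paper defines $t = e\circ s\colon B'\to E$ and then constructs the bottom map $E'\to E\times_B E$ as the comparison map coming from the observation that the (first) projection $E\times_B E\to E$ is the pullback of $p$ along $p$, so pulling back along $t$ recovers $p'\colon E'\to B'$; the desired square is then exhibited as a pullback by the cancellation direction of the pullback pasting lemma, since the outer rectangle has identities on both verticals. You instead name the bottom map explicitly as $(\sigma\circ p',\, e)$ (which unwinds to the same map as the paper's), invoke the standard fact that pulling back a fibred diagonal $\Delta$ along $(f,g)$ gives the equaliser of $f$ and $g$, and then verify directly that $s$ satisfies the universal property of that equaliser using the description $E'\cong B'\times_B E$. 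Your version is more elementary and self-contained — a hands-on universal-property check rather than an invocation of the pasting lemma — at the modest cost of an extra reduction step (to the equaliser fact). The paper's version is slicker once one knows the cancellation lemma, and it keeps the argument entirely at the level of pullback squares without unpacking elements of the fibre product; but both are equally rigorous and arrive at the same map and the same conclusion. Your closing remark — that one must use that $t$ equalises $\sigma p'$ and $e$ on the nose, not merely after postcomposing with $p$ — is the right place to flag care, and you handle it correctly.
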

\begin{proof}
  We construct a map $t \colon B' \to E$ as
  $t \coloneqq (B' \xrightarrow{s} E' \xrightarrow{e} E)$ so that the composite
  $(B' \xrightarrow{t} E \xrightarrow{p} B) = (B' \xrightarrow{s} E'
  \xrightarrow{e} E \xrightarrow{p} B) = (B' \xrightarrow{s} E' \xrightarrow{p'}
  B' \xrightarrow{b} B) = b$.
  Then, the required map $E' \to E \times_B E$ for which $s \colon B' \to E'$
  occurs as a pullback of is the pullback of $t \colon B' \to E$ along the
  projection $E \times_B E \to E$, which we observe in the diagram below.
  This is because $E \times_B E \to E$ pulls back along $t$ to $E' \to B'$ and
  the identity at $E$ pulls back to the identity at $B'$, as one can observe
  like so:
  \begin{equation*}
    \begin{tikzcd}[cramped]
      {B'} & E \\
      {E'} & {E \times_B E} & E \\
      {B'} & E & B
      \arrow[from=1-1, to=1-2]
      \arrow["s"{description}, from=1-1, to=2-1]
      \arrow[curve={height=12pt}, equals, from=1-1, to=3-1]
      \arrow["\Delta", from=1-2, to=2-2]
      \arrow[curve={height=-12pt}, equals, from=1-2, to=2-3]
      \arrow[curve={height=12pt}, equals, from=1-2, to=3-2]
      \arrow[from=2-1, to=2-2]
      \arrow["{p'}"{description}, from=2-1, to=3-1]
      \arrow[from=2-2, to=2-3]
      \arrow[from=2-2, to=3-2]
      \arrow["p", from=2-3, to=3-3]
      \arrow["t"{description}, from=3-1, to=3-2]
      \arrow["b"', curve={height=12pt}, from=3-1, to=3-3]
      \arrow["p"{description}, from=3-2, to=3-3]
      \arrow["\lrcorner"{anchor=center, pos=0.15, scale=1.5}, draw=none, from=1-1, to=2-2]
      \arrow["\lrcorner"{anchor=center, pos=0.15, scale=1.5}, draw=none, from=2-1, to=3-2]
      \arrow["\lrcorner"{anchor=center, pos=0.15, scale=1.5}, draw=none, from=2-2, to=3-3]
    \end{tikzcd}
  \end{equation*}
  So the result follows by the pullback lemma.
\end{proof}

Thus, the following definition implements the specification that sections of
pullbacks to $\tMcU \to \McU$ are pullbacks of
$\partial\Cof \hookrightarrow \Cof$.
\begin{definition}\label{def:cof-sect}
  A \emph{$(\partial\Cof \hookrightarrow \Cof$)-cofibrant structure on
    $(\tMcU \twoheadrightarrow \McU)$-sections} is a pair of maps
  $(\mathsf{eq}, \mathsf{isEq})$ making the following square a pullback.
  \begin{equation*}
    \begin{tikzcd}
      \tMcU & {\partial\Cof} \\
      {\tMcU \times_\McU \tMcU} & \Cof
      \arrow["{\mathsf{eq}}", dashed, from=1-1, to=1-2]
      \arrow["\Delta"', hook, from=1-1, to=2-1]
      \arrow["\lrcorner"{anchor=center, pos=0.15, scale=1.5}, draw=none, from=1-1, to=2-2]
      \arrow[hook, from=1-2, to=2-2]
      \arrow["{\mathsf{isEq}}"', dashed, from=2-1, to=2-2]
    \end{tikzcd}
  \end{equation*}
\end{definition}
\begin{corollary}\label{cor:cof-sect}
  Whenever there is an $\iota$-cofibrant structure on $\pi$-sections, if
  $E \twoheadrightarrow B$ is a $\pi$-fibration and $s \colon B \to E$ is a
  section of $p$ then $s$ is an $\iota$-cofibration.
\end{corollary}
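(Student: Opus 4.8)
The plan is to exhibit $s$ as a pullback of $\partial\Cof \hookrightarrow \Cof$ by pasting together two pullback squares. The first, produced by \Cref{lem:sect-classifier-master}, witnesses $s$ as a pullback of the diagonal $\Delta \colon \tMcU \to \tMcU \times_\McU \tMcU$; the second, namely the square $(\mathsf{eq},\mathsf{isEq})$ of \Cref{def:cof-sect}, witnesses that same diagonal as a pullback of $\partial\Cof \hookrightarrow \Cof$. Since a horizontal paste of pullback squares is again a pullback square, the composite rectangle witnesses $s$ as a pullback of $\partial\Cof \hookrightarrow \Cof$, which is by definition an $\iota$-cofibrancy structure on $s$ in the sense of \Cref{def:univ-cofibration}; hence $s$ is an $\iota$-cofibration.

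To fill this in, I would first use that $E \twoheadrightarrow B$ is a $\pi$-fibration to fix a $\pi$-fibrancy structure $(\ceil{E}, \widetilde{\ceil{E}})$, so that $p \colon E \to B$ is the pullback of $\pi \colon \tMcU \to \McU$ along $\ceil{E} \colon B \to \McU$. I would then apply \Cref{lem:sect-classifier-master} with $\pi$ playing the role of the lemma's map ``$p$'' and $E \to B$ playing the role of its pullback ``$p'$'': since $s$ is a section of $E \to B$, the lemma yields a pullback square whose left edge is $s$, whose right edge is $\Delta \colon \tMcU \to \tMcU \times_\McU \tMcU$, and whose top edge is the composite $B \xrightarrow{s} E \xrightarrow{\widetilde{\ceil{E}}} \tMcU$. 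Pasting the square of \Cref{def:cof-sect} onto this one and invoking the pullback lemma exhibits $s$ as a pullback of $\partial\Cof \hookrightarrow \Cof$; reading off the edges of the pasted rectangle gives an explicit $\iota$-point $\mathsf{eq} \circ \widetilde{\ceil{E}} \circ s$ for $s$, together with the corresponding $\iota$-name.

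I do not anticipate a genuine obstacle here. The only point that needs care is the bookkeeping: \Cref{lem:sect-classifier-master} must be invoked for the diagonal of the \emph{generic} map $\pi$ rather than for the diagonal of $E \to B$ directly, since it is that diagonal, and not $\Delta_E \colon E \to E \times_B E$, which the cofibrant-sections structure of \Cref{def:cof-sect} classifies as a cofibration. Everything else is the routine pasting of pullback squares.
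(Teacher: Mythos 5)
Your proposal is correct and is exactly the argument the paper intends: the paper's proof consists of the single line ``Directly by \Cref{lem:sect-classifier-master}.'', and what you have written is the explicit unfolding of that reference, correctly instantiating the lemma with $\pi$ as the exponentiable/base map and $E \to B$ as its pullback, and then pasting with the $(\mathsf{eq},\mathsf{isEq})$ square of \Cref{def:cof-sect}. Your remark on the bookkeeping (that the relevant diagonal is that of $\pi$, not $\Delta_E$) is the one point worth spelling out, and you got it right.
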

\begin{proof}
  Directly by \Cref{lem:sect-classifier-master}.
\end{proof}

\begin{lemma}\label{lem:int-univ-cof-sect}
  Let $\pi_0 \colon \tMcU_0 \to \McU_0$ be an internal universe of
  $\pi \colon \tMcU \to \McU$.
  Then $\iota$-cofibrant structures on $\pi$-sections also restrict to
  $\iota$-cofibrant sections on $\pi_0$-sections.
\end{lemma}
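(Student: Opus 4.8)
The plan is to build the desired structure on $\pi_0$-sections simply by restricting the given structure on $\pi$-sections along the internal universe inclusion, in direct analogy with \Cref{lem:or-struct-pb,lem:bot-struct-pb}. Concretely, write $(\mathsf{eq}, \mathsf{isEq})$ for the hypothesised $(\partial\Cof \hookrightarrow \Cof)$-cofibrant structure on $(\tMcU \twoheadrightarrow \McU)$-sections, so that by \Cref{def:cof-sect} the diagonal $\Delta \colon \tMcU \hookrightarrow \tMcU \times_\McU \tMcU$ is exhibited as a pullback of $\partial\Cof \hookrightarrow \Cof$ along $\mathsf{isEq}$. I would then set
\[
  \mathsf{eq}_0 \coloneqq \mathsf{eq} \circ \tEl
  \qquad\text{and}\qquad
  \mathsf{isEq}_0 \coloneqq \mathsf{isEq} \circ (\tEl \times_\El \tEl)
\]
and claim that $(\mathsf{eq}_0, \mathsf{isEq}_0)$ is a $(\partial\Cof \hookrightarrow \Cof)$-cofibrant structure on $(\tMcU_0 \twoheadrightarrow \McU_0)$-sections.

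The first key step is to recognise that the diagonal $\Delta_0 \colon \tMcU_0 \hookrightarrow \tMcU_0 \times_{\McU_0} \tMcU_0$ is the base change of $\Delta$ along $\El \colon \McU_0 \to \McU$. This uses the fact, recorded in \Cref{def:int-univ}, that the internal universe structure presents $\tMcU_0 \twoheadrightarrow \McU_0$ as a pullback of $\pi$ along $\El$ with top edge $\tEl$, together with the observation that the pullback functor $\El^* \colon \sfrac{\bC}{\McU} \to \sfrac{\bC}{\McU_0}$ preserves limits (it has a right adjoint, as $\bC$ is locally cartesian closed), and hence sends the fibre product $\tMcU \times_\McU \tMcU$ and the diagonal of $\tMcU$ over $\McU$ to $\tMcU_0 \times_{\McU_0} \tMcU_0$ and the diagonal of $\tMcU_0$ over $\McU_0$. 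This yields a pullback square with vertical edges $\Delta_0$ and $\Delta$, top edge $\tEl$, and bottom edge $\tEl \times_\El \tEl$.

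The second step is then to paste this square to the left of the pullback square witnessing $(\mathsf{eq}, \mathsf{isEq})$: the outer rectangle is a pullback by the pullback-pasting lemma, and its top and bottom composites are exactly $\mathsf{eq}_0$ and $\mathsf{isEq}_0$. Hence $\Delta_0$ is a pullback of $\partial\Cof \hookrightarrow \Cof$ along $\mathsf{isEq}_0$, which is precisely the assertion that $(\mathsf{eq}_0, \mathsf{isEq}_0)$ is a $(\partial\Cof \hookrightarrow \Cof)$-cofibrant structure on $\pi_0$-sections, and it is manifestly obtained by restriction. (One can also sanity-check the conclusion against \Cref{cor:cof-sect}: since $\pi_0$ is itself $\pi$-fibrant, every $\pi_0$-fibration is a $\pi$-fibration, so every section of one is already an $\iota$-cofibration; the lemma upgrades this to the structured statement.)

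I do not expect a genuine obstacle, as the argument is entirely formal. The one point that needs (minor) care is the first step — that $\Delta_0$ really \emph{is} the base change of $\Delta$ along $\El$, and not merely a map between the two base changes — which is the standard fact that base change preserves diagonals. If one prefers to avoid invoking preservation of all limits, one can instead verify directly that $\tMcU_0$, equipped with $\tEl \colon \tMcU_0 \to \tMcU$ and $\Delta_0 \colon \tMcU_0 \to \tMcU_0 \times_{\McU_0} \tMcU_0$, satisfies the universal property of $\tMcU \times_{\tMcU \times_\McU \tMcU} (\tMcU_0 \times_{\McU_0} \tMcU_0)$, using that $\tMcU_0 = \El^*\tMcU$ and $\tMcU_0 \times_{\McU_0} \tMcU_0 = \El^*(\tMcU \times_\McU \tMcU)$.
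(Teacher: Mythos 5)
Your proof is correct, and both arguments come down to the same pullback square: the one with vertical edges $\Delta_0$ and $\Delta$, top edge $\tEl$, and bottom edge $\tEl \times_\El \tEl$. The paper obtains it by a slightly different route: it observes that since $\pi_0$ is $\pi$-fibrant, the projection $\tMcU_0 \times_{\McU_0} \tMcU_0 \to \tMcU_0$ is a $\pi$-fibration, so its section $\Delta_0$ falls under the already-established \Cref{lem:sect-classifier-master}, which exhibits any section of a pullback of $\pi$ as a pullback of $\Delta$. You instead derive the same square directly from the fact that $\El^*$ preserves diagonals of $\tMcU \to \McU$. If you trace through the construction in \Cref{lem:sect-classifier-master} you will find it produces exactly the map $\tEl \times_\El \tEl$ you wrote down, so the resulting pair $(\mathsf{eq}_0, \mathsf{isEq}_0)$ agrees. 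Your version has the small advantage of making the base map explicit without unwinding the general lemma; the paper's version has the advantage of not re-proving a fact it has already packaged.
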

\begin{proof}
  The internal universe $\pi_0 \colon \tMcU_0 \to \McU_0$ is $\pi$-fibrant so
  the diagonal $\tMcU_0 \to \tMcU_0 \times_{\McU_0} \tMcU_0$ is a section to a
  $\pi$-fibration.
  The result now follows by \Cref{lem:sect-classifier-master}.
\end{proof}


\subsection{Interval Structure}\label{subsec:interval-struct}
In this section, we define the necessary axioms that make $\bI \to 1$ into a
relative interval object in the sense of \cite[Definition 5.10]{struct-lift}.
Briefly, if $\partial\Cof \hookrightarrow \Cof$ has a truth structure relative
to $\pi$ then $\bI$ has a pre-interval structure when has two cofibrant points.
This becomes an interval structure when the names for these two cofibrant points
are disjoint, in the sense of factoring through the relative bottom map
$\ceil{\bot} \colon 1 \to \Cof$.

\begin{definition}\label{def:interval-disjoint-endpt}
  An \emph{$\iota$-cofibrant endpoint} structure on $\bI$ is a choice of two
  points $\set{\mbbo{0}}, \set{\mbbo{1}} \colon 1 \rightrightarrows \bI$ of $\bI$ along with
  maps $\txtis_{\delta} \colon \bI \rightrightarrows \Cof$, for
  $\delta = \mbbo{0}, \mbbo{1}$, such that the universe structure on
  $\partial\Cof \hookrightarrow \Cof$ selects the pullback
  \begin{equation*}
    \begin{tikzcd}
      \set{\delta} \ar[r] \ar[d, hook] \ar[rd, "\lrcorner"{pos=0.15, scale=1.5}, draw=none]
      &
      \partial\Cof \ar[d, hook] \\
      \bI \ar[r, "{\txtis_{\delta}}"']
      &
      \Cof
    \end{tikzcd}
  \end{equation*}

  When $\partial\Cof \hookrightarrow \Cof$ has a pre-bottom structure along with
  an and-structure, a cofibrant endpoint structure on $\bI$ is \emph{disjoint}
  when $\txtis_{\mbbo{0}} \wedge \txtis_{\mbbo{1}}$ factors through
  $\ceil{\bot} \colon 1 \to \Cof$, like so:
  \begin{equation*}
    \begin{tikzcd}
      \bI \ar[r, "{(\txtis_{\mbbo{0}}, \txtis_{\mbbo{1}})}"] \ar[rd, "!"']
      &
      \Cof \times \Cof \ar[r, "{-\wedge-}"]
      &
      \Cof
      \\
      & 1 \ar[ur, "{\ceil{\bot}}"']
    \end{tikzcd}
  \end{equation*}

  If furthermore $\partial\Cof \hookrightarrow \Cof$ has a pre-or-structure then
  we put $\partial\bI \coloneqq \bI.\txtis_\partial \hookrightarrow \bI$ as its
  boundary, where
  $\txtis_\partial \coloneqq \txtis_{\mbbo{0}} \vee \txtis_{\mbbo{1}} \colon \bI
  \to \Cof \times \Cof \to \Cof$.
\end{definition}
\begin{example}\label{ex:omega-bipointed-int}
  If $\partial\Cof \hookrightarrow \Cof$ is the subobject classifier equipped
  with the truth-structure from \Cref{lem:omega-truth} in a topos then any
  bipointed object has a disjoint cofibrant endpoint structure.
\end{example}

We next note that when $\partial\Cof \hookrightarrow \Cof$ has a relative truth
structure then a disjoint cofibrant endpoint structure on $\bI$ induces
a relative interval structure in the sense of \cite[Definition 5.10]{struct-lift}.
\begin{lemma}\label{lem:int-int-obj}
  Suppose $\partial\Cof \hookrightarrow \Cof$ has a truth-structure relative to
  $\tMcU \to \McU$.
  Let $\mbbo{0}, \mbbo{1} \colon 1 \rightrightarrows \bI$ be two
  $\iota$-cofibrant endpoints of $\bI$ whose $\iota$-names are
  $\txtis_{\mbbo{0}}, \txtis_{\mbbo{1}} \colon \bI \rightrightarrows \Cof$.
  \begin{enumerate}
    \item Then, one may put a pre-interval structure on $\bI$ whose two points
    are $\set{\mbbo{0}}, \set{\mbbo{1}} \rightrightarrows \bI$ and whose
    boundary is
    $\partial I = \bI.(\txtis_{\mbbo{0}} \vee \txtis_{\mbbo{1}}) \hookrightarrow
    \bI$.
    \item When the $\iota$-cofibrant endpoints
    $\set{\mbbo{0}}, \set{\mbbo{1}} \colon 1 \rightrightarrows \bI$ are
    disjoint,
    $(\set{\mbbo{0}}, \set{\mbbo{1}} \rightrightarrows \bI, \partial I =
    \bI.(\txtis_{\mbbo{0}} \vee \txtis_{\mbbo{1}}) \hookrightarrow \bI)$ is in
    fact an interval structure relative to $\set{\tMcU, \McU}$, in the sense of
    \cite[Definition 5.10]{struct-lift}.
    \item In particular, for each $X \in \bC$, the object
    $X \times \bI \to X \in \sfrac{\bC}{X}$ is an interval structure relative to
    $\set{X \times \tMcU, X \times \McU}$.
  \end{enumerate}
\end{lemma}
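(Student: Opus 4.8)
The plan is to match, clause by clause, the data of a relative interval object in the sense of \cite[Definition 5.10]{struct-lift} against what the truth structure supplies.

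\emph{Part (1) — the pre-interval structure.} The two points $\set{\mbbo{0}}, \set{\mbbo{1}} \hookrightarrow \bI$ are $\iota$-cofibrations by hypothesis, with $\iota$-names $\txtis_{\mbbo{0}}, \txtis_{\mbbo{1}} \colon \bI \rightrightarrows \Cof$. Using the or-operation of the truth structure I form $\txtis_\partial \coloneqq \txtis_{\mbbo{0}} \vee \txtis_{\mbbo{1}} \colon \bI \to \Cof \times \Cof \to \Cof$; by \Cref{def:cof-or} its selected pullback $\partial\bI = \bI.\txtis_\partial \hookrightarrow \bI$ is an $\iota$-cofibration, and the two endpoints factor through it. Dually, the and-operation produces the intersection $\bI.(\txtis_{\mbbo{0}} \wedge \txtis_{\mbbo{1}}) \hookrightarrow \bI$, and \Cref{rmk:cof-and-or-ptwise} organises all of this into the commuting square of $\iota$-cofibrations over $\bI$ with corners $\bI.(\txtis_{\mbbo{0}} \wedge \txtis_{\mbbo{1}})$, $\set{\mbbo{0}}$, $\set{\mbbo{1}}$, $\partial\bI$. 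Base-changing the statement of \Cref{rmk:or-psh-prod-approx} along $(\txtis_{\mbbo{0}}, \txtis_{\mbbo{1}}) \colon \bI \to \Cof \times \Cof$, this square exhibits $\partial\bI \hookrightarrow \bI$ as a structural approximation, relative to $\tMcU \to \McU$, of the pushout-product $(\set{\mbbo{0}} \hookrightarrow \bI) \ltimes_\bI (\set{\mbbo{1}} \hookrightarrow \bI)$, i.e.\ of the union $\set{\mbbo{0}} \cup_{\bI.(\txtis_{\mbbo{0}} \wedge \txtis_{\mbbo{1}})} \set{\mbbo{1}}$ inside $\bI$. This is exactly the data of a pre-interval structure.

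\emph{Part (2) — the interval structure.} Assume now the endpoints are disjoint, so $\txtis_{\mbbo{0}} \wedge \txtis_{\mbbo{1}}$ factors through $\ceil{\bot} \colon 1 \to \Cof$ (\Cref{def:interval-disjoint-endpt}). Since the truth structure is taken relative to $\tMcU \to \McU$ it contains a bottom structure relative to $\tMcU \to \McU$, so \Cref{lem:cof-bot} applied with $\phi = \txtis_{\mbbo{0}} \wedge \txtis_{\mbbo{1}}$ gives $[\bI.(\txtis_{\mbbo{0}} \wedge \txtis_{\mbbo{1}}), \tMcU] \cong [\bI.(\txtis_{\mbbo{0}} \wedge \txtis_{\mbbo{1}}), \McU] \cong 1$; that is, the intersection is relatively initial for $\set{\tMcU, \McU}$. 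Feeding this into the structural approximation of Part (1), the square now exhibits $\partial\bI$ as a structural approximation, relative to $\set{\tMcU, \McU}$, of the coproduct $\set{\mbbo{0}} + \set{\mbbo{1}}$; together with the facts that $\set{\mbbo{0}}, \set{\mbbo{1}}$ and $\partial\bI \hookrightarrow \bI$ are $\iota$-cofibrations, this is precisely the list of conditions defining an interval structure relative to $\set{\tMcU, \McU}$ in \cite[Definition 5.10]{struct-lift}. The one genuinely fiddly point I expect here is reconciling the indexing of \Cref{rmk:or-psh-prod-approx} — where the ambient base change runs over $\Cof \times \Cof$ — with the precise clauses of Definition 5.10; the remainder is definition-chasing.

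\emph{Part (3) — the slicewise instance.} For the final part I would rerun Part (2) inside $\sfrac{\bC}{X}$, using $\sfrac{(\sfrac{\bC}{X})}{Y} \simeq \sfrac{\bC}{Y}$ to interpret ``relative to $\set{X \times \tMcU, X \times \McU}$'' and to recognise $X \times \bI \to X$ as the relevant $\sfrac{\bC}{X}$-internal object. The hypotheses transfer: $X \times \partial\Cof \hookrightarrow X \times \Cof$ carries the truth structure obtained by multiplying each operation and classifier by $X$, and the relative conditions — \eqref{eqn:or-fib} and the relative-bottom isomorphisms — survive the base change along $X \to 1$ because pullback-Homs are stable under pullback (\Cref{lem:pullback-Hom-stable}). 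Likewise $X \times \bI \to X$ has the $\iota$-cofibrant endpoints $X \times \set{\mbbo{0}}, X \times \set{\mbbo{1}}$ with $\iota$-names $X \times \txtis_{\mbbo{0}}, X \times \txtis_{\mbbo{1}}$, and these remain disjoint since $X \times (\txtis_{\mbbo{0}} \wedge \txtis_{\mbbo{1}})$ factors through $X \times \ceil{\bot}$, the relative bottom name in $\sfrac{\bC}{X}$. Applying Part (2) in $\sfrac{\bC}{X}$ then yields the stated conclusion.
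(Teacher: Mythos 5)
Your proposal is on the right track for all three parts, and the underlying ideas match the paper's, but there is one genuine gap you yourself flag without resolving.

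For Part (2), the paper's argument is: pull back \eqref{eqn:or-fib} along $(\txtis_{\mbbo{0}}, \txtis_{\mbbo{1}}) \colon \bI \to \Cof \times \Cof$ to obtain a pullback square of internal Homs \emph{in the slice} $\sfrac{\bC}{\bI}$, namely $[\bI.(\txtis_{\mbbo{0}}\vee\txtis_{\mbbo{1}}), B\times\bI]_\bI$ as the pullback of $[\set{\mbbo{0}},B\times\bI]_\bI$ and $[\set{\mbbo{1}},B\times\bI]_\bI$ over $[\bI.(\txtis_{\mbbo{0}}\wedge\txtis_{\mbbo{1}}),B\times\bI]_\bI$; then apply the pushforward along $\bI\to 1$ together with \cite[Lemma 3.3]{struct-lift} to turn this into a pullback square of \emph{absolute} Homs $[-,B]$ in $\bC$, which is what Definition 5.10 of \cite{struct-lift} demands. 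Your proposal invokes the structural-approximation reading of \Cref{rmk:or-psh-prod-approx} base-changed along $(\txtis_{\mbbo{0}},\txtis_{\mbbo{1}})$, which gets you to the $\bI$-relative square, and then correctly applies \Cref{lem:cof-bot} to kill the intersection corner. But the transition from the $\bI$-slice picture to the ambient one, which you explicitly flag as ``the one genuinely fiddly point,'' is not just definition-chasing: it is precisely the pushforward along $\bI\to 1$ and the compatibility lemma, and without stating that step the argument does not close.

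For Part (3), you rerun the whole of Part (2) inside $\sfrac{\bC}{X}$, manually transporting the truth structure, the relative-bottom isomorphisms, the cofibrant endpoints and disjointness. This could be made to work but requires checking that each ingredient of the hypotheses restricts correctly to the slice; the paper instead cites \cite[Lemma 5.12]{struct-lift}, which packages exactly the statement that relative interval structures are stable under the base change $X\to 1$, avoiding the re-verification entirely. Your route is heavier but not wrong in spirit; the gap in Part (2) is the one substantive issue.
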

\begin{proof}
  The first part follows by
  \Cref{rmk:cof-and-ptwise,rmk:cof-and-or-ptwise}.

  For second part, pulling back \Cref{eqn:or-fib} along
  $(\txtis_{\mbbo{0}}, \txtis_{\mbbo{1}}) \colon \bI \to \Cof \times \Cof$ gives
  \begin{equation*}
    \begin{tikzcd}
      {[\bI.(\txtis_{\mbbo{0}} \vee \txtis_{\mbbo{0}}), B \times \bI]_\bI} & {[\set{\mbbo{1}}, B \times \bI]_\bI} \\
      {[\set{\mbbo{0}}, B \times \bI]_\bI} & {[\bI.(\txtis_{\mbbo{0}} \wedge \txtis_{\mbbo{0}}), B \times \bI]_\bI}
      \arrow[from=1-1, to=1-2]
      \arrow[from=1-1, to=2-1]
      \arrow["\lrcorner"{anchor=center, pos=0.15, scale=1.5, rotate=0}, draw=none, from=1-1, to=2-2]
      \arrow[from=1-2, to=2-2]
      \arrow[from=2-1, to=2-2]
    \end{tikzcd}
  \end{equation*}
  in $\sfrac{\bC}{\bI}$ for $B \in \set{\tMcU, \McU}$.
  Applying pushforward by $\bI \to 1$ and using \cite[Lemma 3.3]{struct-lift},
  one obtains the pullback square
  \begin{equation*}
    \begin{tikzcd}
      {[\bI.(\txtis_{\mbbo{0}} \vee \txtis_{\mbbo{0}}), B]} & {[\set{\mbbo{1}}, B]} \\
      {[\set{\mbbo{0}}, B]} & {[\bI.(\txtis_{\mbbo{0}} \wedge \txtis_{\mbbo{0}}), B]}
      \arrow[from=1-1, to=1-2]
      \arrow[from=1-1, to=2-1]
      \arrow["\lrcorner"{anchor=center, pos=0.15, scale=1.5, rotate=0}, draw=none, from=1-1, to=2-2]
      \arrow[from=1-2, to=2-2]
      \arrow[from=2-1, to=2-2]
    \end{tikzcd}
  \end{equation*}
  Now if $\txtis_{\mbbo{0}} \wedge \txtis_{\mbbo{1}}$ factors through
  $\ceil{\bot} \colon 1 \to \Cof$ then by \Cref{lem:cof-bot}, it follows that
  $[\bI.(\txtis_{\mbbo{0}} \wedge \txtis_{\mbbo{0}}), B] \cong 1$, proving the
  interval structure as required by \cite[Definition 5.10]{struct-lift}.

  The final part about slices $\sfrac{\bC}{X}$ then follows by
  \cite[Lemma 5.12]{struct-lift}.
\end{proof}

In preparation for the filling operation, we also prove a result about
approximations of pushout-products of cofibrations.
\begin{lemma}\label{lem:cof-int-pushout-product}
  Suppose $\partial\Cof \hookrightarrow \Cof$ admits an or-structure relative to
  a map $E \to B$ and $\bI$ has a disjoint $\iota$-cofibrant endpoint structure.

  Fix a pullback $\bI.\phi \hookrightarrow \bI$ of
  $\partial\Cof \hookrightarrow \Cof$ with name
  $\phi \colon \bI \to \Cof$.
  \begin{equation*}
    \begin{tikzcd}
      \bI.\phi \ar[r] \ar[d, hook] \ar[rd, "\lrcorner"{pos=0.15, scale=1.5}, draw=none]
      &
      \partial\Cof \ar[d, hook] \\
      \bI \ar[r, "{\phi}"']
      &
      \Cof
    \end{tikzcd}
  \end{equation*}
  Then the map
  $(\Cof \vee \phi)^*\partial\Cof \hookrightarrow \Cof \times \bI$
  over $\Cof$
  constructed as the iterated pullback as below
  \begin{equation*}
    \begin{tikzcd}
      {(\Cof \vee \phi)^*\partial\Cof} & {\partial(\Cof \times \Cof)} & {\partial\Cof} \\
      {\Cof \times \bI} & {\Cof \times \Cof} & \Cof \\
      \Cof
      \arrow[hook, from=1-1, to=1-2]
      \arrow[hook, from=1-1, to=2-1]
      \arrow["\lrcorner"{anchor=center, pos=0.15, scale=1.5, rotate=0}, draw=none, from=1-1, to=2-2]
      \arrow[hook, from=1-2, to=1-3]
      \arrow[hook, from=1-2, to=2-2]
      \arrow["\lrcorner"{anchor=center, pos=0.15, scale=1.5}, draw=none, from=1-2, to=2-3]
      \arrow[hook, from=1-3, to=2-3]
      \arrow["{\Cof \times \phi}"', from=2-1, to=2-2]
      \arrow["\proj"', from=2-1, to=3-1]
      \arrow["-\vee-"', from=2-2, to=2-3]
    \end{tikzcd}
  \end{equation*}
  approximates the pushout-product
  $(\Cof \times \bI.\phi \hookrightarrow \Cof \times \bI) \ltimes_\Cof
  (\partial\Cof \hookrightarrow \Cof)$ relative to
  $E \times \Cof \to B \times \Cof$ in $\sfrac{\bC}{\Cof}$.
\end{lemma}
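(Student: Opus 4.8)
The plan is to obtain the statement by a single base change from the or-structure's approximation of the generic pushout-product recorded in \Cref{rmk:or-psh-prod-approx}.

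First I would note that the argument behind \Cref{rmk:or-psh-prod-approx} uses nothing about $\tMcU\to\McU$ beyond the validity of \eqref{eqn:or-fib} for it, so, as $\partial\Cof\hookrightarrow\Cof$ carries an or-structure \emph{relative to} $E\to B$, the same reasoning with $E\to B$ in place of $\tMcU\to\McU$ shows that in $\sfrac{\bC}{\Cof\times\Cof}$ the map $\partial(\Cof\times\Cof)\hookrightarrow\Cof\times\Cof$ structurally approximates the pushout-product $(\Cof\times\partial\Cof\hookrightarrow\Cof\times\Cof)\ltimes_{\Cof\times\Cof}(\partial\Cof\times\Cof\hookrightarrow\Cof\times\Cof)$ relative to $(E\to B)\times(\Cof\times\Cof)$. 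Then I would base-change the whole picture along $\Cof\times\phi\colon\Cof\times\bI\to\Cof\times\Cof$. By construction, the iterated pullback in the statement presents $(\Cof\vee\phi)^*\partial\Cof\hookrightarrow\Cof\times\bI$ as $(\Cof\times\phi)^*\bigl(\partial(\Cof\times\Cof)\hookrightarrow\Cof\times\Cof\bigr)$. In the locally cartesian closed $\bC$ the pullback functor $(\Cof\times\phi)^*$ has both a left adjoint (postcomposition) and a right adjoint (pushforward), so it preserves pushouts and fibred products and hence pushout-products; chasing the three defining pullbacks — with $\bI.\phi=\phi^*\partial\Cof$ — one identifies $(\Cof\times\phi)^*$ of the pushout-product above with the target pushout-product $(\Cof\times\bI.\phi\hookrightarrow\Cof\times\bI)\ltimes_\Cof(\partial\Cof\hookrightarrow\Cof)$. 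Since by \Cref{lem:pullback-Hom-stable} the internal squares $\Sq$ and diagonals $\Diag$ in terms of which structural approximation is phrased are stable under pullback, structurally approximating a pushout-product relative to a map transfers along base change; this yields the desired approximation relative to $(\Cof\times\phi)^*\bigl((E\to B)\times(\Cof\times\Cof)\bigr)$, and a change of slice along $\Cof\times\bI\to\Cof$, carried out exactly as at the end of the proof of \Cref{lem:int-int-obj} via \cite[Lemma 5.12]{struct-lift}, re-expresses this as an approximation relative to $E\times\Cof\to B\times\Cof$ in $\sfrac{\bC}{\Cof}$.

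The main obstacle I anticipate is bookkeeping rather than a new idea: one must check that base-changing the absolute pushout-product of the two generic cofibrations along $\Cof\times\phi$ produces \emph{exactly} the relative pushout-product in the statement — i.e.\ that the pullbacks $\Cof\times\bI.\phi$, $\partial\Cof\times\bI$ and $\partial\Cof\times_{\Cof}\bI.\phi$ assemble into the correct domain over $\Cof\times\bI$ — and one must keep careful track of which slice the approximation is taken over, so that the conclusion genuinely lands in $\sfrac{\bC}{\Cof}$. Both points follow the template already used to establish \Cref{lem:int-int-obj}, so no essentially new argument is required.
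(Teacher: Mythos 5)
Your proof is correct in outline and follows the same route as the paper's: start from the observation of \Cref{rmk:or-psh-prod-approx} that the or-structure makes $\partial(\Cof\times\Cof)\hookrightarrow\Cof\times\Cof$ approximate the generic pushout-product relative to $E\times\Cof\times\Cof\to B\times\Cof\times\Cof$, then pull back along $\Cof\times\phi$, then rebase to $\sfrac{\bC}{\Cof}$. The paper dispatches the second and third steps by citing \cite[Lemmas 4.5 and 4.6]{struct-lift} directly (the first for pullback along a map in the same slice, the second for descent along a projection to a smaller slice together with the identifications $\proj^*(\partial\Cof\hookrightarrow\Cof)=\partial\Cof\times\bI\hookrightarrow\Cof\times\bI$ and $\proj^*(E\times\Cof\to B\times\Cof)=E\times\Cof\times\bI\to B\times\Cof\times\bI$). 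You instead re-derive the pullback-stability of approximations from \Cref{lem:pullback-Hom-stable}, which is the right underlying idea; that is a more elementary but essentially equivalent way of packaging \cite[Lemma 4.5]{struct-lift}.

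Two details are slightly off. First, your remark that $(\Cof\times\phi)^*$ "preserves pushouts ... and hence pushout-products" is beside the point: the objects in play are \emph{approximations} of pushout-products in the sense of \cite[Definition 4.4]{struct-lift}, not literal pushouts, and there is no claim that the relevant pushouts exist. The correct justification is precisely the one you give next (stability of $\Sq$ and $\Diag$), so you should drop the adjoint-functor aside. Second, for the final change of slice from $\sfrac{\bC}{\Cof\times\bI}$ to $\sfrac{\bC}{\Cof}$, the paper appeals to \cite[Lemma 4.6]{struct-lift}, which is the dedicated descent result for approximations of pushout-products along a projection; \cite[Lemma 5.12]{struct-lift}, which you invoke, is the one used at the end of \Cref{lem:int-int-obj} to transport interval structures into slices, and is not the right tool here. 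Neither issue changes the shape of the argument, but both should be corrected.
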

\begin{proof}
  This follows from the definition of a relative approximation of
  pushout-product as in \cite[Definition 4.4]{struct-lift} as well as base
  change properties of approximations of pushout-products as in \cite[Lemmas 4.5
  and 4.6]{struct-lift}.

  As observed in \Cref{rmk:or-psh-prod-approx}, the or-structure entails that
  $\partial(\Cof \times \Cof) \hookrightarrow \Cof \times \Cof$ approximates the
  pushout-product
  \begin{align*}
    (\Cof \times \partial\Cof \hookrightarrow \Cof \times \Cof) \ltimes_{\Cof
    \times \Cof} (\partial\Cof \times \Cof \hookrightarrow \Cof \times \Cof)
  \end{align*}
  relative to $E \times \Cof \times \Cof \to B \times \Cof \times \Cof$
  in $\sfrac{\bC}{\Cof \times \Cof}$.

  Pulling back by $\Cof \times \phi$ and using
  \cite[Lemma 4.5]{struct-lift}, we see that
  $(\Cof \vee \phi)^*\partial\Cof \hookrightarrow \Cof \times \bI$
  approximates the pushout-product
  \begin{align*}
    (\Cof \times \bI.\phi \hookrightarrow \Cof \times \bI) \ltimes_{\Cof \times \bI}
  (\partial\Cof \times \bI \hookrightarrow \Cof \times \bI)
  \end{align*}
  relative to $E \times \Cof \times \bI \to B \times \Cof \times \bI$ in
  $\sfrac{\bC}{\Cof \times \bI}$.

  The result then follows by \cite[Lemma 4.6]{struct-lift} and by noting that
  $\proj^*(\partial\Cof \hookrightarrow \Cof) = (\partial\Cof \times \bI
  \hookrightarrow \Cof \times \bI)$ and likewise
  $\proj^*(E \times \Cof \to B \times \Cof) = (E \times \Cof \times \bI \to B
  \times \Cof \times \bI)$
\end{proof}

\begin{corollary}\label{cor:cof-int-bd}
  Suppose $\partial\Cof \hookrightarrow \Cof$ has a truth structure relative to
  $\tMcU \to \McU$ and that $\bI$ has a disjoint $\iota$-cofibrant endpoint
  structure.
  Then,
  \begin{enumerate}
    \item
    $(\Cof \vee \txtis_{\delta})^*\partial\Cof \hookrightarrow \Cof \times \bI$
    approximates the pushout-product
    $(\Cof \times \set{\delta} \hookrightarrow \Cof \times \bI) \ltimes_\Cof
    (\partial\Cof \hookrightarrow \Cof)$ relative to
    $\Cof \times \tMcU \to \Cof \times \McU$ in $\sfrac{\bC}{\Cof}$ for
    $\delta = \mbbo{0}, \mbbo{1}$.
    \item
    $(\Cof \vee \txtis_{\partial})^*\partial\Cof \hookrightarrow \Cof \times
    \bI$ approximates the pushout-product
    $(\Cof \times \partial\bI \hookrightarrow \Cof \times \bI) \ltimes_\Cof
    (\partial\Cof \hookrightarrow \Cof)$ relative to
    $\Cof \times \tMcU \to \Cof \times \McU$ in $\sfrac{\bC}{\Cof}$.
    \item If $\partial A \hookrightarrow A$ occurs as a pullback of
    $\partial\Cof \hookrightarrow \Cof$ along $\phi \colon A \to \Cof$ and
    $E \twoheadrightarrow B$ is a $\pi$-fibration where
    \Cref{eqn:or-fib} holds for $B$ then
    $(\rho \vee \txtis_{\partial})^*\partial\Cof \hookrightarrow A \times \bI$
    approximates the pushout-product
    $(A \times \partial\bI \hookrightarrow A \times \bI) \ltimes_A (\partial A
    \hookrightarrow A)$ relative to $A \times E \to A \times B$ in
    $\sfrac{\bC}{A}$.
  \end{enumerate}
\end{corollary}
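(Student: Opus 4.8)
The plan is to obtain all three items by specializing \Cref{lem:cof-int-pushout-product}, with the third requiring in addition one transfer along a $\pi$-fibration (via \Cref{lem:or-struct-pb}) and one base change. I will use throughout that a truth structure relative to $\tMcU \to \McU$ (\Cref{def:cof-truth-lattice}) supplies, in particular, an or-structure relative to $\tMcU \to \McU$, and that the interval hypotheses assumed here are exactly the disjoint $\iota$-cofibrant endpoint data required by \Cref{lem:cof-int-pushout-product}.

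For parts (1) and (2) I would apply \Cref{lem:cof-int-pushout-product} with $E \to B := (\tMcU \to \McU)$ and with $\phi$ taken to be $\txtis_\delta$ (for each $\delta \in \set{\mbbo{0}, \mbbo{1}}$) in case (1), respectively $\txtis_\partial = \txtis_{\mbbo{0}} \vee \txtis_{\mbbo{1}}$ in case (2). The only thing to verify is that $\bI.\phi$ is, in each case, a pullback of $\partial\Cof \hookrightarrow \Cof$ along $\phi$: for $\phi = \txtis_\delta$ this is the defining pullback square of the $\iota$-cofibrant endpoint structure, so that $\bI.\txtis_\delta \cong \set{\delta}$; for $\phi = \txtis_\partial$ it follows because, by the pre-or-structure, $\partial(\Cof \times \Cof) \hookrightarrow \Cof \times \Cof$ is classified by $-\vee-$, whence $\bI.\txtis_\partial = (\txtis_{\mbbo{0}}, \txtis_{\mbbo{1}})^*\partial(\Cof \times \Cof) = \partial\bI$ is a pullback of $\partial\Cof \hookrightarrow \Cof$ along $\txtis_\partial$ (cf.\ \Cref{def:interval-disjoint-endpt}). \Cref{lem:cof-int-pushout-product} then delivers the stated approximations verbatim, once one identifies the relative object $\tMcU \times \Cof \to \McU \times \Cof$ coming out of that lemma with $\Cof \times \tMcU \to \Cof \times \McU$ via the braiding isomorphism over $\Cof$, which is an isomorphism of arrows in $\sfrac{\bC}{\Cof}$ and hence immaterial for the property of structurally approximating a pushout-product.

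For part (3) I would first use that $E \twoheadrightarrow B$ is a $\pi$-fibration for which \Cref{eqn:or-fib} holds for $B$; then \Cref{lem:or-struct-pb} shows \Cref{eqn:or-fib} also holds for $E$ and that $\partial\Cof \hookrightarrow \Cof$ carries an or-structure relative to $E \to B$. Feeding this or-structure together with $\phi := \txtis_\partial$ into \Cref{lem:cof-int-pushout-product} (just as in part (2), but with $E \to B$ in place of $\tMcU \to \McU$) gives that $(\Cof \vee \txtis_\partial)^*\partial\Cof \hookrightarrow \Cof \times \bI$ approximates $(\Cof \times \partial\bI \hookrightarrow \Cof \times \bI) \ltimes_\Cof (\partial\Cof \hookrightarrow \Cof)$ relative to $E \times \Cof \to B \times \Cof$ in $\sfrac{\bC}{\Cof}$. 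I would then let $\rho \colon A \to \Cof$ be the classifying map of $\partial A \hookrightarrow A$ (written $\phi$ in the statement) and pull the whole slice $\sfrac{\bC}{\Cof}$ back along $\rho$, invoking the base-change properties of approximations of pushout-products \cite[Lemma 4.5]{struct-lift}. It then remains to match up the pulled-back data: because $(\Cof \vee \txtis_\partial) \circ (\rho \times \bI) = (\rho \vee \txtis_\partial)$ one gets $\rho^*\big((\Cof \vee \txtis_\partial)^*\partial\Cof\big) = (\rho \vee \txtis_\partial)^*\partial\Cof$ over $A \times \bI$; moreover $\rho^*(\Cof \times \partial\bI \hookrightarrow \Cof \times \bI) = (A \times \partial\bI \hookrightarrow A \times \bI)$, $\rho^*(\partial\Cof \hookrightarrow \Cof) = (\partial A \hookrightarrow A)$ by hypothesis on $\rho$, and $\rho^*(E \times \Cof \to B \times \Cof) = (E \times A \to B \times A) \cong (A \times E \to A \times B)$. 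Assembling these yields exactly the asserted approximation in $\sfrac{\bC}{A}$.

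I expect no real conceptual obstacle here, since every step is a specialization of a statement established earlier. The one place that calls for care is the bookkeeping in part (3): confirming that $(\rho \vee \txtis_\partial)^*\partial\Cof \hookrightarrow A \times \bI$ genuinely is the base change along $\rho$ of the object produced by \Cref{lem:cof-int-pushout-product}, and that each factor of the pushout-product together with the relative object base-change compatibly, so that \cite[Lemma 4.5]{struct-lift} applies on the nose — together with keeping the harmless swapping of product factors ($\tMcU \times \Cof$ versus $\Cof \times \tMcU$, $E \times A$ versus $A \times E$) straight throughout.
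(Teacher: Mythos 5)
Your proposal is correct and takes essentially the same approach as the paper, whose entire proof is the single line ``Immediate by \Cref{lem:cof-int-pushout-product,lem:or-struct-pb}.'' You have correctly supplied the instantiation details (identifying $\bI.\txtis_\delta \cong \set{\delta}$ and $\bI.\txtis_\partial = \partial\bI$, invoking the truth structure for the needed or-structure, and for part (3) first transferring the or-structure to $E \to B$ via \Cref{lem:or-struct-pb} before applying \Cref{lem:cof-int-pushout-product} and base-changing along the classifying map of $\partial A \hookrightarrow A$), which the paper leaves implicit.
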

\begin{proof}
  Immediate by \Cref{lem:cof-int-pushout-product,lem:or-struct-pb}.
\end{proof}

\begin{example}\label{ex:omega-pushout-product}
  If $\partial\Cof \hookrightarrow \Cof$ is the subobject classifier in a topos
  then the map $\partial(\Cof\times\Cof) \hookrightarrow \Cof\times\Cof$ from
  the or-structure in \Cref{lem:omega-truth} is precisely the pushout-product
  $\partial\Cof \times \Cof \cup \Cof \times \partial\Cof \hookrightarrow \Cof
  \times \Cof$ and $\bC$ is adhesive so the approximations to pushout-products
  in \Cref{lem:cof-int-pushout-product,cor:cof-int-bd} are in fact exactly the
  pushout-products.
\end{example}

Finally, we introduce the connection operation for intervals.

\begin{definition}\label{def:interval-min-max}
  Suppose $\bI$ has an $\iota$-cofibrant endpoint structure relative to
  $\partial\Cof \hookrightarrow \Cof$.
  A \emph{min}-structure on $\bI$ is an internal symmetric binary operation
  $\min\relax \colon \bI \times \bI \to \bI$ with sections
  $\set{\mbbo{1}} \times \bI, \bI \times \set{\mbbo{1}} \rightrightarrows \bI \times \bI$
  that is constant $\mbbo{0}$ when either of its arguments are $\mbbo{0}$.
  \begin{center}
    \begin{minipage}{0.45\linewidth}
      \begin{equation*}
        \begin{tikzcd}[cramped]
          {\bI_1 \times \bI_2} && {\bI_2 \times \bI_1} \\
          & \bI
          \arrow["\cong", tail reversed, from=1-1, to=1-3]
          \arrow["\min"', from=1-1, to=2-2]
          \arrow["\min", from=1-3, to=2-2]
        \end{tikzcd}
      \end{equation*}
    \end{minipage}
    \begin{minipage}{0.45\linewidth}
      \begin{equation*}
        \begin{tikzcd}[cramped]
          {\set{\mbbo{1}} \times \bI} & {\bI \times \bI} & {\bI \times \set{\mbbo{1}}} \\
          & \bI
          \arrow[from=1-1, to=1-2]
          \arrow["{=}"{description}, from=1-1, to=2-2]
          \arrow["\min"{description}, from=1-2, to=2-2]
          \arrow[from=1-3, to=1-2]
          \arrow["{=}"{description}, from=1-3, to=2-2]
        \end{tikzcd}
      \end{equation*}
    \end{minipage}
    \begin{equation*}
      \begin{tikzcd}[cramped]
        {\bI \times \set{\mbbo{0}}} & {\bI \times \bI} & { \set{\mbbo{0}} \times \bI} \\
        { \set{\mbbo{0}}} & \bI & { \set{\mbbo{0}}}
        \arrow[hook, from=1-1, to=1-2]
        \arrow["{!}"', from=1-1, to=2-1]
        \arrow["\min"{description}, from=1-2, to=2-2]
        \arrow[hook', from=1-3, to=1-2]
        \arrow["{!}", from=1-3, to=2-3]
        \arrow[hook, from=2-1, to=2-2]
        \arrow[hook', from=2-3, to=2-2]
      \end{tikzcd}
    \end{equation*}
  \end{center}

  Likewise, a \emph{max} structure on $\bI$ as an internal symmetric binary
  operation $\max\relax \colon \bI \times \bI \to \bI$ with sections
  $\set{\mbbo{0}} \times \bI, \bI \times \set{\mbbo{0}} \rightrightarrows \bI
  \times \bI$ that is constant $\mbbo{1}$ when either of its arguments are
  $\mbbo{1}$.
\end{definition}


\subsection{Fibration Structure}\label{subsec:fibration-struct}
In addition to the usual logical structures from intensional type theory like
from \Cref{subsec:itt-types}, cubical type theory also requires some extra
logical structures on the universe of fibrations $\tMcU \to \McU$.
Namely, these are the $\Path$-type structures, which we define in this part.
In particular, we will define a generalisation of the $\Path$-type structure and
also show that that under some conditions $\Path$-types induce the usual
$\Id$-types.

\subsubsection{Path-Type Structure}
The $\Path$-type structure states that the endpoint evaluation map
$P^{\bI}_{\McU}(\tMcU) \to \partial P^\bI_\McU(\tMcU)$ of the fibred $\bI$-path
object from \cite[Definition 5.6]{struct-lift} is a $\pi$-fibration.



\begin{definition}\label{def:path-type}
  Assume that $\partial\Cof \hookrightarrow \Cof$ has a truth structure relative
  to $\tMcU \to \McU$ under which $\bI$ has a disjoint cofibrant endpoint
  structure.

  A \emph{$\Path$-type} structure on $\tMcU \to \McU$ is
  a pair of dashed maps $(\via, \Path)$ giving the boundary evaluation map
  \begin{equation*}
    P_{\McU}^\bI(\tMcU)
    \cong [\bI, \tMcU] \times_{[\bI, \McU]} \McU
    \xrightarrow{\ev_\partial}
    \tMcU \times_\McU \tMcU
    \cong \partial P_{\McU}^\bI(\tMcU)
  \end{equation*}
  from \cite[Definition 5.6]{struct-lift}
  a $\pi$-fibration structure.
  \begin{equation}\label{eqn:Path-def}\tag{\textsc{Path-def}}
    \begin{tikzcd}[cramped]
      \tMcU \\
      & {P_{\McU}^\bI(\tMcU)} & \tMcU \\
      & {\tMcU \times_\McU \tMcU} & \McU
      \arrow["{(\const,\pi)}", from=1-1, to=2-2]
      \arrow["\Delta"', curve={height=12pt}, from=1-1, to=3-2]
      \arrow["\via", dashed, from=2-2, to=2-3]
      \arrow[from=2-2, to=3-2, "{\ev_\partial}"{description}]
      \arrow["\lrcorner"{anchor=center, pos=0.15, scale=1.5}, draw=none, from=2-2, to=3-3]
      \arrow[from=2-3, to=3-3]
      \arrow["\Path"', dashed, from=3-2, to=3-3]
    \end{tikzcd}
  \end{equation}

  If $\pi_0 \colon \tMcU_0 \to \McU_0$ is an internal universe of
  $\tMcU \to \McU$ with $\pi$-name $\El$ and is also equipped with a
  $\Path$-type structure $\Path_0$ then they are \emph{compatible} when the
  following diagram commutes, just like in the $\Id$-type case
  \Cref{def:int-univ-sigma-pi-id}.
  \begin{equation*}
    \begin{tikzcd}[cramped]
      {\tMcU_0 \times_{\McU_0} \tMcU_0} & {\tMcU \times_{\McU} \tMcU} \\
      {\tMcU_0} & \tMcU
      \arrow["{\tEl \times_\El \tEl}", from=1-1, to=1-2]
      \arrow["{\Path_0}"', shift right, from=1-1, to=2-1]
      \arrow["{\Path}", shift right, from=1-2, to=2-2]
      \arrow["\El"'{pos=0.5}, from=2-1, to=2-2]
    \end{tikzcd}
  \end{equation*}
\end{definition}

\subsubsection{Heterogeneous Path-Type Structure}
An unfortunate feature of usual $\Path$-type structure is that their fibrancy
over the product is not preserved under formation of inverse diagrams, as
illustrated in the following example.

\begin{example}
  Let $\bC$ be a bicomplete locally cartesian closed category with a set of
  pullback-stable fibrations and let $I \in \bC$ be a equipped with an interval
  structure $\partial I \coloneqq 1 \sqcup 1 \to I$ (relative to all objects).
  Assume the endpoint evaluation map $P_B^I(E) \twoheadrightarrow E \times_B E$
  is a fibration whenever $E \twoheadrightarrow B$ is a fibration so that the
  interval object gives rise to fibred path objects.

  Now, consider the arrow category $\bC^{1 \to 0}$ where the fibrations are the
  Reedy fibrations and the interval object is chosen to be constant at $I$.
  Then in general $\bC^{1 \to 0}$ might not have the property that the constant
  interval object gives rise to fibred path objects.
  To see this, fix a Reedy fibration $E \to B \in \bC^{1 \to 0}$ so that the
  local exponential $[E \times I, B]_B \in \sfrac{\bC^{1 \to 0}}{B_1 \to B_0}$
  is given by
  $[B_1 \times I, E_1]_{B_1} \to [B_1 \times I, E_0^*]_{B_1} \to [B_0 \times I,
  E_0]_{B_0}$ calculated diagrammatically as below.

  \begin{equation*}\small
    \begin{tikzcd}[cramped, column sep=small]
      {[B_1 \times I, E_1]_{B_1}} & \bullet && {[B_1 \times I, E_0^*]_{B_1}} && {[B_0 \times I, E_0]_{B_0}} \\
      && {E_1 \times_{B_1} E_1} && {E_0^* \times_{B_1} E_0^*} && {E_0 \times_{B_0} E_0} \\
      &&& {B_1} && {B_0}
      \arrow[from=1-1, to=1-2, dashed, two heads]
      \arrow[from=1-1, to=2-3]
      \arrow[from=1-2, to=1-4]
      \arrow[from=1-2, to=2-3]
      \arrow[from=1-4, to=1-6]
      \arrow[from=1-4, to=2-5]
      \arrow[from=1-4, to=3-4]
      \arrow[two heads, from=1-6, to=2-7]
      \arrow[from=1-6, to=3-6]
      \arrow[from=2-3, to=3-4]
      \arrow[from=2-5, to=3-4]
      \arrow[from=2-7, to=3-6]
      \arrow[from=3-4, to=3-6]
      \arrow[from=2-3, to=2-5, crossing over]
      \arrow[from=2-5, to=2-7, crossing over]
      \arrow["\lrcorner"{anchor=center, pos=0.15, scale=1.5, rotate=45}, draw=none, from=1-2, to=2-5]
      \arrow["\lrcorner"{anchor=center, pos=0.15, scale=1.5}, draw=none, from=1-4, to=3-6]
      \arrow["\lrcorner"{anchor=center, pos=0.15, scale=1.5}, draw=none, from=2-5, to=3-6]
    \end{tikzcd}
  \end{equation*}

  Limits are computed pointwise, so to require that the map
  $[B \times I, B]_B \to E \times_B E$ is a Reedy fibration is by definition to
  require the relative matching map
  \begin{equation*}
    [B_1 \times I, E_1]_{B_1}
    \xrightarrow{\quad}
    [B_0 \times I, E_0]_{B_0} \times_{(E_0 \times_{B_0} E_0)} (E_1 \times_{B_1} E_1)
  \end{equation*}
  to be a fibration.
  But one realises that the codomain object, written as the pullback of the
  entire top face of the diagram above, is also just the pullback of the top
  left face above.
  \begin{equation*}
    [B_0 \times I, E_0]_{B_0} \times_{(E_0 \times_{B_0} E_0)} (E_1 \times_{B_1} E_1) \cong
    [B_1 \times I, E_0^*]_{B_1} \times_{(E_0^* \times_{B_1} E_0^*)} (E_1 \times_{B_1} E_1)
  \end{equation*}
  So, by adhesiveness, the relative matching map is in fact the pullback-Hom of
  $\partial I \times B_1 \to I \times B_1$ with $E_1 \twoheadrightarrow E_0^*$
  in the slice over $B_1$.
  The latter of these maps is a fibration because $E \to B$ is a Reedy
  fibration.

  Therefore, by \cite[Lemma 5.7]{struct-lift}, it seems that to ensure fibred
  path objects are constructed by mapping out of the constant interval object in
  inverse diagrams on $\bC$, one requires that the pullback-Hom of a
  (cofibration, fibration)-pair is still a fibration in $\bC$ .
\end{example}

This lack of closure under inverse diagrams appears problematic for replicating
the path object construction in the category of models from \cite{kl18}.
Informed by the above example, the following more general formulation of
$\Path$-type structures, in the heterogeneous sense, which captures the axiom
that the pullback-Hom of the endpoint inclusion cofibration and a fibration
remains a fibration, does satisfy the closure property.

To state the formulation, we first construct the classifying heterogeneous
$\Path$-type map.
\begin{construction}\label{def:htr-path-square}
  Assume that $\partial\Cof \hookrightarrow \Cof$ has a truth structure relative
  to $\tMcU \to \McU$ under which $\bI$ has a disjoint $\iota$-cofibrant
  endpoint structure.

  In the following, we denote by $\txtis_\partial^*$ and $\ev^*(\tMcU \times \pi)$ to be respectively the map
  $\partial\bI \to \bI$ rebased over $\pi_*(\tMcU \times \McU)$ and the map
  $\GenComp(\tMcU\twoheadrightarrow\McU, \tMcU\twoheadrightarrow\McU)$ viewed as a map over $\pi_*(\tMcU \times \McU)$.
  \begin{equation*}
    \begin{tikzcd}[cramped]
      {\pi_*(\tMcU \times \McU) \times \partial\bI} && {\ev^*(\tMcU \times \tMcU)} \\
      {\pi_*(\tMcU \times \McU) \times \bI} && {\pi^*\pi_*(\tMcU \times \McU)} \\
      & {\pi_*(\tMcU \times \McU)}
      \arrow["{\txtis_{\partial}^*}"', from=1-1, to=2-1]
      \arrow["{\ev^*(\tMcU \times \pi)}", from=1-3, to=2-3]
      \arrow[from=2-1, to=3-2]
      \arrow[from=2-3, to=3-2]
    \end{tikzcd} \in \sfrac{\bC}{\pi_*(\tMcU \times \McU)}
  \end{equation*}
  Then, applying \Cref{def:pullback-Hom} on the above pair of maps over
  $\pi_*(\tMcU \times \McU)$ gives the following map on the left which one
  recognises to be the same as the map
  $\hP_{\pi_*(\tMcU\times\McU)}^\bI(\ev^*(\tMcU \times \pi))$ obtained from
  \cite[Definition 5.6]{struct-lift} on the right.
  \begin{equation*}
    \left(\begin{tikzcd}[cramped, row sep=large]
        {\Diag_{\pi_*(\tMcU \times \McU)}(\txtis_{\partial}^*, \ev^*(\tMcU \times \pi))} \\
        {\Sq_{\pi_*(\tMcU \times \McU)}(\txtis_{\partial}^*, \ev^*(\tMcU \times \pi))}
        \arrow["", from=1-1, to=2-1]
      \end{tikzcd}\right)
    =
    \left(\begin{tikzcd}[cramped, row sep=large]
        {P_{\pi_*(\tMcU\times\McU)}^\bI(\ev^*(\tMcU\times\tMcU))} \\
        {\Sq_{\pi_*(\tMcU \times \McU)}(\txtis_{\partial}^*, \ev^*(\tMcU \times \pi))}
        \arrow["{\hP_{\pi_*(\tMcU\times\McU)}^\bI(\ev^*(\tMcU \times \pi))}"{description}, from=1-1, to=2-1]
      \end{tikzcd}\right)
  \end{equation*}
\end{construction}

We also state a construction that allows one to express closure of an internal
universe by heterogeneous $\Path$-types.
\begin{construction}\label{def:htr-path-square-pb}
  Assume that $\partial\Cof \hookrightarrow \Cof$ has a truth structure under
  which $\bI$ has a disjoint $\iota$-cofibrant endpoint structure.

  If $\pi_0 \colon \tMcU_0 \to \McU_0$ is an internal universe of
  $\tMcU \to \McU$ then by \Cref{constr:int-univ-gencomp},
  $\GenComp(\pi_0,\pi_0)$ occurs as a pullback of $\GenComp(\pi,\pi)$, so that
  \Cref{lem:pullback-Hom-stable} gives rise to iterated pullback squares as
  follows
  \begin{equation*}
    \begin{tikzcd}[cramped]
      {P_{(\pi_0)_*(\tMcU_0\times\McU_0)}^\bI(\ev^*(\tMcU_0\times\tMcU_0))} & {P_{\pi_*(\tMcU\times\McU)}^\bI(\ev^*(\tMcU\times\tMcU))} \\
      {\Sq_{(\pi_0)_*(\tMcU_0 \times \McU_0)}(\txtis_{\partial}^*, \ev^*(\tMcU_0 \times \pi_0))} & {\Sq_{\pi_*(\tMcU \times \McU)}(\txtis_{\partial}^*, \ev^*(\tMcU \times \pi))} \\
      {(\pi_0)_*(\tMcU_0 \times \McU_0)} & {\pi_*(\tMcU \times \McU)}
      \arrow[from=1-1, to=1-2]
      \arrow[from=1-1, to=2-1]
      \arrow[from=1-2, to=2-2]
      \arrow[from=2-1, to=2-2]
      \arrow[from=2-1, to=3-1]
      \arrow[from=2-2, to=3-2]
      \arrow[from=3-1, to=3-2]
      \arrow["\lrcorner"{anchor=center, pos=0.15, scale=1.5}, draw=none, from=1-1, to=2-2]
      \arrow["\lrcorner"{anchor=center, pos=0.15, scale=1.5}, draw=none, from=2-1, to=3-2]
    \end{tikzcd}
  \end{equation*}
\end{construction}

We can now formulate the heterogeneous $\Path$-type by requiring a fibration
structure on the corresponding map.
\begin{definition}\label{def:htr-path-type}
  Assume that $\partial\Cof \hookrightarrow \Cof$ has a truth structure relative
  to $\tMcU \to \McU$ which $\bI$ has a disjoint $\iota$-cofibrant endpoint
  structure.

  A \emph{heterogeneous $\Path$-type structure} on $\tMcU \to \McU$ is a pair of
  dashed maps $(\hetvia, \hetPath)$ giving a $\pi$-fibration structure on the
  pullback-Hom $\hP_{\pi_*(\tMcU\times\McU)}^\bI(\ev^*(\tMcU \times \pi))$
  construction, as from \Cref{def:htr-path-square}.
  That is, it is a pullback square as follows.
  \begin{equation*}
    \begin{tikzcd}[cramped]
      {P_{\pi_*(\tMcU \times \McU)}^\bI(\ev^*(\tMcU\times\tMcU))} & \tMcU \\
      {\Sq_{\pi_*(\tMcU \times \McU)}(\txtis_\partial^*,\ev^*(\tMcU\times\pi))} & \McU
      \arrow["\hetvia", dashed, from=1-1, to=1-2]
      \arrow["{\text{\Cref{def:htr-path-square}}}"', from=1-1, to=2-1]
      \arrow["\lrcorner"{anchor=center, pos=0.05, scale=1.5}, draw=none, from=1-1, to=2-2]
      \arrow[from=1-2, to=2-2]
      \arrow["\hetPath"', dashed, from=2-1, to=2-2]
    \end{tikzcd}
  \end{equation*}

  Now suppose that $\pi_0 \colon \tMcU_0 \to \McU_0$ is an internal universe of
  $\tMcU \to \McU$ with $\pi$-name $\El \colon \McU_0 \to \McU$.
  Then, by the middle layer map of \Cref{def:htr-path-square-pb}, one obtains
  \begin{equation*}
    \Sq_{(\pi_0)_*(\tMcU_0 \times \McU_0)}(\txtis_{\partial}^*, \ev^*(\tMcU_0 \times \pi_0))
    \xrightarrow{\text{\Cref{def:htr-path-square-pb}}}
    \Sq_{\pi_*(\tMcU \times \McU)}(\txtis_{\partial}^*, \ev^*(\tMcU \times \pi))
  \end{equation*}
  If the internal universe is also equipped with a heterogeneous $\Path$-type
  structure $\hetPath_0$ then it is \emph{compatible} with the heterogeneous
  $\Path$-type structure of the external ambient universe when the following
  diagram commutes.
  \begin{equation*}
    \begin{tikzcd}[cramped]
      {\Sq_{(\pi_0)_*(\tMcU_0 \times \McU_0)}(\txtis_{\partial}^*, \ev^*(\tMcU_0 \times \pi_0))} & {\McU_0} \\
      {\Sq_{\pi_*(\tMcU \times \McU)}(\txtis_\partial^*,\ev^*(\tMcU\times\pi))} & \McU
      \arrow["{\hetPath_0}", from=1-1, to=1-2]
      \arrow["{\text{\Cref{def:htr-path-square-pb}}}"', from=1-1, to=2-1]
      \arrow["\El", from=1-2, to=2-2]
      \arrow["\hetPath"', from=2-1, to=2-2]
    \end{tikzcd}
  \end{equation*}
\end{definition}

Heterogeneous $\Path$-types are indeed generalisations of $\Path$-types under
the existence of $\Unit$ types.
This is because
$\ev^*(\tMcU \times \tMcU) \to \pi^*\pi_*(\tMcU \times \McU) \to \pi_*(\tMcU \to
\McU)$ is the generic for maps obtained by composing two composable pullbacks of
$\tMcU \to \McU$ and \cite[Lemma 5.12]{struct-lift} shows that
pullback-Homs of fibred path objects are pullback-stable.
We formally record this fact.

\begin{proposition}\label{prop:htr-path-type-path}
  Assume that $\partial\Cof \hookrightarrow \Cof$ has a truth structure relative
  to $\tMcU \to \McU$ and $\bI$ has a disjoint $\iota$-cofibrant endpoint
  structure.

  If $\tMcU \to \McU$ admits a $\Unit$-type structure then every heterogeneous
  $\Path$-type structure on $\tMcU \to \McU$ induces a $\Path$-type structure on
  $\tMcU \to \McU$.
\end{proposition}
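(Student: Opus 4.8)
The plan is to realize the boundary evaluation map $\ev_\partial \colon P_{\McU}^\bI(\tMcU) \to \tMcU \times_\McU \tMcU$ of \Cref{def:path-type} as a pullback of the classifying pullback-Hom $\hP_{\pi_*(\tMcU \times \McU)}^\bI(\ev^*(\tMcU \times \pi))$ of \Cref{def:htr-path-square}, and then to pull the pullback square witnessing $(\hetvia, \hetPath)$ back along the relevant base map. The $\Unit$-type structure enters precisely because it is what makes the required composable pair of $\pi$-fibrations available.

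First I would use the $\Unit$-type structure $(\star, \Unit)$ to equip $\id_\McU \colon \McU \to \McU$ with a $\pi$-fibrancy structure, namely the pullback along $\McU \to 1$ of the $\pi$-fibrancy structure on $1 \xrightarrow{\id} 1$; its $\pi$-name is $\Unit \circ (\McU \to 1)$ and its $\pi$-point is $\star \circ (\McU \to 1)$. Then I would form the composable pair of $\pi$-fibrations $\tMcU \xrightarrow{\pi} \McU \xrightarrow{\id_\McU} \McU$, with outer factor $\id_\McU$ carrying the structure just described and inner factor $\pi$ carrying its canonical $\pi$-fibrancy structure $(\id_\McU, \id_\tMcU)$. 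By \Cref{lem:gen-comp}, this composable pair arises as a pullback of $\GenComp(\pi, \pi) \colon \ev^*(\tMcU \times \tMcU) \to \pi^*\pi_*(\tMcU \times \McU) \to \pi_*(\tMcU \times \McU)$ along a unique map $u \colon \McU \to \pi_*(\tMcU \times \McU)$ whose composite with $\pi_*(\tMcU \times \McU) \to \McU$ is $\Unit \circ (\McU \to 1)$ and whose transpose under $\pi^* \dashv \pi_*$ is $(\star \circ (\McU \to 1), \id_\McU)$. The point of choosing this pair, rather than $\tMcU \xrightarrow{\id} \tMcU \xrightarrow{\pi} \McU$, is that its inner factor is $\pi$ itself: concretely, pulling the map $\ev^*(\tMcU \times \pi) \colon \ev^*(\tMcU \times \tMcU) \to \pi^*\pi_*(\tMcU \times \McU)$, viewed in $\sfrac{\bC}{\pi_*(\tMcU \times \McU)}$, back along $u$ yields $\pi \colon \tMcU \to \McU$, while pulling $\txtis_\partial^*$ back along $u$ yields $\McU \times \partial\bI \hookrightarrow \McU \times \bI$.

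Next I would invoke the pullback-stability of the internal square and diagonal objects, \Cref{lem:pullback-Hom-stable} (the incarnation of \cite[Lemma 5.12]{struct-lift} alluded to before the statement): applied with the map $u$ and the pair $(\txtis_\partial^*, \ev^*(\tMcU \times \pi))$ over $\pi_*(\tMcU \times \McU)$, it shows that $\hP_{\pi_*(\tMcU \times \McU)}^\bI(\ev^*(\tMcU \times \pi))$ pulls back along $u$ to the pullback-Hom over $\McU$ of $\McU \times \partial\bI \hookrightarrow \McU \times \bI$ with $\pi \colon \tMcU \to \McU$. By \cite[Definition 5.6]{struct-lift} the latter is exactly $\ev_\partial \colon P_{\McU}^\bI(\tMcU) \to \tMcU \times_\McU \tMcU$, with source $P_{\pi_*(\tMcU \times \McU)}^\bI(\ev^*(\tMcU \times \tMcU))$ pulling back to $P_{\McU}^\bI(\tMcU)$ and target $\Sq_{\pi_*(\tMcU \times \McU)}(\txtis_\partial^*, \ev^*(\tMcU \times \pi))$ pulling back to $\tMcU \times_\McU \tMcU$.

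Finally, the heterogeneous $\Path$-type structure is a pullback square exhibiting $(\hetvia, \hetPath)$ as a $\pi$-fibrancy structure on $\hP_{\pi_*(\tMcU \times \McU)}^\bI(\ev^*(\tMcU \times \pi))$ against $\pi$; since pullbacks of pullback squares are pullback squares, pulling this square back along $u$ produces a pullback square exhibiting a $\pi$-fibrancy structure on $\ev_\partial$ against $\pi$, and its two components — the pullbacks of $\hetvia$ and $\hetPath$ — are by definition the data $(\via, \Path)$ of a $\Path$-type structure in the sense of \Cref{def:path-type}. The step demanding the most care is the identification in the second paragraph: one must check that it is the composable pair $\tMcU \xrightarrow{\pi} \McU \xrightarrow{\id_\McU} \McU$ — whose inner factor pulls the relevant component of $\GenComp(\pi,\pi)$ back to $\pi$ itself, and not to an identity map, which would collapse the pullback-Hom — that does the job, and that its outer factor $\id_\McU$ really is a $\pi$-fibration, which is exactly the content supplied by the $\Unit$-type structure.
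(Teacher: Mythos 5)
Your proof is correct and takes essentially the same route as the paper: use the $\Unit$-type structure to make $\id_\McU$ a $\pi$-fibration, apply \Cref{lem:gen-comp} to the composable pair $\tMcU \xrightarrow{\pi} \McU \xrightarrow{\id} \McU$ to obtain the classifying map $u \colon \McU \to \pi_*(\tMcU \times \McU)$, and then pull the heterogeneous $\Path$-type square back along $u$ using pullback-stability of the pullback-Hom construction (\Cref{lem:pullback-Hom-stable}, i.e.\ \cite[Lemma 5.12]{struct-lift}). One small remark: you single out the choice of composable pair as the step demanding the most care, but the paper actually places the delicate step elsewhere --- in the identification of the pullback of $\Sq_{\pi_*(\tMcU \times \McU)}(\txtis_\partial^*, \ev^*(\tMcU \times \pi))$ along $u$ with $\tMcU \times_\McU \tMcU$, which relies on \Cref{lem:int-int-obj} and hence on disjointness of the interval endpoints; you use this implicitly in the third paragraph without flagging it.
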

\begin{proof}
  Because $\bI$ has a disjoint cofibrant endpoint structure, it follows that
  \begin{equation*}
    P_\McU^\bI(\McU) \times_{\partial P_\McU^\bI(\McU)} \partial P_\McU(\tMcU)
    \cong
    [\McU \times \bI, \McU]_{\McU} \times_{[\McU \times \partial \bI, \McU]_{\McU}}
    [\McU \times \partial \bI, \tMcU]_{\McU}
    \cong
    \tMcU \times_\McU \tMcU
  \end{equation*}
  by \Cref{lem:int-int-obj}.
  Therefore, $P_\McU^\bI(\McU) \to \tMcU \times_\McU \tMcU$ is the pullback-Hom
  $\hP_\McU^\bI(\tMcU \to \McU)$ applied on
  $\tMcU \to \McU \in \sfrac{\bC}{\McU}$.

  Thus, by \cite[Lemma 5.12]{struct-lift}, it suffices to exhibit
  $\tMcU \to \McU \xrightarrow{=} \McU$ as a pullback of
  $\ev^*(\tMcU \times \tMcU) \to \pi^*\pi_*(\tMcU \times \McU) \to \pi_*(\tMcU
  \to \McU)$.
  But then by the $\Unit$-structure, all identity maps are $\pi$-fibrations, so
  the result follows by the universality of
  $\GenComp(\tMcU\twoheadrightarrow\McU, \tMcU\twoheadrightarrow\McU)$ given by
  \Cref{lem:gen-comp}.
\end{proof}


\subsection{Filling Operation}\label{subsec:filling-operation}
In this section, we axiomatise the biased filling operation and its various
computation rules.
We also use it to derive $\Id$-types and $\MsJ$-elimination.

\begin{definition}\label{def:biased-filling}
  Suppose $\partial\Cof \hookrightarrow \Cof$ has a truth structure relative
  to $\tMcU \to \McU$ under which $\bI$ has a disjoint cofibrant endpoint
  structure.

  A \emph{$\delta$-biased filling structure} on these data is then a lifting
  structure, for $\delta = \mbbo{0},\mbbo{1}$ in $\sfrac{\bC}{\Cof}$, as defined
  in \cite[Definitions 1.4 and 3.1]{struct-lift}, as follows.
  \begin{equation*}
    \Fill_\delta \in \left(
      \begin{tikzcd}[cramped]
        (\Cof \vee \txtis_{\delta})^*\partial\Cof \ar[d,hook] \\ \Cof \times \bI
      \end{tikzcd}
      \fracsquareslash{\Cof}
      \begin{tikzcd}[cramped]
        \Cof \times \tMcU \ar[d] \\ \Cof \times \McU
      \end{tikzcd}
    \right)
  \end{equation*}
\end{definition}

\newsavebox{\dCofI}
\begin{lrbox}{\dCofI}{\scriptsize\begin{tikzcd}[column sep=small, row sep=small, cramped]
      {(\Cof \vee \txtis_{\delta})^*\partial\Cof} \ar[d,hook] \\ {\Cof \times \bI}\end{tikzcd}}
\end{lrbox}
\newsavebox{\tU}
\begin{lrbox}{\tU}{\scriptsize\begin{tikzcd}[column sep=small, row sep=small, cramped]
        {\Cof \times \tMcU} \ar[d] \\ {\Cof \times \McU}\end{tikzcd}}
\end{lrbox}

\begin{remark}\label{rmk:biased-filling}
  Suppose $\partial\Cof \hookrightarrow \Cof$ has a truth structure respective
  to $\tMcU \to \McU$ under which $\bI$ has a disjoint cofibrant endpoint
  structure.
  We explain syntactically what it means to have a $\delta$-biased filling
  structure.

  For an object $\Gamma \in \bC$, we write $\Gamma.\bI$ to mean
  $\Gamma \times \bI$.
  Then, for a map $\phi \colon \Gamma \to \Cof$, we can construct the iterated
  pullback $\Gamma.\bI.(\phi^* \vee \txtis_\delta) \hookrightarrow \Gamma.\bI$
  as follows.
  \begin{equation*}
    \begin{tikzcd}[cramped]
      {\Gamma.\bI.(\phi^* \vee \txtis_{\delta})} & {(\Cof \vee \txtis_\delta)^*\partial\Cof} & {\partial(\Cof \vee \Cof)} & {\partial\Cof} \\
      {\Gamma.\bI} & {\Cof \times \bI} & {\Cof \times \Cof} & \Cof
      \arrow[from=1-1, to=1-2]
      \arrow[hook, from=1-1, to=2-1]
      \arrow[from=1-2, to=1-3]
      \arrow[hook, from=1-2, to=2-2]
      \arrow[from=1-3, to=1-4]
      \arrow[hook, from=1-3, to=2-3]
      \arrow[hook, from=1-4, to=2-4]
      \arrow["{\phi \times \bI}"', from=2-1, to=2-2]
      \arrow["{\Cof \times \txtis_{\delta}}"', from=2-2, to=2-3]
      \arrow["{-\vee-}"', from=2-3, to=2-4]
      \arrow["\lrcorner"{anchor=center, pos=0.15, scale=1.5}, draw=none, from=1-1, to=2-2]
      \arrow["\lrcorner"{anchor=center, pos=0.15, scale=1.5}, draw=none, from=1-2, to=2-3]
      \arrow["\lrcorner"{anchor=center, pos=0.15, scale=1.5}, draw=none, from=1-3, to=2-4]
    \end{tikzcd}
  \end{equation*}

  Then, by \cite[Definitions 1.4 and 3.1]{struct-lift}, $\delta$-biased filling
  structure provides a \textcolor{yellow0}{solution} to each of the following
  \textcolor{red0}{lifting problems}.
  \begin{equation*}
    \begin{tikzcd}[cramped]
      {\Gamma.\bI.(\phi^* \vee \txtis_{\delta})} & \tMcU \\
      {\Gamma.\bI} & \McU
      \arrow[color=red0, dashed, from=1-1, to=1-2]
      \arrow[hook, from=1-1, to=2-1]
      \arrow[from=1-2, to=2-2]
      \arrow[color=yellow0, dotted, from=2-1, to=1-2]
      \arrow[color=red0, dashed, from=2-1, to=2-2]
    \end{tikzcd}
  \end{equation*}

  By the or-structure, the left map
  $\Gamma.\bI(\phi^* \vee \txtis_\delta) \hookrightarrow \Gamma.\bI$, viewed as
  an object over $\Gamma.\bI$, is a cone under
  $\Gamma.\bI.\phi^* \hookleftarrow \Gamma.\bI.(\phi^*\wedge\txtis_\delta)
  \hookrightarrow \Gamma.\bI.\txtis_\delta$.
  Furthermore,
  $\Gamma.\bI.(\phi^* \vee \txtis_{\delta}) \mathbin{\textcolor{red0}{\to}}
  \tMcU$ as on the top row above is entirely determined by its restrictions to
  $\Gamma.\bI.\phi^*$ and $\Gamma.\bI.\txtis_{\delta}$.
  Therefore, we may equivalently replace the top row of the lifting problem with
  a pair of maps $\Gamma.\bI.\phi^* \mathbin{\textcolor{magenta0}{\to}} \tMcU$
  and $\Gamma.\bI.\txtis_{\delta} \mathbin{\textcolor{magenta0}{\to}} \tMcU$
  whose common restriction to $\Gamma.\bI.(\phi^* \wedge \txtis_\delta)$ agree, as
  follows.
  \begin{equation*}
    \begin{tikzcd}[cramped, column sep=small]
      {\Gamma.\bI.(\phi^* \wedge \txtis_\delta)} && {\Gamma.\bI.\txtis_{\delta}} \\
      & {\Gamma.\bI.\phi^*} && {\Gamma.\bI.(\phi^* \vee \txtis_{\delta})} && \tMcU \\
      &&& {\Gamma.\bI} && \McU
      \arrow[hook, from=1-1, to=1-3]
      \arrow[hook, from=1-1, to=2-2]
      \arrow[hook, from=1-3, to=2-4]
      \arrow[hook, from=1-3, to=3-4]
      \arrow[hook, from=2-2, to=3-4]
      \arrow[color=red0, dashed, from=2-4, to=2-6]
      \arrow[hook, from=2-4, to=3-4]
      \arrow[from=2-6, to=3-6]
      \arrow[color=yellow0, dotted, from=3-4, to=2-6]
      \arrow[color=red0, dashed, from=3-4, to=3-6]
      \arrow[crossing over, hook, from=2-2, to=2-4]
      \arrow[color=magenta0, curve={height=-6pt}, dashed, from=1-3, to=2-6]
      \arrow[crossing over, color=magenta0, curve={height=-12pt}, dashed, from=2-2, to=2-6]
    \end{tikzcd}
  \end{equation*}
  Therefore, deleting the map
  $\Gamma.\bI.(\phi^* \vee \txtis_{\delta}) \hookrightarrow \Gamma.\bI$, such a
  filling structure is equivalently an operation of ``extending lifts'' or
  ``lifting entire squares'' as follows.
  \begin{equation*}
    \begin{tikzcd}[cramped, column sep=small]
      {\Gamma.\bI.(\phi^* \wedge \txtis_\delta)} & {\Gamma.\bI.\txtis_{\delta}} & \tMcU \\
      {\Gamma.\bI.\phi^*} & {\Gamma.\bI} & \McU
      \arrow[hook, from=1-1, to=1-2]
      \arrow[hook, from=1-1, to=2-1]
      \arrow[color=magenta0, dashed, from=1-2, to=1-3]
      \arrow[hook, from=1-2, to=2-2]
      \arrow[from=1-3, to=2-3]
      \arrow[hook, from=2-1, to=2-2]
      \arrow[color=yellow0, dotted, from=2-2, to=1-3]
      \arrow[color=red0, dashed, from=2-2, to=2-3]
      \arrow[crossing over, color=magenta0, dashed, from=2-1, to=1-3]
    \end{tikzcd}
  \end{equation*}
\end{remark}

\subsubsection*{Deriving Id-types}\label{subsubsec:filling-id}
In this part, we derive in \Cref{thm:Path-Id} $\Id$-type and $\MsJ$-elimination
using $\Path$-type structures under the presence of a $\mbbo{0}$-biased filling
operation, a $\min$-structure on $\bI$ and a cofibrant structure on fibration
sections.

As from \Cref{eqn:Path-def,eqn:Id-def}, when there are $\Path$-types, one would
like to take $\Id \coloneqq \Path \colon \tMcU \times_\McU \tMcU \to \McU$.
To do so, we must show that as in \Cref{eqn:J-def} the trivial path
$\const \colon \tMcU \to P_\McU^\bI(\tMcU)$ lifts uniformly on the left against
$\tMcU \to \McU$ so that we obtain the $\MsJ$-elimination operation.
The strategy is to slowly exhibit in \Cref{prop:refl-retract}, that under some
conditions, the candidate $\const$ for the $\refl$ map as a retract of a
$\mbbo{0}$-biased boundary.
Then, it lifts against $\tMcU \to \McU$ by the definition of a $\mbbo{0}$-biased
filling operation.

We first construct the $\mbbo{0}$-biased boundary which we intend to exhibit
$\refl$ as a retract of.
\begin{construction}\label{constr:isConst-or-0}
  Assume that $\partial\Cof \hookrightarrow \Cof$ has a truth structure relative
  to $\tMcU \to \McU$ under which $\bI$ has a disjoint cofibrant endpoint
  structure.
  Let $\isConst \colon P^\bI_\McU(\tMcU) \to \Cof$ be an $\iota$-name
  for the constant map $\const \colon \tMcU \hookrightarrow P_\McU^\bI(\tMcU)$
  as follows:
  \begin{equation*}
    \begin{tikzcd}[cramped]
      \tMcU & {\partial\Cof} \\
      {P_\McU^\bI(\tMcU)} & \Cof
      \arrow[from=1-1, to=1-2]
      \arrow["{\const}"', hook, from=1-1, to=2-1]
      \arrow["\lrcorner"{anchor=center, pos=0.15, scale=1.5}, draw=none, from=1-1, to=2-2]
      \arrow[hook, from=1-2, to=2-2]
      \arrow["\isConst"', from=2-1, to=2-2]
    \end{tikzcd}
  \end{equation*}
  We construct the $\iota$-cofibration
  \begin{equation*}
    \begin{tikzcd}[cramped]
      (\isConst \vee \txtis_{\mbbo{0}})^*\partial\Cof
      \ar[r,hook]
      &
      P_\McU^\bI(\tMcU) \times \bI
    \end{tikzcd}
  \end{equation*}
  as the iterated pullback as follows:
  \begin{equation*}
    \begin{tikzcd}[cramped]
      {(\isConst \vee \txtis_{\mbbo{0}})^*\partial\Cof} & {(\Cof \vee \txtis_{\mbbo{0}})^*\partial\Cof} & {\partial(\Cof \times \Cof)} & {\partial\Cof} \\
      {P_\McU^\bI(\tMcU) \times \bI} & {\Cof \times \bI} & {\Cof \times \Cof} & \Cof
      \arrow[from=1-1, to=1-2]
      \arrow[hook, from=1-1, to=2-1]
      \arrow["\lrcorner"{anchor=center, pos=0.15, scale=1.5, rotate=0}, draw=none, from=1-1, to=2-2]
      \arrow[from=1-2, to=1-3]
      \arrow[hook, from=1-2, to=2-2]
      \arrow["\lrcorner"{anchor=center, pos=0.15, scale=1.5, rotate=0}, draw=none, from=1-2, to=2-3]
      \arrow[from=1-3, to=1-4]
      \arrow[hook, from=1-3, to=2-3]
      \arrow["\lrcorner"{anchor=center, pos=0.15, scale=1.5}, draw=none, from=1-3, to=2-4]
      \arrow[hook, from=1-4, to=2-4]
      \arrow["{\isConst \times \bI}"', from=2-1, to=2-2]
      \arrow["{\Cof \times \txtis_{\mbbo{0}}}"', from=2-2, to=2-3]
      \arrow["{-\vee-}"', from=2-3, to=2-4]
    \end{tikzcd}
  \end{equation*}
\end{construction}

With \Cref{constr:isConst-or-0}, our goal is now to put it in the middle and the
candidate $\const$ for the $\refl$ map on the two sides of the following diagram
so that one can exhibit a retract
\begin{equation*}\label{eqn:const-sr}\tag{\textsc{const-sr}}
  \begin{tikzcd}[cramped, column sep=small]
    \tMcU & {(\isConst \vee \txtis_{\mbbo{0}})^*\partial\Cof} & \tMcU \\
    {P_{\McU}^\bI(\tMcU)} & {P_{\McU}^\bI(\tMcU) \times \bI} & {P_\McU^\bI(\tMcU)}
    \arrow[dashed, from=1-1, to=1-2]
    \arrow["\const"', hook, from=1-1, to=2-1]
    \arrow[dashed, from=1-2, to=1-3]
    \arrow[hook, from=1-2, to=2-2]
    \arrow["\const", hook, from=1-3, to=2-3]
    \arrow[dashed, from=2-1, to=2-2]
    \arrow[dashed, from=2-2, to=2-3]
  \end{tikzcd}
\end{equation*}

In the right side square of \Cref{eqn:const-sr} we map out of
$(\isConst \vee \txtis_{\mbbo{0}})^*\partial\Cof \to \tMcU$.
This object is constructed by way of the or-structure and has $\iota$-name
\begin{equation*}
  P_\McU^\bI(\tMcU) \times \bI \xrightarrow{\isConst \vee \txtis_{\mbbo{0}}} \Cof
\end{equation*}
Therefore, one has the following square over $P_\McU^\bI(\tMcU) \times \bI$:
\begin{equation*}
  \begin{tikzcd}[cramped, row sep=small, column sep=small]
    {\tMcU \times \set{\mbbo{0}}} & {P_\McU^\bI(\tMcU) \times \set{\mbbo{0}}} \\
    {\tMcU \times \bI} & {(\isConst \vee \txtis_{\mbbo{0}})^*\partial\Cof} \\
    && {P_\McU^\bI(\tMcU) \times \bI}
    \arrow[hook, from=1-1, to=1-2]
    \arrow[hook, from=1-1, to=2-1]
    \arrow[hook, from=1-2, to=2-2]
    \arrow[curve={height=-12pt}, hook, from=1-2, to=3-3]
    \arrow[hook, from=2-1, to=2-2]
    \arrow[curve={height=12pt}, hook, from=2-1, to=3-3]
    \arrow[hook, from=2-2, to=3-3]
  \end{tikzcd}
\end{equation*}
In the right side square of \Cref{eqn:const-sr} the top row maps into $\tMcU$
while the entire composite maps into $P_\McU^\bI(\tMcU)$, so it would be
convenient to know that maps out of
$(\isConst \vee \txtis_{\mbbo{0}})^*\partial\Cof$ into either $\tMcU$ or
$P_\McU^\bI(\tMcU)$ are determined uniquely by their restrictions to
$\tMcU \times \bI$ and $P_\McU^\bI(\tMcU) \times \set{\mbbo{0}}$.
This is what the following result would like to make an attempt to try to
illustrate.

\begin{lemma}\label{lem:const-or-0-res}
  Assume that $\partial\Cof \hookrightarrow \Cof$ has a truth structure relative
  to $\tMcU \to \McU$ under which $\bI$ has a disjoint cofibrant endpoint
  structure.
  Suppose that $\tMcU \to \McU$ has a $\Path$-type structure and let
  $\isConst \colon P^\bI_\McU(\tMcU) \to \Cof$ be an $\iota$-name for the
  constant map $\const \colon \tMcU \hookrightarrow P_\McU^\bI(\tMcU)$ like in
  \Cref{constr:isConst-or-0}.

  Then, for the objects $X \in \set{\tMcU, P_\McU^\bI(\tMcU)}$, one has a
  pullback square of Hom-sets as on the left below.
  \begin{center}
    \begin{minipage}{0.45\linewidth}
      \begin{equation*}
        \begin{tikzcd}[cramped, row sep=small, column sep=small]
          {\bC((\isConst \vee \txtis_{\mbbo{0}})^*\partial\Cof, X)} & {\bC(\tMcU \times \bI, X)} \\
          {\bC(P_\McU^\bI(\tMcU) \times \set{\mbbo{0}}, X)} & {\bC(\tMcU \times \set{\mbbo{0}}, X)}
          \arrow[from=1-1, to=1-2]
          \arrow[from=1-1, to=2-1]
          \arrow["\lrcorner"{anchor=center, pos=0.05, scale=1.5, rotate=0}, draw=none, from=1-1, to=2-2]
          \arrow[from=1-2, to=2-2]
          \arrow[from=2-1, to=2-2]
        \end{tikzcd}
      \end{equation*}
    \end{minipage}
    \begin{minipage}{0.45\linewidth}
      \begin{equation*}
        \begin{tikzcd}[cramped, row sep=small, column sep=small]
          {\tMcU \times \set{\mbbo{0}}} & {P_\McU^\bI(\tMcU) \times \set{\mbbo{0}}} \\
          {\tMcU \times \bI} & {(\rho \vee \txtis_{\mbbo{0}})^*\partial\Cof} \\
          && X
          \arrow[hook, from=1-1, to=1-2]
          \arrow[hook, from=1-1, to=2-1]
          \arrow[hook, from=1-2, to=2-2]
          \arrow[curve={height=-12pt}, dotted, from=1-2, to=3-3]
          \arrow[hook, from=2-1, to=2-2]
          \arrow[curve={height=12pt}, dotted, from=2-1, to=3-3]
          \arrow[dashed, from=2-2, to=3-3]
        \end{tikzcd}
      \end{equation*}
    \end{minipage}
  \end{center}

  In other words, like on the square on the right above, maps
  $(\isConst \vee \txtis_{\mbbo{0}})^*\partial\Cof \to X$ are in the bijective
  correspondence given by way of restriction with pairs of maps
  $(\tMcU \times \bI \to X, P_\McU^\bI(\tMcU) \times \set{\mbbo{0}} \to X)$
  whose restrictions to $\tMcU \times \set{\mbbo{0}}$ agree.
\end{lemma}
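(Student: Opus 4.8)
The plan is to recognise the object $(\isConst \vee \txtis_{\mbbo{0}})^*\partial\Cof$ of \Cref{constr:isConst-or-0} as a union of two pulled-back cofibrations over $\Gamma \coloneqq P_\McU^\bI(\tMcU) \times \bI$, and then to extract both pullback squares from the pointwise unfolding of or-structures. First I would set $\psi_1 \coloneqq \isConst \circ \proj_1$ and $\psi_2 \coloneqq \txtis_{\mbbo{0}} \circ \proj_2$, both maps $\Gamma \to \Cof$, and observe from the iterated-pullback description in \Cref{constr:isConst-or-0} that $(\isConst \vee \txtis_{\mbbo{0}})^*\partial\Cof = \Gamma.(\psi_1 \vee \psi_2)$. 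Since $\isConst$ is an $\iota$-name for $\const \colon \tMcU \hookrightarrow P_\McU^\bI(\tMcU)$ and $\txtis_{\mbbo{0}}$ is the $\iota$-name for $\set{\mbbo{0}} \hookrightarrow \bI$, pullback stability of the universe of cofibrations gives $\Gamma.\psi_1 \cong \tMcU \times \bI$ and $\Gamma.\psi_2 \cong P_\McU^\bI(\tMcU) \times \set{\mbbo{0}}$, and \Cref{rmk:cof-and-ptwise} then gives $\Gamma.(\psi_1 \wedge \psi_2) \cong \Gamma.\psi_1 \times_\Gamma \Gamma.\psi_2 \cong \tMcU \times \set{\mbbo{0}}$. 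Under these identifications the three Hom-sets $\bC(\tMcU \times \bI, X)$, $\bC(P_\McU^\bI(\tMcU) \times \set{\mbbo{0}}, X)$, $\bC(\tMcU \times \set{\mbbo{0}}, X)$ appearing in the statement are exactly $\bC(\Gamma.\psi_1, X)$, $\bC(\Gamma.\psi_2, X)$, $\bC(\Gamma.(\psi_1 \wedge \psi_2), X)$, so what must be proved is that $\bC(\Gamma.(\psi_1 \vee \psi_2), X)$ is their pullback.

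For $X = \tMcU$ this is immediate: the given truth structure relative to $\tMcU \to \McU$ contains an or-structure relative to $\tMcU \to \McU$, so the unfolding in \Cref{rmk:cof-and-or-ptwise} applies directly with $B = \tMcU$ and yields $\bC(\Gamma.(\psi_1 \vee \psi_2), \tMcU) \cong \bC(\Gamma.\psi_1, \tMcU) \times_{\bC(\Gamma.(\psi_1 \wedge \psi_2), \tMcU)} \bC(\Gamma.\psi_2, \tMcU)$. For $X = P_\McU^\bI(\tMcU)$ the same unfolding becomes available once we know \Cref{eqn:or-fib} holds at $P_\McU^\bI(\tMcU)$. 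This is where the $\Path$-type hypothesis is used: it makes $\ev_\partial \colon P_\McU^\bI(\tMcU) \twoheadrightarrow \tMcU \times_\McU \tMcU$ a $\pi$-fibration, and either projection $\tMcU \times_\McU \tMcU \to \tMcU$ is a $\pi$-fibration, being a pullback of $\pi$ along itself. Since \Cref{eqn:or-fib} holds at $\tMcU$ by the or-structure relative to $\tMcU \to \McU$, applying \Cref{lem:or-struct-pb} first along $\tMcU \times_\McU \tMcU \twoheadrightarrow \tMcU$ and then along $P_\McU^\bI(\tMcU) \twoheadrightarrow \tMcU \times_\McU \tMcU$ establishes \Cref{eqn:or-fib} at $P_\McU^\bI(\tMcU)$, hence an or-structure relative to $P_\McU^\bI(\tMcU) \to \tMcU \times_\McU \tMcU$; the unfolding of \Cref{rmk:cof-and-or-ptwise} applied to that or-structure with $B = P_\McU^\bI(\tMcU)$ then gives the second pullback square.

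It remains to check that this abstract Hom-set pullback square is the restriction correspondence depicted on the right of the statement. This is bookkeeping: unravelling the cocone $\Gamma.\psi_1 \hookleftarrow \Gamma.(\psi_1 \wedge \psi_2) \hookrightarrow \Gamma.\psi_2$ together with the cone $\Gamma.(\psi_1 \vee \psi_2) \hookrightarrow \Gamma$ shows that a map out of $(\isConst \vee \txtis_{\mbbo{0}})^*\partial\Cof$ into $X$ corresponds to a compatible pair of maps $\tMcU \times \bI \to X$ and $P_\McU^\bI(\tMcU) \times \set{\mbbo{0}} \to X$ whose restrictions to $\tMcU \times \set{\mbbo{0}}$ agree. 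I expect the only real obstacle to be the propagation of \Cref{eqn:or-fib} up the tower $P_\McU^\bI(\tMcU) \twoheadrightarrow \tMcU \times_\McU \tMcU \twoheadrightarrow \tMcU$ in the second paragraph; everything else is manipulation of pullbacks and the pointwise description of or-structures.
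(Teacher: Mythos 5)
Your proposal is correct and rests on the same key idea as the paper's proof: reducing the claim to \Cref{eqn:or-fib} holding for $X \in \{\tMcU, P_\McU^\bI(\tMcU)\}$, and propagating \Cref{eqn:or-fib} from $\tMcU$ up the chain of $\pi$-fibrations $P_\McU^\bI(\tMcU) \twoheadrightarrow \tMcU \times_\McU \tMcU \twoheadrightarrow \tMcU$ by two applications of \Cref{lem:or-struct-pb}, using the $\Path$-type hypothesis for the first leg. You bypass the explicit pushout-product-approximation machinery (\Cref{lem:cof-int-pushout-product} together with the base-change lemmas from \cite{struct-lift}) in favour of pulling back the or-structure directly along $(\psi_1,\psi_2) \colon P_\McU^\bI(\tMcU)\times\bI \to \Cof\times\Cof$ and applying the pointwise unfolding of \Cref{rmk:cof-and-or-ptwise}; this is a mild repackaging rather than a genuinely different route, since \Cref{rmk:or-psh-prod-approx} already identifies the or-structure with exactly that approximation.
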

\begin{proof}
  We first take any map $E \to B$ relative to which the pre-or-structure of
  $\iota$ is an or-structure and establish the result in the statement for
  $X \coloneqq E$.
  By \Cref{lem:cof-int-pushout-product}, it follows that
  \begin{equation*}
    (\Cof \vee \txtis_{\mbbo{0}})^*\partial\Cof \hookrightarrow \Cof \times \bI \in \sfrac{\bC}{\Cof}
  \end{equation*}
  approximates the pushout-product
  \begin{equation*}
    (\Cof \times \set{\mbbo{0}} \hookrightarrow \Cof \times \bI)
    \ltimes_\Cof
    (\partial\Cof \times \bI \hookrightarrow \Cof \times \bI)
  \end{equation*}
  relative to $E \times \Cof \to B \times \Cof$ in the slice over $\Cof$.

  Because the map
  $(\isConst \vee \txtis_{\mbbo{0}})^*\partial\Cof \in
  \sfrac{\bC}{P_\McU^\bI(\tMcU)}$ is the pullback of
  $(\Cof \vee \txtis_{\mbbo{0}})^*\partial\Cof \hookrightarrow \Cof \times \bI
  \in \sfrac{\bC}{\Cof}$ under $\isConst \colon P_\McU^\bI(\tMcU) \to \Cof$, one
  applies \cite[Lemma 4.5]{struct-lift} to observe that
  \begin{equation*}
    (\const \vee \txtis_{\mbbo{0}})^*\partial\Cof \hookrightarrow P_\McU^\bI(\tMcU) \times \bI \in \sfrac{\bC}{P_\McU^\bI(\tMcU)}
  \end{equation*}
  approximates the pushout-product
  \begin{equation*}
    \left( P_\McU^\bI(\tMcU) \times \set{\mbbo{0}} \hookrightarrow P_\McU^\bI(\tMcU) \times \bI \right)
    \ltimes_{P_\McU^\bI(\tMcU)}
    \left( \tMcU \times \bI \hookrightarrow P_\McU^\bI(\tMcU) \times \bI \right)
  \end{equation*}
  relative to $E \times P_\McU^\bI(\tMcU) \to B \times P_\McU^\bI(\tMcU)$ in the
  slice over $\Cof$.
  By the definition of an approximate pushout-product as in \cite[Definition
  4.4]{struct-lift}, it then reveals the following pullback square
  \begin{equation*}
    \begin{tikzcd}[cramped]
      {\sfrac{\bC}{P^\bI_\McU(\tMcU)}((\const \vee \txtis_{\mbbo{0}})^*\partial\Cof, E \times P^\bI_\McU(\tMcU))} & {\sfrac{\bC}{P^\bI_\McU(\tMcU)}(\tMcU \times \bI, E \times P^\bI_\McU(\tMcU))} \\
      {\sfrac{\bC}{P^\bI_\McU(\tMcU)}(P_\McU^\bI(\tMcU) \times \set{\mbbo{0}}, E \times P^\bI_\McU(\tMcU))} & {\sfrac{\bC}{P^\bI_\McU(\tMcU)}(\tMcU \times \set{\mbbo{0}}, E \times P^\bI_\McU(\tMcU))}
      \arrow[from=1-1, to=1-2]
      \arrow[from=1-1, to=2-1]
      \arrow["\lrcorner"{anchor=center, pos=0.05, scale=1.5}, draw=none, from=1-1, to=2-2]
      \arrow[from=1-2, to=2-2]
      \arrow[from=2-1, to=2-2]
    \end{tikzcd}
  \end{equation*}
  By the base-forgetting $\dashv$ product adjunction, one observes
  \begin{equation*}
    \begin{tikzcd}[cramped, row sep=small, column sep=small]
      {\bC((\isConst \vee \txtis_{\mbbo{0}})^*\partial\Cof, E)} & {\bC(\tMcU \times \bI, E)} \\
      {\bC(P_\McU^\bI(\tMcU) \times \set{\mbbo{0}}, E)} & {\bC(\tMcU \times \set{\mbbo{0}}, E)}
      \arrow[from=1-1, to=1-2]
      \arrow[from=1-1, to=2-1]
      \arrow["\lrcorner"{anchor=center, pos=0.05, scale=1.5, rotate=0}, draw=none, from=1-1, to=2-2]
      \arrow[from=1-2, to=2-2]
      \arrow[from=2-1, to=2-2]
    \end{tikzcd}
  \end{equation*}

  One must now instantiate the argument above with $E \to B$ with the maps
  \begin{align*}
    \tMcU \to \McU && P_\McU^\bI(\tMcU) \to \tMcU \times_\McU \tMcU
  \end{align*}
  To do so, the pre-or structure of $\iota$ relative to them must be
  or-structures.
  We have this by assumption for $\tMcU \to \McU$, so it just remains to show this for
  $P_\McU^\bI(\tMcU) \to \tMcU \times_\McU \tMcU$.
  The $\Path$-type structure on $\tMcU \to \McU$ gives a $\pi$-fibration
  structure on it.
  Hence, \Cref{lem:or-struct-pb}, it is sufficient to show that
  \Cref{eqn:or-fib} holds for $\tMcU \times_\McU \tMcU$.
  But his is once again due to \Cref{lem:or-struct-pb}, because
  $\tMcU \times_\McU \tMcU \to \tMcU$ is by construction a $\pi$-fibration and
  \Cref{eqn:or-fib} holds for $\tMcU$.
\end{proof}

We have now prepared the ingredient to map out of
$(\isConst \vee \txtis_{\mbbo{0}})^*\partial\Cof$, which helps with the top row
of the retract part of \Cref{eqn:const-sr}.
We now start to construct the bottom row of the retract part of \Cref{eqn:const-sr}.
To do so, one defines first a path restriction operation
$P_{B}^\bI(E) \times \bI \to P_{B}^\bI(E)$ over $B$ for each $E \to B$.
This operation takes a path and a point in the interval and restricts the
furthest the path can go according to the point of the interval.

\begin{construction}\label{constr:path-restrict}
  Suppose $\bI$ has a $\min$-structure.
  Given a map $E \to B$, recalling that
  $P_{B}^\bI(E) = [B \times \bI, E]_{B}$ is the local
  exponential, we define the map
  \begin{align*}
    P_{B}^\bI(E) \times \bI \xrightarrow{\pathres_{B}(E)} P_{B}^\bI(E)
  \end{align*}
  as the exponential transpose of the map
  \begin{equation*}
    \left(
      (P_{B}^\bI(E) \times \bI) \times_{B} (B \times \bI)
      \cong
      P_{B}^\bI(E) \times \bI \times \bI
      \xrightarrow{P_{B}^\bI(E) \times \min}
      P_{B}^\bI(E) \times \bI
      \cong
      P_{B}^\bI(E) \times_{B} (B \times \bI)
      \xrightarrow{\ev}
      E
    \right)^\dagger \in \sfrac{\bC}{B}
  \end{equation*}
\end{construction}

We first observe a few useful properties of the path restriction operation.
First, restricting a constant path gives back the constant path.

\begin{lemma}\label{lem:pathres-const}
  Suppose $\bI$ has a $\min$-structure.
  Then, for each $E \to B \in \bC$, the following diagram commutes.
  \begin{equation*}
    \begin{tikzcd}[cramped, row sep=small]
      {E \times \bI} & E \\
      {P_{B}^\bI(E) \times \bI} & {P_{B}^\bI(E)}
      \arrow["\proj", from=1-1, to=1-2]
      \arrow["{\const \times \bI}"', from=1-1, to=2-1]
      \arrow["\const", from=1-2, to=2-2]
      \arrow["\pathres"', from=2-1, to=2-2]
    \end{tikzcd}
  \end{equation*}
\end{lemma}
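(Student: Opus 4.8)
The plan is to establish the identity $\pathres_B(E) \circ (\const \times \bI) = \const \circ \proj$ of maps $E \times \bI \to P_B^\bI(E)$ by passing to exponential transposes. Since $P_B^\bI(E) = [B \times \bI, E]_B$ is a local exponential in $\sfrac{\bC}{B}$, the functor $[B\times\bI,-]_B$ is right adjoint to $- \times_B (B \times \bI)$, so a morphism into $P_B^\bI(E)$ over $B$ is determined by its transpose with target $E$. Hence it suffices to check that the two composites have equal transposes, i.e.\ agree after composing with the evaluation $\ev$.

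First I would unwind the transpose of $\pathres_B(E) \circ (\const \times \bI)$. By \Cref{constr:path-restrict}, $\pathres_B(E)$ is itself the transpose of
\[
(P_B^\bI(E) \times \bI) \times_B (B \times \bI) \cong P_B^\bI(E) \times \bI \times \bI \xrightarrow{P_B^\bI(E) \times \min} P_B^\bI(E) \times \bI \cong P_B^\bI(E) \times_B (B \times \bI) \xrightarrow{\ev} E,
\]
so precomposing with $(\const \times \bI) \times_B (B\times\bI)$ and using the canonical identification $(E\times\bI)\times_B(B\times\bI)\cong E\times\bI\times\bI$ (valid because both interval factors lie over $B$ only through the structure map of $E$), the transpose of $\pathres_B(E)\circ(\const\times\bI)$ becomes
\[
E \times \bI \times \bI \xrightarrow{\const \times \bI \times \bI} P_B^\bI(E) \times \bI \times \bI \xrightarrow{P_B^\bI(E) \times \min} P_B^\bI(E) \times \bI \cong P_B^\bI(E) \times_B (B \times \bI) \xrightarrow{\ev} E.
\]
Then I would use functoriality of the binary product to slide $\const$ past $P_B^\bI(E)\times\min$, rewriting the first two arrows as $E \times \bI \times \bI \xrightarrow{E\times\min} E\times\bI \xrightarrow{\const\times\bI} P_B^\bI(E)\times\bI$, and then the fact that $\const$ is by definition the transpose of the projection $E \times_B (B\times\bI) \cong E\times\bI \xrightarrow{\proj} E$ (evaluating a constant path returns its basepoint), so that $\ev \circ (\const \times_B (B\times\bI))$ is that projection. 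The whole transpose therefore collapses to the projection $E \times \bI \times \bI \to E$.

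For the other composite, the transpose of $\const \circ \proj$ is $\ev$ composed with $(\const \times_B (B\times\bI)) \circ (\proj \times_B (B\times\bI))$, which by the same description of $\ev \circ (\const \times_B (B \times \bI))$ is once more the projection $E \times \bI \times \bI \to E$. Since the two transposes coincide, the square commutes. I do not anticipate a genuine obstacle here: the only delicate point is bookkeeping the canonical isomorphisms relating products $X \times \bI$ with the slice products $X \times_B (B \times \bI)$, and checking the naturality square for $\min$ inside $\sfrac{\bC}{B}$. It is worth remarking that this argument uses nothing about $\min$ beyond its being a morphism $\bI \times \bI \to \bI$; the symmetry and boundary conditions of \Cref{def:interval-min-max} play no role in this lemma.
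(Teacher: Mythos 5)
Your proposal is correct and takes essentially the same approach as the paper: both proofs reduce the problem to comparing exponential transposes of the two composites into $P_B^\bI(E)$, both use the fact that $\pathres$ is defined to transpose to $\ev \circ (P_B^\bI(E) \times \min)$, both use that $\const$ transposes to the projection $E \times \bI \to E$, and both identify the common transpose as the map $\proj \circ (E \times \min) \colon E\times\bI\times\bI \to E$, which is simply the projection onto $E$. Your closing remark — that this lemma uses $\min$ only as a bare morphism $\bI\times\bI\to\bI$ and not any of the symmetry or endpoint axioms of the $\min$-structure — is accurate, a nice observation, and is consistent with the paper's proof even though the paper does not spell it out.
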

\begin{proof}
  We show that the transposes of the two maps agree.

  Taking the exponential transpose of the top-right path in the diagram of the
  statement is the same as first taking the product of
  $E \times \bI \to E \in \sfrac{\bC}{B}$ with $B \times \bI$ over $B$ and then
  composing with the transpose of $\const \colon E \to P_B^\bI(E)$.
  The transpose of $\const \colon E \to P_B^\bI(E)$ is exactly the projection
  map $E \times \bI \to E$, so we obtain the top-right path as below.
  By similar computation for the transpose of the bottom-left path in the
  diagram of the statement and the fact that $\pathres$ was constructed to have
  transpose $\ev \cdot (P_B^\bI(E) \times \min)$, one obtains the bottom-left
  path as below.
  \begin{equation*}
    \begin{tikzcd}[cramped, row sep=small]
      {E \times \bI \times \bI} &&& {E \times \bI} \\
      && {E \times \bI} \\
      {P_{B}^\bI(E) \times \bI \times \bI} && {P_{B}^\bI(E) \times \bI} & E
      \arrow["\proj"{description}, from=1-1, to=1-4]
      \arrow["{E \times \min}"{description}, from=1-1, to=2-3]
      \arrow["{\const \times \bI \times \bI}"', from=1-1, to=3-1]
      \arrow["\proj"{description}, from=1-4, to=3-4]
      \arrow["{\const \times \bI}"'{}, from=2-3, to=3-3]
      \arrow["\proj"{description}, from=2-3, to=3-4]
      \arrow["{P_{B}^\bI(E) \times \min}"', from=3-1, to=3-3]
      \arrow["\ev"', from=3-3, to=3-4]
    \end{tikzcd}
  \end{equation*}
  One is then able to check that these two transposes are both the diagonal
  $\proj \cdot (E \times \min)$.
\end{proof}

Next, restricting a path at $\mbbo{0}$ gives the constant path at $\mbbo{0}$.
\begin{lemma}\label{lem:pathres-const-0}
  Suppose $\bI$ has a $\min$-structure.
  Then, for each $E \to B \in \bC$, the following diagram commutes.
  \begin{equation*}
    \begin{tikzcd}[cramped, row sep=small]
      {P_{B}^\bI(E) \times \set{\mbbo{0}}} & E \\
      {P_{B}^\bI(E) \times \bI} & {P_{B}^\bI(E)}
      \arrow["{\ev_{\mbbo{0}}}", from=1-1, to=1-2]
      \arrow[hook, from=1-1, to=2-1]
      \arrow["\const", from=1-2, to=2-2]
      \arrow["\pathres"', from=2-1, to=2-2]
    \end{tikzcd}
  \end{equation*}
\end{lemma}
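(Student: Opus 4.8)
The plan is to mirror the proof of \Cref{lem:pathres-const}: rather than comparing the two composites in the square directly, I would show that their exponential transposes, viewed as maps into the local exponential $P_{B}^\bI(E) = [B \times \bI, E]_{B}$, agree, and then conclude by bijectivity of the transpose over $B$.

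Recall from the proof of \Cref{lem:pathres-const} that the transpose over $B$ of $\const \colon E \to P_{B}^\bI(E)$ is the projection $E \times_{B} (B \times \bI) \cong E \times \bI \to E$, and that, by \Cref{constr:path-restrict}, $\pathres$ has transpose
\[
P_{B}^\bI(E) \times \bI \times \bI \xrightarrow{P_{B}^\bI(E) \times \min} P_{B}^\bI(E) \times \bI \cong P_{B}^\bI(E) \times_{B} (B \times \bI) \xrightarrow{\ev} E,
\]
in which the first $\bI$ factor is the ``point'' argument along which we restrict and the second is the $B \times \bI$-exponent variable. Transposing the composite $\pathres \circ (P_{B}^\bI(E) \times \set{\mbbo{0}} \hookrightarrow P_{B}^\bI(E) \times \bI)$ then amounts to precomposing the above with $\set{\mbbo{0}} \hookrightarrow \bI$ in the first $\bI$ factor.

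The key step is to invoke the defining property of the $\min$-structure from \Cref{def:interval-min-max}: since one of its inputs is now $\mbbo{0}$, the composite $\set{\mbbo{0}} \times \bI \hookrightarrow \bI \times \bI \xrightarrow{\min} \bI$ is constant at $\mbbo{0}$. Hence the transpose of the bottom-left composite factors as the projection $P_{B}^\bI(E) \times \set{\mbbo{0}} \times \bI \to P_{B}^\bI(E)$ followed by $\ev_{\mbbo{0}} \colon P_{B}^\bI(E) \to E$; and this is exactly the transpose of $\const \circ \ev_{\mbbo{0}}$, computed using that the transpose of $\const$ is a projection. Since the two transposes coincide, the square commutes. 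I expect the only obstacle to be the bookkeeping of which interval factor plays which role, together with the invocation of the $\min$-constancy axiom at $\mbbo{0}$ — there should be no conceptual difficulty.
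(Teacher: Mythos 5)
Your argument is correct and matches the paper's own proof in both structure and substance: both transpose the two composites over $B$, reduce the comparison to the fact that $\min(\mbbo{0}, -)$ is constant at $\mbbo{0}$, and then identify the result with the transpose of $\const \cdot \ev_{\mbbo{0}}$ via the fact that the transpose of $\const$ is a projection. The only cosmetic difference is that the paper phrases the final identification of the $\const \cdot \ev_{\mbbo{0}}$ side as a commuting square using naturality of the counit, whereas you bundle that step into the computation of the transpose of the composite.
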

\begin{proof}
  Once again, we transpose both maps and use the fact that $\min$ gives
  $\mbbo{0}$ when either of its arguments is $\mbbo{0}$.

  For the top-right path in the diagram of the statement, the transpose is
  computed by first taking the image of the top map
  $\ev_{\mbbo{0}} \colon P_B^\bI(E) \times \set{\mbbo{0}} \to E \in
  \sfrac{\bC}{B}$ under the fibred product functor with $B \times \bI$ before
  composing with the transpose of $\const$, which is
  $\proj \colon E \times \bI \to E$.
  But by definition the top map $\ev_{\mbbo{0}}$ in the diagram of the statement
  is
  $P_B^\bI \times \set{\mbbo{0}} \hookrightarrow P_B^\bI \times \bI
  \xrightarrow{\ev} E \times \bI$, so one arrives at the top-right path in the
  following diagram.
  \begin{equation*}
    \begin{tikzcd}[cramped, row sep=small]
      {P_{B}^\bI(E) \times \set{\mbbo{0}} \times \bI} && {P_{B}^\bI(E) \times \bI \times \bI} & {E \times \bI} \\
      & {P_{B}^\bI(E) \times \set{\mbbo{0}}} \\
      {P_{B}^\bI(E) \times \bI \times \bI} && {P_{B}^\bI(E) \times \bI} & E
      \arrow[hook, from=1-1, to=1-3]
      \arrow["\proj"{description}, from=1-1, to=2-2]
      \arrow[hook, from=1-1, to=3-1]
      \arrow["{\ev \times \bI}", from=1-3, to=1-4]
      \arrow["\proj"{description}, from=1-3, to=3-3]
      \arrow["\proj", from=1-4, to=3-4]
      \arrow[hook, from=2-2, to=3-3]
      \arrow["{P_{B}^\bI(E) \times \min}"', from=3-1, to=3-3]
      \arrow["\ev"', from=3-3, to=3-4]
    \end{tikzcd}
  \end{equation*}
  Computing the transpose for the bottom-left path of diagram in the statement
  and using \Cref{constr:path-restrict}, one arrives at the bottom-left path of
  the diagram above.

  Then, the right side square in the diagram above is by naturality of the
  counit and the diagonal in the left side square is by the fact that $\min$
  gives $\mbbo{0}$ when either of its arguments is $\mbbo{0}$.
\end{proof}

Finally, restricting a path to go farthest to $\mbbo{1}$ does nothing to the path.

\begin{lemma}\label{lem:pathres-sect}
  Suppose $\bI$ has a $\min$ structure.
  Then, for each $E \to B$, one has the following section-retraction
  pair.
  \begin{equation*}
    \begin{tikzcd}[cramped]
      {P_{B}^\bI(E) \times \set{\mbbo{1}}} & {P_{B}^\bI(E) \times \bI} & {P_{B}^\bI(E)}
      \arrow[hook, from=1-1, to=1-2]
      \arrow["\pathres", from=1-2, to=1-3]
    \end{tikzcd}
  \end{equation*}
\end{lemma}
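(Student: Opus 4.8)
The statement asserts that the composite
\[
  P_B^\bI(E) \times \set{\mbbo{1}} \hookrightarrow P_B^\bI(E) \times \bI \xrightarrow{\pathres} P_B^\bI(E)
\]
equals the canonical isomorphism $P_B^\bI(E) \times \set{\mbbo{1}} \cong P_B^\bI(E)$, so that this inclusion is a section of $\pathres$ up to that isomorphism, giving the claimed section-retraction pair. The proof runs exactly parallel to those of \Cref{lem:pathres-const,lem:pathres-const-0}: both the composite and the canonical isomorphism are maps over $B$ into the local exponential $P_B^\bI(E) = [B \times \bI, E]_B$, so by the adjunction defining the local exponential it suffices to check that their transposes agree as maps $(P_B^\bI(E) \times \set{\mbbo{1}}) \times_B (B \times \bI) \cong P_B^\bI(E) \times \set{\mbbo{1}} \times \bI \to E$ in $\sfrac{\bC}{B}$.

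The transpose of the canonical isomorphism is the counit $\ev$ with the $\set{\mbbo{1}}$ factor discarded, i.e.\ the composite $P_B^\bI(E) \times \set{\mbbo{1}} \times \bI \to P_B^\bI(E) \times \bI \cong P_B^\bI(E) \times_B (B \times \bI) \xrightarrow{\ev} E$. On the other hand, by naturality of the transpose, the transpose of the composite above is obtained from the transpose of $\pathres$ by precomposing with the map $P_B^\bI(E) \times \set{\mbbo{1}} \times \bI \to P_B^\bI(E) \times \bI \times \bI$ that inserts $\mbbo{1}$ into the first interval coordinate. Since, by \Cref{constr:path-restrict}, the transpose of $\pathres$ is $\ev \circ (P_B^\bI(E) \times \min) \colon P_B^\bI(E) \times \bI \times \bI \to E$, the transpose of the composite is
\[
  P_B^\bI(E) \times \set{\mbbo{1}} \times \bI \longrightarrow P_B^\bI(E) \times \bI \times \bI \xrightarrow{P_B^\bI(E) \times \min} P_B^\bI(E) \times \bI \cong P_B^\bI(E) \times_B (B \times \bI) \xrightarrow{\ev} E .
\]
Now the $\min$-structure axiom of \Cref{def:interval-min-max} that $\set{\mbbo{1}} \times \bI \hookrightarrow \bI \times \bI \xrightarrow{\min} \bI$ is the canonical isomorphism $\set{\mbbo{1}} \times \bI \cong \bI$ shows that, after substituting, the displayed composite collapses to $\ev$ with the $\set{\mbbo{1}}$ coordinate discarded, which is exactly the transpose of the canonical isomorphism computed above. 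Hence the two maps coincide.

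The only delicate point is the bookkeeping of the fibred products $(P_B^\bI(E) \times \bI) \times_B (B \times \bI) \cong P_B^\bI(E) \times \bI \times \bI$ and of which interval coordinate $\min$ consumes versus which one receives $\mbbo{1}$; as in the two preceding lemmas the symmetry of $\min$ makes this harmless.
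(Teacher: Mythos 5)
Your proof is correct and takes essentially the same approach as the paper: both pass to exponential transposes via \Cref{constr:path-restrict} and then invoke the $\min$-structure axiom that $\set{\mbbo{1}} \times \bI \hookrightarrow \bI \times \bI \xrightarrow{\min} \bI$ is the canonical isomorphism, collapsing the transpose to $\ev$. The paper's proof is just more terse, recording the key identification in a single diagram.
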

\begin{proof}
  The $\min$-structure says that
  $\set{\mbbo{1}} \times \bI \hookrightarrow \bI \times \bI \xrightarrow{\min}
  \bI$ is a section-retraction pair.
  Thus, calculating the transpose, we see
  \begin{equation*}
    \begin{tikzcd}[cramped]
      {P_{B}^\bI(E) \times \set{\mbbo{1}} \times \bI} \\
      {P_{B}^\bI(E) \times \bI \times \bI} & {P_{B}^\bI(E) \times \bI} & E
      \arrow[hook, from=1-1, to=2-1]
      \arrow["\cong", from=1-1, to=2-2]
      \arrow["{P_{B}^\bI(E) \times \min}"', from=2-1, to=2-2]
      \arrow["\ev"', from=2-2, to=2-3]
    \end{tikzcd}
  \end{equation*}
\end{proof}

We are now ready to show that if the constant path is a cofibration then the
constant path is also a retract of a pullback of the left class in the filling
operation.

\begin{proposition}\label{prop:refl-retract}
  Suppose
  \begin{enumerate}
    \item $\partial\Cof \hookrightarrow \Cof$ has a truth structure relative to
    $\tMcU \to \McU$
    \item $\bI$ has a disjoint $\iota$-cofibrant endpoint structure
    \item $\bI$ has a $\min \colon \bI \times \bI \to \bI$ structure
    \item $\tMcU \to \McU$ has a $\Path$-type structure
    \item The map
    $\tMcU \xrightarrow{\const} P_\McU^\bI(\tMcU)$ arises as a pullback of
    $\partial\Cof \hookrightarrow \Cof$ along $\isConst$.
  \end{enumerate}
  Then, $\const \colon\tMcU \hookrightarrow P_\McU^\bI(\tMcU)$ is a retract of
  $(\isConst \vee \txtis_{\mbbo{0}})^*\partial\Cof \hookrightarrow
  P_\McU^\bI(\tMcU) \times \bI$ as in \Cref{eqn:const-sr}, which one recalls is
  the following diagram.
  \begin{equation*}\tag{\text{\ref{eqn:const-sr}}}
    \begin{tikzcd}[cramped, column sep=small]
      \tMcU & {(\isConst \vee \txtis_{\mbbo{0}})^*\partial\Cof} & \tMcU \\
      {P_{\McU}^\bI(\tMcU)} & {P_{\McU}^\bI(\tMcU) \times \bI} & {P_\McU^\bI(\tMcU)}
      \arrow[dashed, from=1-1, to=1-2]
      \arrow["\const"', hook, from=1-1, to=2-1]
      \arrow[dashed, from=1-2, to=1-3]
      \arrow[hook, from=1-2, to=2-2]
      \arrow["\const", hook, from=1-3, to=2-3]
      \arrow[dashed, from=2-1, to=2-2]
      \arrow[dashed, from=2-2, to=2-3]
    \end{tikzcd}
  \end{equation*}
\end{proposition}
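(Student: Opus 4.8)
The plan is to construct the section--retraction pair of \ref{eqn:const-sr} by hand, using the path-restriction map $\pathres$ of \Cref{constr:path-restrict} on the bottom row and \Cref{lem:const-or-0-res} to build the top retraction out of its two ``halves''. For the bottom row I would take the section $P_\McU^\bI(\tMcU)\to P_\McU^\bI(\tMcU)\times\bI$ to be the inclusion at $\mbbo{1}$ (precomposed with the evident iso) and the retraction $P_\McU^\bI(\tMcU)\times\bI\to P_\McU^\bI(\tMcU)$ to be $\pathres$; that this composite is $\id$ is exactly \Cref{lem:pathres-sect}.

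For the top section $s\colon\tMcU\to(\isConst\vee\txtis_{\mbbo{0}})^*\partial\Cof$, I would lift the map $(\const,\mbbo{1})\colon\tMcU\to P_\McU^\bI(\tMcU)\times\bI$ through the cofibration $(\isConst\vee\txtis_{\mbbo{0}})^*\partial\Cof\hookrightarrow P_\McU^\bI(\tMcU)\times\bI$; such a lift is unique since the cofibration is a mono. It exists because $\isConst$ is an $\iota$-name for $\const$, so $\isConst\circ\const$ factors through $\partial\Cof$ and hence $\tMcU.(\isConst\circ\const)\cong\tMcU$; by the pointwise description of the or-structure (\Cref{rmk:cof-and-or-ptwise}) one gets $\tMcU.\big((\isConst\circ\const)\vee\psi\big)\cong\tMcU$ for any $\psi$, and taking $\psi$ to be $\txtis_{\mbbo{0}}$ precomposed with the point $\mbbo{1}$ this is precisely the pullback of the cofibration along $(\const,\mbbo{1})$. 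The left square of \ref{eqn:const-sr} then commutes since both legs are $(\const,\mbbo{1})$, and uniqueness of the lift identifies $s$ with $\tMcU\xrightarrow{(\id,\mbbo{1})}\tMcU\times\bI\hookrightarrow(\isConst\vee\txtis_{\mbbo{0}})^*\partial\Cof$ landing in the ``$\isConst$-half'' (whose image in $P_\McU^\bI(\tMcU)\times\bI$ is $\const\times\bI$).

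For the top retraction $r\colon(\isConst\vee\txtis_{\mbbo{0}})^*\partial\Cof\to\tMcU$, I would apply \Cref{lem:const-or-0-res}, which presents a map into $X\in\set{\tMcU,P_\McU^\bI(\tMcU)}$ as a pair of maps out of $\tMcU\times\bI$ and $P_\McU^\bI(\tMcU)\times\set{\mbbo{0}}$ agreeing on $\tMcU\times\set{\mbbo{0}}$: I take $\proj_{\tMcU}\colon\tMcU\times\bI\to\tMcU$ on the first half and $\ev_{\mbbo{0}}\colon P_\McU^\bI(\tMcU)\times\set{\mbbo{0}}\to\tMcU$ on the second, which agree because $\ev_{\mbbo{0}}\circ\const=\id$. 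The right square of \ref{eqn:const-sr}, that is $\const\circ r=\pathres\circ\iota$ with $\iota$ the cofibration inclusion, I would verify via the uniqueness in \Cref{lem:const-or-0-res} for $X=P_\McU^\bI(\tMcU)$: on the $\isConst$-half $\iota$ is $\const\times\bI$ so \Cref{lem:pathres-const} gives $\pathres\circ\iota=\const\circ\proj_{\tMcU}$, and on the $\txtis_{\mbbo{0}}$-half $\iota$ is the inclusion at $\mbbo{0}$ so \Cref{lem:pathres-const-0} gives $\pathres\circ\iota=\const\circ\ev_{\mbbo{0}}$ --- both agreeing with $\const\circ r$ by construction of $r$. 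Finally $r\circ s=\id_{\tMcU}$ because $s$ factors through the $\isConst$-half as $(\id,\mbbo{1})$ and $r$ restricts there to $\proj_{\tMcU}$ (equivalently, postcompose with the mono $\const$ and apply \Cref{lem:pathres-const} to get $\const\circ r\circ s=\pathres\circ(\const,\mbbo{1})=\const$).

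The main obstacle is bookkeeping: making the identification of $(\isConst\vee\txtis_{\mbbo{0}})^*\partial\Cof$ with the (approximate) pushout $\tMcU\times\bI\cup_{\tMcU\times\set{\mbbo{0}}}\big(P_\McU^\bI(\tMcU)\times\set{\mbbo{0}}\big)$ precise enough that ``restrict to each half'' is simultaneously a valid way to \emph{define} $r$ and to \emph{prove} equalities of maps out of the cofibration. This is exactly what \Cref{lem:const-or-0-res} packages, so the work is in checking its hypotheses --- in particular that the pre-or-structure of $\iota$ is an or-structure relative to $P_\McU^\bI(\tMcU)\to\tMcU\times_\McU\tMcU$, which is where the $\Path$-type hypothesis enters (it makes that map a $\pi$-fibration, so \Cref{lem:or-struct-pb} applies) --- and in keeping straight which half each endpoint inclusion lands in.
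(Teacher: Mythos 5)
Your proposal is correct and takes essentially the same route as the paper's proof: bottom row via $\pathres$ with \Cref{lem:pathres-sect}, top section as the inclusion through the ``$\isConst$-half'' at $\mbbo{1}$, top retraction built from the pair $(\proj,\ev_{\mbbo{0}})$ via \Cref{lem:const-or-0-res}, and commutativity of the right square checked componentwise using \Cref{lem:pathres-const,lem:pathres-const-0}. The only cosmetic difference is that you present the section as a unique lift of $(\const,\mbbo{1})$ through the mono and then identify it with the composite $\tMcU\cong\tMcU\times\set{\mbbo{1}}\hookrightarrow\tMcU\times\bI\xrightarrow{\inj_1}(\isConst\vee\txtis_{\mbbo{0}})^*\partial\Cof$, whereas the paper simply writes down that composite directly; these yield the same map.
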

\begin{proof}
  We first define the retraction part of the section-retraction pair by
  constructing a map for the top row of the right side square of
  \Cref{eqn:const-sr} as follows.
  \begin{equation*}
    (\proj, \ev_{\mbbo{0}}) \colon (\isConst \vee \txtis_{\mbbo{0}})^*\partial\Cof
    \to \tMcU
  \end{equation*}
  Then, one must check that the following square for the right side of the
  square of \Cref{eqn:const-sr} commutes.
  \begin{equation}\label{eqn:const-retract}\tag{\textsc{const-retract}}
    \begin{tikzcd}[cramped]
      {(\isConst \vee \txtis_{\mbbo{0}})^*\partial\Cof} & \tMcU \\
      {P_\McU^\bI(\tMcU) \times \bI} & {P_\McU^\bI(\tMcU)}
      \arrow["{(\proj, \ev_{\mbbo{0}})}", from=1-1, to=1-2]
      \arrow[hook, from=1-1, to=2-1]
      \arrow[from=1-2, to=2-2, hook, "{\const}"]
      \arrow["\pathres"', from=2-1, to=2-2]
    \end{tikzcd}
  \end{equation}

  By \Cref{lem:const-or-0-res}, such a map
  $(\proj, \ev_{\mbbo{0}}) \colon (\isConst \vee
  \txtis_{\mbbo{0}})^*\partial\Cof \to \tMcU$ is determined uniquely by two maps
  $P_\McU^\bI(\tMcU) \times \set{\mbbo{0}} \to \tMcU$ and
  $\tMcU \times \bI \to \tMcU$ whose restriction to
  $\tMcU \times \set{\mbbo{0}}$ are the same.
  In other words, one obtains such a map by the following commutative square,
  where the maps $P_\McU^\bI(\tMcU) \times \set{\mbbo{0}} \to \tMcU$ and
  $\tMcU \times \bI \to \tMcU$ over $\tMcU$ are respectively $\ev_{\mbbo{0}}$
  and $\proj$.
  \begin{equation*}
    \begin{tikzcd}[cramped]
      {\tMcU \times \set{\mbbo{0}}} \ar[rd, "\cong"{description}]& {\tMcU \times \bI} \\
      {P_\McU^\bI(\tMcU) \times \set{\mbbo{0}}} & \tMcU
      \arrow[hook, from=1-1, to=1-2]
      \arrow["{\const \times \set{\mbbo{0}}}"', from=1-1, to=2-1]
      \arrow["\proj", from=1-2, to=2-2]
      \arrow["{\ev_{\mbbo{0}}}"', from=2-1, to=2-2]
    \end{tikzcd}
  \end{equation*}

  We now check \Cref{eqn:const-retract} commutes.
  Once again by \Cref{lem:const-or-0-res}, it suffices to check separately
  commutativity when restricted to the components $\tMcU \times \bI$ and
  $P_{\McU}^\bI(\tMcU) \times \set{\mbbo{0}}$ respectively.
  In other words, one must check that the following squares commute.
  \begin{center}
    \begin{minipage}{0.45\linewidth}
      \begin{equation*}
        \begin{tikzcd}[cramped]
          {\tMcU \times \bI} & \tMcU \\
          {P_\McU^\bI(\tMcU) \times \bI} & {P_\McU^\bI(\tMcU)}
          \arrow["\proj", from=1-1, to=1-2]
          \arrow["{\const \times \bI}"', from=1-1, to=2-1, hook]
          \arrow["{\const}", from=1-2, to=2-2, hook]
          \arrow["\pathres"', from=2-1, to=2-2]
        \end{tikzcd}
      \end{equation*}
    \end{minipage}
    \begin{minipage}{0.45\linewidth}
      \begin{equation*}
        \begin{tikzcd}[cramped]
          {P_\McU^\bI(\tMcU) \times \set{\mbbo{0}}} & \tMcU \\
          {P_\McU^\bI(\tMcU) \times \bI} & {P_\McU^\bI(\tMcU)}
          \arrow["{\ev_{\mbbo{0}}}", from=1-1, to=1-2]
          \arrow[hook, from=1-1, to=2-1]
          \arrow["{\const}", hook, from=1-2, to=2-2]
          \arrow["\pathres"', from=2-1, to=2-2]
        \end{tikzcd}
      \end{equation*}
    \end{minipage}
  \end{center}
  This is indeed the case due to \Cref{lem:pathres-const,lem:pathres-const-0}.

  We now construct the section part of \Cref{eqn:const-sr} by taking the top map
  $\tMcU \to (\isConst \vee \txtis_{\mbbo{0}})^*\partial\Cof$ as the composite
  \begin{equation*}
    \begin{tikzcd}
      \tMcU \cong \tMcU \times \set{\mbbo{1}}
      \ar[r, hook]
      &
      \tMcU \times \bI
      \ar[r, hook, "{\inj_1}"]
      &
      (\isConst \vee \txtis_{\mbbo{0}})^*\partial\Cof
    \end{tikzcd}
  \end{equation*}
  and the bottom map
  \begin{equation*}
    P_\McU^\bI(\tMcU) \cong P_\McU^\bI(\tMcU) \times \set{\mbbo{1}} \hookrightarrow P_\McU^\bI(\tMcU) \times \bI
  \end{equation*}
  as the inclusion.
  Because the composite
  $\tMcU \times \bI \hookrightarrow (\isConst \vee
  \txtis_{\mbbo{0}})^*\partial\Cof \hookrightarrow P_\McU^\bI(\tMcU) \times \bI$
  is exactly the inclusion
  $\tMcU \times \bI \hookrightarrow P_\McU^\bI(\tMcU) \times \bI$, the left part
  square of \Cref{eqn:const-sr} commutes, which is observed as follows.
  \begin{equation*}
    \begin{tikzcd}[cramped]
      {\tMcU \times \set{\mbbo{1}}} & {\tMcU \times \bI} & {(\rho \vee \txtis_{\mbbo{0}})^*\partial\Cof} \\
      {P_{\McU}^\bI(\tMcU) \times \set{\mbbo{1}}} && {P_{\McU}^\bI(\tMcU) \times \bI}
      \arrow[hook, from=1-1, to=1-2]
      \arrow["{\const \times \set{\mbbo{1}}}"', hook, from=1-1, to=2-1]
      \arrow["{\inj_1}", hook, from=1-2, to=1-3]
      \arrow["{\const \times \bI}"{description}, from=1-2, to=2-3]
      \arrow[hook, from=1-3, to=2-3]
      \arrow[hook, from=2-1, to=2-3]
    \end{tikzcd}
  \end{equation*}

  Then, \Cref{eqn:const-sr} becomes two commutative squares like so
  \begin{equation*}
    \begin{tikzcd}[cramped]
      {\tMcU \times \set{\mbbo{1}}} & {\tMcU \times \bI} & {(\rho \vee \txtis_{\mbbo{0}})^*\partial\Cof} & \tMcU \\
      {P_{\McU}^\bI(\tMcU) \times \set{\mbbo{1}}} && {P_{\McU}^\bI(\tMcU) \times \bI} & {P_\McU^\bI(\tMcU)}
      \arrow[hook, from=1-1, to=1-2]
      \arrow["{\const \times \set{\mbbo{1}}}"', hook, from=1-1, to=2-1]
      \arrow["{\inj_1}", hook, from=1-2, to=1-3]
      \arrow["{(\proj, \ev_{\mbbo{0}})}", from=1-3, to=1-4]
      \arrow[hook, from=1-3, to=2-3]
      \arrow["\const", hook, from=1-4, to=2-4]
      \arrow[hook, from=2-1, to=2-3]
      \arrow["\pathres"', from=2-3, to=2-4]
    \end{tikzcd}
  \end{equation*}
  The composite of the entire top row is the identity because by the definition
  of $(\proj,\ev_{\mbbo{0}})$ from \Cref{eqn:const-retract}, its restriction to
  $\tMcU \times \bI$ is the projection map $\tMcU \times \bI \to \tMcU$.
  The composite of the entire bottom row is the identity because of
  \Cref{lem:pathres-sect}.
  Hence, this exhibits $\const$ as the retract of a pullback of
  $(\Cof \vee \txtis_{\mbbo{0}})^*\partial\Cof \hookrightarrow \Cof \times \bI$.
\end{proof}

In view of \Cref{prop:refl-retract}, to actually exhibit
$\tMcU \hookrightarrow P_\McU^\bI(\tMcU)$ as a retract of a pullback of
$(\Cof \vee \txtis_{\mbbo{0}})^*\partial\Cof \hookrightarrow \Cof \times \bI$,
we need to exhibit $\tMcU \hookrightarrow P_\McU^\bI(\tMcU)$ as an
$\iota$-cofibration.
This is ensured by the following result.
\begin{lemma}\label{lem:refl-cof}
  Assume that $\partial\Cof \hookrightarrow \Cof$ has a truth structure relative
  to $\tMcU \to \McU$ and that $\bI$ has a disjoint $\iota$-cofibrant endpoint
  structure.
  Further let $\tMcU \to \McU$ be equipped with a $\Sigma$- and $\Path$-type
  structure and that $\pi$-fibration sections are equipped with a cofibrant
  structure.

  Then, one can find a map $\isConst \colon P_\McU^\bI(\tMcU) \to \Cof$ giving
  \begin{align*}
    \const \colon \tMcU \hookrightarrow P_\McU^\bI(\tMcU)
  \end{align*}
  an $\iota$-name.
\end{lemma}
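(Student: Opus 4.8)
The plan is to realise $\const \colon \tMcU \hookrightarrow P_\McU^\bI(\tMcU)$ as a section of a $\pi$-fibration with codomain $\tMcU$, and then invoke \Cref{cor:cof-sect}, which attaches an $\iota$-cofibrancy structure (in particular an $\iota$-name) to any section of a $\pi$-fibration, provided $\pi$-fibration sections carry a cofibrant structure, as we are assuming here.

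First I would check that the composite
\[
  p \colon P_\McU^\bI(\tMcU) \xrightarrow{\ev_\partial} \tMcU \times_\McU \tMcU \xrightarrow{\proj_1} \tMcU
\]
is a $\pi$-fibration. The first projection $\proj_1 \colon \tMcU \times_\McU \tMcU \to \tMcU$ is a pullback of $\pi \colon \tMcU \to \McU$, hence a $\pi$-fibration, and the $\Path$-type structure of \Cref{eqn:Path-def} makes $\ev_\partial$ a $\pi$-fibration. These two maps are composable, so by the $\Sigma$-type structure on $\tMcU \to \McU$ together with the universality of $\GenComp$ in \Cref{lem:gen-comp} --- which presents any composable pair of $\pi$-fibrations as a pullback of $\GenComp(\pi,\pi)$, the latter carrying a $\pi$-fibrancy structure precisely by virtue of the $\Sigma$-type structure --- the composite $p$ is again a $\pi$-fibration.

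It then remains to note that $\const$ is a section of $p$: by \Cref{eqn:Path-def} one has $\ev_\partial \circ \const = \Delta \colon \tMcU \to \tMcU \times_\McU \tMcU$, and $\proj_1 \circ \Delta = \id_{\tMcU}$, so $p \circ \const = \id_{\tMcU}$. Now \Cref{cor:cof-sect} applies verbatim and furnishes an $\iota$-cofibrancy structure on $\const$, whose $\iota$-name is the sought map $\isConst \colon P_\McU^\bI(\tMcU) \to \Cof$. I do not expect any real obstacle here; the one point requiring care is the observation that $p$ is a $\pi$-fibration, which is exactly the closure-under-composition content supplied by the $\Sigma$-type structure rather than something automatic. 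The hypotheses on the truth structure and the disjoint $\iota$-cofibrant endpoints enter only to make \Cref{def:path-type}, and hence the object $P_\McU^\bI(\tMcU)$ and its $\Path$-type structure, meaningful.
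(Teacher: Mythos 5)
Your proof is correct and follows the same route as the paper's: exhibit $\const$ as a section of the composite $\proj_1 \circ \ev_\partial \colon P_\McU^\bI(\tMcU) \to \tMcU$, show this composite is a $\pi$-fibration via the $\Path$- and $\Sigma$-type structures (the paper cites a packaged closure-under-composition lemma where you unwind it through $\GenComp$, but the content is identical), and then invoke \Cref{cor:cof-sect}.
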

\begin{proof}
  We first note that $\const \colon \tMcU \hookrightarrow P_\McU^\bI(\tMcU)$ is a
  section of the composite
  \begin{equation*}
    P_\McU^\bI(\tMcU) \xrightarrow{\ev_\partial} \tMcU \times_\McU \tMcU
    \xrightarrow{\proj_1} \tMcU
  \end{equation*}
  By definition, $\tMcU \times_\McU \tMcU \xrightarrow{\proj_1} \tMcU$ is a
  $\pi$-fibration and under the presence of a $\Path$-type structure,
  $P_\McU^\bI(\tMcU) \xrightarrow{\ev_\partial} \tMcU \times_\McU \tMcU$ is also
  a $\pi$-fibration.
  Therefore, when there is a $\Sigma$-type structure, by \cite[Proposition
  1.14]{axm-univalence}, the composite
  $P_\McU^\bI(\tMcU) \xrightarrow{\ev_\partial} \tMcU \times_\McU \tMcU
  \xrightarrow{\proj_1} \tMcU$ is a $\pi$-fibration as well.
  Hence, we have exhibited
  $\const \colon \tMcU \hookrightarrow P_\McU^\bI(\tMcU)$ as a section of a
  $\pi$-fibration
  The result now follows by \Cref{cor:cof-sect}.
\end{proof}

Thus, putting \Cref{prop:refl-retract,lem:refl-cof} together, we can obtain the
following construction.
\begin{construction}\label{constr:0-biased-fill-J}
  Assume that $\partial\Cof \hookrightarrow \Cof$ has a truth structure relative
  to $\McU \to \McU$ under which $\bI$ has a disjoint cofibrant endpoint
  structure and a $\min$-structure.
  Further let $\tMcU \to \McU$ be equipped with a $\Sigma$- and $\Path$-type
  structure and that sections of $\tMcU \to \McU$ are equipped with a cofibrant
  structure.

  Then, one has a map
  \newsavebox{\zeroHorn}
  \begin{lrbox}{\zeroHorn}\small
    \begin{tikzcd}[cramped, row sep=small]
      (\Cof \vee \txtis_{\mbbo{0}})^*\partial\Cof \ar[d, hook] \\
      \Cof \times \bI
    \end{tikzcd}
  \end{lrbox}
  \newsavebox{\zeroHornRho}
  \begin{lrbox}{\zeroHornRho}\small
    \begin{tikzcd}[cramped, row sep=small]
      (\isConst \vee \txtis_{\mbbo{0}})^*\partial\Cof \ar[d, hook] \\
      P_\McU^\bI(\tMcU) \times \bI
    \end{tikzcd}
  \end{lrbox}
  \newsavebox{\reflMap}
  \begin{lrbox}{\reflMap}\small
    \begin{tikzcd}[cramped, row sep=small]
      \tMcU \ar[d, "{\const}"', hook] \\ P_\McU^\bI(\tMcU)
    \end{tikzcd}
  \end{lrbox}
  \newsavebox{\UCof}
  \begin{lrbox}{\UCof}\small
    \begin{tikzcd}[cramped, row sep=small]
      \Cof \times \tMcU \ar[d] \\ \Cof \times \McU
    \end{tikzcd}
  \end{lrbox}
  \newsavebox{\UU}
  \begin{lrbox}{\UU}\small
    \begin{tikzcd}[cramped, row sep=small]
      \McU \times \tMcU \ar[d] \\ \McU \times \McU
    \end{tikzcd}
  \end{lrbox}
  \newsavebox{\PU}
  \begin{lrbox}{\PU}\small
    \begin{tikzcd}[cramped, row sep=small]
      P_\McU^\bI(\tMcU) \times \tMcU \ar[d] \\ P_\McU^\bI(\tMcU) \times \McU
    \end{tikzcd}
  \end{lrbox}
  \begin{equation*}
    \left(
      \usebox{\zeroHorn}
      \fracsquareslash{\Cof}
      \usebox{\UCof}
    \right)
    \xrightarrow{\qquad}
    \left(
      \usebox{\reflMap}
      \fracsquareslash{\McU}
      \usebox{\UU}
    \right)
  \end{equation*}
  defined as the composite
  \begin{equation*}\small
    \begin{tikzcd}[column sep=5em]
      {\left(
          \usebox{\zeroHorn}
          \fracsquareslash{\Cof}
          \usebox{\UCof}
        \right)}
      \ar[r, "{\text{\Cref{lem:refl-cof}}}", "\text{\cite[Con. 3.2]{struct-lift}}"']
      &
      {\left(
          \usebox{\zeroHornRho}
          \fracsquareslash{P_\McU^\bI(\tMcU)}
          \usebox{\PU}
        \right)}
      \ar[d, "{\text{\Cref{prop:refl-retract}}}", "\text{\cite[Con. 2.6]{struct-lift}}"']
      \\
      {\left(
          \usebox{\reflMap}
          \fracsquareslash{\McU}
          \usebox{\UU}
        \right)}
      &
      {\left(
          \usebox{\reflMap}
          \fracsquareslash{P^\bI_\McU(\tMcU)}
          \usebox{\PU}
        \right)}
      \ar[l,"\text{\cite[Con. 3.4]{struct-lift}}"']
    \end{tikzcd}
  \end{equation*}
\end{construction}

Putting everything together, we can induce an $\Id$-type structure from a
$\Path$-type structure.

\begin{theorem}\label{thm:Path-Id}
  Suppose
  \begin{enumerate}
    \item $\partial\Cof \hookrightarrow \Cof$ has a truth structure relative to
    $\tMcU \to \McU$
    \item $\bI$ has a disjoint cofibrant endpoint structure relative to
    $\partial\Cof \hookrightarrow \Cof$
    \item $\bI$ has a $\min \colon \bI \times \bI \to \bI$ structure
    \item $\tMcU \to \McU$ has is equipped with a $\Path$- and a $\Sigma$-type
    structure
    \item Sections to $\pi$-fibrations are equipped with an $\iota$-cofibration
    structure.
  \end{enumerate}

  Then, a $\mbbo{0}$-biased filling structure on these data makes the
  $\Path$-type structure an $\Id$-type structure by the expected choice
  $(\refl,\Id) \coloneqq (\via \cdot \const,\Path)$ so that
  \begin{align*}
    \Id_{\McU}(\tMcU) = P^\bI_{\McU}(\tMcU)
    &&
       \tMcU  \xrightarrow{\refl = \const} \Id_\McU(\tMcU)
  \end{align*}
\end{theorem}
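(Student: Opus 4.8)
The plan is to exhibit an $\Id$-type structure in the sense of \Cref{def:Id-type} by setting $\Id \coloneqq \Path$ and $\ceil{\refl} \coloneqq \via \cdot \const$, and to check that the map into the selected pullback $\Id_\McU(\tMcU)$ that this determines is exactly $\const$. There are two things to produce: the pre-$\Id$-type structure (the square \Cref{eqn:pre-Id-def}), which is essentially immediate, and the $\MsJ$-elimination structure (a section of the pullback-Hom \Cref{eqn:J-prob}), which is where the filling hypothesis enters.

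\textbf{Pre-$\Id$-type structure.} The $\Path$-type structure of \Cref{def:path-type} supplies a $\pi$-fibrancy structure $(\via, \Path)$ on the boundary evaluation $\ev_\partial \colon P^\bI_\McU(\tMcU) \to \tMcU \times_\McU \tMcU$, so that $\pi \cdot \via = \Path \cdot \ev_\partial$. Since the constant path satisfies $\ev_\partial \cdot \const = \Delta$, the choice $\ceil{\refl} \coloneqq \via \cdot \const$ gives $\pi \cdot \ceil{\refl} = \Path \cdot \Delta$, exactly the commutativity of \Cref{eqn:pre-Id-def} with $\Id \coloneqq \Path$. By \Cref{constr:J-prob}, $\Id_\McU(\tMcU)$ is the selected pullback of $\pi$ along $\Path$, i.e.\ the canonical $\pi$-fibrant replacement of $\ev_\partial \colon P^\bI_\McU(\tMcU) \to \tMcU \times_\McU \tMcU$ in the sense of \Cref{def:univ-fibration}; hence it is canonically isomorphic to $P^\bI_\McU(\tMcU)$ over $\tMcU \times_\McU \tMcU$, with $\ev_\partial$ corresponding to $\ev_\partial$ and the $\pi$-point $\widetilde{\Id}$ to $\via$. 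As $\refl$ is the unique map into the pullback with $\ev_\partial \cdot \refl = \Delta$ and $\widetilde{\Id} \cdot \refl = \ceil{\refl}$, while $\const$ satisfies the matching pair $\ev_\partial \cdot \const = \Delta$ and $\via \cdot \const = \ceil{\refl}$, joint monicity of the pullback projections identifies $\refl$ with $\const$. This gives the displayed equalities $\Id_\McU(\tMcU) = P^\bI_\McU(\tMcU)$ and $\refl = \const$.

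\textbf{$\MsJ$-elimination structure.} By the remark following \Cref{def:J-elim}, a $\MsJ$-elimination structure is the same as a stable lifting structure, in the sense of \cite{struct-lift}, of $\refl \colon \tMcU \hookrightarrow \Id_\McU(\tMcU)$ against $\McU \times \tMcU \to \McU \times \McU$ in $\sfrac{\bC}{\McU}$. Hypotheses (1)--(5) are precisely those assumed in \Cref{constr:0-biased-fill-J}, which invokes \Cref{lem:refl-cof} to produce the name $\isConst$ and \Cref{prop:refl-retract} to realise $\const$ as a retract of $(\isConst \vee \txtis_{\mbbo{0}})^*\partial\Cof \hookrightarrow P^\bI_\McU(\tMcU) \times \bI$ as in \Cref{eqn:const-sr}. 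That construction gives a map from $\mbbo{0}$-biased filling structures — which by \Cref{def:biased-filling} are lifting structures of $(\Cof \vee \txtis_{\mbbo{0}})^*\partial\Cof \hookrightarrow \Cof \times \bI$ against $\Cof \times \tMcU \to \Cof \times \McU$ in $\sfrac{\bC}{\Cof}$ — to lifting structures of $\const = \refl$ against $\McU \times \tMcU \to \McU \times \McU$ in $\sfrac{\bC}{\McU}$. Feeding the given $\Fill_{\mbbo{0}}$ through this map produces the required $\MsJ$; together with the pre-$\Id$-type structure above, this is an $\Id$-type structure.

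\textbf{Where the work is.} Essentially all the substance is already isolated in \Cref{prop:refl-retract} and \Cref{lem:refl-cof} — using the $\min$-structure to see that restriction at $\mbbo{1}$ is a retraction and at $\mbbo{0}$ is constant (\Cref{lem:pathres-sect,lem:pathres-const-0}), using the $\Sigma$-structure together with the cofibrant-section hypothesis to exhibit $\const$ as a cofibration, and the approximate pushout-product bookkeeping of \Cref{lem:const-or-0-res}. For the theorem itself the only points needing care are (i) the identification $\refl = \const$ under $\Id_\McU(\tMcU) \cong P^\bI_\McU(\tMcU)$, handled above via joint monicity of the pullback projections; and (ii) checking that the lifting structure emitted by \Cref{constr:0-biased-fill-J} is \emph{stable}, so that it is a genuine section of \Cref{eqn:J-prob} and not merely a pointwise lift — this holds because every operation on lifting structures used there (retraction transport, base change, and pullback of lifting structures from \cite{struct-lift}) preserves stability.
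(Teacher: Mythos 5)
Your proof is correct and follows the same route as the paper's: reduce the $\MsJ$-elimination data to a stable lifting structure of $\const$ against $\McU\times\tMcU\to\McU\times\McU$ over $\McU$, and obtain one by pushing $\Fill_{\mbbo{0}}$ through the map of \Cref{constr:0-biased-fill-J}. The paper's proof is terser — it leaves the pre-$\Id$-type square and the identification $\refl = \const$ entirely implicit and only records the lifting-structure step — but your extra verifications (commutativity of \Cref{eqn:pre-Id-def} via $\pi\cdot\via=\Path\cdot\ev_\partial$ and $\ev_\partial\cdot\const=\Delta$, joint monicity giving $\refl=\const$, and stability of the emitted lift being preserved by the \cite{struct-lift} operations) are exactly the details the paper is silently relying on.
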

\begin{proof}
  It suffices to show there exists a uniform lifting structure of $\const$
  against $\McU \times \tMcU \to \McU \times \tMcU$ in the slice over $\McU$.
  This is because of the following map
  \begin{equation*}
    \left(
      \begin{tikzcd}[cramped, row sep=small]
        (\Cof \vee \txtis_{\mbbo{0}})^*\partial\Cof \ar[d, hook] \\
        \Cof \times \bI
      \end{tikzcd}
      \fracsquareslash{\Cof}
      \begin{tikzcd}[cramped, row sep=small]
        \Cof \times \tMcU \ar[d] \\ \Cof \times \McU
      \end{tikzcd}
    \right)
    \xrightarrow{\text{\Cref{constr:0-biased-fill-J}}}
    \left(
      \begin{tikzcd}[cramped, row sep=small]
        \tMcU \ar[d, hook] \\ P_\McU^\bI(\tMcU)
      \end{tikzcd}
      \fracsquareslash{\McU}
      \begin{tikzcd}[cramped, row sep=small]
        \McU \times \tMcU \ar[d] \\ \McU \times \McU
      \end{tikzcd}
    \right)
  \end{equation*}
  and the fact that $\Fill_{\mbbo{0}}$ is an element of the domain, so its
  codomain, which is the set of lifting structures we are seeking, is non-empty.
\end{proof}



\section{Computation Rules for Filling}\label{sec:filling-computation}
In the formulation of cubical type theory from \cite{cchm15}, the composition
operation is equipped with various computation rules.
We port over those computation rules to our setting, which uses the filing
operation.

\subsection{Computation Rule for Filling $\Sigma$-types}\label{subsec:filling-sigma}
In this part we will derive the computation rule for the $\delta$-biased filling
operation for $\Sigma$-types.
The computation rule for $\Sigma$-types says that the choice of the lifting
structure chosen for the $\delta$-biased filling operation is invariant under
vertical composition of maps.

We will now make these notions precise, first by starting with the following
construction.
\newsavebox{\sigmaTop}
\begin{lrbox}{\sigmaTop}{\scriptsize\begin{tikzcd}[column sep=small, row sep=small]
      {\Cof \times \ev^*(\tMcU \times \tMcU)} \ar[d] \\ {\Cof \times \pi^*\pi_*(\McU \times \McU)}\end{tikzcd}}
\end{lrbox}
\newsavebox{\sigmaBot}
\begin{lrbox}{\sigmaBot}{\scriptsize\begin{tikzcd}[column sep=small, row sep=small]
      {\Cof \times \pi^*\pi_*(\McU \times \McU)} \ar[d] \\ {\Cof \times \pi_*(\McU \times \McU)}\end{tikzcd}}
\end{lrbox}
\newsavebox{\sigmaComp}
\begin{lrbox}{\sigmaComp}{\scriptsize\begin{tikzcd}[column sep=small, row sep=small]
      {\Cof \times \pi^*\pi_*(\McU \times \McU)} \ar[d] \\ {\Cof \times \pi^*\pi_*(\McU \times \McU)} \ar[d] \\ {\Cof \times \pi_*(\McU \times \McU)}\end{tikzcd}}
\end{lrbox}
\newsavebox{\sigmaRes}
\begin{lrbox}{\sigmaRes}{\scriptsize\begin{tikzcd}[cramped, row sep=small]
      & \Cof \times \tMcU \ar[d] \\ \Cof \times \pi_*(\tMcU \times \McU) \ar[r, "{\Cof\times\Sigma}"'] & \Cof \times \McU
    \end{tikzcd}}
\end{lrbox}
\begin{construction}\label{constr:Sigma-fill}
  Suppose $\partial\Cof \hookrightarrow \Cof$ has a truth structure relative
  to $\tMcU \to \McU$ under which $\bI$ has a disjoint cofibrant endpoint
  structure.

  Further assume that $\tMcU \to \McU$ has a $\Sigma$-type structure.
  Then, one can define a map
  \begin{equation*}\small
    \left(
      \begin{tikzcd}[cramped, row sep=small]
        (\Cof \vee \txtis_{\delta})^*\partial\Cof \ar[d,hook] \\ \Cof \times \bI
      \end{tikzcd}
      \fracsquareslash{\Cof}
      \begin{tikzcd}[cramped, row sep=small]
        \Cof \times \tMcU \ar[d] \\ \Cof \times \McU
      \end{tikzcd}
    \right)
    \xrightarrow{\qquad}
    \left(
      \begin{tikzcd}[cramped, row sep=small]
        (\Cof \vee \txtis_{\delta})^*\partial\Cof \ar[d,hook] \\ \Cof \times \bI
      \end{tikzcd}
      \fracsquareslash{\Cof}
      \begin{tikzcd}[cramped, column sep=small, row sep=small]
        & \Cof \times \tMcU \ar[d] \\ \Cof \times \pi_*(\tMcU \times \McU) \ar[r, "{\Sigma}"'] & \Cof \times \McU
      \end{tikzcd}
    \right)
  \end{equation*}
  as the following composite using the $\Sigma$-type structure along with the
  restriction and composition of maps fromm \cite[Constructions 2.1, 2.2 and
  2.3]{struct-lift}.
  \begin{equation*}
    \begin{tikzcd}
      \Fill_\delta \in \left( \usebox{\dCofI} \fracsquareslash{\Cof} \usebox{\tU} \right)
      \ar[d, "{\text{\cite[Constructions 2.1 and 2.2]{struct-lift}}}"]
      \\
      \left( \usebox{\dCofI} \fracsquareslash{\Cof} \usebox{\sigmaTop} \right) \times
      \left( \usebox{\dCofI} \fracsquareslash{\Cof} \usebox{\sigmaBot} \right)
      \ar[d, "{\text{\cite[Construction 2.3]{struct-lift}}}"]
      \\
      \left( \usebox{\dCofI} \fracsquareslash{\Cof} \usebox{\sigmaComp} \right)
      \ar[d, "{\text{\cite[Construction 2.2]{struct-lift}}}", "\cong"']
      \\
      \pair(\Fill_\delta) \in \left( \usebox{\dCofI} \fracsquareslash{\Cof} \usebox{\sigmaRes} \right)
    \end{tikzcd}
  \end{equation*}
  When $\tMcU \to \McU$ has furthermore a $\delta$-biased filling structure
  $\Fill_\delta$, denote by $\pair(\Fill_\delta)$ the image of $\Fill_\delta$
  under the above map.

  We also denote by $\Fill_\delta(\pair)$ the image of $\Fill_\delta$ under the
  following restriction map directly by \cite[Construction 2.1]{struct-lift}.
  \begin{equation*}
    \begin{tikzcd}
      \Fill_\delta \in \left( \usebox{\dCofI} \fracsquareslash{\Cof} \usebox{\tU} \right)
      \ar[d, "{\text{\cite[Construction 2.1]{struct-lift}}}"]
      \\
      \Fill_\delta(\pair) \in
      \left( \usebox{\dCofI} \fracsquareslash{\Cof} \usebox{\sigmaRes} \right)
    \end{tikzcd}
  \end{equation*}
\end{construction}

With the above construction, we are now ready to define the definitional and
propositional versions of computation rules for filling $\Sigma$-types.
\begin{definition}\label{def:Sigma-fill-comp}
  Suppose $\partial\Cof \hookrightarrow \Cof$ has a truth structure relative
  to $\tMcU \to \McU$ under which $\bI$ has a disjoint cofibrant endpoint
  structure.
  Further assume that $\tMcU \to \McU$ has a $\Sigma$-type structure and that
  $\tMcU \to \McU$ has a $\delta$-biased filling structure $\Fill_\delta$.

  We say that the $\delta$-biased filling structure $\Fill_\delta$
  \emph{computes definitionally} for this $\Sigma$-type structure if
  \begin{equation*}
    \Fill_\delta(\pair) = \pair(\Fill_\delta)
  \end{equation*}
  are precisely equal as from \Cref{constr:Sigma-fill}.

  And a \emph{stable propositional computation structure} relative to an
  $\Id$-type structure $\Id \colon \tMcU \times_\McU \tMcU \to \McU$ for this
  biased $\delta$-filling structure $\Fill_\delta$ with respect to this
  $\Sigma$-structure is a witness $H$ that
  \begin{equation*}
    \Fill_\delta(\pair), \pair(\Fill_\delta) \in
    \left( \usebox{\dCofI} \fracsquareslash{\Cof} \usebox{\sigmaRes} \right)
  \end{equation*}
  are structurally related, in the sense of \cite[Definition 6.3]{struct-lift}, by
  $\Cof \times \Id_\McU(\tMcU) \to \Cof \times (\tMcU \times_{\McU} \tMcU)$.
\end{definition}

\subsection{Computation Rule for Filling $\Pi$-types}\label{subsec:filling-pi}
In this part, we will derive the computation rule for the biased filling
operation for $\Pi$-types.
However, unlike in the in the $\Sigma$-type case, where we induced two different
lifting structures against the same map on the right, we instead require a lift
to be sent to another lift, in the sense of \cite[Definition 6.3]{struct-lift}
by the application maps $(\widetilde{\app}, \app)$ as from \Cref{eqn:Pi-def}.

As before, we start with a construction to allow this to be stated precisely.
\begin{construction}\label{constr:Pi-fill}
  Suppose $\partial\Cof \hookrightarrow \Cof$ has a truth structure relative to
  $\tMcU \to \McU$ under which $\bI$ has a disjoint cofibrant endpoint
  structure.

  Then, by \Cref{eqn:Pi-def} one can define a map
  \begin{equation*}\small
    \left(
      \begin{tikzcd}[cramped]
        (\Cof \vee \txtis_{\delta})^*\partial\Cof \ar[d,hook] \\ \Cof \times \bI
      \end{tikzcd}
      \fracsquareslash{\Cof}
      \begin{tikzcd}[cramped]
        \Cof \times \tMcU \ar[d] \\ \Cof \times \McU
      \end{tikzcd}
    \right)
    \xrightarrow{\text{\cite[Constructions 2.1 and 2.2]{struct-lift}}}
    \left(
      \begin{tikzcd}[cramped]
        (\Cof \vee \txtis_{\delta})^*\partial\Cof \ar[d,hook] \\ \Cof \times \bI
      \end{tikzcd}
      \fracsquareslash{\Cof}
      \begin{tikzcd}[cramped, column sep=small]
        \Cof \times \pi^*\pi_*(\tMcU \times \tMcU) \ar[d] \\ \Cof \times \pi^*\pi_*(\tMcU \times \McU)
      \end{tikzcd}
    \right)
  \end{equation*}
  When $\tMcU \to \McU$ has furthermore a $\delta$-biased filling structure
  $\Fill_\delta$, denote by $\Pi(\Fill_\delta)$ the image of $\Fill_\delta$
  under the above map.
\end{construction}

With the above construction, we are now ready to define the definitional and
propositional versions of computation rules for filling $\Pi$-types.

\begin{definition}\label{def:Pi-fill-comp}
  Suppose $\partial\Cof \hookrightarrow \Cof$ has a truth structure respective
  to $\tMcU \to \McU$ under which $\bI$ has a disjoint cofibrant endpoint
  structure.
  Further assume that $\tMcU \to \McU$ has a $\Pi$-type structure and that
  $\tMcU \to \McU$ has a $\delta$-biased filling structure $\Fill_\delta$.

  We say that the $\delta$-biased filling structure $\Fill_\delta$
  \emph{computes definitionally} for this $\Pi$-type structure if the pair of application maps
  \begin{equation*}
    \begin{tikzcd}[cramped, row sep=small]
      {\Cof \times \pi^*\pi_*(\tMcU \times \tMcU)} & {\Cof \times \tMcU} \\
      {\Cof \times \pi^*\pi_*(\tMcU \times \McU)} & {\Cof \times \McU}
      \arrow["{\Cof \times \widetilde{\app}}", from=1-1, to=1-2]
      \arrow[from=1-1, to=2-1]
      \arrow[from=1-2, to=2-2]
      \arrow["{\Cof \times \app}"', from=2-1, to=2-2]
    \end{tikzcd}
  \end{equation*}
  from \Cref{eqn:Pi-def} precisely send $\Pi(\Fill_\delta)$ back to
  $\Fill_\delta$, in the sense of \cite[Definition 6.3]{struct-lift}.

  A \emph{stable propositional computation structure} for biased
  $\delta$-filling structure $\Fill_\delta$ with respect to this $\Pi$-structure
  relative to an $\Id$-type structure
  $\Id \colon \tMcU \times_\McU \tMcU \to \McU$ is a structured
  $(\Cof \times \Id_\McU(\tMcU) \to \Cof \times (\tMcU \times_\McU
  \tMcU))$-witness $H$ from $\Pi(\Fill_\delta)$ back to $\Fill_\delta$ via the
  pair of application maps $(\Cof \times \widetilde{\app}, \Cof \times \app)$,
  in the sense of \cite[Definition 6.3]{struct-lift}.
\end{definition}

\subsection{Computation Rule for Filling Path-types}\label{subsec:filling-path}
We now define the computation rule for the biased filling operation for
$\Path$-types.
In short, this says that the filling operation is compatible with Leibniz
transpose.

To make this precise, we need to start with an observation.

\begin{lemma}\label{lem:biased-bd-pushout-product-approx}
  Suppose $\partial\Cof \hookrightarrow \Cof$ admits an truth-structure relative to
  $\tMcU \to \McU$ under which $\bI$ has a disjoint cofibrant endpoint
  structure.

  Then, the map
  $((\Cof \vee \txtis_\partial) \vee \txtis_\delta)^*\partial\Cof
  \hookrightarrow \Cof \times \bI \times \bI$, for $\delta = \mbbo{0},\mbbo{1}$,
  constructed as the iterated pullback in the top left vertical map as follows
  \begin{equation*}
    \begin{tikzcd}[cramped]
      {((\Cof \vee \txtis_\partial)\vee\txtis_\delta)^*\partial\Cof} & {(\Cof \vee \txtis_\delta)^*\partial\Cof} & {\partial(\Cof \times \Cof)} & {\partial\Cof} \\
      {\Cof \times \bI \times \bI} & {\Cof \times \bI} & {\Cof \times \Cof} & \Cof \\
      {\Cof \times \bI} & \Cof \\
      \Cof
      \arrow[from=1-1, to=1-2]
      \arrow[hook, from=1-1, to=2-1]
      \arrow[from=1-2, to=1-3]
      \arrow[hook, from=1-2, to=2-2]
      \arrow[from=1-3, to=1-4]
      \arrow[hook, from=1-3, to=2-3]
      \arrow[hook, from=1-4, to=2-4]
      \arrow["{(\Cof \vee \txtis_\partial)\times \bI}"{description}, from=2-1, to=2-2]
      \arrow["\proj"', from=2-1, to=3-1]
      \arrow["{\Cof \times \txtis_\delta}"', from=2-2, to=2-3]
      \arrow[from=2-2, to=3-2]
      \arrow["{-\vee-}"', from=2-3, to=2-4]
      \arrow["{\Cof \vee \txtis_\partial}"', from=3-1, to=3-2]
      \arrow["\proj"', from=3-1, to=4-1]
      \arrow["\lrcorner"{anchor=center, pos=0.15, scale=1.5, rotate=0}, draw=none, from=1-1, to=2-2]
      \arrow["\lrcorner"{anchor=center, pos=0.15, scale=1.5, rotate=0}, draw=none, from=1-2, to=2-3]
      \arrow["\lrcorner"{anchor=center, pos=0.15, scale=1.5}, draw=none, from=1-3, to=2-4]
      \arrow["\lrcorner"{anchor=center, pos=0.15, scale=1.5, rotate=0}, draw=none, from=2-1, to=3-2]
    \end{tikzcd}
  \end{equation*}
  structurally approximates the pushout-product
  \begin{equation*}
    (\Cof \times \partial\bI \hookrightarrow \Cof \times \bI) \ltimes_\Cof
    ((\Cof \vee \txtis_\delta)^*\partial\Cof \hookrightarrow \Cof \times \bI)
  \end{equation*}
  relative to $\tMcU \times \Cof \to \McU \times \Cof$ in the slice over $\Cof$.
\end{lemma}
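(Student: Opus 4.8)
The plan is to reuse the argument behind \Cref{lem:cof-int-pushout-product}, run one iteration deeper. The only real input is the defining feature of an or-structure: pulling $\partial\Cof \hookrightarrow \Cof$ back along a join $\psi_1 \vee \psi_2 \colon \Gamma \to \Cof$ gives, in the relative sense of \cite[Definition 4.4]{struct-lift}, the union of the pullbacks along $\psi_1$ and along $\psi_2$; combined with the base change stability of relative pushout-product approximations from \cite[Lemmas 4.5 and 4.6]{struct-lift}, this should be enough.

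First I would unwind both sides as (structurally approximated) unions of subobjects of $\Cof \times \bI \times \bI$. Reading off the iterated pullback, $M \coloneqq ((\Cof \vee \txtis_\partial) \vee \txtis_\delta)^*\partial\Cof$ is the pullback of $\partial\Cof \hookrightarrow \Cof$ along $(\phi, i, j) \mapsto (\phi \vee \txtis_\partial(i)) \vee \txtis_\delta(j)$, so applying the or-structure property twice presents $M$ as the union of $\partial\Cof \times \bI \times \bI$, of $\Cof \times \partial\bI \times \bI$, and of $\Cof \times \bI \times \set{\delta}$. On the other side, the domain of $(\Cof \times \partial\bI \hookrightarrow \Cof \times \bI) \ltimes_\Cof ((\Cof \vee \txtis_\delta)^*\partial\Cof \hookrightarrow \Cof \times \bI)$ glues $\Cof \times \partial\bI \times \bI$ to $(\Cof \vee \txtis_\delta)^*\partial\Cof \times \bI$ along their intersection; since $(\Cof \vee \txtis_\delta)^*\partial\Cof = \partial\Cof \times \bI \cup \Cof \times \set{\delta}$ inside $\Cof \times \bI$, this again structurally approximates the same threefold union, relative to $\tMcU \times \Cof \to \McU \times \Cof$.

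To package this cleanly, I would invoke the third part of \Cref{cor:cof-int-bd} with $A \coloneqq \Cof \times \bI$ and $\phi \coloneqq (\Cof \vee \txtis_\delta) \colon \Cof \times \bI \to \Cof$, so that $\partial A = (\Cof \vee \txtis_\delta)^*\partial\Cof$, and with the $\pi$-fibration $\tMcU \twoheadrightarrow \McU$ — for which \Cref{eqn:or-fib} holds since $\partial\Cof \hookrightarrow \Cof$ has a truth structure relative to $\tMcU \to \McU$. That gives that $(\phi \vee \txtis_\partial)^*\partial\Cof \hookrightarrow A \times \bI$ structurally approximates $(A \times \partial\bI \hookrightarrow A \times \bI) \ltimes_A (\partial A \hookrightarrow A)$ relative to $A \times \tMcU \to A \times \McU$ in $\sfrac{\bC}{A}$, which is the desired statement read in $\sfrac{\bC}{\Cof \times \bI}$ up to the symmetry of $\Cof \times \bI \times \bI$ interchanging the two interval coordinates. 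Finally, the left-hand factor $\Cof \times \partial\bI \hookrightarrow \Cof \times \bI$ of the target pushout-product has codomain the object $\Cof \times \bI$ of $\sfrac{\bC}{\Cof}$, so \cite[Lemma 4.6]{struct-lift} descends the approximation along $\Cof \times \bI \to \Cof$ back to $\sfrac{\bC}{\Cof}$, identifying $A \times \tMcU \to A \times \McU$ with the pullback of $\tMcU \times \Cof \to \McU \times \Cof$ and pulling $(\Cof \vee \txtis_\delta)^*\partial\Cof \hookrightarrow \Cof \times \bI$ back to the stated second factor — exactly the descent step already used in the proof of \Cref{lem:cof-int-pushout-product}.

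I expect the genuine work to be bookkeeping rather than mathematics: tracking, across the base changes, which interval variable carries $\txtis_\partial$ versus $\txtis_\delta$, and verifying that the coordinate swap and reparenthesization needed to match $M$ with the object produced by \Cref{cor:cof-int-bd} are compatible with all the comparison maps demanded by \cite[Definition 4.4]{struct-lift}. No idea beyond \Cref{lem:cof-int-pushout-product} should be required.
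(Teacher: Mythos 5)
Your plan is close in spirit to the paper's, but there is a concrete gap hiding in the phrase "up to the symmetry of $\Cof \times \bI \times \bI$ interchanging the two interval coordinates." Unwinding your instantiation of \Cref{cor:cof-int-bd}(3) with $A = \Cof \times \bI$ and $\phi = \Cof \vee \txtis_\delta$: the subobject it produces is the pullback of $\partial\Cof$ along $((\psi, j), i) \mapsto (\psi \vee \txtis_\delta(j)) \vee \txtis_\partial(i)$, i.e.\ $((\Cof \vee \txtis_\delta) \vee \txtis_\partial)^*\partial\Cof$. The lemma's object, read off its iterated-pullback diagram, is the pullback along $(\psi, i, j) \mapsto (\psi \vee \txtis_\partial(i)) \vee \txtis_\delta(j)$, i.e.\ $((\Cof \vee \txtis_\partial) \vee \txtis_\delta)^*\partial\Cof$. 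Swapping the two interval coordinates carries the first to the pullback along $(\psi,i,j) \mapsto (\psi \vee \txtis_\delta(j)) \vee \txtis_\partial(i)$ — it moves the labels $i,j$ around but leaves the $\vee$-nesting with $\txtis_\delta$ innermost, whereas the lemma nests $\txtis_\partial$ innermost. Since the or-structure of \Cref{def:cof-or} imposes no commutativity or associativity on $- \vee -$ as a map $\Cof \times \Cof \to \Cof$, these are genuinely different subobjects, and your argument establishes the approximation claim for the wrong one.

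The paper dodges this precisely by not going through part~3 of the corollary in one shot: it starts from part~1 ($(\Cof \vee \txtis_\delta)^*\partial\Cof$ approximates $(\Cof\times\set{\delta}) \ltimes_\Cof \partial\Cof$), base-changes along $\Cof \vee \txtis_\partial \colon \Cof\times\bI \to \Cof$ — which, as a pullback in $\sfrac{\bC}{\Cof}$, lands directly on the lemma's $((\Cof\vee\txtis_\partial)\vee\txtis_\delta)^*\partial\Cof$ with the correct nesting — then descends and finishes with associativity of approximations of pushout-products (Theorem~4.7 of \cite{struct-lift}). The reassociation/commutation that is swept under the rug in your "union of three subobjects" picture is exactly what Theorem~4.7 is doing there. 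To salvage your route you would still need to invoke that theorem to exchange the two nested approximations, at which point you essentially reconstruct the paper's argument anyway; the one-step invocation of \Cref{cor:cof-int-bd}(3) with your choice of $\phi$ cannot by itself deliver the lemma's object.
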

\begin{proof}
  We have shown in \Cref{cor:cof-int-bd} that
  $(\Cof \vee \txtis_{\delta})^*\partial\Cof \hookrightarrow \Cof \times \bI$
  given by the top row vertical map in the second column in the diagram of the
  statement approximates the pushout-product
  \begin{align*}
   (\Cof \times \set{\delta} \hookrightarrow \Cof \times \bI)
    \ltimes_\Cof
    (\partial\Cof \hookrightarrow \Cof)
  \end{align*}
  relative to $\Cof \times \tMcU \to \Cof \times \McU$ in $\sfrac{\bC}{\Cof}$.
  Thus, further pulling back along
  $\Cof \vee \txtis_\partial \colon \Cof \times \bI \to \Cof$, which is the
  third row horizontal map in the diagram of the statement, we observe by
  \cite[Lemma 4.5]{struct-lift} that
  $(\Cof \vee \txtis_\partial)^*(\Cof \vee \txtis_\delta)^*\partial\Cof \cong
  ((\Cof \vee \txtis_\partial)\vee\txtis_\delta)^*\partial\Cof \hookrightarrow
  \Cof \times \bI \times \bI$, which is the top left vertical map in the diagram
  of the statement we are interested in, structurally approximates the
  pushout-product
  \begin{align*}
   (\Cof \times \set{\delta} \times \bI \hookrightarrow \Cof \times \bI \times \bI)
    \ltimes_{\Cof\times\bI}
    ((\Cof \vee \txtis_\partial)^*\partial\Cof \hookrightarrow \Cof \times \bI)
  \end{align*}
  relative to $\Cof \times \tMcU \times \bI \to \Cof \times \McU \times \bI$ in
  $\sfrac{\bC}{\Cof \times \bI}$.

  Now
  $\Cof \times \set{\delta} \times \bI \hookrightarrow \Cof \times \bI \times
  \bI$ and $\Cof \times \tMcU \times \bI \to \Cof \times \McU \times \bI$ are
  the pullbacks of $\Cof \times \set{\delta} \hookrightarrow \Cof \times \bI$
  and $\Cof \times \tMcU \to \Cof \times \McU$ under
  $\proj \colon \Cof \times \bI \to \Cof$.
  Thus, by \cite[Lemma 4.6]{struct-lift}, the map
  $((\Cof
  \vee \txtis_\partial)\vee\txtis_\delta)^*\partial\Cof \hookrightarrow \Cof
  \times \bI \times \bI$ as a
  map over $\Cof$ structurally approximates the pushout-product
  \begin{align*}
   (\Cof \times \set{\delta} \hookrightarrow \Cof \times \bI)
    \ltimes_{\Cof}
    ((\Cof \vee \txtis_\partial)^*\partial\Cof \hookrightarrow \Cof \times \bI)
  \end{align*}
  relative to $\Cof \times \tMcU \to \Cof \times \McU $ in $\sfrac{\bC}{\Cof}$.
  But then $(\Cof \vee \txtis_\partial)^*\partial\Cof \hookrightarrow \Cof \times \bI$
  structurally approximates the pushout-product
  \begin{align*}
    (\partial\Cof \hookrightarrow \Cof) \ltimes
    (\Cof \times \partial\bI \hookrightarrow \Cof \times \bI)
  \end{align*}
  relative to $\Cof \times \tMcU \to \Cof \times \McU$ in $\sfrac{\bC}{\Cof}$ by
  \Cref{cor:cof-int-bd}.
  So by associativity of approximations of pushout-product as from \cite[Theorem
  4.7]{struct-lift}, the result follows.
\end{proof}

The above result now facilitates the following definition.

\begin{construction}\label{constr:Path-fill}
  Suppose $\partial\Cof \hookrightarrow \Cof$ admits an truth-structure relative to
  $\tMcU \to \McU$ under which $\bI$ has a disjoint cofibrant endpoint
  structure.
  Further assume that $\tMcU \to \McU$ has a $\Path$-type structure.

  We define a map
  \begin{equation*}
    \scriptsize
    \left(
      \begin{tikzcd}[cramped]
        (\Cof \vee \txtis_{\delta})^*\partial\Cof \ar[d,hook] \\ \Cof \times \bI
      \end{tikzcd}
      \fracsquareslash{\Cof}
      \begin{tikzcd}[cramped]
        \Cof \times \tMcU \ar[d] \\ \Cof \times \McU
      \end{tikzcd}
    \right)
    \xrightarrow{\qquad}
    \left(
      \begin{tikzcd}[cramped]
        (\Cof \vee \txtis_{\delta})^*\partial\Cof \ar[d,hook] \\ \Cof \times \bI
      \end{tikzcd}
      \fracsquareslash{\Cof}
      \begin{tikzcd}[cramped]
        & \Cof \times \tMcU \ar[d] \\ \Cof \times (\tMcU \times_\McU \tMcU) \ar[r, "{\Cof\times\Path}"'] & \Cof \times \McU
      \end{tikzcd}
    \right)
  \end{equation*}
  as the composite using the following numbered constructions from \cite{struct-lift}.
  \newsavebox{\ddCofI}
  \begin{lrbox}{\ddCofI}{\scriptsize\begin{tikzcd}[column sep=small, row sep=small, cramped]
        {((\Cof \vee \txtis_\partial) \vee \txtis_{\delta})^*\partial\Cof} \ar[d,hook] \\ {\Cof \times \bI \times \bI}\end{tikzcd}}
  \end{lrbox}
  \newsavebox{\ddCofIRes}
  \begin{lrbox}{\ddCofIRes}{\scriptsize\begin{tikzcd}[column sep=small, row sep=small, cramped]
        {((\Cof \vee \txtis_\partial) \vee \txtis_{\delta})^*\partial\Cof} \ar[d,hook] \\
        {\Cof \times \bI \times \bI} \ar[r] & {\Cof \times \bI}
\end{tikzcd}}
  \end{lrbox}
  \newsavebox{\tCU}
  \begin{lrbox}{\tCU}{\scriptsize\begin{tikzcd}[column sep=small, row sep=small, cramped]
        {\Cof \times \tMcU \times \bI} \ar[d] \\ {\Cof \times \McU \times \bI}\end{tikzcd}}
  \end{lrbox}
  \newsavebox{\pathU}
  \begin{lrbox}{\pathU}{\scriptsize\begin{tikzcd}[cramped, row sep=small, column sep=small, cramped]
        \Cof \times P_\McU^\bI(\tMcU) \ar[d] \\ \Cof \times (\tMcU \times_\McU \tMcU)
      \end{tikzcd}}
  \end{lrbox}
  \newsavebox{\ptU}
  \begin{lrbox}{\ptU}{\scriptsize\begin{tikzcd}[cramped, row sep=small, column sep=small, cramped]
        & \Cof \times \tMcU \ar[d] \\ \Cof \times (\tMcU \times_\McU \tMcU) \ar[r, "{\Cof\times\Path}"'] & \Cof \times \McU
      \end{tikzcd}}
  \end{lrbox}
  \begin{equation*}
    \begin{tikzcd}[column sep=2.6em]
      {\left(\usebox{\dCofI} \fracsquareslash{\Cof} \usebox{\tU} \right)}
      \ar[r, "{\text{\cite[Con. 3.2]{struct-lift}}}"]
      &
      {\left(\usebox{\ddCofI} \fracsquareslash{\Cof \times \bI} \usebox{\tCU} \right)}
      \ar[d, "{\text{\cite[Con. 3.4]{struct-lift}}}"']
      \\
      {\left(\usebox{\ddCofIRes} \fracsquareslash{\Cof} \usebox{\tU} \right)}
      \ar[d, tail reversed, "{\begin{matrix}
        \text{\cite[Cor. 5.11]{struct-lift}} \\[-.5em]
        \text{\Cref{lem:biased-bd-pushout-product-approx}}\end{matrix}}", "\cong"']
      &
      {\left(\usebox{\ddCofI} \fracsquareslash{\Cof} \usebox{\tU} \right)}
      \ar[l,  "{\text{\cite[Con. 2.4]{struct-lift}}}"']
      \\
      {\left(\usebox{\dCofI} \fracsquareslash{\Cof} \usebox{\pathU} \right)}
      \ar[r, "{\text{\cite[Con. 2.3]{struct-lift}}}"', tail reversed, "\cong"]
      &
      {\left(\usebox{\dCofI} \fracsquareslash{\Cof} \usebox{\ptU} \right)}
    \end{tikzcd}
  \end{equation*}

  By \cite[Construction 2.1]{struct-lift}, we also have a direct map
  \begin{equation*}
    \scriptsize
    \left(
      \begin{tikzcd}[cramped]
        (\Cof \vee \txtis_{\delta})^*\partial\Cof \ar[d,hook] \\ \Cof \times \bI
      \end{tikzcd}
      \fracsquareslash{\Cof}
      \begin{tikzcd}[cramped]
        \Cof \times \tMcU \ar[d] \\ \Cof \times \McU
      \end{tikzcd}
    \right)
    \xrightarrow{\text{\cite[Con. 2.1]{struct-lift}}}
    \left(
      \begin{tikzcd}[cramped]
        (\Cof \vee \txtis_{\delta})^*\partial\Cof \ar[d,hook] \\ \Cof \times \bI
      \end{tikzcd}
      \fracsquareslash{\Cof}
      \begin{tikzcd}[cramped]
        & \Cof \times \tMcU \ar[d] \\ \Cof \times (\tMcU \times_\McU \tMcU) \ar[r, "{\Cof\times\Path}"'] & \Cof \times \McU
      \end{tikzcd}
    \right)
  \end{equation*}

  When $\tMcU \to \McU$ has furthermore a $\delta$-biased filling structure
  $\Fill_\delta$ we denote by $\Path(\Fill_\delta)$ and $\Fill_\delta(\Path)$
  the images of $\Fill_\delta$ under the two maps above.
\end{construction}

With the above construction, we are now ready to define the definitional and
propositional versions of computation rules for filling $\Path$-types.

\begin{definition}\label{def:Path-fill-comp}
  Suppose $\partial\Cof \hookrightarrow \Cof$ admits a truth-structure relative
  to $\tMcU \to \McU$ under which $\bI$ has a disjoint cofibrant endpoint
  structure.
  Further assume that $\tMcU \to \McU$ has a $\Path$-type structure and a
  $\delta$-biased filling structure $\Fill_\delta$.

  We say that the $\delta$-biased filling structure $\Fill_\delta$
  \emph{computes definitionally} for this $\Path$-type structure if
  $\Fill_\delta(\Path) = \Fill_\delta$ are precisely equal as from
  \Cref{constr:Path-fill}.

  And a \emph{stable propositional computation structure} for this biased
  $\delta$-filling structure $\Fill_\delta$ with respect to this $\Path$-type
  structure relative to an $\Id$-type structure
  $\Id \colon \tMcU \times_\McU \tMcU \to \McU$, is a witness $H$ that
  \begin{equation*}
    \Fill_\delta(\Path), \Path(\Fill_\delta) \in
    \left(
      \begin{tikzcd}[cramped, row sep=small]
        (\Cof \vee \txtis_{\delta})^*\partial\Cof \ar[d,hook] \\ \Cof \times \bI
      \end{tikzcd}
      \fracsquareslash{\Cof}
      \begin{tikzcd}[cramped, row sep=small]
        & \Cof \times \tMcU \ar[d] \\ \Cof \times (\tMcU \times_\McU \tMcU) \ar[r, "{\Cof\times\Path}"'] & \Cof \times \McU
      \end{tikzcd}
    \right)
  \end{equation*}
  are structurally related, in the sense of \cite[Definition 6.3]{struct-lift},
  by
  $\Cof \times \Id_\McU(\tMcU) \to \Cof \times (\tMcU \times_{\McU} \tMcU)$.
\end{definition}


\section{Homotopy Isomorphism Extension Operations}\label{subsec:hiso-ext-op}
We now axiomatise various forms of the homotopy isomorphism extension operation,
corresponding to the gluing operation of \cite{cchm15}, so that one may derive
univalence.

To begin, we first recall the generic object of homotopy isomorphisms from
\cite[Construction 2.7]{axm-univalence}.

\begin{construction}[{\cite[Construction 2.7]{axm-univalence}}]\label{constr:hiso}
  Suppose that $\tMcU \to \McU$ has an $\Id$-type structure
  $\Id \colon \tMcU \times_\McU \tMcU \to \McU$.
  Then, the object and parallel maps as below is constructed to be
  \begin{equation*}
    \src, \tgt \colon \HIso_\McU^\Id(\tMcU) \rightrightarrows \McU
  \end{equation*}
  representing the presheaf and natural transformations
  \newsavebox{\HRslice}
  \begin{lrbox}{\HRslice}\scriptsize
    \begin{tikzcd}[cramped, column sep=tiny]
      {\Gamma.A} && {\Id_\Gamma(\Gamma.A)} \\
      & {\Gamma.A \times_\Gamma \Gamma.A}
      \arrow["H_s"{description}, from=1-1, to=1-3]
      \arrow["{(fs, \id)}"{description}, from=1-1, to=2-2]
      \arrow["{\ev_\partial}"{description}, from=1-3, to=2-2]
    \end{tikzcd}
  \end{lrbox}
  \newsavebox{\HLslice}
  \begin{lrbox}{\HLslice}\scriptsize
    \begin{tikzcd}[cramped, column sep=tiny]
      {\Gamma.B} && {\Id_\Gamma(\Gamma.B)} \\
      & {\Gamma.B \times_\Gamma \Gamma.B}
      \arrow["{H_r}"{description}, from=1-1, to=1-3]
      \arrow["{(rf, \id)}"{description}, from=1-1, to=2-2]
      \arrow["{\ev_\partial}"{description}, from=1-3, to=2-2]
    \end{tikzcd}
  \end{lrbox}
  \begin{equation*}\scriptsize
    \begin{tikzcd}
      {\displaystyle
        \left(
          \coprod_{\substack{A \colon \Gamma \to \McU \\ B \colon \Gamma \to \McU}}
          \coprod_{\substack{f \in \sfrac{\bC}{\Gamma}(\Gamma.A,\Gamma.B) \\ s,r \in \sfrac{\bC}{\Gamma}(\Gamma.B,\Gamma.A)}}
          \left\{\scriptsize
            \begin{pmatrix}
              H_s \in \sfrac{\bC}{\Gamma}(\Gamma.A, \Id_\Gamma(\Gamma.A)) \\
              H_r \in \sfrac{\bC}{\Gamma}(\Gamma.B, \Id_\Gamma(\Gamma.B)) \\
            \end{pmatrix}
            ~\middle|~
            \usebox{\HRslice} \text{\scriptsize and}
            \usebox{\HLslice}
          \right\}
        \right)_{\Gamma \in \bC}
      }
      \ar[d, shift left=2, "{(A,B,f,s,r,H_s,H_r) \mapsto B}"]
      \ar[d, shift left=-2, "{(A,B,f,s,r,H_s,H_r) \mapsto A}"']
      \\
      {\scriptsize (\bC(\Gamma,\McU))_{\Gamma \in \bC}}
    \end{tikzcd}
  \end{equation*}
  They both also share a common section
  \begin{equation*}
    \trv \colon \McU \hookrightarrow \HIso_\McU^\Id(\tMcU)
  \end{equation*}
  defined representably by selecting the homotopy isomorphisms where the
  underlying map is the identity map equipped with the reflexivity homotopies.
\end{construction}

We also recall axiomatic formulations of univalence given by
\cite{axm-univalence}.
\begin{definition}[{\cite[Definition 3.1]{axm-univalence}}]
  A \emph{lower-half lift} of a map $K \to L$ against a map $E \to B$ for a
  specified lifting is a filler only making the bottom triangle commute.
  \begin{equation*}
    \begin{tikzcd}[cramped]
      K & E \\
      L & B
      \arrow[dashed, from=1-1, to=1-2]
      \arrow[from=1-1, to=2-1]
      \arrow[from=1-2, to=2-2]
      \arrow[""{name=0, anchor=center, inner sep=0}, dotted, from=2-1, to=1-2]
      \arrow[dashed, from=2-1, to=2-2]
      \arrow["{?}"{description}, draw=none, from=1-1, to=0]
    \end{tikzcd}
  \end{equation*}
\end{definition}

\begin{definition}[{\cite[Definition 3.2]{axm-univalence}}]\label{def:axm-univalence}
  Let $\pi_0 \colon \tMcU_0 \to \McU_0$ and $\pi \colon \tMcU \to \McU$ be two
  universal maps such that $\pi_0$ has a pre-$\Id$-type structure
  $\Id_0 \colon \tMcU_0 \times_{\McU_0} \tMcU_0 \to \McU_0$.

  A \emph{pointed} (respectively, \emph{book}) \emph{$\pi$-univalence} structure
  on the (universe, pre-$\Id$-type)-pair $(\pi_0,\Id_0)$ is a $\pi$-fibrancy
  structure on the map $(\src, \tgt)$ along with a choice of (respectively,
  lower-half) lifts of $\trv$ on the left against $\pi$ on the right
  \begin{equation*}
    \begin{tikzcd}
      \McU_0 & \tMcU \\
      {\HIso_{\McU_0}^{\Id_0}(\tMcU_0)} & \McU
      \arrow[from=1-1, to=1-2, dashed]
      \arrow["{\trv}"', hook, from=1-1, to=2-1]
      \arrow[from=1-2, to=2-2, two heads]
      \arrow[""{name=0, anchor=center, inner sep=0}, dotted, from=2-1, to=1-2]
      \arrow[from=2-1, to=2-2, dashed]
      \arrow["{(?)}"{description}, draw=none, from=1-1, to=0]
    \end{tikzcd}
  \end{equation*}
  where $(\src, \tgt)$ and $\trv$ are from the diagonal factorisation in
  \Cref{constr:hiso} via $\HIso_{\McU_0}^{\Id_0}(\tMcU_0)$
  \begin{equation*}
    \begin{tikzcd}[cramped]
      {\McU_0} & {\HIso_{\McU_0}^{\Id_0}(\tMcU_0)} &[3em] {\McU_0 \times \McU_0}
      \arrow["{\trv}", hook, from=1-1, to=1-2]
      \arrow["{(\src, \tgt)}", two heads, from=1-2, to=1-3]
    \end{tikzcd}
  \end{equation*}
  The (universe, pre-$\Id$-type)-pair $(\pi_0,\Id_0)$ is \emph{pointed}
  (respectively \emph{book}) \emph{$\pi$-univalent} when it admits a pointed
  (respectively book) $\pi$-univalence structure.
\end{definition}
\begin{remark}
  Clearly, pointed univalence implies book univalence as in \Cref{def:axm-univalence}.
  It is also shown in \cite[Theorem 3.15]{axm-univalence} that under $\Pi$- and
  $\Sigma$-type structures, the above formulation of book univalence implies the
  usual type-theoretic formulation as from the HoTT book \cite{hottbook}.
\end{remark}

\subsection{Strong Homotopy Isomorphism Extension Operation}
We can now define the strong version of the homotopy isomorphism extension
operation.
\begin{definition}\label{def:strong-hiso-ext-op}
  Assume $\tMcU \to \McU$ has an $\Id$-type structure
  $\Id \colon \tMcU \times_\McU \tMcU \to \McU$.

  A \emph{(strong) $\Id$-homotopy isomorphism extension structure} on
  $\tMcU \to \McU$ is a structured lift as in the sense of \cite[Definitions 1.4
  and 3.1]{struct-lift} in the slice over $\sfrac{\bC}{\Cof}$
  \begin{equation*}
    \HIsoExt \in
    \left(
      \begin{tikzcd}[cramped]
        \partial\Cof \ar[d, hook] \\ \Cof
      \end{tikzcd}
      \fracsquareslash{\Cof}
      \begin{tikzcd}[cramped]
        \Cof \times \HIso^\Id_\McU(\tMcU) \ar[d, "{\Cof \times \tgt}"] \\ \Cof \times \McU
      \end{tikzcd}
    \right)
  \end{equation*}
  where the object $\HIso_\McU^\Id(\tMcU)$ and map
  $\tgt \colon \HIso_\McU^\Id(\tMcU) \to \McU$ are respectively from are from
  \Cref{constr:hiso}.
\end{definition}

Using the homotopy retract argument of \cite{axm-univalence}, we can now show
that the strong homotopy isomorphism extension structure gives rise to pointed
univalence.
To do so, we would like to like to use \cite[Theorem 3.12]{axm-univalence},
which requires the construction of a lifting structure of the cofibrations
against the path space of the generic object of homotopy equivalences.
\begin{construction}\label{constr:hiso-endpt-ext}
  Suppose $\partial\Cof \hookrightarrow \Cof$ admits a truth-structure relative
  to $\tMcU \to \McU$ under which $\bI$ has a disjoint cofibrant endpoint
  structure.
  Further assume that $\tMcU \to \McU$ has a $\Path$-type structure as well as
  an internal universe $\tMcU_0 \to \McU_0$ equipped with an $\Id$-type
  structure $\Id_0 \colon \tMcU_0 \times_{\McU_0} \tMcU_0 \to \McU_0$.

  One forms the fibred path object
  \begin{equation*}
    P_{\McU_0}^\bI(\HIso_{\McU_0}^{\Id_0}(\tMcU_0))
    \xrightarrow{\qquad}
    \HIso_{\McU_0}^{\Id_0}(\tMcU_0) \times_{\McU_0} \HIso_{\McU_0}^{\Id_0}(\tMcU_0)
  \end{equation*}
  with the $\tgt \colon \HIso_{\McU_0}^{\Id_0}(\tMcU_0) \to \McU_0$ map and
  then constructs a map
  \begin{equation*}
    \small
    \left(
      \begin{tikzcd}[cramped]
        \partial\Cof \ar[d, hook] \\ \Cof
      \end{tikzcd}
      \fracsquareslash{\Cof}
      \begin{tikzcd}[cramped]
        \Cof \times \HIso_{\McU_0}^{\Id_0}(\tMcU_0) \ar[d, "{\Cof \times \tgt}"] \\ \Cof \times \McU_0
      \end{tikzcd}
    \right)
    \xrightarrow{\qquad}
    \left(
      \begin{tikzcd}[cramped]
        \partial\Cof \ar[d, hook] \\ \Cof
      \end{tikzcd}
      \fracsquareslash{\Cof}
      \begin{tikzcd}[cramped]
        \Cof \times P_{\McU_0}^\bI(\HIso_{\McU_0}^{\Id_0}(\tMcU_0)) \ar[d, "{\Cof \times \ev_\partial}"] \\
        \Cof \times (\HIso_{\McU_0}^{\Id_0}(\tMcU_0) \times_{\McU_0} \HIso_{\McU_0}^{\Id_0}(\tMcU_0))
      \end{tikzcd}
    \right)
  \end{equation*}
  as the following composite
  \newsavebox{\dCof}
  \begin{lrbox}{\dCof}\scriptsize
    \begin{tikzcd}[cramped, row sep=small, column sep=small]
      \partial\Cof \ar[d, hook] \\ \Cof
    \end{tikzcd}
  \end{lrbox}
  \newsavebox{\cHiso}
  \begin{lrbox}{\cHiso}\scriptsize
    \begin{tikzcd}[cramped, row sep=small, column sep=small]
      \Cof \times \HIso_{\McU_0}^{\Id_0}(\tMcU_0) \ar[d, "{\Cof \times \tgt}"] \\ \Cof \times \McU_0
    \end{tikzcd}
  \end{lrbox}
  \newsavebox{\dICof}
  \begin{lrbox}{\dICof}\scriptsize
    \begin{tikzcd}[cramped, row sep=small, column sep=small]
      (\Cof \vee \txtis_\partial)^*\partial\Cof \ar[d, hook] \\ \Cof \times \bI
    \end{tikzcd}
  \end{lrbox}
  \newsavebox{\dICofRes}
  \begin{lrbox}{\dICofRes}\scriptsize
    \begin{tikzcd}[cramped, row sep=small, column sep=small]
      (\Cof \vee \txtis_\partial)^*\partial\Cof \ar[d, hook] \\ \Cof \times \bI \ar[r] & \Cof
    \end{tikzcd}
  \end{lrbox}
  \newsavebox{\cIHiso}
  \begin{lrbox}{\cIHiso}\scriptsize
    \begin{tikzcd}[cramped, row sep=small, column sep=small]
      (\Cof \times \bI) \times \HIso_{\McU_0}^{\Id_0}(\tMcU_0) \ar[d, "{(\Cof \times \bI) \times \tgt}"] \\ (\Cof \times \bI) \times \McU_0
    \end{tikzcd}
  \end{lrbox}
  \newsavebox{\PHiso}
  \begin{lrbox}{\PHiso}\scriptsize
    \begin{tikzcd}[cramped, row sep=small, column sep=small]
      P_{\Cof \times \McU_0}^{\Cof\times\bI}(\Cof \times \HIso_{\McU_0}^{\Id_0}(\tMcU_0)) \ar[d] \\
      (\Cof \times \HIso_{\McU_0}^{\Id_0}(\tMcU_0)) \times_{\Cof \times \McU_0} (\Cof \times \HIso_{\McU_0}^{\Id_0}(\tMcU_0))
    \end{tikzcd}
  \end{lrbox}
  \newsavebox{\cPHiso}
  \begin{lrbox}{\cPHiso}\scriptsize
    \begin{tikzcd}[cramped, row sep=small, column sep=small]
      \Cof \times P_{\McU_0}^\bI(\HIso_{\McU_0}^{\Id_0}(\tMcU_0)) \ar[d, "{\Cof \times \ev_\partial}"] \\
      \Cof \times (\HIso_{\McU_0}^{\Id_0}(\tMcU_0) \times_{\McU_0} \HIso_{\McU_0}^{\Id_0}(\tMcU_0))
    \end{tikzcd}
  \end{lrbox}
  \begin{equation*}
    \begin{tikzcd}[column sep=3em]
      {\left(
          \usebox{\dCof} \fracsquareslash{\Cof} \usebox{\cHiso}
        \right)}
      \ar[r, "{\text{\cite[Con. 3.2]{struct-lift}}}"]
      &
      {\left(
          \usebox{\dICof} \fracsquareslash{\Cof \times \bI} \usebox{\cIHiso}
        \right)}
      \ar[d, "{\text{\cite[Con. 3.4]{struct-lift}}}"]
      \\
      {\left(
          \usebox{\dICofRes} \fracsquareslash{\Cof} \usebox{\cHiso}
        \right)}
      \ar[d, tail reversed, "{\cong}"', "{\text{\cite[Cor. 5.11]{struct-lift}}}"]
      &
      {\left(
          \usebox{\dICof} \fracsquareslash{\Cof} \usebox{\cHiso}
        \right)}
      \ar[l, "{\text{\cite[Con. 2.4]{struct-lift}}}"']
      \\
      {\left(
          \usebox{\dCof} \fracsquareslash{\Cof} \usebox{\cPHiso}
        \right)}
    \end{tikzcd}
  \end{equation*}
\end{construction}

We are now able to show univalence from the homotopy isomorphism extension
operation.
\begin{theorem}\label{thm:ptd-cubical-univalence}
  Suppose $\partial\Cof \hookrightarrow \Cof$ admits a truth-structure relative
  to $\tMcU \to \McU$ under which $\bI$ has a disjoint cofibrant endpoint
  structure.
  Further assume
  \begin{enumerate}
    \item $\bI$ is equipped with a $\min \colon \bI \times \bI \to \bI$
    structure.
    \item Sections to $\pi$-fibrations are equipped with an $\iota$-cofibration
    structure.
    \item There is a $\mbbo{0}$-biased filling structure on the above data.
    \item The ambient universe $\tMcU \to \McU$ is equipped with $\Path$-,
    $\Sigma$-, $\Pi$-type structures.
    \item The internal universe $\tMcU_0 \to \McU_0$ is equipped with $\Id$-,
    $\Sigma$-, $\Pi$-type structures with the $\Id$-type structure denoted
    $\Id_0 \colon \tMcU_0 \times_{\McU_0} \tMcU_0 \to \McU_0$.
    \item The internal universe $\tMcU_0 \to \McU_0$ is equipped with an
    $\Id_0$-homotopy isomorphism extension structure.
  \end{enumerate}

  Then, the (internal universe, $\Id$-type)-pair $(\pi_0,\Id_0)$ is
  pointed $\pi$-univalent.
\end{theorem}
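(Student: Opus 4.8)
The plan is to deduce the statement from \cite[Theorem~3.12]{axm-univalence}, the homotopy-retract argument for pointed univalence, by assembling its hypotheses from the structures already at hand. First I would check that \Cref{thm:Path-Id} applies: its hypotheses are precisely our truth structure relative to $\tMcU \to \McU$, the disjoint cofibrant endpoint structure on $\bI$, the $\min$-structure (assumption~(1)), the $\Path$- and $\Sigma$-type structures on $\tMcU \to \McU$ (part of assumption~(4)), the cofibrant structure on $\pi$-fibration sections (assumption~(2)), and the $\mbbo{0}$-biased filling structure (assumption~(3)). Hence $\tMcU \to \McU$ acquires an $\Id$-type structure with $\Id = \Path$ and $\refl = \const$. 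Together with assumption~(4) the ambient universe is thus closed under $\Id$-, $\Sigma$- and $\Pi$-types, and by assumption~(5) the internal universe $\tMcU_0 \to \McU_0$ is closed under $\Id_0$-, $\Sigma_0$- and $\Pi_0$-types. Since $\tMcU_0 \to \McU_0$ is $\pi$-fibrant and the ambient universe is closed under composition of $\pi$-fibrations via its $\Sigma$-type structure, the generic object $\HIso_{\McU_0}^{\Id_0}(\tMcU_0)$ and the maps $\src,\tgt,\trv$ of \Cref{constr:hiso} are defined, with $(\src,\tgt)$ a $\pi$-fibration over the $\pi$-fibrant object $\McU_0 \times \McU_0$.

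Second, I would produce the input that \cite[Theorem~3.12]{axm-univalence} genuinely needs beyond the type structures, namely a lifting structure of $\partial\Cof \hookrightarrow \Cof$ in $\sfrac{\bC}{\Cof}$ against the endpoint evaluation of the fibred path object $P_{\McU_0}^\bI(\HIso_{\McU_0}^{\Id_0}(\tMcU_0))$, i.e.\ against $\Cof \times \ev_\partial \colon \Cof \times P_{\McU_0}^\bI(\HIso_{\McU_0}^{\Id_0}(\tMcU_0)) \to \Cof \times (\HIso_{\McU_0}^{\Id_0}(\tMcU_0) \times_{\McU_0} \HIso_{\McU_0}^{\Id_0}(\tMcU_0))$. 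This is exactly the codomain of the map constructed in \Cref{constr:hiso-endpt-ext}, whose hypotheses --- a truth structure relative to $\tMcU \to \McU$, a disjoint cofibrant endpoint structure, a $\Path$-type structure on $\tMcU \to \McU$, and an internal universe with an $\Id_0$-type structure --- are all in force here. Applying that map to the $\Id_0$-homotopy isomorphism extension structure $\HIsoExt$ furnished by assumption~(6) yields the required lifting structure.

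Third, with the $\Id$- and $\Sigma$-type structures from the first step and the lifting structure from the second step, I would invoke \cite[Theorem~3.12]{axm-univalence} to obtain a $\pi$-fibrancy structure on $(\src,\tgt)$ together with a lift of $\trv$ against $\pi$ on the right, that is, a pointed $\pi$-univalence structure on $(\pi_0,\Id_0)$ in the sense of \Cref{def:axm-univalence}; this is the conclusion.

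The hard part is bookkeeping rather than conceptual. One must confirm that the fibred path object and its endpoint evaluation appearing in \cite[Theorem~3.12]{axm-univalence} coincide with the one fed in through \Cref{constr:hiso-endpt-ext} --- this uses \cite[Corollary~5.11]{struct-lift} together with the fact that $\tgt \colon \HIso_{\McU_0}^{\Id_0}(\tMcU_0) \to \McU_0$ is a $\pi$-fibration, so that a fibred path object over $\McU_0$ exists by the $\Path$-type structure --- and that the compatibility between the internal and external $\Id$- and $\Sigma$-type structures needed to make $\HIso_{\McU_0}^{\Id_0}(\tMcU_0)$ suitably $\pi$-fibrant is available, which it is via $\pi$-fibrancy of the internal universe and closure of the ambient universe under $\Sigma$-types. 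Once these identifications are carried out, the statement follows directly from the cited theorem.
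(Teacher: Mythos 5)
Your proposal is correct and follows the paper's argument exactly: apply \Cref{thm:Path-Id} to obtain an $\Id$-type structure on the ambient universe, invoke \cite[Theorem~3.12(2)]{axm-univalence} to reduce the claim to producing a lifting structure of $\trv$ against the endpoint evaluation of $P_{\McU_0}^\bI(\HIso_{\McU_0}^{\Id_0}(\tMcU_0))$, and obtain that lifting structure by pushing the $\Id_0$-homotopy isomorphism extension structure of assumption~(6) through \Cref{constr:hiso-endpt-ext}. The extra bookkeeping you flag about matching the fibred path object is sound but is absorbed into \Cref{constr:hiso-endpt-ext} in the paper, which is why its proof can be so terse.
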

\begin{proof}
  By \Cref{thm:Path-Id}, the $\Path$-type structure on the ambient universe
  $\tMcU \to \McU$ is also an $\Id$-type structure.
  Therefore, by \cite[Theorem 3.12(2)]{axm-univalence}, it suffices to show that
  \begin{equation*}
    \trv \colon \McU_0 \hookrightarrow \HIso_{\McU_0}^{\Id_0}(\tMcU_0)
  \end{equation*}
  lifts on the left against
  \begin{equation*}
    P_{\McU_0}^\bI(\HIso_{\McU_0}^{\Id_0}(\tMcU_0)) \xrightarrow{\ev_\partial}
    {\HIso_{\McU_0}^{\Id_0}(\tMcU_0) \times_{\McU_0} \HIso_{\McU_0}^{\Id_0}(\tMcU_0)}
  \end{equation*}
  This follows from \Cref{constr:hiso-endpt-ext} and the homotopy isomorphism
  extension operation on the internal universe.
\end{proof}

\subsection{Weak Homotopy Isomorphism Extension Operation}
The strong homotopy isomorphism extension operation of
\Cref{def:strong-hiso-ext-op}, which extends all the homotopy data (i.e. not
only the homotopy isomorphism itself but also its two inverses and the
homotopies to identities), was shown in \Cref{thm:ptd-cubical-univalence} to
give pointed univalence of \Cref{def:axm-univalence}, which is something
stronger than book univalence.
By requiring to only extend the underlying map of the homotopy isomorphism
instead of all the homotopy data, we are still able to show book univalence
using \cite[Theorem 3.12(1)]{axm-univalence} via an identical argument used in
\cite[Section 7.2]{cchm15}.

\begin{definition}\label{def:weak-hiso-ext-op}
  Assume $\tMcU \to \McU$ has a pre-$\Id$-type structure
  $\Id \colon \tMcU \times_\McU \tMcU \to \McU$.

  A \emph{weak $\Id$-homotopy isomorphism extension structure} on
  $\tMcU \to \McU$ is an assignment of a lift factoring through the map
  $\HIso_\McU^\Id(\tMcU) \to [\tMcU \times \McU, \McU \times \tMcU]_{\McU \times
    \McU}$ forgetting the homotopy data of a homotopy isomorphism, as in the
  dotted map below, for each $\iota$-cofibration $\partial A \hookrightarrow A$,
  as on the left below, and lifting problem against the internal codomain map
  $\tgt \colon [\tMcU \times \McU, \McU \times \tMcU]_{\McU \times \McU} \to
  \McU$ factoring through the map
  $\HIso_\McU^\Id(\tMcU) \to [\tMcU \times \McU, \McU \times \tMcU]_{\McU \times
    \McU}$, as in the dashed maps below.
  \begin{equation*}
    \begin{tikzcd}[cramped, row sep=small, column sep=small]
      {\partial A} && {\HIso_\McU^\Id(\tMcU)} & {[\tMcU \times \McU, \McU \times \tMcU]_{\McU \times \McU}} \\
      && {\HIso_\McU^\Id(\tMcU)} \\
      A &&& \McU
      \arrow[dashed, from=1-1, to=1-3]
      \arrow[from=1-1, to=3-1]
      \arrow[from=1-3, to=1-4]
      \arrow["\tgt", from=1-4, to=3-4]
      \arrow[from=2-3, to=1-4]
      \arrow[dotted, from=3-1, to=2-3]
      \arrow[dashed, from=3-1, to=3-4]
    \end{tikzcd}
  \end{equation*}
  This choice of lift must also be stable under pullback of the cofibration.
\end{definition}

By taking the pre-$\Id$-structure of a weak homotopy isomorphism extension
structure to be the path type, one can reproduce the proof of univalence used in
\cite[Section 7.2]{cchm15}.
\begin{theorem}\label{thm:weak-cubical-univalence}
  Suppose $\partial\Cof \hookrightarrow \Cof$ admits a truth-structure relative
  to $\tMcU \to \McU$ under which $\bI$ has a disjoint cofibrant endpoint structure.
  Further assume
  \begin{enumerate}
    \item $\bI$ is equipped with a $\min$-structure.
    \item The above data is equipped with a $\mbbo{0}$-biased filling structure.
    \item The ambient universe $\tMcU \to \McU$ is equipped with
    $\Path$-,$\Sigma$,-$\Pi$-type structures respectively denoted
    $\Path,\Sigma,\Pi$.
    \item The internal universe $\tMcU_0 \to \McU_0$ is closed under
    $\Path$-,$\Sigma$,-$\Path$-type structures respectively denoted
    $\Path_0$,$\Sigma_0$,$\Pi_0$.
    \item The internal universe $\tMcU_0 \to \McU_0$ is equipped with a weak
    $\Path$-homotopy isomorphism extension structure.
  \end{enumerate}
  Then, the (internal universe, pre-$\Id$-type)-pair $(\pi_0,\Path_0)$ is book
  $\pi$-univalent.
\end{theorem}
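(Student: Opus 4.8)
The plan is to mirror the proof of \Cref{thm:ptd-cubical-univalence}, but to invoke the book version \cite[Theorem 3.12(1)]{axm-univalence} in place of the pointed version \cite[Theorem 3.12(2)]{axm-univalence}, and to use the weak homotopy isomorphism extension structure of \Cref{def:weak-hiso-ext-op} in place of the strong one. The first observation is that a $\Path$-type structure is in particular a pre-$\Id$-type structure in the sense of \Cref{def:pre-id-type}: taking $\Id := \Path_0$ and $\ceil{\refl} := \via_0 \cdot \const$, which lies over the diagonal since $\ev_\partial \cdot \const = \Delta$, equips the internal universe with the pre-$\Id$-type structure $\Path_0$ appearing in the conclusion, and likewise $\Path$ is a pre-$\Id$-type structure on the ambient universe. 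This is all that is needed for this part: whereas the pointed case went through \Cref{thm:Path-Id} to upgrade $\Path$ to a full $\Id$-type structure, book univalence in the sense of \Cref{def:axm-univalence} only refers to a pre-$\Id$-type structure, so no filling-derived $\Id$-elimination is required here. By \Cref{constr:hiso} we then obtain $\HIso_{\McU_0}^{\Path_0}(\tMcU_0)$, the parallel maps $\src,\tgt\colon\HIso_{\McU_0}^{\Path_0}(\tMcU_0)\rightrightarrows\McU_0$, and the section $\trv$.

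The second step is to record that $(\src, \tgt) \colon \HIso_{\McU_0}^{\Path_0}(\tMcU_0) \to \McU_0 \times \McU_0$ carries a $\pi$-fibrancy structure. Since this map is assembled over $\McU_0 \times \McU_0$ by iterated $\Sigma$ out of internal hom-objects of $\pi_0$-fibrations and out of the path types $\Path_0$, all of which are $\pi_0$-fibrant by the $\Sigma$-, $\Pi$- and $\Path$-type structures on the internal universe, and since every $\pi_0$-fibration is in turn $\pi$-fibrant (the internal universe being $\pi$-fibrant via $(\El,\tEl)$), the whole of $(\src,\tgt)$ is $\pi$-fibrant as demanded by \Cref{def:axm-univalence}.

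Given the above, \cite[Theorem 3.12(1)]{axm-univalence} reduces book $\pi$-univalence of $(\pi_0, \Path_0)$ to the production of a stable lower-half lift of $\trv$ against $\pi$ that factors through the forgetful map $\HIso_{\McU_0}^{\Path_0}(\tMcU_0) \to [\tMcU_0 \times \McU_0, \McU_0 \times \tMcU_0]_{\McU_0 \times \McU_0}$. The last step is then to exhibit this lift, following \cite[Section 7.2]{cchm15}: the weak $\Path$-homotopy isomorphism extension structure of hypothesis~(5) supplies, for each $\iota$-cofibration and stably, the required extension of the bare map underlying a homotopy isomorphism, while the $\mbbo{0}$-biased filling structure together with the $\min$-structure reconstructs the missing homotopy data propositionally and carries out the compositions, exactly as in the CCHM gluing argument; decoding along $\El$ where necessary converts the $\McU_0$-valued lifting data into the lift against $\pi$.

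I expect the main obstacle to be this last step. The weak extension structure only extends the underlying map of a homotopy isomorphism, not its inverses and coherence homotopies, so one must rebuild those using the filling operation and then verify that the assignment so obtained satisfies the precise stability conditions required by \cite[Theorem 3.12(1)]{axm-univalence} --- this is where the gluing operation of \cite{cchm15} and its computation rules are effectively re-derived in the present formalism. The bookkeeping relating the $[\tMcU_0 \times \McU_0, \McU_0 \times \tMcU_0]$-level data to the $\HIso$-level data, and the check that the reconstructed lift is natural in the cofibration, account for most of the remaining work; by contrast the $\pi$-fibrancy of $(\src,\tgt)$ is routine.
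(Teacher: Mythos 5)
Your proposal lands at the right place and invokes the right machinery, but your route to the conclusion differs in one structural respect from the paper's. The paper's own argument is short and goes through \emph{contractibility}: it observes that the weak $\Path$-homotopy isomorphism extension structure furnishes \cite[Theorem 9]{cchm15}, then re-runs \cite[Corollary 10]{cchm15} to show that $\tgt \colon \HIso_{\McU_0}^{\Path_0}(\tMcU_0) \twoheadrightarrow \McU_0$ is contractible (a trivial fibration), and finally passes from contractibility plus transport to book univalence by citing \cite[Corollary 11]{cchm15} or \cite[Proposition 3.11(1)]{axm-univalence}. You instead try to assemble the lower-half lift of $\trv$ against $\pi$ directly, and cite \cite[Theorem 3.12(1)]{axm-univalence} as the reduction. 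The contractibility route is cleaner because it quarantines the hard work of rebuilding homotopy data in a single reusable lemma (CCHM Corollary 10) before it ever touches the universe $\pi$; your direct construction has to carry the $\HIso$-to-$[\tMcU_0 \times \McU_0, \McU_0 \times \tMcU_0]$ bookkeeping and the filling-based reconstruction of the coherences all the way into the lifting problem against $\pi$, which is exactly the part you flag as ``the main obstacle.'' The two are ultimately the same CCHM gluing argument, but the paper's ordering is more economical.

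Two small points beyond that. First, the preparatory observations you record (that $\Path_0$ is already a pre-$\Id$-type structure without needing \Cref{thm:Path-Id}, and that $(\src,\tgt)$ carries a $\pi$-fibrancy structure from $\Sigma$, $\Pi$, $\Path_0$ closure of the internal universe) are correct and in fact implicitly needed by the paper's proof too, even though it does not spell them out; it is good that you make them explicit. Second, the paper ends with ``\cite[Proposition 3.11(1)]{axm-univalence}'' rather than your ``\cite[Theorem 3.12(1)]{axm-univalence}''; this is not necessarily wrong on your part (3.12(1) may well be the book-univalence counterpart of the 3.12(2) used in \Cref{thm:ptd-cubical-univalence}), but the discrepancy is worth knowing, since 3.11(1) is stated in terms of a contractibility hypothesis, matching the paper's route, whereas 3.12(1), if it mirrors 3.12(2), would reduce to a lifting hypothesis, matching yours.
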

\begin{proof}
  The weak $\Path$-homotopy isomorphism extension structure assumption provides
  \cite[Theorem 9]{cchm15}.
  Thus, the proof of \cite[Corollary 10]{cchm15} that
  $\tgt \colon \HIso_{\McU_0}^{\Path_0}(\tMcU_0) \twoheadrightarrow \McU_0$ is
  contractible can be repeated.
  Contractibility, along with transport, then results in book univalence by
  using either \cite[Corollary 11]{cchm15} or \cite[Proposition
  3.11(1)]{axm-univalence}.
\end{proof}

\subsection{Gluing Operation}
In the \cite{cchm15}, the weak $\Path$-homotopy isomorphism extension structure
(\Cref{def:weak-hiso-ext-op}), which leads to univalence as in \cite[Corollary
11]{cchm15}, was obtained as a consequence of the \emph{Glue-unglue} operation,
which we now recall by presenting it in the framework of universe categories.
In short, it is the type-theoretic counterpart of the upcoming
\Cref{constr:glue-op}, which was also used in \cite[Theorem 3.4.1]{kl21} of the
simplicial model of univalence.

\begin{construction}\label{constr:glue-op}
  Let there be a mono $i \colon \partial B \hookrightarrow B$ along with an
  object $\underline{E} \to B \in \sfrac{\bC}{B}$ whose pullback under
  $i$ is $\partial\underline{E} \to \partial B$.
  Functoriality of the pushforward
  $i_* \colon \sfrac{\bC}{\partial B} \to \sfrac{\bC}{B}$ restricts to a map
  $i_* \colon \sfrac{\bC}{\partial\underline{E}} \to
  \sfrac{\bC}{i_*(\partial\underline{E})}$.
  Denote by
  \begin{equation*}
    \unglue_{i,\underline{E}} \colon
   \sfrac{\bC}{\partial\underline{E}} \to \sfrac{\bC}{\underline{E}}
  \end{equation*}
  the composite
  \begin{equation*}
    \sfrac{\bC}{\partial\underline{E}} \xrightarrow{i_*}
    \sfrac{\bC}{i_*(\partial\underline{E})} \xrightarrow{\eta^*}
    \sfrac{\bC}{\underline{E}}
  \end{equation*}
  where
  $i_* \colon \sfrac{\bC}{\partial\underline{E}} \to
  \sfrac{\bC}{i_*(\partial\underline{E})}$ is the restricted pushforward and
  $\eta \colon \underline{E} \to i_*(\partial\underline{E}) \cong
  i_*i^*\underline{E}$ is the unit of the adjunction $i^* \dashv i_*$.
  We also write $\Glue_{i,\ul{E}}$ for the domain of
  $\unglue_{i,\ul{E}}$.

  In other words given $f \colon \partial E \to \partial\ul{E}$, the map
  $\unglue_{i,\ul{E}}(f) \colon \Glue_{i,\ul{E}}(f) \to \underline{E}$
  is constructed by the following diagram.
  \begin{equation*}
    \begin{tikzcd}[cramped]
      {\partial E} &&& {\Glue_{i,\underline{E}}(f)} & {i_*(\partial E)} \\
      && {\partial \underline E} && {\underline{E}} & {i_*(\partial\underline{E})} \\
      & {\partial B} && B
      \arrow[from=1-1, to=2-3]
      \arrow[from=1-1, to=3-2]
      \arrow[from=1-4, to=1-5]
      \arrow[dashed, from=1-4, to=3-4]
      \arrow[from=1-5, to=2-6]
      \arrow[from=1-5, to=3-4]
      \arrow[from=2-3, to=3-2]
      \arrow["\eta"{description}, from=2-5, to=2-6]
      \arrow[from=2-5, to=3-4]
      \arrow[from=2-6, to=3-4]
      \arrow["i"', hook, from=3-2, to=3-4]
      \arrow[crossing over, dashed, from=1-4, to=2-5]
      \arrow[crossing over, hook, from=2-3, to=2-5]
      \arrow["\lrcorner"{anchor=center, pos=0.15, scale=1.5, rotate=45}, draw=none, from=1-4, to=2-6]
      \arrow["\lrcorner"{anchor=center, pos=0.15, scale=1.5, rotate=0}, draw=none, from=2-3, to=3-4]
    \end{tikzcd}
  \end{equation*}
\end{construction}

We then recall a few standard properties about the above $\unglue$ construction.

\begin{lemma}\label{lem:mono-pushfoward-counit-iso}
  Take $\bC$ to be any category.
  Let $i \colon \partial B \to B \in \bC$ be a monomorphism where the
  pullback-pushforward adjunction $i^* \dashv i_*$ along it exists.
  Then, the counit $\ev\relax \colon i^*i_* \to \id_{\bC/\partial B}$ is an
  isomorphism and the pullback of the unit
  $\eta\relax \colon \id_{\bC/B} \to i_*i^*$ along $i$ is an isomorphism.
\end{lemma}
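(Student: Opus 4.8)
The plan is to exploit the fact that $i$ being a monomorphism forces the adjoint triple $i_! \dashv i^* \dashv i_*$ to interact in the familiar way, namely that $i^*$ inverts both the unit of $i_! \dashv i^*$ and the counit of $i^* \dashv i_*$.

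First I would record the elementary observation that, since $i$ is a monomorphism, the postcomposition functor $i_! \colon \sfrac{\bC}{\partial B} \to \sfrac{\bC}{B}$ is fully faithful; equivalently, the unit $\eta' \colon \id_{\sfrac{\bC}{\partial B}} \to i^*i_!$ of the adjunction $i_! \dashv i^*$ is a natural isomorphism. Concretely, a morphism over $B$ between two objects whose structure maps factor through $i$ is automatically a morphism over $\partial B$, because $i$ is left-cancellable; phrased on representables, this is exactly the statement that $\eta'$ is invertible. (Note $i^*$ exists since $i_*$ does, so $i_! \dashv i^*$ is available.)

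For the first claim, I would then compute, for $Z \in \sfrac{\bC}{\partial B}$, the Hom-sets $\sfrac{\bC}{\partial B}(Z, i^*i_*X)$ by transposing twice and applying the previous step:
\[
\sfrac{\bC}{\partial B}(Z, i^*i_*X) \;\cong\; \sfrac{\bC}{B}(i_!Z, i_*X) \;\cong\; \sfrac{\bC}{\partial B}(i^*i_!Z, X) \;\cong\; \sfrac{\bC}{\partial B}(Z, X),
\]
naturally in $Z$. Tracing the transposes shows that the composite bijection is precisely postcomposition with the counit $\ev_X \colon i^*i_*X \to X$ of $i^* \dashv i_*$; since postcomposition with $\ev_X$ is then a bijection for every $Z$, a two-line argument (instantiating $Z = X$ to produce a section, and $Z = i^*i_*X$ to see it is also a retraction) shows $\ev_X$ is an isomorphism. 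Hence $\ev \colon i^*i_* \to \id_{\sfrac{\bC}{\partial B}}$ is a natural isomorphism. Finally, for the second claim, note that the pullback of $\eta \colon \id_{\sfrac{\bC}{B}} \to i_*i^*$ along $i$ is exactly the whiskering $i^* \ast \eta \colon i^* \to i^*i_*i^*$. The triangle identity for $i^* \dashv i_*$ gives $(\ev \ast i^*) \circ (i^* \ast \eta) = \id_{i^*}$, and by the first claim $\ev \ast i^*$ is a natural isomorphism, so $i^* \ast \eta$ is one as well.

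The only step requiring any care is the identification of the threefold Hom-set bijection with postcomposition by the counit — the bookkeeping of matching the adjunction transposes against $\eta'$ and $\ev$ — and I expect this (mild) point to be the main obstacle. It can alternatively be sidestepped entirely by invoking the standard fact that in an adjoint triple the two outer functors are simultaneously fully faithful, applied to $i_! \dashv i^* \dashv i_*$ together with the observation that $i_!$ is fully faithful.
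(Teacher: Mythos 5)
Your proof is correct and follows essentially the same route as the paper's: both arguments observe that $i$ being a monomorphism makes $i_! \colon \sfrac{\bC}{\partial B} \to \sfrac{\bC}{B}$ fully faithful, chain Hom-set bijections through the adjoint triple $i_! \dashv i^* \dashv i_*$ to deduce that the counit $\ev \colon i^*i_* \to \id$ is an isomorphism, and then use a triangle identity to transfer the conclusion to the whiskered unit $i^*\ast\eta$. The only cosmetic difference is that you identify the composite bijection explicitly with postcomposition by $\ev_X$ and argue from there, whereas the paper instantiates to conclude $\partial Y \cong i^*i_*(\partial Y)$ by representability; the underlying computation is the same.
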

\begin{proof}
  It suffices to show that the counit
  $\ev\relax \colon i^*i_* \to \id_{\bC/\partial B}$ is an isomorphism as this
  implies the pullback of the unit $\eta\relax \colon \id_{\bC/B} \to i_*i^*$
  is an isomorphism by the triangle laws.
  To show that $\ev\relax \colon i^*i_* \to \id_{\bC/\partial B}$ is an
  isomorphism it suffices to show that the right adjoint $i_*$ is fully
  faithful.

  To do so, we first observe that because $i$ is a mono, the postcomposition
  left adjoint $i_!$ to $i^*$ is fully faithful, so this is the same as saying
  the unit $i^*i_! \to \id_{\bC/\partial B}$ is an isomorphism.
  Hence, for each $\partial Y \to \partial B \in \sfrac{\bC}{\partial B}$, one
  has the following series of isomorphisms, naturally in
  $\partial X \to \partial B \in \sfrac{\bC}{\partial B}$, by applying the
  transpose $i_! \dashv i^* \dashv i_*$ sequentially:
  $\sfrac{\bC}{\partial B}(\partial X, \partial Y) \cong
  \sfrac{\bC}{\partial B}(i^*i_!(\partial X), \partial Y) \cong
  \sfrac{\bC}{\partial B}(\partial X, i^*i_*(\partial Y))$.
  By representability, this shows $\partial Y \cong i^*i_*(\partial Y)$.
  Thus, $i_*$ is fully faithful because for any
  $\partial Y, \partial Z \rightrightarrows \partial B \in \sfrac{\bC}{\partial
    A}$, one has
  $\sfrac{\bC}{\partial B}(\partial Y, \partial Z) \cong \sfrac{\bC}{\partial
    A}(i^*i_*(\partial Y), \partial Z) \cong \sfrac{\bC}{A}(i_*(\partial Y),
  i_*(\partial Z))$.
\end{proof}

\begin{lemma}\label{lem:ext-retract}
  When $i \colon \partial B \hookrightarrow B$ is a mono and one has
  $\underline{E} \to B \in \sfrac{\bC}{B}$ along with
  $\partial\underline{E} \to \partial B \in \sfrac{\bC}{B}$ as in
  \Cref{constr:glue-op}, the functor
  $\unglue_{i,\underline{E}} \colon \sfrac{\bC}{\partial\underline{E}} \to
  \sfrac{\bC}{ \underline{E}}$ is a retraction of the pullback functor
  $\sfrac{\bC}{B} \to \sfrac{\bC}{\partial B}$ restricted to
  $i^* \colon \sfrac{\bC}{\underline{E}} \to \sfrac{\bC}{\partial
    \underline{E}}$.
  That is, for
  $f \colon \partial E \to \partial\underline{E} \in
  \sfrac{\bC}{\partial\underline{E}}$, one has
  $i^*\unglue_{i,\underline{E}}(f) \cong f \in
  \sfrac{\bC}{\partial\underline{E}}$.
\end{lemma}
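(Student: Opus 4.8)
The plan is to unwind the definition of $\unglue_{i,\underline{E}}$ from \Cref{constr:glue-op} and reduce the claim to the two isomorphism statements supplied by \Cref{lem:mono-pushfoward-counit-iso}: that the counit $\ev \colon i^*i_* \to \id_{\sfrac{\bC}{\partial B}}$ is an isomorphism, and that the unit $\eta \colon \id_{\sfrac{\bC}{B}} \to i_*i^*$ becomes an isomorphism after applying $i^*$. Throughout I write $j \colon \partial\underline{E} \hookrightarrow \underline{E}$ for the pullback of $i$ along $\underline{E} \to B$; since $\partial\underline{E}$ is this pullback, $j$ is a mono and the functor $i^* \colon \sfrac{\bC}{\underline{E}} \to \sfrac{\bC}{\partial\underline{E}}$ appearing in the statement is base change along $j$.

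First I would rewrite the left-hand side. For $f \colon \partial E \to \partial\underline{E}$, by \Cref{constr:glue-op} the map $\unglue_{i,\underline{E}}(f) \colon \Glue_{i,\underline{E}}(f) \to \underline{E}$ is $\eta^* i_*(f)$, i.e. the base change of $i_*(f) \colon i_*(\partial E) \to i_*(\partial\underline{E})$ along $\eta \colon \underline{E} \to i_*i^*\underline{E} \cong i_*(\partial\underline{E})$. Hence $i^*\unglue_{i,\underline{E}}(f)$ is the base change of $i_*(f)$ along the composite $\partial\underline{E} \xrightarrow{j} \underline{E} \xrightarrow{\eta} i_*(\partial\underline{E})$.

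Next I would analyze that composite. Pulling back $\eta \colon \underline{E} \to i_*(\partial\underline{E})$ along $i$ produces a commuting square whose left vertical edge is $j$, whose top edge is $i^*\eta \colon i^*\underline{E} \to i^*i_*(\partial\underline{E})$, and whose right vertical edge is the canonical map $i^*i_*(\partial\underline{E}) \to i_*(\partial\underline{E})$ obtained as the pullback of $i$ along $i_*(\partial\underline{E}) \to B$; by \Cref{lem:mono-pushfoward-counit-iso} the map $i^*\eta$ is an isomorphism. Thus $\eta \circ j$ equals this canonical map precomposed with the iso $i^*\eta$, so base change along $\eta \circ j$ agrees, up to isomorphism, with base change along the canonical map $i^*i_*(\partial\underline{E}) \to i_*(\partial\underline{E})$. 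Pulling back $i_*(f)$ along the latter is, by a routine pasting of pullback squares, exactly $i^*$ applied to $i_*(f)$, i.e. the object $i^*i_*(f) \colon i^*i_*(\partial E) \to i^*i_*(\partial\underline{E})$. Finally, naturality of the counit $\ev \colon i^*i_* \to \id$ at $f$ identifies $i^*i_*(f)$, via the isomorphism $\ev_{\partial\underline{E}} \colon i^*i_*(\partial\underline{E}) \xrightarrow{\sim} \partial\underline{E}$ of \Cref{lem:mono-pushfoward-counit-iso}, with $f \colon \partial E \to \partial\underline{E}$ itself, which yields $i^*\unglue_{i,\underline{E}}(f) \cong f$ in $\sfrac{\bC}{\partial\underline{E}}$.

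The main point to watch — rather than a genuine obstacle — is the bookkeeping of which slice each object and each base-change functor lives in (over $B$, over $\partial B$, over $\underline{E}$, over $\partial\underline{E}$, and over $i_*(\partial\underline{E})$), together with checking that each intermediate isomorphism respects the appropriate structure map, so that the concluding isomorphism is genuinely an isomorphism in $\sfrac{\bC}{\partial\underline{E}}$; all of this is a pasting of pullback squares combined with the two isomorphisms of \Cref{lem:mono-pushfoward-counit-iso}.
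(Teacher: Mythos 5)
Your proof is correct and takes essentially the same route as the paper's: both unwind $\unglue_{i,\underline{E}}(f)$ as $\eta^*i_*(f)$, pull back along $i$ to replace $\eta$ by the isomorphism $i^*\eta$, and then invoke the counit isomorphism from \Cref{lem:mono-pushfoward-counit-iso} to conclude $i^*i_*f \cong f$. The paper phrases this as pulling back the defining pullback square for $\Glue_{i,\underline{E}}(f)$ along $i$ in one step, whereas you factor the composite $\eta\circ j$ explicitly and then chain the base changes — a slightly more verbose but equivalent bookkeeping of the same argument.
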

\begin{proof}
  Fix an object $f \colon \partial E \to \partial\underline{E}$.
  Then, by \Cref{lem:mono-pushfoward-counit-iso}, pulling back along $i$ the
  square over $B$ whose bottom edge is the unit $\eta$ of the adjunction
  $i^* \dashv i_*$ yields a square over $\partial B$ whose bottom edge is an
  isomorphism as follows.
  \begin{equation*}
    \begin{tikzcd}[cramped]
      {i^*\Glue_{i,\underline{E}}(f)} && {i^*i_*(\partial E)} && {\Glue_{i,\ul{E}}(f)} & {i_*(\partial E)} \\
      && {\partial \underline E = i^*\underline{E}} & {i^*i_*(\partial\underline{E})} && {\underline{E}} & {i_*(\partial\underline{E})} \\
      & {\partial B} &&& B
      \arrow["\cong"{description}, dashed, tail reversed, from=1-1, to=1-3]
      \arrow[dashed, from=1-1, to=3-2]
      \arrow[from=1-3, to=2-4]
      \arrow[from=1-3, to=3-2]
      \arrow[from=1-5, to=1-6]
      \arrow[dashed, from=1-5, to=3-5]
      \arrow[from=1-6, to=2-7]
      \arrow[from=1-6, to=3-5]
      \arrow["{i^*\eta}", tail reversed, from=2-3, to=2-4]
      \arrow[from=2-3, to=3-2]
      \arrow[from=2-4, to=3-2]
      \arrow["\eta"{description}, from=2-6, to=2-7]
      \arrow[from=2-6, to=3-5]
      \arrow[from=2-7, to=3-5]
      \arrow["i"', hook, from=3-2, to=3-5]
      \arrow[crossing over, dashed, from=1-1, to=2-3]
      \arrow[crossing over, dashed, from=1-5, to=2-6]
      \arrow["\lrcorner"{anchor=center, pos=0.15, scale=1.5, rotate=45}, draw=none, from=1-1, to=2-4]
      \arrow["\lrcorner"{anchor=center, pos=0.15, scale=1.5, rotate=45}, draw=none, from=1-5, to=2-7]
    \end{tikzcd}
  \end{equation*}
  Thus, $i^*\unglue_{i,\underline{E}}(f) \cong i^*i_*f$ and
  $i^*i_*(\partial E) \cong \partial E$ by
  \Cref{lem:mono-pushfoward-counit-iso} again.
\end{proof}

In \cite{cchm15}, the gluing operation is only defined for those
$f \colon \partial E \to \partial\ul{E}$ that are $\Path$-homotopy isomorphisms.
Therefore, in order to axiomatise the gluing operation in the framework of
universe categories, we must produce the generic extension problem of
\Cref{constr:glue-op} where the map $f \colon \partial E \to \partial\ul{E}$ is
additionally a $\Path$-homotopy isomorphism.

We do so in several steps, first by producing the generic $\Id$-homotopy
isomorphism between two fibrant objects in a slice.
\begin{construction}\label{constr:gen-path-htpy}
  Let $\Id \colon \tMcU \times_\McU \tMcU \to \McU$ be a pre-$\Id$-type
  structure on a universal map $\pi \colon \tMcU \to \McU$.
  Then, by \Cref{constr:hiso}, one obtains the maps
  \begin{equation*}
    \src, \tgt \colon \HIso_\McU^\Id(\tMcU) \rightrightarrows \McU
  \end{equation*}
  Taking the pullback of $\tMcU \to \McU$ along $\src,\tgt$ then yields the
  $\pi$-fibrant objects
  $\src^*\tMcU, \tgt^*\tMcU \rightrightarrows \HIso_\McU^\Id(\tMcU)$.
  One constructs the following horizontal map $\GenHIso_\McU^\Id(\tMcU)$
  \begin{equation*}
    \begin{tikzcd}[cramped, column sep=small, row sep=small]
      {\src^*\tMcU} && {\tgt^*\tMcU} \\
      & {\HIso_\McU^\Id(\tMcU)}
      \arrow["{\GenHIso_\McU^\Id(\tMcU)}", from=1-1, to=1-3]
      \arrow[from=1-1, to=2-2]
      \arrow[from=1-3, to=2-2]
    \end{tikzcd}
  \end{equation*}
  as the unique map whose image under the Yoneda embedding yields the following
  map of presheaves as below that post-composes the map
  $f \colon \Gamma.A \to \Gamma.B$ with the section
  $a \colon \Gamma \to \Gamma.A$
  \begin{equation*}
    \begin{tikzcd}
      {\displaystyle
        \left(
          \coprod_{\substack{A \colon \Gamma \to \McU \\ B \colon \Gamma \to \McU}}
          \coprod_{a \in \sfrac{\bC}{\Gamma}(\Gamma,\Gamma.A)}
          \coprod_{\substack{f \in \sfrac{\bC}{\Gamma}(\Gamma.A,\Gamma.B) \\ s,r \in \sfrac{\bC}{\Gamma}(\Gamma.B,\Gamma.A)}}
          \left\{\scriptsize
            \begin{pmatrix}
              H_s \in \sfrac{\bC}{\Gamma}(\Gamma.A, \Id_\Gamma(\Gamma.A)) \\
              H_r \in \sfrac{\bC}{\Gamma}(\Gamma.B, \Id_\Gamma(\Gamma.B)) \\
            \end{pmatrix}
            ~\middle|~
            H_s \colon fs \simeq \id
            \text{ and }
            H_r \colon rf \simeq \id
          \right\}
        \right)_{\Gamma \in \bC}}
      \ar[d, "{(A,B,a,f,s,r,H_s,H_r) \mapsto (A,B,fa,f,s,r,H_s,H_r)}"]
      \\
      {\displaystyle
        \left(
          \coprod_{\substack{A \colon \Gamma \to \McU \\ B \colon \Gamma \to \McU}}
          \coprod_{b \in \sfrac{\bC}{\Gamma}(\Gamma,\Gamma.B)}
          \coprod_{\substack{f \in \sfrac{\bC}{\Gamma}(\Gamma.A,\Gamma.B) \\ s,r \in \sfrac{\bC}{\Gamma}(\Gamma.B,\Gamma.A)}}
          \left\{\scriptsize
            \begin{pmatrix}
              H_s \in \sfrac{\bC}{\Gamma}(\Gamma.A, \Id_\Gamma(\Gamma.A)) \\
              H_r \in \sfrac{\bC}{\Gamma}(\Gamma.B, \Id_\Gamma(\Gamma.B)) \\
            \end{pmatrix}
            ~\middle|~
            H_s \colon fs \simeq \id
            \text{ and }
            H_r \colon rf \simeq \id
          \right\}
        \right)_{\Gamma \in \bC}}
    \end{tikzcd}
  \end{equation*}
\end{construction}

One observes the following universal property of \Cref{constr:gen-path-htpy}.
\begin{lemma}\label{lem:gen-hiso}
  Let $\Id \colon \tMcU \times_\McU \tMcU \to \McU$ be a pre-$\Id$-type
  structure on a universal map $\pi \colon \tMcU \to \McU$.
  In the context of \Cref{constr:gen-path-htpy},
  \begin{enumerate}
    \item The map $\GenHIso_\McU^\Id(\tMcU) \colon \src^*\tMcU \to \tgt^*\tMcU$
    is an $\Id$-homotopy isomorphism
    \item Any other $\Id$-homotopy isomorphism
    $f \colon E_1 \to E_2 \in \sfrac{\bC}{B}$ between a pair of $\pi$-fibrant
    objects $E_1, E_2 \rightrightarrows B$ over an object $B$ occurs as a
    pullback of $\GenHIso_\McU^\Id(\tMcU) \colon \src^*\tMcU \to \tgt^*\tMcU$
    along a map $B \to \HIso_\McU^\Id(\tMcU)$.
    \begin{equation*}
      \begin{tikzcd}[cramped, row sep=small, column sep=small]
        {E_1} && {\src^*\tMcU} \\
        & {E_2} && {\tgt^*\tMcU} \\
        B && {\HIso_\McU^\Id(\tMcU)}
        \arrow["f"', from=1-1, to=2-2]
        \arrow[two heads, from=1-1, to=3-1]
        \arrow["{\GenHIso_\McU^\Id(\tMcU)}", from=1-3, to=2-4]
        \arrow[two heads, from=1-3, to=3-3]
        \arrow[two heads, from=2-2, to=3-1]
        \arrow[two heads, from=2-4, to=3-3]
        \arrow[dashed, from=1-1, to=1-3, crossing over]
        \arrow[dashed, from=2-2, to=2-4, crossing over]
        \arrow[dashed, from=3-1, to=3-3, crossing over]
        \arrow["\lrcorner"{anchor=center, pos=0.15, scale=1.5}, draw=none, from=1-1, to=3-3]
        \arrow["\lrcorner"{anchor=center, pos=0.15, scale=1.5}, draw=none, from=2-2, to=3-3]
      \end{tikzcd}
    \end{equation*}
  \end{enumerate}
\end{lemma}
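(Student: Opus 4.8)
The plan is to reduce both parts of \Cref{lem:gen-hiso} to unwinding the representable description of $\GenHIso_\McU^\Id(\tMcU)$ from \Cref{constr:gen-path-htpy}, together with the fact that $\src^*\tMcU$ and $\tgt^*\tMcU$ are, by construction, the selected pullbacks of $\pi$ along $\src$ and $\tgt$ and hence coincide canonically with the context extensions $\HIso_\McU^\Id(\tMcU).\src$ and $\HIso_\McU^\Id(\tMcU).\tgt$.

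For part (1) I would feed the identity map on $\HIso_\McU^\Id(\tMcU)$ through the representable presheaf of \Cref{constr:hiso}: it names a tuple $(A,B,f,s,r,H_s,H_r)$ over the context $\Gamma = \HIso_\McU^\Id(\tMcU)$ in which $A = \src$ and $B = \tgt$. Tracing through \Cref{constr:gen-path-htpy}, the map $\GenHIso_\McU^\Id(\tMcU) \colon \src^*\tMcU \to \tgt^*\tMcU$ is precisely the underlying map $f$ of this generic tuple, viewed under the identifications $\src^*\tMcU \cong \HIso_\McU^\Id(\tMcU).\src$ and $\tgt^*\tMcU \cong \HIso_\McU^\Id(\tMcU).\tgt$, while the remaining entries $s$, $r$, $H_s$, $H_r$ of the generic tuple supply the homotopy-inverse data. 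Hence $\GenHIso_\McU^\Id(\tMcU)$ is an $\Id$-homotopy isomorphism essentially by the definition of $\HIso_\McU^\Id(\tMcU)$.

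For part (2), given an $\Id$-homotopy isomorphism $f \colon E_1 \to E_2$ in $\sfrac{\bC}{B}$ between $\pi$-fibrant objects $E_1, E_2 \rightrightarrows B$, I would first choose $\pi$-fibrancy structures, obtaining $\pi$-names $\ceil{E_1}, \ceil{E_2} \colon B \rightrightarrows \McU$ with canonical fibrant replacements $E_i \cong B.\ceil{E_i}$. Packaging $(\ceil{E_1},\ceil{E_2},f,s,r,H_s,H_r)$ as an element of the presheaf represented by $\HIso_\McU^\Id(\tMcU)$ at $B$ yields a classifying map $\psi \colon B \to \HIso_\McU^\Id(\tMcU)$ with $\src \circ \psi = \ceil{E_1}$ and $\tgt \circ \psi = \ceil{E_2}$. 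By pullback-stability of the universe structure, $\psi^*(\src^*\tMcU) \cong \ceil{E_1}^*\tMcU \cong E_1$ and likewise $\psi^*(\tgt^*\tMcU) \cong E_2$, so the two back faces of the diagram displayed in the statement of part (2) are pullbacks; the front face is then a pullback by the pasting lemma, once we verify that $\psi^*\GenHIso_\McU^\Id(\tMcU) = f$ under these identifications.

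That last identity $\psi^*\GenHIso_\McU^\Id(\tMcU) = f$ is where I expect the only genuine care is needed, and I would establish it by the Yoneda lemma in $\sfrac{\bC}{B}$: both sides are maps $E_1 \to E_2$ over $B$, so it suffices to compare the natural transformations they induce. A test map $g \colon X \to E_1$ over $B$ is the same datum as a section over $X$ of the family named by $\ceil{E_1}$, and by the description in \Cref{constr:gen-path-htpy} the map $\GenHIso_\McU^\Id(\tMcU)$ post-composes such a section with the underlying map of the ambient homotopy-isomorphism tuple; restricting along $\psi$ fixes that underlying map to be our $f$, so $\psi^*\GenHIso_\McU^\Id(\tMcU)$ sends $g$ to $f \circ g$, which is exactly the natural transformation represented by $f$. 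The same functoriality built into \Cref{constr:gen-path-htpy} also shows that the classifying map $\psi$ used here is the evident one, so once this bookkeeping is done the proof is complete.
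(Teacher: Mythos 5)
Your proposal is correct and takes essentially the same representability route as the paper, which proves part (1) by explicitly constructing the homotopy retractions, sections and $\Id$-homotopies via Yoneda (including the identification $\Id_{\HIso_\McU^\Id(\tMcU)}(\src^*\tMcU) \cong \src^*\Id_\McU(\tMcU)$ by pullback-stability of $\Id$-types, a step you compress into the phrase ``supply the homotopy-inverse data'') and then remarks that part (2) follows by the same kind of argument. Your fuller spelling-out of part (2), including the Yoneda check that $\psi^*\GenHIso_\McU^\Id(\tMcU) = f$, matches what the paper leaves implicit.
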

\begin{proof}
  Both parts follow by standard representability arguments.
  We show the first part as illustration.

  The homotopy retracts and sections $\tgt^*\tMcU \rightrightarrows \src^*\tMcU$
  of $\GenHIso_\McU^\Id(\tMcU) \colon \src^*\tMcU \to \tgt^*\tMcU$ are defined
  as the unique maps whose image under the Yoneda embedding are the natural
  transformations that respectively pre-compose the section
  $b \colon \Gamma \to \Gamma.B$ with
  $s,r \colon \Gamma.B \rightrightarrows \Gamma.A$
  \begin{equation*}
    \begin{tikzcd}
      {\displaystyle
        \left(
          \coprod_{\substack{A \colon \Gamma \to \McU \\ B \colon \Gamma \to \McU}}
          \coprod_{a \in \sfrac{\bC}{\Gamma}(\Gamma,\Gamma.A)}
          \coprod_{\substack{f \in \sfrac{\bC}{\Gamma}(\Gamma.A,\Gamma.B) \\ s,r \in \sfrac{\bC}{\Gamma}(\Gamma.B,\Gamma.A)}}
          \left\{\scriptsize
            \begin{pmatrix}
              H_s \in \sfrac{\bC}{\Gamma}(\Gamma.A, \Id_\Gamma(\Gamma.A)) \\
              H_r \in \sfrac{\bC}{\Gamma}(\Gamma.B, \Id_\Gamma(\Gamma.B)) \\
            \end{pmatrix}
            ~\middle|~
            H_s \colon fs \simeq \id
            \text{ and }
            H_r \colon rf \simeq \id
          \right\}
        \right)_{\Gamma \in \bC}}
      \\
      {\displaystyle
        \left(
          \coprod_{\substack{A \colon \Gamma \to \McU \\ B \colon \Gamma \to \McU}}
          \coprod_{b \in \sfrac{\bC}{\Gamma}(\Gamma,\Gamma.B)}
          \coprod_{\substack{f \in \sfrac{\bC}{\Gamma}(\Gamma.A,\Gamma.B) \\ s,r \in \sfrac{\bC}{\Gamma}(\Gamma.B,\Gamma.A)}}
          \left\{\scriptsize
            \begin{pmatrix}
              H_s \in \sfrac{\bC}{\Gamma}(\Gamma.A, \Id_\Gamma(\Gamma.A)) \\
              H_r \in \sfrac{\bC}{\Gamma}(\Gamma.B, \Id_\Gamma(\Gamma.B)) \\
            \end{pmatrix}
            ~\middle|~
            H_s \colon fs \simeq \id
            \text{ and }
            H_r \colon rf \simeq \id
          \right\}
        \right)_{\Gamma \in \bC}}
      \arrow["{(A,B,b,f,s,r,H_s,H_r) \mapsto (A,B,sb,f,s,r,H_s,H_r)}", shift left, from=2-1, to=1-1]
      \arrow["{(A,B,b,f,s,r,H_s,H_r) \mapsto (A,B,rb,f,s,r,H_s,H_r)}"', shift right, from=2-1, to=1-1]
    \end{tikzcd}
  \end{equation*}

  The required $\Id$-homotopies for the $\Id$-homotopy section and retraction
  are produced in a similar way.
  We take the $\Id$-homotopy for the section case as an example.
  By stability of the $\Id$-type under pullback, one sees that
  $\Id_{\HIso_\McU^\Id(\tMcU)}(\src^*\tMcU)$ is given by the following
  iterated pullbacks.
  \begin{equation*}
    \begin{tikzcd}[cramped, row sep=small, column sep=small]
      {\Id_{\HIso_\McU^\Id(\tMcU)}(\src^*\tMcU)} & {\Id_\McU(\tMcU)} & \tMcU \\
      {\src^*\tMcU \times_{\HIso_\McU^\Id(\tMcU)} \src^*\tMcU} & {\tMcU \times_\McU \tMcU} & \McU \\
      {\HIso_\McU^\Id(\tMcU)} & \McU
      \arrow[from=1-1, to=1-2]
      \arrow[from=1-1, to=2-1]
      \arrow[from=1-2, to=1-3]
      \arrow[from=1-2, to=2-2]
      \arrow[from=1-3, to=2-3]
      \arrow[from=2-1, to=2-2]
      \arrow[from=2-1, to=3-1]
      \arrow[from=2-2, to=2-3]
      \arrow[from=2-2, to=3-2]
      \arrow["\src"', from=3-1, to=3-2]
      \arrow["\Id"', from=2-2, to=2-3]
      \arrow["\lrcorner"{anchor=center, pos=0.05, scale=1.5}, draw=none, from=1-1, to=3-2]
      \arrow["\lrcorner"{anchor=center, pos=0.05, scale=1.5}, draw=none, from=1-2, to=2-3]
      \arrow["\lrcorner"{anchor=center, pos=0.05, scale=1.5}, draw=none, from=2-1, to=3-2]
    \end{tikzcd}
  \end{equation*}
  By the two vertical squares on the left above, this means that
  $\Id_{\HIso_\McU^\Id(\tMcU)}(\src^*\tMcU) \cong \src^*\Id_{\McU}(\tMcU)$ and
  so it represents the presheaf
  \begin{equation*}
    {\displaystyle
      \left(
        \coprod_{\substack{A \colon \Gamma \to \McU \\ B \colon \Gamma \to \McU}}
        \coprod_{\substack{(f,s,r,H_s,H_r) \colon \Gamma.A \simeq \Gamma.B}}
        \coprod_{a_0,a_1 \in \sfrac{\bC}{\Gamma}(\Gamma,\Gamma.A)}
        \left\{
            H \in \sfrac{\bC}{\Gamma}(\Gamma, \Id_\Gamma(\Gamma.A)) \\
          ~\middle|~
          \ev_\partial \cdot H = (a_0,a_1)
        \right\}
      \right)_{\Gamma \in \bC}}
  \end{equation*}
  Hence, one obtains a map
  $\src^*\tMcU \to \Id_{\HIso_\McU^\Id(\tMcU)}(\src^*\tMcU)$ serving as the
  $\Id$-homotopy section defined by representability as follows
  \begin{equation*}
    (A,B,a,(f,s,r,H_s,H_r)) \mapsto (A,B,(f,s,r,H_s,H_r),fsa,a,H_sa)
  \end{equation*}
\end{proof}

With the generic $\Id$-homotopy isomorphism constructed, the next step to
defining the generic homotopy isomorphism extension problem to serve as input to
\Cref{constr:glue-op} is to define the object representing such gluing problems.

\begin{construction}\label{constr:gen-glue-prob}
  Let $\Id \colon \tMcU \times_\McU \tMcU \to \McU$ be a pre-$\Id$-type
  structure on a universal map $\pi \colon \tMcU \to \McU$.
  Denote by $\bP_\iota \colon \bC \to \sfrac{\bC}{\Cof}$ the polynomial functor
  associated with $\iota \colon \partial\Cof \hookrightarrow \Cof$.
  That is, $\bP_\iota(X) = \iota_*(\partial\Cof \times X) \to \Cof$.

  Then, the object $\GenGlueProb_\McU^\Id(\tMcU)$ is defined as the pullback of
  the unit $\eta \colon \Cof \times \McU \to \bP_\iota(\McU)$ of
  $\iota^* \dashv \iota_*$ pulled back along the map
  $\bP_\iota(\tgt) \colon \bP_\iota(\HIso_\McU^\Id(\tMcU)) \to \bP_\iota(\McU)$.
  One also defines the map
  $\bP_\iota(\tgt)^*(\Cof \times \tMcU) \twoheadrightarrow
  \GenGlueProb_\McU^\Id(\tMcU)$ as the pullback of
  $\Cof \times \tMcU \twoheadrightarrow \Cof \times \McU$ along
  $\bP_\iota(\tgt)$.
  This construction is summarised by the following iterated pullbacks over $\Cof$.
  \begin{equation*}
    \begin{tikzcd}[cramped, row sep=small, column sep=small]
      {\bP_\iota(\tgt)^*(\Cof \times \tMcU)} && {\Cof \times \tMcU} \\
      {\GenGlueProb_\McU^\Id(\tMcU)} && {\Cof \times \McU} \\
      {\bP_\iota(\HIso_\McU^\Id(\tMcU))} && {\bP_\iota(\McU)} \\
      & \Cof
      \arrow[from=1-1, to=1-3]
      \arrow[two heads, from=1-1, to=2-1]
      \arrow[two heads, from=1-3, to=2-3]
      \arrow[from=2-1, to=2-3]
      \arrow[from=2-1, to=3-1]
      \arrow["\eta", from=2-3, to=3-3]
      \arrow["{\bP_\iota(\tgt)}", from=3-1, to=3-3]
      \arrow[from=3-1, to=4-2]
      \arrow[from=3-3, to=4-2]
      \arrow["\lrcorner"{anchor=center, pos=0.15, scale=1.5}, draw=none, from=1-1, to=2-3]
      \arrow["\lrcorner"{anchor=center, pos=0.15, scale=1.5}, draw=none, from=2-1, to=3-3]
    \end{tikzcd}
  \end{equation*}
\end{construction}

\begin{lemma}\label{lem:gen-glue-prob}
  Let $\Id \colon \tMcU \times_\McU \tMcU \to \McU$ be a pre-$\Id$-type
  structure on a universal map $\pi \colon \tMcU \to \McU$.

  Then, in the context of \Cref{constr:gen-glue-prob},
  \begin{enumerate}
    \item\label{itm:gen-glue-prob-rep} The object $\GenGlueProb_\McU^\Id(\tMcU)$ represents the presheaf
    taking each $B$ to a tuple consisting of an $\iota$-cofibration
    $\partial B \hookrightarrow B$, a pair of $\pi$-fibrations
    $\ul{E} \twoheadrightarrow B$ and
    $\partial{E} \twoheadrightarrow \partial{B}$, along with an $\Id$-homotopy
    isomorphism $f \colon \partial{E} \to \partial\ul{E}$ over $\partial{B}$,
    where $\partial\ul{E} \twoheadrightarrow \partial{B}$ is the pullback of
    $\ul{E} \twoheadrightarrow B$ along $\partial B \hookrightarrow B$, just
    like in \Cref{constr:glue-op}.
    \item\label{itm:gen-glue-prob-pb} The pullback of
    $\bP_\iota(\tgt)^*(\Cof \times \tMcU) \twoheadrightarrow
    \GenGlueProb_\McU^\Id(\tMcU)$ along
    $\iota \colon \partial\Cof \hookrightarrow \Cof$ is the map
    $\partial\Cof \times \tgt^*\tMcU \twoheadrightarrow \partial\Cof \times
    \HIso_\McU^\Id(\tMcU)$ obtained from \Cref{constr:gen-path-htpy}.
  \end{enumerate}
\end{lemma}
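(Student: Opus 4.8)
The plan is to prove both parts by Yoneda, unwinding the representable functors that go into \Cref{constr:gen-glue-prob}, with part \ref{itm:gen-glue-prob-pb} then following from part \ref{itm:gen-glue-prob-rep} together with the fact that $\iota$ is a monomorphism.

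For part \ref{itm:gen-glue-prob-rep}, I would fix $B \in \bC$ and compute $\bC(B, \GenGlueProb_\McU^\Id(\tMcU))$ functorially. Since $\GenGlueProb_\McU^\Id(\tMcU)$ is by definition the pullback over $\Cof$ of $\bP_\iota(\tgt) \colon \bP_\iota(\HIso_\McU^\Id(\tMcU)) \to \bP_\iota(\McU)$ along the unit $\eta \colon \Cof \times \McU \to \bP_\iota(\McU)$, this Hom-set is the corresponding pullback of Hom-sets. I would then identify the three corners. A map $B \to \Cof \times \McU$ is an $\iota$-cofibration $\partial B \hookrightarrow B$ together with a $\pi$-fibration $\ul{E} \twoheadrightarrow B$ (carrying its canonical fibrancy structure). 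A map $B \to \bP_\iota(X)$ is, by transposing along the pushforward adjunction $\iota^* \dashv \iota_*$ and observing that $\iota^*$ sends the name $\phi \colon B \to \Cof$ of a cofibration to $\partial B \hookrightarrow B$, the same datum as an $\iota$-cofibration $\partial B \hookrightarrow B$ together with a map $\partial B \to X$; so $\bC(B, \bP_\iota(\McU))$ classifies a cofibration and a $\pi$-fibration over $\partial B$, and, by \Cref{constr:hiso} and \Cref{lem:gen-hiso}, $\bC(B, \bP_\iota(\HIso_\McU^\Id(\tMcU)))$ classifies a cofibration together with an $\Id$-homotopy isomorphism $f \colon \partial E \to \partial\ul{E}$ of $\pi$-fibrant objects over $\partial B$. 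Transposing once more shows that $\eta$ acts by restricting a fibration $\ul{E} \twoheadrightarrow B$ to $\ul{E}|_{\partial B} \twoheadrightarrow \partial B$, while $\bP_\iota(\tgt)$ acts by sending a homotopy isomorphism to its codomain fibration (the $\tgt$-component). The pullback therefore consists of a cofibration $\partial B \hookrightarrow B$, a $\pi$-fibration $\ul{E} \twoheadrightarrow B$, a $\pi$-fibration $\partial E \twoheadrightarrow \partial B$, and an $\Id$-homotopy isomorphism $f \colon \partial E \to \partial\ul{E}$ over $\partial B$ with $\partial\ul{E} = \ul{E}|_{\partial B}$, which is exactly the gluing datum of \Cref{constr:glue-op}; naturality in $B$ is automatic.

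For part \ref{itm:gen-glue-prob-pb}, I would apply the pullback functor $\iota^*$ (pullback along $\iota \colon \partial\Cof \hookrightarrow \Cof$) to the entire defining diagram of \Cref{constr:gen-glue-prob}. Being a right adjoint, $\iota^*$ preserves all limits, so it sends $\Cof \times X$ to $\partial\Cof \times X$ and preserves every pullback square; and since $\iota$ is a monomorphism, \Cref{lem:mono-pushfoward-counit-iso} gives that the counit $\iota^* \iota_* \to \id$ is an isomorphism and that $\iota^* \eta$ is an isomorphism. Hence $\iota^* \bP_\iota(\McU) = \iota^* \iota_*(\partial\Cof \times \McU) \cong \partial\Cof \times \McU$, likewise $\iota^* \bP_\iota(\HIso_\McU^\Id(\tMcU)) \cong \partial\Cof \times \HIso_\McU^\Id(\tMcU)$, and by naturality of the counit $\iota^* \bP_\iota(\tgt)$ becomes $\partial\Cof \times \tgt$. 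Pulling the bottom square back, $\iota^* \GenGlueProb_\McU^\Id(\tMcU)$ is the pullback of $\partial\Cof \times \tgt$ along an isomorphism, hence is $\partial\Cof \times \HIso_\McU^\Id(\tMcU)$ with its projection to $\partial\Cof \times \McU$ identified with $\partial\Cof \times \tgt$; pulling the top rectangle back, $\iota^*(\bP_\iota(\tgt)^*(\Cof \times \tMcU))$ is the pullback of $\partial\Cof \times \tMcU \twoheadrightarrow \partial\Cof \times \McU$ along $\partial\Cof \times \tgt$, which is $\partial\Cof \times (\tgt^* \tMcU)$, with structure map $\partial\Cof \times (\tgt^* \tMcU \twoheadrightarrow \HIso_\McU^\Id(\tMcU))$ as produced in \Cref{constr:gen-path-htpy}.

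The main obstacle, and the place where the bookkeeping is most error-prone, is keeping straight that all the pullbacks in \Cref{constr:gen-glue-prob} are formed over $\Cof$ and that $\iota^*$ here is pullback along the specific map $\iota$ rather than an arbitrary one; once those are pinned down, everything reduces to formal manipulations of the pushforward adjunction together with the representability statements for $\HIso_\McU^\Id(\tMcU)$ already recorded in \Cref{constr:hiso}, \Cref{lem:gen-hiso} and \Cref{constr:gen-path-htpy}.
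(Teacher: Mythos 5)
Your proposal is correct and matches the paper's proof closely: part~\ref{itm:gen-glue-prob-rep} is argued by representability, identifying each corner of the defining pullback exactly as you do, and part~\ref{itm:gen-glue-prob-pb} by pulling the whole iterated-pullback diagram back along $\iota$ and invoking \Cref{lem:mono-pushfoward-counit-iso} together with the triangle identity to collapse $\iota^*\iota_*$ and $\iota^*\eta$ to isomorphisms. The only surface difference is that the paper records the second part as a single explicit iterated-pullback diagram, whereas you phrase it as preservation of limits by $\iota^*$; the content is the same.
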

\begin{proof}
  The first part is again by representability, which is observed as follows.

  The object $\bP_\iota(\HIso_\McU^\Id(\tMcU))$ represents the presheaf taking
  each $B$ to an $\iota$-cofibration $\partial B \hookrightarrow B$ with a
  choice of an $\Id$-homotopy isomorphism $\partial E \to \partial \ul{E}$
  between two $\pi$-fibrant objects over $\partial B$.
  The map
  $\bP_\iota(\tgt) \colon \bP_\iota(\HIso_\McU^\Id(\tMcU)) \to \bP_\iota(\McU)$
  takes each such tuple and returns the $\iota$-cofibration
  $\partial B \hookrightarrow B$ along with the $\pi$-fibration
  $\partial \ul{E} \twoheadrightarrow \partial{B}$ in the codomain of the
  $\Id$-homotopy isomorphism.

  The object $\Cof \times \McU$ represents the presheaf taking each $B$ to an
  $\iota$-cofibration $\partial B \hookrightarrow B$ and a $\pi$-fibration
  $E \twoheadrightarrow B$ over $B$.
  The unit $\eta \colon \Cof \times \McU \to \bP_\iota(\McU)$ takes each such
  ($\iota$-cofibration, $\pi$-fibration)-pair and returns the pullback of the
  $\pi$-fibration along the $\iota$-cofibration.

  Therefore, the result for the first part follows.

  For the second part, we note that by the counit part of
  \Cref{lem:mono-pushfoward-counit-iso}, the naturality square of the counit of
  $\iota^* \dashv \iota_*$ is always an isomorphism.
  Thus, pulling back the iterated pullbacks of \Cref{constr:gen-glue-prob} over
  $\Cof$ along $\partial\Cof \hookrightarrow \Cof$ to be over $\partial\Cof$ and
  then composing with the naturality pullback of the $\iota^* \dashv \iota_*$
  counit gives the following iterated pullback.
  \begin{equation*}
    \begin{tikzcd}[cramped, row sep=small]
      {\iota^*\bP_\iota(\tgt)^*(\Cof \times \tMcU)} & {\partial\Cof \times \tMcU} \\
      {\iota^*\GenGlueProb_\McU^\Id(\tMcU)} & {\partial\Cof \times \McU} \\
      {\iota^*\bP_\iota(\HIso_\McU^\Id(\tMcU))} & {\iota^*\bP_\iota(\McU)} \\
      {\partial\Cof \times \HIso_\McU^\Id(\tMcU)} & {\partial\Cof \times \McU}
      \arrow[from=1-1, to=1-2]
      \arrow[two heads, from=1-1, to=2-1]
      \arrow[two heads, from=1-2, to=2-2]
      \arrow[from=2-1, to=2-2]
      \arrow[from=2-1, to=3-1]
      \arrow["{\iota^*\eta}", from=2-2, to=3-2]
      \arrow[curve={height=-30pt}, equals, from=2-2, to=4-2]
      \arrow["{\iota^*\bP_\iota(\tgt)}"', from=3-1, to=3-2]
      \arrow["\ev"', from=3-1, to=4-1]
      \arrow["\ev", from=3-2, to=4-2]
      \arrow["{\partial\Cof \times \tgt}"', from=4-1, to=4-2]
      \arrow["\lrcorner"{anchor=center, pos=0.15, scale=1.5}, draw=none, from=1-1, to=2-2]
      \arrow["\lrcorner"{anchor=center, pos=0.15, scale=1.5}, draw=none, from=2-1, to=3-2]
      \arrow["\lrcorner"{anchor=center, pos=0.15, scale=1.5}, draw=none, from=3-1, to=4-2]
    \end{tikzcd}
  \end{equation*}
  Because $\ev \cdot \iota^*\eta = \id$ by the triangle laws, it thus follows
  that
  $\iota^*\GenGlueProb_\McU^\Id(\tMcU) \cong \partial\Cof \times
  \HIso_\McU^\Id(\tMcU)$ and that
  $\iota^*\bP_\iota(\tgt)^*(\Cof \times \tMcU) \to
  \iota^*\GenGlueProb_\McU^\Id(\tMcU)$ is the map
  $\partial\Cof \times \tgt^*\tMcU \twoheadrightarrow \partial\Cof \times
  \HIso_\McU^\Id(\tMcU)$ obtained from \Cref{constr:gen-path-htpy}.
\end{proof}

One may now apply the $\Glue$-type construction of \Cref{constr:glue-op} to the
generic homotopy isomorphism extension problem, constructed as follows.

\begin{construction}\label{constr:gen-glue}
  Let $\Id \colon \tMcU \times_\McU \tMcU \to \McU$ be a pre-$\Id$-type
  structure on $\tMcU \to \McU$.
  By the second part of \Cref{lem:gen-glue-prob}, one has the solid arrows in
  the following diagram making the front face to be a pullback
  \begin{equation*}
    \begin{tikzcd}[cramped, column sep=small]
      {\partial\Cof \times \src^*\tMcU} && {\Glue(\GenHIso_\McU^\Id(\tMcU))} \\
      & {\partial\Cof \times \tgt^*\tMcU} && {\bP_\iota(\tgt)^*(\Cof \times \tMcU)} \\
      {\partial\Cof \times \HIso_\McU^\Id(\tMcU)} && {\GenGlueProb_\McU^\Id(\tMcU)}
      \arrow[dashed, from=1-1, to=1-3]
      \arrow["{\partial\Cof \times \GenHIso_\McU^\Id(\tMcU)}"{description}, from=1-1, to=2-2]
      \arrow[two heads, from=1-1, to=3-1]
      \arrow["{\unglue(\GenHIso_\McU^\Id(\tMcU))}"{description}, dashed, from=1-3, to=2-4]
      \arrow[dashed, from=1-3, to=3-3]
      \arrow[two heads, from=2-2, to=3-1]
      \arrow[two heads, from=2-4, to=3-3]
      \arrow[hook, from=3-1, to=3-3]
      \arrow[from=2-2, to=2-4, crossing over]
      \arrow["\lrcorner"{anchor=center, pos=0.15, scale=1.5}, draw=none, from=2-2, to=3-3]
    \end{tikzcd}
  \end{equation*}
  so by Glue-unglue of \Cref{constr:glue-op}, one obtains the dashed object
  \begin{equation*}
    \Glue(\GenHIso_\McU^\Id(\tMcU)) \to \GenGlueProb_\McU^\Id(\tMcU) \in
    \sfrac{\bC}{\GenGlueProb_\McU^\Id(\tMcU)}
  \end{equation*}
  and dashed map
  \begin{equation*}
    \unglue(\GenHIso_\McU^\Id(\tMcU)) \colon
    \Glue(\GenHIso_\McU^\Id(\tMcU)) \to \bP_\iota(\tgt)^*(\Cof \times \tMcU)
  \end{equation*}
  as in the triangle on the right side of the above diagram.
\end{construction}

The above $\Glue$ construction admits the following universality.

\begin{proposition}\label{prop:gen-glue}
  Let $\Id \colon \tMcU \times_\McU \tMcU \to \McU$ be a pre-$\Id$-type
  structure on $\tMcU \to \McU$.

  Take an object $B$ along with an $\iota$-cofibration
  $\partial B \hookrightarrow B$, a pair of $\pi$-fibrations
  $\ul{E} \twoheadrightarrow B$ and
  $\partial{E} \twoheadrightarrow \partial{B}$, along with an $\Id$-homotopy
  isomorphism $f \colon \partial{E} \to \partial\ul{E}$ over $\partial{B}$,
  where $\partial\ul{E} \twoheadrightarrow \partial{B}$ is the pullback of
  $\ul{E} \twoheadrightarrow B$ along $\partial B \hookrightarrow B$, just like
  in \Cref{constr:glue-op}.

  Then, there is a map $B \to \GenGlueProb_\McU^\Id(\tMcU)$ such that the
  associated pullback functor
  $\sfrac{\bC}{\GenGlueProb_\McU^\Id(\tMcU)} \to \sfrac{\bC}{B}$ sends the
  diagram on the left below from \Cref{constr:gen-glue}, viewed as a diagram in
  $\sfrac{\bC}{\GenGlueProb_\McU^\Id(\tMcU)}$, to the diagram on the right below
  from \Cref{constr:glue-op}, viewed as a diagram in $\sfrac{\bC}{B}$.
  \begin{equation*}
    \left(\begin{tikzcd}[cramped, row sep=small, column sep=small]
      {\partial\Cof \times \src^*\tMcU} \\
      & {\partial\Cof \times \tgt^*\tMcU} & {\bP_\iota(\tgt)^*(\Cof \times \tMcU)} \\
      & {\partial\Cof \times \HIso_\McU^\Id(\tMcU)} & {\GenGlueProb_\McU^\Id(\tMcU)}
      \arrow[from=1-1, to=2-2]
      \arrow[two heads, from=1-1, to=3-2]
      \arrow[hook, from=2-2, to=2-3]
      \arrow[two heads, from=2-2, to=3-2]
      \arrow["\lrcorner"{anchor=center, pos=0.15, scale=1.5}, draw=none, from=2-2, to=3-3]
      \arrow[two heads, from=2-3, to=3-3]
      \arrow[hook, from=3-2, to=3-3]
    \end{tikzcd}\right)
    \mapsto
    \left(\begin{tikzcd}[cramped, row sep=small, column sep=small]
      {\partial{E}} \\
      & {\partial\ul{E}} & {\ul{E}} \\
      & {\partial{B}} & B
      \arrow[from=1-1, to=2-2]
      \arrow[two heads, from=1-1, to=3-2]
      \arrow[hook, from=2-2, to=2-3]
      \arrow[two heads, from=2-2, to=3-2]
      \arrow["\lrcorner"{anchor=center, pos=0.15, scale=1.5}, draw=none, from=2-2, to=3-3]
      \arrow[two heads, from=2-3, to=3-3]
      \arrow[hook, from=3-2, to=3-3]
    \end{tikzcd}\right)
  \end{equation*}
\end{proposition}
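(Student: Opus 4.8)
The plan is to extract the classifying map from the representability established in \Cref{lem:gen-glue-prob} and then to check, slot by slot, that pullback along it carries the generic data of \Cref{constr:gen-glue} to the concrete data of \Cref{constr:glue-op}. By the first part of \Cref{lem:gen-glue-prob}, the tuple consisting of the $\iota$-cofibration $\partial B \hookrightarrow B$, the $\pi$-fibrations $\ul E \twoheadrightarrow B$ and $\partial E \twoheadrightarrow \partial B$, and the $\Id$-homotopy isomorphism $f \colon \partial E \to \partial\ul E$ is precisely an element of the presheaf represented by $\GenGlueProb_\McU^\Id(\tMcU)$, hence corresponds to a unique map $g \colon B \to \GenGlueProb_\McU^\Id(\tMcU)$. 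Tracing through the assembly of $\GenGlueProb_\McU^\Id(\tMcU)$ in \Cref{constr:gen-glue-prob} as an iterated pullback over $\Cof$, the composite $B \xrightarrow{g} \GenGlueProb_\McU^\Id(\tMcU) \to \Cof$ is the $\iota$-name of $\partial B \hookrightarrow B$, and $g^*$ of the $\pi$-fibration $\bP_\iota(\tgt)^*(\Cof \times \tMcU) \twoheadrightarrow \GenGlueProb_\McU^\Id(\tMcU)$ is $\ul E \twoheadrightarrow B$ with its selected fibrancy structure; this already identifies the right-hand column and bottom row of the generic diagram with those of the concrete one.

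Next I would apply the second part of \Cref{lem:gen-glue-prob} and pull the isomorphism $\iota^*\GenGlueProb_\McU^\Id(\tMcU) \cong \partial\Cof \times \HIso_\McU^\Id(\tMcU)$ further back along $g$. Since $\partial B \hookrightarrow B$ is by construction the pullback of $\iota$ along $B \to \GenGlueProb_\McU^\Id(\tMcU) \to \Cof$, the base change of the mono $\partial\Cof \times \HIso_\McU^\Id(\tMcU) \hookrightarrow \GenGlueProb_\McU^\Id(\tMcU)$ along $g$ is exactly $\partial B \hookrightarrow B$; likewise the second part of \Cref{lem:gen-glue-prob} identifies the restriction of $\bP_\iota(\tgt)^*(\Cof \times \tMcU)$ to $\partial\Cof$ with $\partial\Cof \times \tgt^*\tMcU$, whose further restriction along $g$ is $\partial\ul E \twoheadrightarrow \partial B$, and by the representability of \Cref{constr:gen-path-htpy} (as in \Cref{lem:gen-hiso}) the generic $\Id$-homotopy isomorphism $\partial\Cof \times \GenHIso_\McU^\Id(\tMcU)$ restricts to $f$. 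This accounts for all the solid arrows of the two displayed diagrams.

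Finally, to also carry along the constructed part of \Cref{constr:gen-glue} — the object $\Glue(\GenHIso_\McU^\Id(\tMcU))$, the map $\unglue(\GenHIso_\McU^\Id(\tMcU))$, and the inclusion of $\partial\Cof \times \src^*\tMcU$ — I would invoke pullback-stability of \Cref{constr:glue-op}. The subtle point, which I expect to be the main obstacle, is that that construction uses the pushforward $i_*$ along the relevant mono, and pushforwards are not stable under arbitrary base change; however, here the generic mono $\partial\Cof \times \HIso_\McU^\Id(\tMcU) \hookrightarrow \GenGlueProb_\McU^\Id(\tMcU)$ and the concrete mono $\partial B \hookrightarrow B$ are both pullbacks of $\iota$, so the square relating them along $g$ is cartesian and the Beck--Chevalley isomorphism $g^*\,({-})_* \cong ({-})_*\,\bar g^*$ applies. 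Combined with naturality of the unit $\eta$ (exactly as used in the proof of the second part of \Cref{lem:gen-glue-prob}) and the evident stability of ordinary pullbacks, this shows $g^*$ commutes with the whole of \Cref{constr:glue-op}, so that the entire generic diagram is carried to the concrete one, completing the proof.
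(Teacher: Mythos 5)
Your proposal is correct and takes essentially the same route as the paper: extract the classifying map $g \colon B \to \GenGlueProb_\McU^\Id(\tMcU)$ from the representability result in \Cref{lem:gen-glue-prob}(1), then trace through the iterated-pullback assembly of \Cref{constr:gen-glue-prob} together with \Cref{lem:gen-glue-prob}(2) and \Cref{lem:gen-hiso}(2) to match each slot of the generic diagram to the concrete one. One small misalignment with the statement: your third paragraph goes beyond what \Cref{prop:gen-glue} actually asserts. The displayed diagrams involve only the input data to the gluing construction ($\partial\Cof\times\src^*\tMcU$, $\partial\Cof\times\tgt^*\tMcU$, $\bP_\iota(\tgt)^*(\Cof\times\tMcU)$, etc., and the corresponding $\partial E$, $\partial\ul E$, $\ul E$, $\partial B$, $B$), not the constructed objects $\Glue(\GenHIso_\McU^\Id(\tMcU))$ and $\unglue(\GenHIso_\McU^\Id(\tMcU))$. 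In particular $\partial\Cof\times\src^*\tMcU$ is one of the \emph{solid} arrows of \Cref{constr:gen-glue} and is already handled by \Cref{lem:gen-hiso}(2) in your second paragraph, not by a Beck--Chevalley argument. The Beck--Chevalley observation you make is entirely the right idea, but it belongs to (and is indeed used in) \Cref{cor:glue-type}, where the glued fibration itself must be transported along $g$.
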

\begin{proof}
  The required map $B \to \GenGlueProb_\McU^\Id(\tMcU)$ is by representability
  of $\GenGlueProb_\McU^\Id(\tMcU)$ from the first part of
  \Cref{lem:gen-glue-prob}.
  Verifying that the image of the diagram of \Cref{constr:gen-glue} ends up
  being the diagram from \Cref{constr:glue-op} is also straightforward by the
  second part of \Cref{lem:gen-glue-prob} and the second part of
  \Cref{lem:gen-hiso}.
\end{proof}

Thus, with the $\Glue$-type structured defined as below, the map
$\Glue_{i,\ul{E}}(f) \to B$ from \Cref{constr:glue-op} is a $\pi$-fibration
whenever $f$ is an $\Id$-homotopy isomorphism.
\begin{definition}\label{def:glue-type}
  Let $\Id \colon \tMcU \times_\McU \tMcU \to \McU$ be a pre-$\Id$-type
  structure on $\tMcU \to \McU$.

  A \emph{$\Glue$-type structure} on $\tMcU \to \McU $for this pre-$\Id$-type
  structure $\Id \colon \tMcU \times_\McU \tMcU \to \McU$ is a pair of dashed
  maps $(\Glue, \glue)$ making up a pullback as follows, where the left map is
  from \Cref{constr:gen-glue}.
  \begin{equation*}
    \begin{tikzcd}[cramped]
      {\Glue(\GenHIso_\McU^\Id(\tMcU))} & \tMcU \\
      {\GenGlueProb_\McU^\Id(\tMcU)} & \McU
      \arrow["\glue", dashed, from=1-1, to=1-2]
      \arrow[from=1-1, to=2-1]
      \arrow["\lrcorner"{anchor=center, pos=0.15, scale=1.5}, draw=none, from=1-1, to=2-2]
      \arrow[from=1-2, to=2-2]
      \arrow["\Glue"', dashed, from=2-1, to=2-2]
    \end{tikzcd}
  \end{equation*}
\end{definition}

As an immediate consequence, we obtain the following.
\begin{corollary}\label{cor:glue-type}
  Let $\Id \colon \tMcU \times_\McU \tMcU \to \McU$ be a pre-$\Id$-type
  structure on $\tMcU \to \McU$.
  Suppose $\tMcU \to \McU$ is equipped with a $\Glue$-type structure for this
  pre-$\Id$-type structure.

  Take an object $B$ along with an $\iota$-cofibration
  $i \colon \partial B \hookrightarrow B$, a pair of $\pi$-fibrations
  $\ul{E} \twoheadrightarrow B$ and
  $\partial{E} \twoheadrightarrow \partial{B}$, along with an $\Id$-homotopy
  isomorphism $f \colon \partial{E} \to \partial\ul{E}$ over $\partial{B}$,
  where $\partial\ul{E} \twoheadrightarrow \partial{B}$ is the pullback of
  $\ul{E} \twoheadrightarrow B$ along $\partial B \hookrightarrow B$, just like
  in \Cref{constr:glue-op}.

  Then, the object
  \begin{equation*}
    \Glue_{i,\ul{E}}(f) \to B \in \sfrac{\bC}{B}
  \end{equation*}
  is $\pi$-fibrant.
\end{corollary}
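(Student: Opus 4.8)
The plan is to deduce the statement from the universality of the generic gluing problem (\Cref{prop:gen-glue}) together with the $\Glue$-type structure (\Cref{def:glue-type}) and the pullback-stability of $\pi$-fibrations. First I would feed the datum $(i \colon \partial B \hookrightarrow B,\ \ul{E} \twoheadrightarrow B,\ \partial{E} \twoheadrightarrow \partial{B},\ f)$ — with $f$ the given $\Id$-homotopy isomorphism — into \Cref{prop:gen-glue}, obtaining a classifying map $g \colon B \to \GenGlueProb_\McU^\Id(\tMcU)$ along which the generic gluing datum appearing in \Cref{constr:gen-glue} restricts to ours. Since the $\Glue$-unglue operation of \Cref{constr:glue-op} is assembled entirely from pullbacks, the unit of $\iota^* \dashv \iota_*$, and the pushforward $i_*$ along a monomorphism, and since the Beck--Chevalley condition holds for pullback squares in any locally cartesian closed category, this operation commutes up to canonical isomorphism with base change; hence $g$ pulls the generic object $\Glue(\GenHIso_\McU^\Id(\tMcU)) \to \GenGlueProb_\McU^\Id(\tMcU)$ back to $\Glue_{i,\ul{E}}(f) \to B$.

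Next I would invoke the hypothesis that $\tMcU \to \McU$ carries a $\Glue$-type structure for the given pre-$\Id$-type structure: by \Cref{def:glue-type} this is precisely a pair $(\Glue, \glue)$ exhibiting $\Glue(\GenHIso_\McU^\Id(\tMcU)) \to \GenGlueProb_\McU^\Id(\tMcU)$ as a pullback of $\pi \colon \tMcU \to \McU$ along $\Glue \colon \GenGlueProb_\McU^\Id(\tMcU) \to \McU$. Pasting this pullback square with the pullback along $g$ from the previous step, the composite square exhibits $\Glue_{i,\ul{E}}(f) \to B$ as a pullback of $\pi$ along $\Glue \circ g \colon B \to \McU$. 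By \Cref{def:univ-fibration} this equips $\Glue_{i,\ul{E}}(f) \to B$ with a $\pi$-fibrancy structure — its $\pi$-name being $\Glue \circ g$ and its $\pi$-point the induced map into $\tMcU$ — so $\Glue_{i,\ul{E}}(f) \to B$ is $\pi$-fibrant, as claimed.

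Most of the real work has already been done in \Cref{constr:gen-glue}, \Cref{lem:gen-glue-prob}, \Cref{lem:gen-hiso} and \Cref{prop:gen-glue}, so this proof is essentially a bookkeeping step. The one point deserving explicit care is the compatibility of the $\Glue$-unglue construction with base change along $g$: this is where Beck--Chevalley enters, and it is also the reason the strictifying $\Glue$-type structure of \Cref{def:glue-type} is formulated over the single generic problem $\GenGlueProb_\McU^\Id(\tMcU)$ rather than ad hoc for each gluing datum. I do not expect any genuine obstacle beyond verifying this naturality cleanly.
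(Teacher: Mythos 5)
Your proposal is correct and follows essentially the same route as the paper, which simply cites \Cref{prop:gen-glue} together with the Beck--Chevalley condition; you have expanded that terse statement into the three steps (classifying map from \Cref{prop:gen-glue}, commutation of the $\Glue$-unglue operation with base change via Beck--Chevalley applied to the pullback square exhibiting $i$ as a pullback of $\iota$, and pasting with the pullback square from \Cref{def:glue-type}) that the paper leaves implicit. One minor imprecision worth noting: the generic object $\Glue(\GenHIso_\McU^\Id(\tMcU))$ is built using $\iota_*$ (through $\bP_\iota$) over $\Cof$, whereas the particular instance uses $i_*$ over $B$, so the relevant Beck--Chevalley comparison is between pushforwards along $\iota$ and along its pullback $i$, not base change of a single $i_*$ — but this is exactly what your invocation of Beck--Chevalley provides, so the argument is sound.
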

\begin{proof}
  Immediate by \Cref{prop:gen-glue} and the Beck-Chevalley condition.
\end{proof}

Under the presence of a $\Glue$-type structure for $\Path$-homotopy
equivalences, one can then repeat the proof of univalence as seen in
\cite[Section 7.2]{cchm15}.

\begin{theorem}\label{thm:cchm-cubical-univalence}
  Suppose $\partial\Cof \hookrightarrow \Cof$ admits a truth-structure relative
  to $\tMcU \to \McU$ under which $\bI$ has a disjoint cofibrant endpoint structure.
  Further assume
  \begin{enumerate}
    \item $\bI$ is equipped with a $\min$-structure.
    \item The above data is equipped with a $\mbbo{0}$-biased filling structure.
    \item The ambient universe $\tMcU \to \McU$ is equipped with
    $\Path$-,$\Sigma$,-$\Pi$-type structures $\Path,\Sigma,\Pi$.
    \item The internal universe $\tMcU_0 \to \McU_0$ is closed under
    $\Path$-,$\Sigma$,-$\Path$-type structures respectively denoted
    $\Path_0$,$\Sigma_0$,$\Pi_0$.
    \item The internal universe $\tMcU_0 \to \McU_0$ is equipped with a
    $\Glue$-type structure for the $\Path$-type structure.
  \end{enumerate}
  Then, the (internal universe, pre-$\Id$-type)-pair $(\pi_0,\Path_0)$ is book
  $\pi$-univalent.
\end{theorem}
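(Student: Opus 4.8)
The plan is to reduce to \Cref{thm:weak-cubical-univalence}: I will show that the $\Glue$-type structure for $\Path_0$ on the internal universe $\tMcU_0 \to \McU_0$ induces a weak $\Path_0$-homotopy isomorphism extension structure on $\tMcU_0 \to \McU_0$ in the sense of \Cref{def:weak-hiso-ext-op}, and then invoke \Cref{thm:weak-cubical-univalence} as a black box. Equivalently, as suggested before the statement, one can bypass \Cref{thm:weak-cubical-univalence} and repeat the argument of \cite[Section 7.2]{cchm15} directly --- derive the equivalence extension property \cite[Theorem 9]{cchm15} from the $\Glue$-type structure, then contractibility of $\tgt \colon \HIso_{\McU_0}^{\Path_0}(\tMcU_0) \twoheadrightarrow \McU_0$ as in \cite[Corollary 10]{cchm15}, then book univalence as in \cite[Corollary 11]{cchm15} or \cite[Proposition 3.11(1)]{axm-univalence} --- which is precisely what the proof of \Cref{thm:weak-cubical-univalence} does; the two routes carry out the same computation.

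Note that a $\Path$-type structure $(\via_0, \Path_0)$ on $\tMcU_0 \to \McU_0$ is in particular a pre-$\Id$-type structure (with $\ceil{\refl}$ the composite of $\const$ with $\via_0$), so the $\Glue$-type structure of \Cref{def:glue-type} makes sense for it and \Cref{constr:gen-glue} and \Cref{prop:gen-glue} apply. All of the remaining hypotheses of \Cref{thm:weak-cubical-univalence} --- a $\min$-structure, a $\mbbo{0}$-biased filling structure, the ambient $\Path$-, $\Sigma$-, $\Pi$-type structures, and the internal $\Path_0$-, $\Sigma_0$-, $\Pi_0$-type structures --- coincide with hypotheses already assumed here, so the whole task is to produce the weak $\Path_0$-homotopy isomorphism extension structure. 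By the universality recorded in \Cref{prop:gen-glue}, every concrete gluing problem is pulled back from the generic one of \Cref{constr:gen-glue} along a map into $\GenGlueProb_{\McU_0}^{\Path_0}(\tMcU_0)$; hence it is enough to make one choice for the generic problem, and stability of that choice under pullback of the cofibration is then automatic from the Beck--Chevalley condition, exactly as in the proof of \Cref{cor:glue-type}.

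For the generic problem I would proceed as follows. The $\Glue$-type structure makes $\Glue(\GenHIso_{\McU_0}^{\Path_0}(\tMcU_0)) \to \GenGlueProb_{\McU_0}^{\Path_0}(\tMcU_0)$ a fibration (\Cref{cor:glue-type}), and \Cref{lem:ext-retract} together with the second part of \Cref{lem:gen-glue-prob} identifies its restriction along $\iota \colon \partial\Cof \hookrightarrow \Cof$ with the source $\partial\Cof \times \src^*\tMcU_0$ of the generic homotopy isomorphism; likewise $\unglue(\GenHIso_{\McU_0}^{\Path_0}(\tMcU_0))$ restricts over $\partial\Cof$ to the underlying map of $\GenHIso_{\McU_0}^{\Path_0}(\tMcU_0)$ from \Cref{constr:gen-path-htpy}. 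The substantive step is to equip the fibration $\unglue(\GenHIso_{\McU_0}^{\Path_0}(\tMcU_0))$, over $\GenGlueProb_{\McU_0}^{\Path_0}(\tMcU_0)$, with the data of a $\Path_0$-homotopy isomorphism --- a quasi-inverse together with the two $\Path_0$-homotopies to the identities --- which is exactly the computation showing $\unglue$ is an equivalence from \cite[Section 6]{cchm15}: the formulas for the inverse and the homotopies transcribe directly, the universe-category bookkeeping being supplied by \Cref{constr:gen-glue} and \Cref{prop:gen-glue}. This hiso structure packages, by the universal property in \Cref{lem:gen-hiso}, into a classifying map $\GenGlueProb_{\McU_0}^{\Path_0}(\tMcU_0) \to \HIso_{\McU_0}^{\Path_0}(\tMcU_0)$ whose restriction over $\partial\Cof$ recovers the underlying map of the generic homotopy isomorphism; unwinding \Cref{def:weak-hiso-ext-op} through the representability of $\GenGlueProb_{\McU_0}^{\Path_0}(\tMcU_0)$ (first part of \Cref{lem:gen-glue-prob}) exhibits this exactly as a weak $\Path_0$-homotopy isomorphism extension structure.

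With that structure in hand, \Cref{thm:weak-cubical-univalence} applies verbatim and yields that $(\pi_0, \Path_0)$ is book $\pi$-univalent. The main obstacle is the substantive step above: reproducing in the present setting the CCHM proof that $\unglue$ of a $\Path$-homotopy isomorphism is again a $\Path$-homotopy isomorphism, including the explicit construction of its quasi-inverse and homotopies. By \Cref{prop:gen-glue} this need only be carried out once, for the generic gluing problem, while fibrancy of $\Glue$, its restriction over the cofibration, and pullback-stability come for free from \Cref{cor:glue-type,lem:ext-retract,lem:gen-glue-prob,prop:gen-glue}.
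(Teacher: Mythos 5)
Your proposal is correct and takes essentially the same route as the paper: reduce to \Cref{thm:weak-cubical-univalence} by showing the $\Glue$-type structure yields a weak $\Path_0$-homotopy isomorphism extension structure, with the key input that $\unglue$ of the generic $\Path_0$-homotopy isomorphism (\Cref{lem:gen-hiso}) is again a $\Path_0$-homotopy isomorphism by \cite[Theorem 9]{cchm15}, and pullback-stability coming from the universality of \Cref{prop:gen-glue}. The paper states this more tersely but the decomposition and the lemmas invoked coincide.
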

\begin{proof}
  By the first part of \Cref{lem:gen-hiso}, the map
  $\partial\Cof \times \GenHIso_\McU^\Path(\tMcU)$ of \Cref{constr:gen-glue} is
  a $\Path$-homotopy isomorphism, so the map $\unglue(\GenHIso_\McU^\Id(\tMcU))$
  there is a $\Path$-homotopy isomorphism by \cite[Theorem 9]{cchm15}.
  This means that the $\Glue$-type structure induces a weak $\Path$-homotopy
  equivalence extension structure of \Cref{def:weak-hiso-ext-op}.
  One can now conclude by \Cref{thm:weak-cubical-univalence}.
\end{proof}


\section{Naive Cubical Type Theory}\label{sec:naive-ctt}
Assembling the various definitions and properties from the previous section, we
are now able to precisely formulate a version of cubical type theory.
This version of cubical type theory is formulated in such a way that we can
establish a model of it in elegant Reedy presheaf model categories, a special
case of which includes that of simplicial sets, equipped with the classical
model structure.

\begin{definition}\label{def:naive-ctt}
  A \emph{naive cubical type theory} (NCTT) structure on a locally
  cartesian closed category $\bC$ consists of three universal maps
  \begin{align*}
    \pi\colon\tMcU \to \McU && \iota\colon\partial\Cof \hookrightarrow \Cof && \bI \to 1
  \end{align*}
  where $\iota$ is a monomorphism, along with
  \begin{itemize}
    \item[nctt-truth] \label{itm:nctt-truth} A truth structure on
    $\partial\Cof \hookrightarrow \Cof$ relative to to $\tMcU \to \McU$
    (\Cref{def:cof-truth-lattice}).
    \item[nctt-cof-sec] \label{itm:nctt-cof-sec} An $\iota$-cofibrant structure on
    ambient universe $\pi$-sections (\Cref{def:cof-sect}).
    \item[nctt-int] \label{itm:nctt-int} A disjoint $\iota$-cofibrant endpoint
    structure on $\bI$ whose endpoints are $\mbbo{0},\mbbo{1}$
    (\Cref{def:interval-disjoint-endpt}).
    \item[nctt-min] \label{itm:nctt-min} A $\min$-structure on $\bI$
    (\Cref{def:interval-min-max}).
    \item[nctt-fib] \label{itm:nctt-fib} $\Unit,\Sigma,\Pi$- and
    heterogeneous $\Path$-type structures on $\tMcU \to \McU$
    (\Cref{def:unit-type,def:sigma-type,def:pi-type,def:htr-path-type}).
    \item[nctt-fill] \label{itm:nctt-fill} A $\mbbo{0}$-biased filling structure
    for the ambient external universe $\tMcU \to \McU$
    (\Cref{def:biased-filling}).
    \item[nctt-univ] \label{itm:nctt-univ} An internal universe
    $\tMcU_0 \to \McU_0$ of $\tMcU \to \McU$ closed under $\Unit,\Sigma,\Pi$-
    and heterogeneous $\Path$-type structures
    (\Cref{def:int-univ-sigma-pi-id,def:htr-path-type}).
    \item[nctt-hext] \label{itm:nctt-hext} A strong homotopy equivalence
    extension structure on the internal universe $\tMcU_0 \to \McU_0$
    (\Cref{def:strong-hiso-ext-op}).
  \end{itemize}

  A \emph{weakly computational filling} structure on such an NCTT structure
  consists of
  \begin{itemize}
    \item[nctt-fill-wkcomp] \label{itm:nctt-fill-wkcomp} Stable propositional
    computation structures for the chosen $\mbbo{0}$-filling structure of
    \Cref{itm:nctt-fill} with respect to the chosen $\Sigma,\Pi,\Path$-type
    structures of \Cref{itm:nctt-fib}
    (\Cref{def:Sigma-fill-comp,def:Pi-fill-comp,def:Path-fill-comp}).
  \end{itemize}
  and an NCTT structure is said to have \emph{definitionally computational filling} when
  \begin{itemize}
    \item[nctt-fill-defcomp] \label{itm:nctt-fill-defcomp} The $\mbbo{0}$-filling
    structure of \Cref{itm:nctt-fill} computes definitionally with respect to
    the chosen $\Sigma,\Pi,\Path$-type structures of \Cref{itm:nctt-fib}
    (\Cref{def:Sigma-fill-comp,def:Pi-fill-comp,def:Path-fill-comp}).
  \end{itemize}
\end{definition}

Because our formulation causes much heavy pain, some explanation is needed to
justify the consistency of the points \Cref{itm:nctt-hext,itm:nctt-fill-wkcomp},
which both rely on $\Id$-types.

\begin{theorem}\label{thm:nctt-Path-Id}
  Let $\bC$ be equipped with an NCTT structure with external ambient universe
  $\tMcU \to \McU$ and internal universe $\tMcU_0 \to \McU_0$.

  Then, the NCTT structure canonically gives rise to $\Id$-type structures on
  $\tMcU \to \McU$ and $\tMcU_0 \to \McU_0$.
\end{theorem}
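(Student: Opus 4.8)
The plan is to deduce, separately for the ambient universe $\tMcU \to \McU$ and for the internal universe $\tMcU_0 \to \McU_0$, that all five hypotheses of \Cref{thm:Path-Id} hold, so that the stated $\Id$-type structure $\Id \coloneqq \Path$, $\refl \coloneqq \const$ comes for free. The only wrinkle is that \Cref{def:naive-ctt} supplies \emph{heterogeneous} $\Path$-type structures rather than ordinary ones, so the first move in each case is to convert via \Cref{prop:htr-path-type-path}.

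For $\tMcU \to \McU$, the NCTT data directly gives a truth structure relative to $\tMcU \to \McU$, a disjoint $\iota$-cofibrant endpoint structure on $\bI$, a $\min$-structure on $\bI$, an $\iota$-cofibrant structure on $\pi$-sections, $\Unit$-, $\Sigma$- and heterogeneous $\Path$-type structures on $\tMcU \to \McU$, and a $\mbbo{0}$-biased filling structure. First I would feed the truth structure, the disjoint endpoints and the $\Unit$-type structure into \Cref{prop:htr-path-type-path} to extract an ordinary $\Path$-type structure on $\tMcU \to \McU$. The first three and the fifth hypotheses of \Cref{thm:Path-Id} are then literally the listed NCTT items; the fourth is this new $\Path$-type structure together with the $\Sigma$-type structure; and the $\mbbo{0}$-biased filling structure is exactly the extra datum \Cref{thm:Path-Id} consumes. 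Hence it returns the $\Id$-type structure with $\Id_\McU(\tMcU) = P^\bI_\McU(\tMcU)$ and $\refl = \const$.

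For $\tMcU_0 \to \McU_0$, I would rerun the same argument with $\pi_0$ in place of $\pi$, using the transfer lemmas already established. A truth structure relative to $\tMcU_0 \to \McU_0$ is produced by \Cref{lem:int-univ-truth}; the disjoint endpoint structure and the $\min$-structure on $\bI$ are unchanged; an $\iota$-cofibrant structure on $\pi_0$-sections is produced by \Cref{lem:int-univ-cof-sect}; the $\Unit$-, $\Sigma$- and heterogeneous $\Path$-type structures on $\tMcU_0 \to \McU_0$ are part of the closure demanded of the internal universe, and applying \Cref{prop:htr-path-type-path} to $\tMcU_0 \to \McU_0$ (legitimate once its truth structure is available) yields an ordinary $\Path$-type structure $\Path_0$. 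The one genuinely new ingredient is a $\mbbo{0}$-biased filling structure for $\tMcU_0 \to \McU_0$: the NCTT structure only provides one for the ambient universe, against $\Cof \times \tMcU \to \Cof \times \McU$ in $\sfrac{\bC}{\Cof}$. But by \Cref{def:int-univ} the map $\tMcU_0 \to \McU_0$ is the pullback of $\tMcU \to \McU$ along $\El \colon \McU_0 \to \McU$, so $\Cof \times \tMcU_0 \to \Cof \times \McU_0$ is the pullback of $\Cof \times \tMcU \to \Cof \times \McU$ along $\Cof \times \El$, and a routine base-change of lifting structures along such a Cartesian morphism of arrows (cf.\ the restriction constructions of \cite[\S 2]{struct-lift}) transports the $\mbbo{0}$-biased filling structure down to $\tMcU_0 \to \McU_0$. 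With that in hand, \Cref{thm:Path-Id} applied to $\pi_0$ produces the $\Id$-type structure on $\tMcU_0 \to \McU_0$.

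The step I expect to be the main obstacle — indeed the only one not already black-boxed by an earlier result — is this last transfer: one must verify that restricting the ambient $\mbbo{0}$-biased filling structure along the pullback square exhibiting $\pi_0$ as a base change of $\pi$ still meets the definitional conditions on a lifting structure, in particular stability under pullback of the cofibration from \cite[Definitions 1.4 and 3.1]{struct-lift}, which is where one must be careful to work relative to $\Cof$ rather than merely in $\bC$. Everything else is bookkeeping, matching each hypothesis of \Cref{thm:Path-Id} against the corresponding NCTT item or internal-universe transfer lemma.
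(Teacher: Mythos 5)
Your proposal follows the paper's route — transfer the hypotheses via \Cref{lem:int-univ-truth}, \Cref{lem:int-univ-cof-sect} and \Cref{prop:htr-path-type-path}, then invoke \Cref{thm:Path-Id} for each universe — and it is more careful on exactly the point you flag as the potential obstacle. The paper's proof, when invoking \Cref{thm:Path-Id} for the internal universe $\pi_0$, cites \Cref{itm:nctt-int}, \Cref{itm:nctt-min} and \Cref{itm:nctt-univ} but not \Cref{itm:nctt-fill}, and it is silent on where the $\mbbo{0}$-biased filling structure against $\Cof\times\tMcU_0 \to \Cof\times\McU_0$ comes from; the NCTT structure as given in \Cref{def:naive-ctt} only provides one against $\Cof\times\tMcU \to \Cof\times\McU$. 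Your fix — observing that $\El\colon\McU_0\to\McU$ exhibits $\Cof\times\pi_0$ as a base change of $\Cof\times\pi$ in $\sfrac{\bC}{\Cof}$, so that the stable lifting structure $\Fill_{\mbbo{0}}$ restricts along this Cartesian square by the standard base-change behaviour of lifting structures in \cite{struct-lift} — supplies exactly the step the paper leaves implicit. So the proposal is correct, essentially the same argument as the paper's, and in fact slightly more rigorous: you have made explicit a transfer the paper tacitly assumes.
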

\begin{proof}
  By \Cref{lem:int-univ-truth}, the truth structure on
  $\partial\Cof \hookrightarrow \Cof$ relative to the ambient universe
  $\tMcU \to \McU$ from \Cref{itm:nctt-truth} is also a truth structure relative
  to the internal universe $\tMcU_0 \to \McU_0$.
  Similarly, by \Cref{lem:int-univ-cof-sect}, sections to internal universe
  $\pi_0$-fibrations are equipped with a canonical $\iota$-cofibration
  structure.

  Applying \Cref{prop:htr-path-type-path} on the $\Unit$-type and heterogeneous
  $\Path$-type structures of $\tMcU \to \McU$ and $\tMcU_0 \to \McU_0$ given by
  \Cref{itm:nctt-univ,itm:nctt-fib} produces $\Path$-type structures on each of
  these universal maps.

  Hence, we have observed that
  \begin{enumerate}[nolistsep]
    \item $\partial\Cof \hookrightarrow \Cof$ has a truth structure relative to the internal universe $\tMcU_0 \to \McU_0$
    \item The internal universe $\tMcU_0 \to \McU_0$ is equipped with a $\Path$-type structure
    \item Sections to internal universe $\pi_0$-fibrations are equipped with an
    $\iota$-cofibration structure
  \end{enumerate}
  Therefore, in combination with \Cref{itm:nctt-int,itm:nctt-min,itm:nctt-univ},
  applying \Cref{thm:Path-Id} shows that the induced $\Path$-type structure of
  the internal universe $\tMcU_0 \to \McU_0$ further induces an $\Id$-type
  structure on it.

  Similarly, the induced $\Path$-type structure of the ambient external universe
  $\tMcU \to \McU$ further induces an $\Id$-type structure on it.
\end{proof}

\begin{remark}\label{rmk:nctt-hiso-ext-op}
  In \Cref{itm:nctt-hext} we require a homotopy equivalence extension structure
  on the internal universe, as per \Cref{def:strong-hiso-ext-op}.
  However, \Cref{def:strong-hiso-ext-op} requires an $\Id$-type structure on the
  internal universe to even speak of the $\Id$-type homotopy extension
  operation.
  By this, we mean to take the $\Id$-type to be that given by the heterogeneous
  $\Path$-type structure, as described in \Cref{thm:nctt-Path-Id}.
\end{remark}

\begin{remark}\label{rmk:nctt-wkcomp}
  In a similar vein, in \Cref{itm:nctt-fill-wkcomp} the propositional
  computation structures for filling as formulated in
  \Cref{def:Sigma-fill-comp,def:Pi-fill-comp,def:Path-fill-comp} are all
  relative to some particular $\Id$-type structure on the ambient universe
  $\tMcU \to \McU$.
  Implicitly, we take the $\Id$-type structure to be that given by the
  heterogeneous $\Path$-type structure, as described in \Cref{thm:nctt-Path-Id}.
\end{remark}

One is also now ready to observe that the internal universe of NCTT is
pointed univalent.

\begin{theorem}\label{thm:nctt-univalence}
  Let $\bC$ be equipped with an NCTT structure whose ambient universe is
  $\pi\colon\tMcU \to \McU$ and internal universe is $\pi_0\colon\tMcU_0 \to \McU_0$.

  Then, the canonical $\Id$-type structure $\Id_0$ induced by the heterogeneous
  $\Path$-type structure on the internal universe $\pi_0$ from
  \Cref{thm:nctt-Path-Id} gives rise to a pointed $\pi$-univalence structure on
  $(\pi_0,\Id_0)$, in the sense of \Cref{def:axm-univalence}.
\end{theorem}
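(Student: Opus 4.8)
The plan is to reduce the statement to \Cref{thm:ptd-cubical-univalence} by checking that an NCTT structure satisfies that theorem's hypotheses verbatim; this is the same bookkeeping exercise already performed in \Cref{thm:nctt-Path-Id}, and I would begin by invoking that result. It produces the canonical $\Id$-type structures $\Id$ on $\tMcU \to \McU$ and $\Id_0$ on $\tMcU_0 \to \McU_0$, each obtained by first passing through \Cref{prop:htr-path-type-path} (using the $\Unit$-type structures together with the heterogeneous $\Path$-type structures of \Cref{itm:nctt-fib,itm:nctt-univ}) to get honest $\Path$-type structures, and then through \Cref{thm:Path-Id}. This pins down, once and for all, the $\Id_0$ appearing in the statement, and simultaneously supplies a $\Path$-type structure on the ambient universe.

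Next I would run down the list of hypotheses of \Cref{thm:ptd-cubical-univalence} and match each against a clause of \Cref{def:naive-ctt}: the truth structure relative to $\tMcU \to \McU$ under which $\bI$ has a disjoint cofibrant endpoint structure is \Cref{itm:nctt-truth} together with \Cref{itm:nctt-int}; the $\min$-structure is \Cref{itm:nctt-min}; the $\iota$-cofibration structure on $\pi$-fibration sections is \Cref{itm:nctt-cof-sec}; the $\mbbo{0}$-biased filling structure is \Cref{itm:nctt-fill}; the $\Path$-, $\Sigma$-, $\Pi$-type structures on the ambient universe are the induced $\Path$-type structure from the first paragraph together with the $\Sigma$- and $\Pi$-type structures of \Cref{itm:nctt-fib}; the $\Id$-, $\Sigma$-, $\Pi$-type structures on the internal universe are $\Id_0$ together with the $\Sigma$- and $\Pi$-type structures of \Cref{itm:nctt-univ}; and the $\Id_0$-homotopy isomorphism extension structure is \Cref{itm:nctt-hext}, which by \Cref{rmk:nctt-hiso-ext-op} is taken relative to exactly the $\Id_0$ fixed above. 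The mildly delicate point to verify here is this last coherence of notation: \Cref{def:strong-hiso-ext-op} is phrased for a universal map equipped with an $\Id$-type structure and makes no reference to any ambient-universe data, so its instantiation on $\tMcU_0 \to \McU_0$ with $\Id_0$ is unambiguous and matches the hypothesis of \Cref{thm:ptd-cubical-univalence} on the nose.

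With all hypotheses confirmed, \Cref{thm:ptd-cubical-univalence} immediately yields a pointed $\pi$-univalence structure on $(\pi_0, \Id_0)$ in the sense of \Cref{def:axm-univalence}, which is the claim. I do not anticipate a genuine mathematical obstacle, since the substantive work was already discharged in \Cref{thm:Path-Id,thm:ptd-cubical-univalence}; the only thing requiring care is to keep track that the $\Id$-type witnessing \Cref{itm:nctt-hext} (via \Cref{rmk:nctt-hiso-ext-op}) is literally the same $\Id_0$ that \Cref{thm:nctt-Path-Id} produces and that \Cref{thm:ptd-cubical-univalence} consumes, so that no implicit reindexing of the homotopy isomorphism extension data is being smuggled in.
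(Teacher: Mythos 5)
Your proof is correct and takes the same route as the paper, which simply cites \Cref{thm:ptd-cubical-univalence}, \Cref{rmk:nctt-hiso-ext-op}, and \Cref{def:naive-ctt} as immediate; you have merely unpacked the hypothesis-matching that the paper leaves implicit, including the one genuinely delicate point that the $\Id_0$ used in \Cref{itm:nctt-hext} (via \Cref{rmk:nctt-hiso-ext-op}) coincides with the $\Id_0$ produced by \Cref{thm:nctt-Path-Id}.
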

\begin{proof}
  Immediate by \Cref{thm:ptd-cubical-univalence,rmk:nctt-hiso-ext-op,def:naive-ctt}.
\end{proof}

\subsection{Related Axiomations of Cubical Type Theory}\label{subsec:ctt-comparison}
\subsubsection{Reformulation as a CwR}\label{subsubsec:cwr}
Although we have phrased the axioms in \Cref{def:naive-ctt} in terms of
universal maps in a locally cartesian closed category $\bC$, we expect it is
possible, though likely quite effortful, to rephrase this axiomatisation in the
framework of category with representable maps (CwR) of \citeauthor{uem23}
\cite{uem23} by taking the universal maps as the (generating) representable
maps.

The CwR describing various axioms of \Cref{def:naive-ctt} can be constructed
individually by way of free generation and then assembled together by a
colimiting process using the results provided by \citeauthor{jel24}
\cite{jel24}, as observed in \cite{axm-univalence}, for example.
Then, a category equipped with a structure of naive cubical type theory becomes
a equivalent to a CwR map from this CwR into a locally cartesian closed category
(i.e. a model of the CwR for NCTT in functorial semantics).
After doing so, we may then take as definition of naive cubical type theory
itself (as opposed to a \emph{structure} of such on a category as formally
defined in \Cref{def:naive-ctt}) to be this universal CwR.
We do not make this statement formal in this paper and leave it as future
work.

\subsubsection{Relation to CCHM Cubical Type Theory}
The naive cubical type theory differs from CCHM cubical type theory is several aspects, as outlined below:
\begin{enumerate}
  \item In \Cref{itm:nctt-fill} of \Cref{def:naive-ctt}, we have taken the
  filling structure to be a primitive, whereas in \cite{cchm15,ang+21} the
  filling structure is derived from the composition structure.
  The reason for this is two-fold:
  \begin{enumerate}
    \item As observed in \Cref{rmk:biased-filling}, the filling operation
    axiomatises the fact that the pushout-product of a cofibration with an
    interval endpoint inclusion is a trivial cofibration.
    Categorically, this is much more natural than the composition operation.
    \item The filling operation seems to have more practicality, as observed
    from the derivation of $\Id$-types from $\Path$-types in \Cref{thm:Path-Id},
    for example.
  \end{enumerate}
  \item The computational rules for filling
  \Cref{itm:nctt-fill-wkcomp,itm:nctt-fill-defcomp} are optional extras to the
  structure of naive cubical type theory.
  In contrast, the definitional versions \Cref{itm:nctt-fill-defcomp} are
  required by the cubical type theory formulations of \cite{cchm15,ang+21}.
  In fact, it was shown in \cite{chs22} that \cite{cchm15} cubical type theory
  with definitional composition replaced with non-computational filling still
  enjoys homotopical canonicity.
  \item The axiom \Cref{itm:nctt-cof-sec} that sections to fibrations are
  cofibrantions is an axiom of naive cubical type theory not present in the
  formulation of \cite{cchm15,ang+21}, which amounts to saying that terms are
  cofibrant.
  With this, we are able to derive the $\Id$-types from the $\Path$-types as in
  \Cref{thm:Path-Id}.
  Although we expect that this makes type checking undecidable, it appears that
  such undecidability is unavoidable if one wishes to take $\Path$-types as
  $\Id$-types, as analysed by \cite{swa18}.
  \item In \Cref{itm:nctt-hext}, motivated by Voevodsky's original proof of simplicial
  univalence \cite{kl21} we have required the strong homotopy equivalence
  extension structure (\Cref{def:strong-hiso-ext-op}).
  In contrast, \cite{cchm15} requires a $\Glue$-type structure
  (\Cref{def:glue-type}) for the $\Path$-type structure.
  Although the $\Glue$-type structure gives book univalence as recalled in
  \Cref{thm:cchm-cubical-univalence}, we were unable to obtain pointed
  univalence (\Cref{def:axm-univalence}) from it.
  We have opted for this modification for the following reasons:
  \begin{enumerate}
    \item As argued in \cite{axm-univalence}, pointed univalence is a
    strengthening of book univalence that provides an induction principle for
    propositional equivalences similar to $\MsJ$-elimination for $\Id$-types.
    \item Such a strong homotopy equivalence extension structure is supported by
    a large class of Quillen model categories in such a way that they induce an
    underling $\Glue$-type structure, as will be observed in
    \Cref{thm:heqv-ext}.
  \end{enumerate}
\end{enumerate}

\section{Equivalence Extension Properties} \label{sec:equiv-extension}
For purposes of supporting univalence in cubical type theory, we need to
show homotopy equivalence extension properties in certain model categories.
Throughout we work in a locally cartesian closed category $\bC$ equipped with a
model structure which we put various assumptions on at various points.

\subsection{Weak Equivalence Extension Property}\label{subsec:wk-eqv-ext}
In this part, we show that under some conditions, weak equivalences between
fibrations extend along cofibrations using the Glue-unglue operation in \Cref{constr:glue-op}.
Results established in this part are already well-known from \cite[Proposition
5.1]{sat17} and \cite[Theorem 3.4.1]{kl21}, amongst many others.
For the sake of completeness and ease of reference, we reproduce them here in
general cartesian closed model categories.

We start by noting that $\Glue\relax$ has the following composition property.
\begin{lemma}\label{lem:ext-comp}
  Let there be a mono $i \colon \partial B \hookrightarrow B$ along with an
  object $\underline{E} \to B \in \sfrac{\bC}{B}$ whose pullback under
  $i$ is $\partial\underline{E} \to \partial B$ as in \Cref{constr:glue-op}.

  Fix composable maps
  $\partial\overline{E} \xrightarrow{g} \partial E \xrightarrow{f}
  \partial\ul{E}$.
  Then, $\Glue_{i,\ul{E}}(fg) \to \ul{E}$ factors as
  $\Glue_{i,\Glue_{i,\ul{E}}(f)}(g) \to \Glue_{i,\ul{E}}(f) \to \ul{E}$ as
  below.
  %
  \begin{equation*}
    \begin{tikzcd}[cramped, column sep=small]
      {\partial\overline{E}} &&&& {\Glue_{i,\ul{E}}(fg)} & {i_*\partial\overline{E}} \\
      {\partial\overline{E}} &&&& {\Glue_{i,\Glue_{i,\ul{E}}(f)}(g)} & {i_*\partial\overline{E}} \\
      && {\partial E} &&& {\Glue_{i,\ul{E}}(f)} & {i_*\partial E} \\
      &&&& {\partial \underline E} && {\underline{E}} & {i_*(\partial\underline{E})} \\
      &&& {\partial B} && B
      \arrow[dashed, from=1-1, to=1-5]
      \arrow[equals, from=1-1, to=2-1]
      \arrow[from=1-5, to=1-6]
      \arrow[equals, from=1-5, to=2-5]
      \arrow[equals, from=1-6, to=2-6]
      \arrow[curve={height=-6pt}, from=1-6, to=4-8]
      \arrow[dashed, from=2-1, to=2-5]
      \arrow[from=2-1, to=3-3]
      \arrow[from=2-1, to=5-4]
      \arrow[from=2-5, to=2-6]
      \arrow[dashed, from=2-5, to=5-6]
      \arrow[from=2-6, to=3-7]
      \arrow[from=3-3, to=4-5]
      \arrow[from=3-3, to=5-4]
      \arrow[from=3-6, to=3-7]
      \arrow[dashed, from=3-6, to=5-6]
      \arrow[from=3-7, to=4-8]
      \arrow[from=3-7, to=5-6]
      \arrow[from=4-5, to=5-4]
      \arrow["\eta"{description}, from=4-7, to=4-8]
      \arrow[from=4-7, to=5-6]
      \arrow[from=4-8, to=5-6]
      \arrow["i"', hook, from=5-4, to=5-6]
      \arrow[crossing over, dashed, from=2-5, to=3-6]
      \arrow[crossing over, dashed, from=3-3, to=3-6]
      \arrow[crossing over, dashed, from=3-6, to=4-7]
      \arrow[dashed, from=1-5, to=4-7, curve={height=-6pt}]
      \arrow[crossing over, hook, from=4-5, to=4-7]
      \arrow[crossing over, dashed, from=1-1, to=4-5]
      \arrow["\lrcorner"{anchor=center, pos=0.15, scale=1.5, rotate=45}, draw=none, from=2-5, to=3-7]
      \arrow["\lrcorner"{anchor=center, pos=0.15, scale=1.5, rotate=45}, draw=none, from=3-6, to=4-8]
      \arrow["\lrcorner"{anchor=center, pos=0.15, scale=1.5, rotate=0}, draw=none, from=4-5, to=5-6]
    \end{tikzcd}
  \end{equation*}
\end{lemma}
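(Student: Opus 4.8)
The plan is to unfold both sides as iterated pullbacks built from the pushforward $i_*$ and the unit of $i^* \dashv i_*$, and then to match them using the pasting lemma for pullbacks together with the triangle identities. Throughout I would use that $i_*$ preserves pullbacks (being a right adjoint), that $i^*$ does too, and that, since $i$ is a mono, the counit $\epsilon \colon i^*i_* \to \id$ is an isomorphism (\Cref{lem:mono-pushfoward-counit-iso}); by \Cref{lem:ext-retract} this makes $i^*\Glue_{i,\ul{E}}(f)$ canonically identified with $\partial E$ over $\partial\ul{E} = i^*\ul{E}$, which is what allows the iterated $\Glue$ on the right-hand side to be formed with input $g$ in the first place.

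First I would record the explicit description $\Glue_{i,\ul{E}}(h) = \ul{E} \times_{i_*i^*\ul{E}} i_*(\mathrm{dom}\,h)$, namely the pullback of $\eta_{\ul{E}} \colon \ul{E} \to i_*i^*\ul{E}$ against $i_*h$, valid for any $h$ into $i^*\ul{E} = \partial\ul{E}$. Applying this to $h = fg$ and factoring $i_*(fg) = i_*f \circ i_*g$, the pasting lemma splits the defining pullback:
\[
  \Glue_{i,\ul{E}}(fg) \;\cong\; \bigl(\ul{E} \times_{i_*i^*\ul{E}} i_*\partial E\bigr) \times_{i_*\partial E} i_*\partial\overline{E} \;=\; \Glue_{i,\ul{E}}(f) \times_{i_*\partial E} i_*\partial\overline{E},
\]
where $\Glue_{i,\ul{E}}(f) \to i_*\partial E$ is the second projection $\mathrm{pr}_2$ and $i_*\partial\overline{E} \to i_*\partial E$ is $i_*g$. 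In particular the projection $\Glue_{i,\ul{E}}(fg) \to \ul{E}$ factors through $\Glue_{i,\ul{E}}(f)$, as the statement claims; it remains only to identify the right-hand term with $\Glue_{i,\Glue_{i,\ul{E}}(f)}(g)$.

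By definition, $\Glue_{i,\Glue_{i,\ul{E}}(f)}(g) = \Glue_{i,\ul{E}}(f) \times_{i_*i^*\Glue_{i,\ul{E}}(f)} i_*\partial\overline{E}$, taken along $\eta_{\Glue_{i,\ul{E}}(f)}$ and $i_*$ of the structure map $\partial\overline{E} \to i^*\Glue_{i,\ul{E}}(f) \cong \partial E$. So the one genuinely non-formal point — and the step I expect to be the main, if short, obstacle — is to show that under the isomorphism $\theta \colon i^*\Glue_{i,\ul{E}}(f) \cong \partial E$ of \Cref{lem:ext-retract}, the unit $\eta_{\Glue_{i,\ul{E}}(f)}$ corresponds to $\mathrm{pr}_2 \colon \Glue_{i,\ul{E}}(f) \to i_*\partial E$. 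This is a triangle-identity chase: tracing how \Cref{lem:ext-retract} builds the iso one sees $\theta = \epsilon_{\partial E} \circ i^*(\mathrm{pr}_2)$, and then $i_*\theta \circ \eta_{\Glue_{i,\ul{E}}(f)} = i_*\epsilon_{\partial E} \circ i_*i^*(\mathrm{pr}_2) \circ \eta_{\Glue_{i,\ul{E}}(f)} = i_*\epsilon_{\partial E} \circ \eta_{i_*\partial E} \circ \mathrm{pr}_2 = \mathrm{pr}_2$, using naturality of $\eta$ at $\mathrm{pr}_2$ and the triangle identity $i_*\epsilon \circ \eta i_* = \id$.

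With that identification, both $\Glue_{i,\ul{E}}(fg)$ and $\Glue_{i,\Glue_{i,\ul{E}}(f)}(g)$ are pullbacks of the same cospan $\Glue_{i,\ul{E}}(f) \xrightarrow{\mathrm{pr}_2} i_*\partial E \xleftarrow{i_*g} i_*\partial\overline{E}$, hence canonically isomorphic over $\Glue_{i,\ul{E}}(f)$, and therefore over $\ul{E}$. Running the unit maps on the two sides through the same computation then shows that the full diagram in the statement — the nested $\Glue$- and $i_*$-objects and the two $\unglue$ maps — commutes, which finishes the proof.
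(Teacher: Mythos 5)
Your proof is correct and takes essentially the same route as the paper: both arguments reduce to the pasting lemma for pullbacks together with the key identification (via \Cref{lem:ext-retract}) that the projection $\Glue_{i,\ul{E}}(f) \to i_*\partial E$ agrees, up to the canonical isomorphism $i^*\Glue_{i,\ul{E}}(f) \cong \partial E$, with the unit $\eta_{\Glue_{i,\ul{E}}(f)}$. Where the paper phrases this identification compactly as ``transposing the square makes the top row the identity,'' your triangle-identity chase unwinds exactly that transposition argument in explicit form.
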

\begin{proof}
  It suffices to observe that the top row of the pullback on the left
  \begin{center}
    \begin{minipage}{0.45\linewidth}
      \begin{equation*}
        \begin{tikzcd}[cramped]
          {\Glue_{i,\ul{E}}(f)} & {i_*\partial E} \\
          {\underline{E}} & {i_*(\partial\underline{E})}
          \arrow["", from=1-1, to=1-2]
          \arrow[dashed, from=1-1, to=2-1]
          \arrow["\lrcorner"{anchor=center, pos=0.05, scale=1.5}, draw=none, from=1-1, to=2-2]
          \arrow[from=1-2, to=2-2]
          \arrow["\eta"', from=2-1, to=2-2]
        \end{tikzcd}
      \end{equation*}
    \end{minipage}
    \begin{minipage}{0.45\linewidth}
      \begin{equation*}
        \begin{tikzcd}[cramped]
          {i^*\Glue_{i,\underline{E}}(f)} & {\partial E} \\
          {i^*\underline{E}} & {\partial\underline{E}}
          \arrow["{=}", from=1-1, to=1-2]
          \arrow[dashed, from=1-1, to=2-1]
          \arrow[from=1-2, to=2-2]
          \arrow["{=}"', from=2-1, to=2-2]
        \end{tikzcd}
      \end{equation*}
    \end{minipage}
  \end{center}
  is isomorphic to the unit of the adjunction $i^* \dashv i_*$ at
  $\Glue_{i,\partial\underline{E}}(f)$.
  This is indeed the case because by \Cref{lem:ext-retract}, transposing this
  square gives the square on the right, where the top row is the identity.
\end{proof}

With the goal of having $\Glue$-type structures (\Cref{def:glue-type}), we next
observe some conditions for $\Glue\relax$ take values in fibrations.
\begin{lemma}\label{lem:ext-preserve-fib}
  Let there be a mono $i \colon \partial B \hookrightarrow B$ along with an
  object $\underline{E} \to B \in \sfrac{\bC}{B}$ whose pullback under $i$ is
  $\partial\underline{E} \to \partial B$ as in \Cref{constr:glue-op}.
  Assume that $\bC$ is equipped with a model structure under which
  $\underline{E} \twoheadrightarrow B$ is a fibration.
  Then,
  \begin{enumerate}
    \item \label{itm:ext-preserve-fib-tf} If
    $\partial E \twoheadrightarrow \partial\underline{E}$ is a trivial fibration
    and the pullback functor
    $i^* \colon \sfrac{\bC}{B} \to \sfrac{\bC}{\partial B}$ preserves
    cofibrations then $\Glue_{i,\underline{E}}(\partial{E}) \to \underline{E}$
    is a trivial fibration.
    \item \label{itm:ext-preserve-fib-retract} If
    $\partial E \to \partial\underline{E}$ admits a retract over $\partial B$
    then the composite
    $\Glue_{i,\underline{E}}(\partial E) \to \underline{E} \twoheadrightarrow B$
    is a fibration.
    \item \label{itm:ext-wk-eqv-fib} In particular, if the monos are precisely
    the cofibrations and $\partial E \xrightarrow{\sim} \partial\underline{E}$
    is a weak equivalence over $\partial B$ then
    $\Glue_{i,\underline{E}}(\partial E) \to B$ is a fibration.
  \end{enumerate}
\end{lemma}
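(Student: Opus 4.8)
The plan is to establish the three parts in sequence, since part \ref{itm:ext-wk-eqv-fib} will reduce to parts \ref{itm:ext-preserve-fib-tf} and \ref{itm:ext-preserve-fib-retract} via the composition property of $\Glue$ in \Cref{lem:ext-comp}. Throughout, the basic object is the defining pullback square of \Cref{constr:glue-op}, which presents $\unglue_{i,\underline{E}}(\partial E)\colon \Glue_{i,\underline{E}}(\partial E) \to \underline{E}$ as the pullback of $i_*(\partial E \to \partial\underline{E})$ along the unit $\eta\colon \underline{E} \to i_*i^*\underline{E}$; I will also repeatedly use that $\Glue_{i,\underline{E}}(\id_{\partial\underline{E}}) \cong \underline{E}$, that $i^*\unglue_{i,\underline{E}}(f) \cong f$ by \Cref{lem:ext-retract}, and that fibrations are closed under retracts and pullback in any model category.

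For part \ref{itm:ext-preserve-fib-tf} I would first observe that, since $i^*\colon \sfrac{\bC}{B} \to \sfrac{\bC}{\partial B}$ preserves cofibrations by hypothesis, its right adjoint $i_*$ preserves trivial fibrations: a trivial fibration is exactly a map with the right lifting property against all cofibrations, and a lifting problem of a cofibration $j$ against $i_*q$ transposes under $i^* \dashv i_*$ to a lifting problem of the cofibration $i^*j$ against $q$, which $q$ solves; transposing the solution back gives the required lift. Hence $i_*(\partial E \twoheadrightarrow \partial\underline{E})$ is a trivial fibration, and $\Glue_{i,\underline{E}}(\partial E) \to \underline{E}$, being a pullback of it along $\eta$, is a trivial fibration as well.

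Part \ref{itm:ext-preserve-fib-retract} is the technical heart. Fixing a retraction $r\colon \partial\underline{E}\to\partial E$ over $\partial B$ of the map $f := (\partial E \to \partial\underline{E})$, the plan is a direct lifting argument in the spirit of \cite[Proposition 5.1]{sat17}: to see that the composite $\Glue_{i,\underline{E}}(\partial E) \to \underline{E} \twoheadrightarrow B$ has the right lifting property against a trivial cofibration $m\colon K \rightarrowtail L$, one unfolds, using the defining pullback of \Cref{constr:glue-op} and the adjunction $i^*\dashv i_*$, that a lift $L \to \Glue_{i,\underline{E}}(\partial E)$ over $B$ is the same as a map $L \to \underline{E}$ over $B$ together with a compatible map $i^*L \to \partial E$ over $\partial B$. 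One produces the first by lifting $m$ against the fibration $\underline{E}\twoheadrightarrow B$, and then uses the retraction $r$ — together with the identification $i^*\Glue_{i,\underline{E}}(\partial E)\cong\partial E$ coming from \Cref{lem:ext-retract} and \Cref{lem:mono-pushfoward-counit-iso} — to build and correct the second so that compatibility with the defining pullback holds; when monos are cofibrations this compatibility can be reorganized as a further lifting problem against $\underline{E}\twoheadrightarrow B$ whose left-hand map $K \cup_{i^*K} i^*L \to L$ is again a trivial cofibration by the gluing lemma. (Alternatively one can try to package the same information through \Cref{lem:ext-comp} applied to the factorization $\id_{\partial E} = r \circ f$ over the base $\Glue_{i,\underline{E}}(\partial E)$, realizing $\Glue_{i,\underline{E}}(\partial E) \to B$ as a retract of an iterated glue that is fibrant over $B$ by part \ref{itm:ext-preserve-fib-tf}.) I expect this orchestration of the lift — in particular forcing its restriction over $\partial B$ into the glued part, which is exactly where the retraction is indispensable — to be the main obstacle.

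Finally, for part \ref{itm:ext-wk-eqv-fib}: since monos are precisely the cofibrations, $i^*$ preserves cofibrations (monos being pullback-stable), so part \ref{itm:ext-preserve-fib-tf} applies; moreover $\partial\underline{E} = i^*\underline{E}$ is a fibration over $\partial B$, and $\partial E$ may be taken fibrant over $\partial B$ as in the equivalence extension setting. Factor the weak equivalence $\partial E \xrightarrow{\sim} \partial\underline{E}$ over $\partial B$ as a trivial cofibration $j\colon \partial E \rightarrowtail \partial E''$ followed by a trivial fibration $q\colon \partial E'' \twoheadrightarrow \partial\underline{E}$. By part \ref{itm:ext-preserve-fib-tf}, $\Glue_{i,\underline{E}}(q) \to \underline{E}$ is a trivial fibration, so $\Glue_{i,\underline{E}}(q) \twoheadrightarrow B$ is a fibration with $i^*\Glue_{i,\underline{E}}(q) \cong \partial E''$ by \Cref{lem:ext-retract}; and $j$ admits a retraction over $\partial B$ since $\partial E$ is fibrant there. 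Applying part \ref{itm:ext-preserve-fib-retract} with base fibration $\Glue_{i,\underline{E}}(q) \twoheadrightarrow B$ and map $j$, the composite $\Glue_{i,\Glue_{i,\underline{E}}(q)}(j) \to \Glue_{i,\underline{E}}(q) \twoheadrightarrow B$ is a fibration; and by \Cref{lem:ext-comp} applied to $qj$ this composite is exactly $\Glue_{i,\underline{E}}(\partial E) \to B$, which is therefore a fibration.
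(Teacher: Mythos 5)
Parts~1 and~3 of your proposal match the paper's proof: for part~1 you note that $i^*$ preserving cofibrations makes its right adjoint $i_*$ preserve trivial fibrations (by transposing lifting problems), then pull back along $\eta$; for part~3 you factor the weak equivalence into a trivial cofibration followed by a trivial fibration, apply parts~1 and~2 to the factors, and recombine via \Cref{lem:ext-comp}, exactly as the paper does (and you correctly make explicit the fibrancy of $\partial E$ over $\partial B$ that the paper uses only implicitly when producing the retraction of the trivial cofibration).

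The gap is in part~2, where you miss the paper's argument and instead set up a lifting computation that you yourself flag as the ``main obstacle'' and leave unfinished. The paper's proof of part~2 involves no lifting at all: applying $\Glue_{i,\underline{E}}$ to the retract diagram $\partial E \to \partial\underline{E} \to \partial E$ over $\partial B$ (with identity composite) yields, by functoriality, a retract diagram $\Glue_{i,\underline{E}}(\partial E) \to \underline{E} \to \Glue_{i,\underline{E}}(\partial E)$ over $B$, so $\Glue_{i,\underline{E}}(\partial E) \to B$ is a retract of the fibration $\underline{E} \twoheadrightarrow B$ in the arrow category of $\bC$ and is therefore itself a fibration, since fibrations are closed under retracts. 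Beyond not carrying out the orchestration you describe, your version of part~2 imports the hypothesis that monos are cofibrations, which is an assumption of part~3 but not of part~2; and your parenthetical alternative via \Cref{lem:ext-comp} does not close as sketched, because the auxiliary glued object you would need to be fibrant over $B$ is built from $r$, which is only assumed to be a retraction and not a trivial fibration, so part~1 does not apply to it. For part~2 you should set the lifting machinery aside entirely and argue by closure of fibrations under retracts, as the paper does.
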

\begin{proof}
  In \Cref{itm:ext-preserve-fib-tf} where
  $\partial E \twoheadrightarrow \partial\underline{E}$ is a trivial fibration,
  because the pullback $i^*$ preserves cofibrations, its right adjoint $i_*$
  preserves trivial fibrations, so
  $i_*(\partial E) \twoheadrightarrow i_*(\partial\underline{E})$ is a trivial
  fibration, and hence its pullback
  $\Glue_{i,\underline{E}}(\partial E) \twoheadrightarrow \underline{E}$ is a
  trivial fibration as well.

  In \Cref{itm:ext-preserve-fib-retract} where
  $\partial E \to \partial\underline{E}$ admits a retract over $\partial B$ as
  on the left, functoriality of $\Glue$ ensures that
  $\Glue_{i,\underline{E}}(\partial E) \to
  \Glue_{i,\underline{E}}(\partial\underline{E}) = \underline{E}$ admits a
  retract over $B$ as on the right.
  \begin{center}
    \begin{minipage}{0.45\linewidth}
      \begin{equation*}
        \begin{tikzcd}[cramped]
          {\partial E} & {\partial\underline{E}} & {\partial E} \\
          {\partial B} & {\partial B} & {\partial B}
          \arrow[from=1-1, to=1-2]
          \arrow["{=}", curve={height=-12pt}, from=1-1, to=1-3]
          \arrow[two heads, from=1-1, to=2-1]
          \arrow[dashed, from=1-2, to=1-3]
          \arrow[two heads, from=1-2, to=2-2]
          \arrow[two heads, from=1-3, to=2-3]
          \arrow["{=}"', tail reversed, from=2-1, to=2-2]
          \arrow["{=}"', tail reversed, from=2-2, to=2-3]
        \end{tikzcd}
      \end{equation*}
    \end{minipage}
    \begin{minipage}{0.45\linewidth}
      \begin{tikzcd}[cramped]
        {\Glue_{i,\underline{E}}(\partial E)} & {\underline{E}} & {\Glue_{i,\underline{E}}(\partial E)} \\
        B & B & B
        \arrow[from=1-1, to=1-2]
        \arrow["{=}", curve={height=-12pt}, from=1-1, to=1-3]
        \arrow[two heads, from=1-1, to=2-1]
        \arrow[dashed, from=1-2, to=1-3]
        \arrow[two heads, from=1-2, to=2-2]
        \arrow[two heads, from=1-3, to=2-3]
        \arrow["{=}"', tail reversed, from=2-1, to=2-2]
        \arrow["{=}"', tail reversed, from=2-2, to=2-3]
      \end{tikzcd}
    \end{minipage}
  \end{center}
  Because $\underline{E} \twoheadrightarrow B$ is assumed to be a fibration,
  $\Glue_{i,\underline{E}}(\partial E) \twoheadrightarrow B$ is a fibration as
  well.

  To conclude, for \Cref{itm:ext-wk-eqv-fib}, we assume that the monos are
  precisely cofibrations and that
  $\partial E \xrightarrow{\sim} \partial\underline{E}$ is a weak equivalence.
  Factoring $\partial E \xrightarrow{\sim} \partial\underline{E}$ into a trivial
  cofibration followed by a trivial fibration
  \begin{equation*}
    \begin{tikzcd}[cramped]
      \partial E \ar[r, hook, "\sim"] & \partial E' \ar[r, two heads, "\sim"] & \partial\underline{E}
    \end{tikzcd}
  \end{equation*}
  each of which is over $\partial B$, we note that
  $\Glue_{i,\underline{E}}(\partial E') \to B$ is a fibration by
  \Cref{itm:ext-preserve-fib-tf} as monos are stable under pullback.
  And since
  $\begin{tikzcd}[cramped]\partial E \ar[r, hook, "\sim"] & \partial
    E'\end{tikzcd}$ lifts against
  $\begin{tikzcd}[cramped]
    \partial E \ar[r, two heads] & \partial B
  \end{tikzcd}$, it admits a retract over $\partial B$, so by
  \Cref{itm:ext-preserve-fib-retract}, it follows that
  $\Glue_{i,\Glue_{i,\underline{E}}(\partial E')}(\partial E) \to B$ is a fibration.
  By \Cref{lem:ext-comp},
  $\Glue_{i,\Glue_{i,\underline{E}}(\partial E')}(\partial E) \cong
  \Glue_{i,\ul{E}}(\partial E)$, so the result follows.
\end{proof}

Next, we observe a weak form of homotopy section extension property of
$\Glue\relax$.
\begin{lemma}\label{lem:ext-htpy-sect-wk}
  Let $i \colon \partial B \hookrightarrow B$ be a mono and
  $\underline{E} \to B \in \sfrac{\bC}{B}$ an object in the slice category whose
  pullback under $i$ is $\partial\underline{E} \to \partial B$ as in
  \Cref{constr:glue-op}.
  Assume $\bC$ is equipped with a model structure in which
  \begin{enumerate}
    \item The cofibrations are the monos.
    \item The pushout-product of a (cofibration, trivial
    cofibration)-pair is a trivial cofibration.
    \item $\underline{E} \twoheadrightarrow B$ is a fibration.
    \item $I$ is a good, but not necessarily very good, cylinder object for the
    terminal object, with the two endpoint inclusions
    $\set{0},\set{1} \rightrightarrows I$.
  \end{enumerate}

  Fix a map $p \colon \partial E \to \partial\underline{E}$ over $\partial B$
  and suppose it has a homotopy section as on the left below
  $\partial s \colon \partial\underline{E} \to \partial E$ via a homotopy
  $\partial H \colon \partial\underline{E} \times I \to E$ over $\partial B$ as
  on the right below.
  \begin{center}
    \begin{minipage}{0.45\linewidth}
      \begin{equation*}
        \begin{tikzcd}[cramped, row sep=small, column sep=small]
          {\partial E} \\
          && {\partial\underline{E}} \\
          & {\partial B}
          \arrow["p"', shift right=1, from=1-1, to=2-3]
          \arrow[from=1-1, to=3-2]
          \arrow["{\partial s}"', shift right=1, from=2-3, to=1-1]
          \arrow[two heads, from=2-3, to=3-2]
        \end{tikzcd}
      \end{equation*}
    \end{minipage}
    \begin{minipage}{0.45\linewidth}
      \begin{tikzcd}[cramped, row sep=small, column sep=small]
        {\partial\underline{E} \times \set{0}} \\
        & {\partial\underline{E} \times I} && {\partial\underline{E}} \\
        {\partial\underline{E} \times \set{1}} \\
        & {\partial\underline{E}} && {\partial B}
        \arrow[hook, from=1-1, to=2-2]
        \arrow["\id"{description}, from=1-1, to=2-4]
        \arrow["\proj"{description}, from=2-2, to=4-2]
        \arrow[from=2-4, to=4-4]
        \arrow[hook, from=3-1, to=2-2]
        \arrow["{p \cdot \partial s}"{description}, from=3-1, to=2-4, crossing over]
        \arrow[from=4-2, to=4-4]
        \arrow["{\partial H}"{description}, from=2-2, to=2-4]
      \end{tikzcd}
    \end{minipage}
  \end{center}
  Then,
  $\unglue_{i,\underline{E}}(p) \colon \Glue_{i,\underline{E}}(\partial E) \to
  \underline{E}$ also admits a homotopy section
  $s \colon \underline{E} \to \Glue_{i,\underline{E}}(\partial E)$ as on the
  left below, not necessarily extending $\partial s$, with a homotopy of the
  composite $H \colon \id_{\underline{E}} \simeq ps$ that does extend
  $\partial H$, as on the right below.
  \begin{center}
    \begin{minipage}{0.45\linewidth}
      \begin{equation*}
        \begin{tikzcd}[cramped, row sep=small, column sep=small]
          {\underline{E} \times \set{0}} \\
          & {\underline{E} \times I} && {\underline{E}} \\
          {\underline{E} \times \set{1}} \\
          & {\underline{E}} && B
          \arrow[hook, from=1-1, to=2-2]
          \arrow["\id"{description}, from=1-1, to=2-4]
          \arrow["\proj"{description}, from=2-2, to=4-2]
          \arrow[from=2-4, to=4-4]
          \arrow[hook, from=3-1, to=2-2]
          \arrow["ps"{description}, dashed, from=3-1, to=2-4, crossing over]
          \arrow["H"{description}, dashed, from=2-2, to=2-4]
          \arrow[from=4-2, to=4-4]
        \end{tikzcd}
      \end{equation*}
    \end{minipage}
    \begin{minipage}{0.45\linewidth}
      \begin{equation*}
        \begin{tikzcd}[cramped, row sep=small, column sep=small]
          {\partial\underline{E} \times I} & {\partial\underline{E}} \\
          {\underline{E} \times I} & {\underline{E}}
          \arrow["{\partial H}", from=1-1, to=1-2]
          \arrow[hook, from=1-1, to=2-1]
          \arrow[hook, from=1-2, to=2-2]
          \arrow["H"', from=2-1, to=2-2]
        \end{tikzcd}
      \end{equation*}
    \end{minipage}
  \end{center}
\end{lemma}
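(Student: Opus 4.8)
The plan is to produce the data $(s,H)$ in two stages: first extend the partial homotopy $\partial H$ to a homotopy $H$ over $B$ by solving a single lifting problem, and then lift the target endpoint of $H$ through $\unglue_{i,\underline{E}}(p)$ using nothing but the pullback defining $\Glue$ and the adjunction $i^{*}\dashv i_{*}$.

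For the first stage I would form the pushout--product $j\colon(\partial\underline{E}\times I)\cup_{\partial\underline{E}\times\{0\}}(\underline{E}\times\{0\})\hookrightarrow\underline{E}\times I$ of the inclusion $\partial\underline{E}\hookrightarrow\underline{E}$ with the endpoint inclusion $\{0\}\hookrightarrow I$. The first of these is a monomorphism, being a pullback of $i$, hence a cofibration by assumption~(1); the second is a monomorphism (any map out of the terminal object is one), hence a cofibration by assumption~(1), and a weak equivalence by two-out-of-three applied to $\{0\}\to I\to 1$, so it is a trivial cofibration. By assumption~(2), $j$ is then a trivial cofibration, so it has the left lifting property against the fibration $\underline{E}\twoheadrightarrow B$. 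I would solve the lifting problem whose upper map restricts to $\partial H$ on $\partial\underline{E}\times I$ (postcomposed with $\partial\underline{E}\hookrightarrow\underline{E}$) and to $\id_{\underline{E}}$ on $\underline{E}\times\{0\}$ --- the two agreeing on $\partial\underline{E}\times\{0\}$ because $\partial H$ is $\id_{\partial\underline{E}}$ there --- and whose lower map is $\underline{E}\times I\xrightarrow{\;\mathrm{pr}\;}\underline{E}\twoheadrightarrow B$. The resulting lift $H\colon\underline{E}\times I\to\underline{E}$ is a homotopy over $B$ with $H|_{\underline{E}\times\{0\}}=\id_{\underline{E}}$ and, crucially, $i^{*}H=\partial H$; writing $g\coloneqq H|_{\underline{E}\times\{1\}}$, the latter equation gives $i^{*}g=\partial H|_{\partial\underline{E}\times\{1\}}=p\cdot\partial s$.

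For the second stage it remains to exhibit $s\colon\underline{E}\to\Glue_{i,\underline{E}}(\partial E)$ with $\unglue_{i,\underline{E}}(p)\cdot s=g$; then $H$ is the desired homotopy $\id_{\underline{E}}\simeq\unglue_{i,\underline{E}}(p)\cdot s$, and it extends $\partial H$ by the construction of the first stage. By \Cref{constr:glue-op}, $\Glue_{i,\underline{E}}(\partial E)$ is the pullback of $\eta\colon\underline{E}\to i_{*}i^{*}\underline{E}$ along $i_{*}(p)\colon i_{*}(\partial E)\to i_{*}(\partial\underline{E})$, with $\unglue_{i,\underline{E}}(p)$ the first projection; hence giving such an $s$ is the same as giving $\beta\colon\underline{E}\to i_{*}(\partial E)$ with $i_{*}(p)\cdot\beta=\eta\cdot g$. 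Transposing this equation across $i^{*}\dashv i_{*}$, using naturality of the counit together with the triangle identity $\ev\cdot i^{*}\eta=\id$, it turns into the equation $p\cdot\widehat{\beta}=i^{*}g$ on the transpose $\widehat{\beta}\colon\partial\underline{E}\to\partial E$. Since $i^{*}g=p\cdot\partial s$ from the first stage, I would simply take $\widehat{\beta}\coloneqq\partial s$ and set $s\coloneqq\langle g,\beta\rangle$.

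The one place that wants care is the transposition in the second stage: one must check that the constraint cutting out $\Glue_{i,\underline{E}}(\partial E)$ really reduces to $p\cdot\widehat{\beta}=i^{*}g$, and observe that the first stage was arranged so that the extended homotopy lands exactly at $p\cdot\partial s$, precisely so that $\widehat{\beta}=\partial s$ closes the argument --- this is why one extends $\partial H$ rather than $\partial s$, and why $s$ need not be compatible with $\partial s$ in general. The remaining points --- that $j$ is a trivial cofibration, that the lifting square in the first stage commutes over $B$, and that $\langle g,\beta\rangle$ indeed factors through the pullback --- are routine, and no fibrancy of $I$ is needed since we only ever map out of $\underline{E}\times I$.
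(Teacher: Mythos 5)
Your proof is correct and follows essentially the same route as the paper's: you construct $H$ by lifting against the pushout-product of $(\partial\underline{E}\hookrightarrow\underline{E})$ with $(\set{0}\hookrightarrow I)$, then show $H|_{\set{1}}$ factors through $\unglue_{i,\underline{E}}(p)$ by transposing across $i^*\dashv i_*$ and observing that the transposed constraint reduces to $p\cdot\partial s = i^*(H|_{\set{1}})$, which holds because $H$ extends $\partial H$. The only difference is presentational — you transpose the defining equation of the pullback to find $\widehat\beta$ directly, whereas the paper writes a commuting square over $\partial B$ and transposes it to a cone over the pullback diagram — but these are the same calculation.
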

\begin{proof}
  We proceed by constructing the homotopy extension
  $H \colon \underline{E} \times I \to \underline{E}$ and then showing the
  existence of the homotopy section $s$ by proving
  $H|_{\set{1}} \colon \underline{E} \to \underline{E}$ factors via
  $\unglue_{i,\underline{E}}(p) \colon \Glue_{i,\underline{E}}(\partial E) \to
  \underline{E}$.
  The required homotopy extension $H$ is given by the solution to the
  following lifting problem.
  \begin{equation*}
    \begin{tikzcd}[cramped]
      {\partial\underline{E} \times I \cup \underline{E} \times \set{0}} && {\underline{E}} \\
      {\underline{E} \times I} & {\underline{E}} & B
      \arrow["\begin{array}{c} \begin{pmatrix} \partial\underline{E} \times I \xrightarrow{\partial H} \partial\underline{E} \hookrightarrow \underline{E} \\  \underline{E} \times \set{0} \xrightarrow{\id\relax} \underline{E} \end{pmatrix} \end{array}", from=1-1, to=1-3]
      \arrow["\sim"', hook, from=1-1, to=2-1]
      \arrow[two heads, from=1-3, to=2-3]
      \arrow["H"{description}, dashed, from=2-1, to=1-3]
      \arrow["\proj"', from=2-1, to=2-2]
      \arrow[two heads, from=2-2, to=2-3]
    \end{tikzcd}
  \end{equation*}
  We must check that the map in the top row actually exists and this defines a
  valid lifting problem.
  Once we verify this, the solution $H$ exists because $E \twoheadrightarrow B$
  is assumed to be a fibration and the pushout-product of the (cofibration,
  trivial cofibration)-pair
  $(\partial\underline{E} \hookrightarrow \underline{E}, \set{0} \hookrightarrow
  I)$ remains a trivial cofibration by assumption.

  First, we note that the map in the top row actually exists because
  $\partial\underline{E} \times I \xrightarrow{\partial H} \partial\underline{E}
  \hookrightarrow \underline{E}$ when restricted to
  $\partial\underline{E} \times \set{0}$ is the map
  $\partial\underline{E} \xrightarrow{=} \partial\underline{E} \hookrightarrow
  \underline{E}$ since
  $\partial H \colon \id_{\partial \underline E} \simeq p \cdot \partial s$.

  To check that this defines a valid lifting problem, it suffices to check
  commutativity separately in the restrictions to
  $\partial\underline{E} \times I$ and $\underline{E} \times \set{0}$.
  Commutativity in the $\underline{E} \times \set{0}$ component follows since
  the inclusion
  $\underline{E} \times \set{0} \hookrightarrow \underline{E} \times I$ is a
  section of the projection $\underline{E} \times I \to
  \underline{E}$. 

  We now observe commutativity in the $\partial\underline{E} \times I$ component
  by the following diagram, which commutes because the homotopy $\partial H$ is
  fibred over $\partial B$.
  \begin{equation*}
    \begin{tikzcd}[cramped]
      {\partial\underline{E} \times I} & {\partial\underline{E}} && {\underline{E}} \\
      & {\partial\underline{E}} & {\partial B} \\
      {\underline{E} \times I} & {\underline{E}} && B
      \arrow["{\partial H}", from=1-1, to=1-2]
      \arrow["\proj"{description}, from=1-1, to=2-2]
      \arrow[hook, from=1-1, to=3-1]
      \arrow[hook, from=1-2, to=1-4]
      \arrow[from=1-2, to=2-3]
      \arrow[two heads, from=1-4, to=3-4]
      \arrow[from=2-2, to=2-3]
      \arrow[hook, from=2-2, to=3-2]
      \arrow[hook, from=2-3, to=3-4]
      \arrow["\proj"', from=3-1, to=3-2]
      \arrow[from=3-2, to=3-4]
    \end{tikzcd}
  \end{equation*}

  By construction
  $H \colon \id_{\underline{E}} = H|_{\set{0}} \simeq H|_{\set{1}}$, and
  post-composing $H$ with $\underline{E} \twoheadrightarrow B$ factors through
  $\proj\relax \colon \underline{E} \times I \to \underline{E}$, so that
  $H$ is a homotopy $H \colon \id_{\underline{E}} \simeq H|_{\set{1}}$ fibred over $B$.
  Moreover, when restricted along
  $\partial E \times I \hookrightarrow E \times I$, the construction of $H$ also
  guarantees that it agrees with
  $\partial\underline{E} \times I \xrightarrow{\partial H} \partial\underline{E}
  \hookrightarrow \underline{E}$ so that $H$ actually extends $\partial H$.

  To conclude, we check that
  $H|_{\set{1}} = (\underline{E} \times \set{1} \hookrightarrow \underline{E}
  \times I \xrightarrow{H} \underline{E})$ factors through
  $\unglue_{i,\underline{E}}(p) \colon \Glue_{i,\underline{E}}(\partial E) \to
  \underline{E}$.
  To do so, we note that $H$ extends $\partial H$ so in particular
  $i^*(H|_{\set{1}}) = (\partial H)|_{\set{1}}$, which means we have the diagram
  as on the left over $\partial B$.
  Taking its transpose over $i^* \dashv i_*$ then gives the outer edge on the
  right, which must factor through the pullback
  $\eta^*i_*p \cong \unglue_{i,\underline{E}}(p)$.
  \begin{center}
    \begin{minipage}{0.45\linewidth}
      \begin{equation*}
        \begin{tikzcd}[cramped]
          {i^*\underline{E} = \partial\underline{E}} & {\partial E} \\
          {i^*\underline{E}} & {\partial\underline{E}}
          \arrow["\partial s", from=1-1, to=1-2]
          \arrow["{i^*(H|_{\set{1}}) = (\partial H)|_{\set{1}}}"', from=1-1, to=2-1]
          \arrow["p", from=1-2, to=2-2]
          \arrow["{=}"', tail reversed, from=2-1, to=2-2]
        \end{tikzcd}
      \end{equation*}
    \end{minipage}
    \begin{minipage}{0.45\linewidth}
      \begin{equation*}
        \begin{tikzcd}[cramped]
          {\underline{E}} \\
          & {\Glue_{i,\underline{E}}(\partial E)} & {i_*(\partial E)} \\
          & {\underline{E}} & {i_*(\partial\underline{E})}
          \arrow["{s}"{description}, dashed, from=1-1, to=2-2]
          \arrow["{(\partial s)^\dagger}", curve={height=-12pt}, from=1-1, to=2-3]
          \arrow["{H|_{\set{1}}}"', curve={height=12pt}, from=1-1, to=3-2]
          \arrow[from=2-2, to=2-3]
          \arrow["{\unglue_{i,\underline{E}}(p)}"{description}, from=2-2, to=3-2]
          \arrow["\lrcorner"{anchor=center, pos=0.15, scale=1.5}, draw=none, from=2-2, to=3-3]
          \arrow["{i_*p}", from=2-3, to=3-3]
          \arrow["\eta"', from=3-2, to=3-3]
        \end{tikzcd}
      \end{equation*}
    \end{minipage}
  \end{center}
  We take the this factorisation
  $s \colon E \to \Ext_{\underline{E}}(\partial E)$ to be the homotopy section
  to $\Ext_{\underline{E}}(p)$.
\end{proof}

As consequences, we obtain the following equivalence extension properties
extending weak equivalences along monomorphisms.
\begin{proposition}[{\cite[Theorem 3.4.1]{kl21}}]\label{prop:ext-we}
  Let there be a mono $i \colon \partial B \hookrightarrow B$ along with an
  object $\underline{E} \to B \in \sfrac{\bC}{B}$ whose pullback under
  $i$ is $\partial\underline{E} \to \partial B$ as in \Cref{constr:glue-op}.
  Assume $\bC$ is equipped with a model structure under which:
  \begin{enumerate}
    \item The monomorphisms and the cofibrations coincide.
    \item The pushout-product of a (cofibration, trivial cofibration)-pair is a
    trivial cofibration.
    %
    %
    \item $\underline{E} \twoheadrightarrow B$ is a fibration.
  \end{enumerate}
  Then for any fibrant object
  $\partial E \twoheadrightarrow \partial B \in \sfrac{\bC}{\partial B}$ in the
  slice model structure, if
  \begin{equation*}
    w \colon \partial E \xrightarrow{\sim} \partial\underline{E}
  \end{equation*}
  is a weak equivalence, the extension
  \begin{equation*}
    \Glue_{i,\underline{E}}(\partial E)
    \xrightarrow{\unglue_{i,\underline{E}}(w)}
    \underline{E}
  \end{equation*}
  which we recall is constructed as first taking the pushforward along $i$ and
  then the pullback along $\eta$
  \begin{equation*}
    \begin{tikzcd}[cramped]
      {\partial E} &&& {\Glue_{i,\ul{E}}(\partial E)} & {i_*(\partial E)} \\
      && {\partial \underline E} && {\underline{E}} & {i_*(\partial\underline{E})} \\
      & {\partial B} && B
      \arrow[dashed, from=1-1, to=1-4]
      \arrow[from=1-1, to=2-3, "w"', "{\sim}"]
      \arrow[from=1-1, to=3-2]
      \arrow[from=1-4, to=1-5]
      \arrow[dashed, from=1-4, to=3-4]
      \arrow[from=1-5, to=2-6]
      \arrow[from=1-5, to=3-4]
      \arrow[from=2-3, to=3-2]
      \arrow["\eta"{description}, from=2-5, to=2-6]
      \arrow[from=2-5, to=3-4]
      \arrow[from=2-6, to=3-4]
      \arrow["i"', hook, from=3-2, to=3-4]
      \arrow[crossing over, dashed, from=1-4, to=2-5]
      \arrow[crossing over, hook, from=2-3, to=2-5]
      \arrow["\lrcorner"{anchor=center, pos=0.15, scale=1.5, rotate=45}, draw=none, from=1-4, to=2-6]
      \arrow["\lrcorner"{anchor=center, pos=0.15, scale=1.5}, draw=none, from=2-3, to=3-4]
    \end{tikzcd}
  \end{equation*}
  is a weak equivalence between fibrant objects in the slice
  model structure $\sfrac{\bC}{B}$.
\end{proposition}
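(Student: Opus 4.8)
The plan is to prove the two claims in turn: that $\Glue_{i,\underline{E}}(\partial E)$ and $\underline{E}$ are both fibrant in $\sfrac{\bC}{B}$, and that $\unglue_{i,\underline{E}}(w)$ is a weak equivalence between them. Fibrancy is immediate: $\underline{E}\twoheadrightarrow B$ is a fibration by hypothesis (3), and $\Glue_{i,\underline{E}}(\partial E)\to B$ is a fibration by the last part of \Cref{lem:ext-preserve-fib}, whose hypotheses are precisely (1) together with the fibrancy of $\underline{E}\twoheadrightarrow B$ and the fact that $w$ is a weak equivalence.

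For the weak-equivalence claim I would first factor $w$, in the slice over $\partial B$, as $\partial E\xrightarrow{j}\partial E'\xrightarrow{q}\partial\underline{E}$ with $j$ a trivial cofibration and $q$ a trivial fibration (both are weak equivalences by two-out-of-three, since $w$ is). As the cofibrations are the monomorphisms they are pullback-stable, so $i^*$ preserves cofibrations, and the first part of \Cref{lem:ext-preserve-fib} shows that $\unglue_{i,\underline{E}}(q)\colon\Glue_{i,\underline{E}}(\partial E')\twoheadrightarrow\underline{E}$ is a trivial fibration, in particular a weak equivalence. By \Cref{lem:ext-comp}, applied to the composable pair $\partial E\xrightarrow{j}\partial E'\xrightarrow{q}\partial\underline{E}$, the map $\unglue_{i,\underline{E}}(w)$ factors as
\[
  \Glue_{i,\underline{E}}(\partial E)\;\cong\;\Glue_{i,\,\Glue_{i,\underline{E}}(\partial E')}(j)\xrightarrow{\ \unglue\ }\Glue_{i,\underline{E}}(\partial E')\xrightarrow{\ \unglue_{i,\underline{E}}(q)\ }\underline{E},
\]
so by two-out-of-three it suffices to prove that the first map is a weak equivalence. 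Writing $\underline{F}\coloneqq\Glue_{i,\underline{E}}(\partial E')$, for which $i^*\underline{F}=\partial E'$ by \Cref{lem:ext-retract} and $\underline{F}\to B$ is a fibration, this reduces the problem to showing that $\unglue_{i,\underline{F}}(j)$ is a weak equivalence, where $j\colon\partial E\hookrightarrow\partial E'=i^*\underline{F}$ is a trivial cofibration.

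To handle this trivial-cofibration case: since $\partial E\twoheadrightarrow\partial B$ is a fibration and $j$ a trivial cofibration, solving the lifting problem of $j$ against $\partial E\twoheadrightarrow\partial B$ produces a retraction $r\colon\partial E'\to\partial E$ over $\partial B$ with $rj=\id$, and $r$ is a weak equivalence by two-out-of-three. Using hypothesis (2) on the pushout-product of $j$ with an endpoint inclusion $\set{0}\hookrightarrow I$ of a cylinder object $I$ for the terminal object, one produces a homotopy $\partial H\colon jr\simeq\id_{\partial E'}$ over $\partial B$ that is constant along $\partial E$, so that $r$ is a homotopy section of $j$ in the sense of \Cref{lem:ext-htpy-sect-wk}. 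That lemma then yields a homotopy section $s$ of $\unglue_{i,\underline{F}}(j)$ over $B$; since $j$ is monic one checks that $i^*s=r$, whence $i^*(s\circ\unglue_{i,\underline{F}}(j))=rj=\id$. Promoting this, via a second homotopy-extension argument, to a genuine fibred homotopy $s\circ\unglue_{i,\underline{F}}(j)\simeq\id$ over $B$ then exhibits $\unglue_{i,\underline{F}}(j)$ as a homotopy equivalence over $B$ between fibrant objects of $\sfrac{\bC}{B}$, hence a weak equivalence; combining with the previous paragraph completes the proof.

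The hard part will be this last trivial-cofibration case. Gluing a trivial fibration is visibly a trivial fibration, but gluing a trivial cofibration is not visibly a weak equivalence: the glued map is an isomorphism away from $\partial B$ and the (weakly equivalent) map $j$ over $\partial B$, and turning that picture into a proof requires either the homotopy-section extension of \Cref{lem:ext-htpy-sect-wk} together with its dual producing the homotopy retraction — hence the use of a cylinder object and of the pushout-product hypothesis (2) — or, in a right-proper ambient model structure, the alternative observation that $\unglue_{i,\underline{F}}(j)$ is the pullback along $\eta$ of the weak equivalence $i_*(j)$ between fibrant objects. I would follow the first, homotopy-theoretic route, which mirrors the argument of \cite{sat17} and relies only on the factorizations of the model structure and a cylinder object.
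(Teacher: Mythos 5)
Your outline matches the paper's proof up to and including the reduction to the trivial-cofibration case: fibrancy via \Cref{lem:ext-preserve-fib}, the factorization $w=q\circ j$ together with \Cref{lem:ext-comp} (and you are in fact more careful than the paper here, spelling out the change of base via $\underline{F}\coloneqq\Glue_{i,\underline{E}}(\partial E')$ and \Cref{lem:ext-retract}), and the trivial-fibration case. The divergence is in the last step of the trivial-cofibration case, and there you have a genuine gap.

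You aim to show $s\circ\unglue_{i,\underline{F}}(j)\simeq\id$ by ``promoting'' the boundary equality $i^*(s\circ\unglue_{i,\underline{F}}(j))=rj=\id$ via a ``second homotopy-extension argument.'' Read literally, this would mean solving a lifting problem whose left leg is the map
\[
  \Glue_{i,\underline{F}}(\partial E)\times\set{0,1}\;\cup\;\partial E\times I\;\hookrightarrow\;\Glue_{i,\underline{F}}(\partial E)\times I,
\]
the pushout-product of the cofibrations $\partial E\hookrightarrow\Glue_{i,\underline{F}}(\partial E)$ and $\set{0,1}\hookrightarrow I$. Hypothesis (2) only makes such a pushout-product a \emph{trivial} cofibration when one of the two maps is itself trivial, and neither of them is here (in particular the inclusion $\partial E\hookrightarrow\Glue_{i,\underline{F}}(\partial E)$ is just a monomorphism). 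So the lift is not guaranteed, and the argument as you phrase it does not go through. The paper instead observes that the strict retraction $r$ of $j$ (i.e.\ $rj=\id_{\partial E}$), via the construction of $\Glue$ as a pullback, gives a \emph{strict} one-sided inverse $\unglue_{i,\underline{F}}(r)\colon\underline{F}\to\Glue_{i,\underline{F}}(\partial E)$ with $\unglue_{i,\underline{F}}(r)\circ\unglue_{i,\underline{F}}(j)=\id$; combined with the homotopy section supplied by \Cref{lem:ext-htpy-sect-wk}, this exhibits $\unglue_{i,\underline{F}}(j)$ directly as a homotopy equivalence between fibrant-cofibrant objects, with no further lifting.

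If you want to stay within your own framework, the step can be repaired by lifting a \emph{homotopy} rather than a boundary equality: take the witnessing homotopy $G\colon\underline{F}\times I\to\underline{F}$ from \Cref{lem:ext-htpy-sect-wk} with $G_0=\id$ and $G_1=\unglue_{i,\underline{F}}(j)\circ s$, form $G\circ(\unglue_{i,\underline{F}}(j)\times I)$, and lift it against the fibration $\unglue_{i,\underline{F}}(j)$ along the \emph{trivial} cofibration $\Glue_{i,\underline{F}}(\partial E)\times\set{0}\cup\partial E\times I\hookrightarrow\Glue_{i,\underline{F}}(\partial E)\times I$ (trivial by hypothesis (2), since $\set{0}\hookrightarrow I$ is a trivial cofibration), with the prescribed constant behaviour on $\partial E\times I$. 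One then identifies the time-$1$ endpoint with $s\circ\unglue_{i,\underline{F}}(j)$ using the pullback presentation of $\Glue_{i,\underline{F}}(\partial E)$. This detail — lifting against $\unglue_{i,\underline{F}}(j)$ rather than against $\Glue_{i,\underline{F}}(\partial E)\twoheadrightarrow B$, and using $\set{0}\hookrightarrow I$ rather than $\set{0,1}\hookrightarrow I$ — is precisely what is missing from your sketch.
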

\begin{proof}
  By \Cref{lem:ext-preserve-fib},
  $\Glue_{i,\underline{E}}(\partial E) \twoheadrightarrow B$ is a fibration because
  $\underline{E} \twoheadrightarrow B$ is a fibration and
  $w \colon \partial E \to \partial\underline{E}$ is a weak equivalence.

  It remains to check $w$ is a weak equivalence.
  Using \Cref{lem:ext-comp}, factoring $w$ into a trivial cofibration followed
  by a trivial fibration, it suffices to show separately $w$ is a weak
  equivalence assuming $w$ is either a trivial cofibration or a trivial
  fibration.
  If $w$ is a trivial fibration then by
  \Cref{lem:ext-preserve-fib}\Cref{itm:ext-preserve-fib-tf}, it follows that
  $\unglue_{i,\underline{E}}(w)$ is also a trivial fibration and hence a weak equivalence.

  It remains to consider the case where $w$ is a trivial cofibration.
  Because the monomorphisms and the cofibrations coincide,
  $w \colon \partial E \to \partial\underline{E}$ is a
  deformation retract over $\partial B$ with retract
  $r \colon \partial \underline{E} \to \partial E$, which is also a homotopy
  section over $\partial B$.
  The goal is to show that
  $\unglue_{i,\underline{E}}(w) \colon \Ext_{\underline{E}}(\partial E) \to
  \underline{E}$ is a homotopy equivalence so that because the monos are the
  cofibrations, it becomes a homotopy equivalence between two cofibrant-fibrant
  objects in the slice $\sfrac{\bC}{B}$ and thus a weak equivalence.
  Functoriality of $\Ext\relax$ means that
  \begin{equation*}
    \Glue_{i,\underline{E}}(\partial\underline{E}) = \underline{E}
    \xrightarrow{\unglue_{i,\underline{E}}(r)}
    \Ext_{\underline{E}}(\partial E)
  \end{equation*}
  is also a precise retract of $\unglue_{i,\underline{E}}(w)$.
  It therefore remains to check that $\unglue_{i,\underline{E}}(w)$ admits a homotopy
  section over $B$.
  But $r \colon \partial \underline{E} \to \partial E$ is also a homotopy
  section of $w \colon \partial E \to \partial\underline{E}$ over $\partial B$,
  so the result follows by \Cref{lem:ext-htpy-sect-wk}.
\end{proof}


\subsection{Homotopy Equivalence Extension Property}\label{subsec:htpy-eqv-ext}
We now show an analogous result to \Cref{prop:ext-we}, but for homotopy
equivalences.
Throughout this part, we work under the following assumption.
\begin{assumption}\label{asm:htpy-eqv-ext}
  Assume locally cartesian closed $\bC$ is equipped with a model structure in
  which:
  \begin{enumerate}
    \item The cofibrations are precisely the monomorphisms.
    \item The pushout-product of a (cofibration, trivial cofibration)-pair is
    still a trivial cofibration.
    \item $I$ is a good, but not necessarily very good, cylinder object for the
    terminal object with the two disjoint endpoint inclusions
    $\set{0}, \set{1} \rightrightarrows I$.
  \end{enumerate}
\end{assumption}

\subsubsection{Extending Homotopy Retractions} \label{subsubsec:htpy-retract-ext}
We first establish a homotopy retraction extension property, which is the
equivalent of \cite[Lemma 3.3.5]{kl21}.

The problem setup fixes the following solid maps
\begin{equation}\label{eqn:hrep}\tag{\textsc{hrep}}
  \begin{tikzcd}[cramped]
    {\partial E} && E \\
    && {\partial \underline E} && {\underline{E}} \\
    & {\partial B} && B
    \arrow[from=1-1, to=1-3]
    \arrow["{\partial w}"{description}, curve={height=6pt}, from=1-1, to=2-3]
    \arrow[two heads, from=1-1, to=3-2]
    \arrow["w"{description}, curve={height=6pt}, from=1-3, to=2-5]
    \arrow[two heads, from=1-3, to=3-4]
    \arrow["{\partial r}"{description}, curve={height=6pt}, from=2-3, to=1-1]
    \arrow[two heads, from=2-3, to=3-2]
    \arrow["r"{description}, curve={height=6pt}, dashed, from=2-5, to=1-3]
    \arrow[two heads, from=2-5, to=3-4]
    \arrow["i"', hook, from=3-2, to=3-4]
    \arrow[crossing over, hook, from=2-3, to=2-5]
    \arrow["\lrcorner"{anchor=center, pos=0.15, scale=1.5}, draw=none, from=1-1, to=3-4]
    \arrow["\lrcorner"{anchor=center, pos=0.15, scale=1.5}, draw=none, from=2-3, to=3-4]
  \end{tikzcd}
\end{equation}
where
\begin{enumerate}
  \item $i$ is a cofibration.
  \item $E, \underline{E}$ are fibrant over $B$.
  \item $w \colon E \xrightarrow{\sim} \underline{E}$ is a weak equivalence over $B$.
  \item $\partial E, \partial\underline{E}, \partial w$ are the pullbacks of $E, \underline{E}, w$.
  \item $\partial r$ is a homotopy retraction of $\partial w$ over $\partial B$.
  \item $\partial H \colon \partial r \cdot \partial w \simeq \id_{\partial E}$,
  as depicted below, is a right homotopy over $\partial B$ via the fibred path
  object $P^I_{\partial B}(\partial E)$.
  \begin{equation*}
    \begin{tikzcd}[cramped, row sep=small, column sep=small]
      && {\partial\ul{E}} && {[\set{0}, \partial E]} \\
      {\partial E} && {P_{\partial B}^I(\partial E)} & {[I, \partial E]} \\
      &&&& {[\set{1}, \partial E]} \\
      && {\partial B} & {[I, \partial B]}
      \arrow["{\partial r}", from=1-3, to=1-5]
      \arrow["{\partial w}", from=2-1, to=1-3]
      \arrow[from=2-1, to=4-3]
      \arrow[from=2-3, to=2-4]
      \arrow[from=2-3, to=4-3]
      \arrow[from=2-4, to=1-5]
      \arrow[from=2-4, to=3-5]
      \arrow[from=2-4, to=4-4]
      \arrow[from=4-3, to=4-4]
      \arrow["{=}"{description, pos=0.6}, from=2-1, to=3-5, crossing over]
      \arrow["\lrcorner"{anchor=center, pos=0.15, scale=1.5}, draw=none, from=2-3, to=4-4]
      \arrow["\partial H"{description}, from=2-1, to=2-3]
    \end{tikzcd}
  \end{equation*}
\end{enumerate}
Like in \Cref{lem:ext-htpy-sect-wk}, the goal is to extend $\partial H$ to a
homotopy $H \colon rw \simeq \id_{E}$ over $B$ for some homotopy retraction $r$
of $w$.
But the difference this time is that the homotopy retraction $r$ chosen must
also extend $\partial r$.

The solution proceeds by setting up a suitable lifting problem.
For this we need to verify the candidate left map is indeed in the left class,
for which we need the following fact about adhesive categories, whose role in
the equivalence extension property was observed by \cite{sat17}.
\begin{definition}
  A category is \emph{adhesive} if all of the following conditions hold:
  \begin{enumerate}[nolistsep]
    \item pushouts along monos exist;
    \item such pushout squares are pullback squares; and
    \item pullbacks preserve pushouts.
  \end{enumerate}
\end{definition}
\begin{lemma}\label{lem:adh-pushout-disjoint-subobj}
  Let $\bC$ be an adhesive category.
  Whenever one has objects and maps in $\bC$ as follows (i.e.
  $\partial_1X \cap \partial_0X = 0$ as subobjects of $X$), the composition of
  the entire bottom row is a mono.
  \begin{equation*}
    \begin{tikzcd}[cramped]
      0 & {\partial_0X} & Y \\
      {\partial_1X} & X & {X \cup Y}
      \arrow[hook, from=1-1, to=1-2]
      \arrow[hook, from=1-1, to=2-1]
      \arrow[from=1-2, to=1-3]
      \arrow[hook, from=1-2, to=2-2]
      \arrow[hook, from=1-3, to=2-3]
      \arrow[hook, from=2-1, to=2-2]
      \arrow[curve={height=12pt}, dashed, hook, from=2-1, to=2-3]
      \arrow[from=2-2, to=2-3]
      \arrow["\lrcorner"{anchor=center, pos=0.15, scale=1.5}, draw=none, from=1-1, to=2-2]
      \arrow["\lrcorner"{anchor=center, pos=0.15, scale=1.5, rotate=180}, draw=none, from=2-3, to=1-2]
    \end{tikzcd}
  \end{equation*}
\end{lemma}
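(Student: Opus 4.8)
The plan is to reduce the claim to the statement that a single square is a pullback. Write $m\colon\partial_0X\hookrightarrow X$ and $j\colon\partial_1X\hookrightarrow X$ for the two subobjects, $g\colon\partial_0X\to Y$ for the remaining leg of the pushout, and $p\colon X\to X\cup Y$, $n\colon Y\to X\cup Y$ for the two coprojections; the goal is that $p\circ j$ is a monomorphism. First I would show that the commutative square with left edge $j$, top edge $\id_{\partial_1X}$, bottom edge $p$ and right edge $p\circ j$ is a pullback, i.e.\ that $X\times_{X\cup Y}\partial_1X\cong\partial_1X$ with projections $j$ and $\id$. Granting this, a pasting computation finishes the argument: since $p\circ j$ factors through $X$, one has $\partial_1X\times_{X\cup Y}\partial_1X\cong\partial_1X\times_X(X\times_{X\cup Y}\partial_1X)\cong\partial_1X\times_X\partial_1X\cong\partial_1X$, the middle isomorphism being the pullback just claimed and the last one coming from $j$ being monic; hence the diagonal of $p\circ j$ is invertible, so $p\circ j$ is a monomorphism.

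To establish that square as a pullback, I would pull the given pushout square back along $p\circ j\colon\partial_1X\to X\cup Y$. Since the pushout is taken along the monomorphism $m$, it is stable under such pullback — this is the third clause of the definition of adhesivity (``pullbacks preserve pushouts''), used in the instance of the \emph{van Kampen} condition in which all four side faces of the comparison cube are pullbacks by construction — so the resulting square is again a pushout:
\[
  \partial_1X \;\cong\; \bigl(X\times_{X\cup Y}\partial_1X\bigr)\;\sqcup_{\,\partial_0X\times_{X\cup Y}\partial_1X}\;\bigl(Y\times_{X\cup Y}\partial_1X\bigr).
\]
It then remains to identify the three corners. By the second clause of the definition (pushouts along monos are pullbacks) we have $Y\times_{X\cup Y}X\cong\partial_0X$; pasting this against $j$ and invoking the disjointness hypothesis $\partial_0X\times_X\partial_1X=0$ gives $Y\times_{X\cup Y}\partial_1X\cong\partial_0X\times_X\partial_1X\cong0$, and composing with $g$ exhibits a map $\partial_0X\times_{X\cup Y}\partial_1X\to Y\times_{X\cup Y}\partial_1X\cong0$, so $\partial_0X\times_{X\cup Y}\partial_1X\cong0$ as well, using that the initial object of an adhesive category is strict. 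With both outer corners equal to $0$, the pushout above collapses to $\partial_1X\cong X\times_{X\cup Y}\partial_1X$; the coprojection $X\times_{X\cup Y}\partial_1X\to\partial_1X$ is then an isomorphism split by the canonical comparison $(j,\id)$, which forces $(j,\id)$ to be that isomorphism — exactly the assertion that the original square is a pullback with the stated projections.

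I expect the main obstacle to be the bookkeeping: invoking ``pullbacks preserve pushouts'' in the right guise (pulling the whole pushout square back along a map into its colimit vertex) and chaining the pasting-law isomorphisms consistently, together with the use of strictness of the initial object — the standard fact that an adhesive category possessing an initial object has a strict one, which is precisely what makes the two outer corners of the pulled-back pushout vanish. In all applications of interest $\bC$ is a topos, so these ingredients are readily available; but the argument as planned uses only the three clauses of the stated definition of adhesivity together with strictness of $0$.
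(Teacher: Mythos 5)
Your proposal is correct, but it follows a genuinely different route from the paper's. The paper reasons forward through a coproduct decomposition: it observes (via two applications of the pushout lemma) that $\partial_1 X \sqcup Y$ fits into a pushout square along the mono $\partial_1 X \sqcup \partial_0 X \hookrightarrow X$ (mono because disjoint subobjects coproduct into their ambient object in an adhesive category), concludes from ``pushouts along monos preserve monos'' that $\partial_1 X \sqcup Y \hookrightarrow X \cup Y$ is monic, and finally uses that the coproduct injection $\partial_1 X \hookrightarrow \partial_1 X \sqcup Y$ is monic. Your argument instead pulls the given pushout square back along the candidate map $p\circ j$, invokes the van Kampen clause to see the result is again a pushout, kills both outer corners (using clause (2) and the disjointness hypothesis to see $Y \times_{X\cup Y}\partial_1 X \cong 0$, and strictness of $0$ to see $\partial_0 X \times_{X\cup Y}\partial_1 X \cong 0$), deduces $X \times_{X\cup Y}\partial_1 X \cong \partial_1 X$, and finishes with a diagonal argument. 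Both are valid; yours leans on the ``pullbacks preserve pushouts'' clause and on strictness of the initial object (a standard but unstated consequence of adhesivity, which you rightly flag), whereas the paper's leans on stability of monos under pushout and on disjoint coproduct embeddings. A minor streamlining you might consider: once you know the square with edges $(j, \id, p, p\circ j)$ is a pullback, monicity of $p\circ j$ follows directly from that plus monicity of $j$, without routing through the kernel-pair computation — but your pasting chain is also correct.
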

\begin{proof}
  By the pushout lemma, in the left diagram below, the right square in the back face
  is a pushout (i.e.
  $(\partial_1X \sqcup \partial_0X) \cup_{\partial_0 X} Y \cong \partial_1X
  \sqcup Y$).
  \begin{center}
    \begin{minipage}{0.45\linewidth}
      \begin{equation*}
        \begin{tikzcd}[cramped]
          0 & {\partial_0X} & Y \\
          {\partial_1X} & {\partial_1X \sqcup \partial_0X} & {\partial_1X \sqcup Y} \\
          && X & {X \cup Y}
          \arrow[hook, from=1-1, to=1-2]
          \arrow[hook, from=1-1, to=2-1]
          \arrow[from=1-2, to=1-3]
          \arrow[hook, from=1-2, to=2-2]
          \arrow[hook, from=1-3, to=2-3]
          \arrow[hook, from=1-3, to=3-4]
          \arrow[hook, from=2-1, to=2-2]
          \arrow[hook, from=2-1, to=3-3]
          \arrow[from=2-2, to=2-3]
          \arrow[hook, from=2-2, to=3-3]
          \arrow[from=2-3, to=3-4]
          \arrow[from=3-3, to=3-4]
          \arrow["\lrcorner"{anchor=center, pos=0.15, scale=1.5, rotate=180}, draw=none, from=2-2, to=1-1]
          \arrow["\lrcorner"{anchor=center, pos=0.15, scale=1.5, rotate=180}, draw=none, from=2-3, to=1-2]
          \arrow["\lrcorner"{anchor=center, pos=0.15, scale=1.5, rotate=180}, draw=none, from=3-4, to=1-2]
          \arrow["\lrcorner"{anchor=center, pos=0.15, scale=1.5, rotate=225}, draw=none, from=3-4, to=2-2]
          \arrow[hook, from=1-2, to=3-3, crossing over]
        \end{tikzcd}
      \end{equation*}
    \end{minipage}
    \begin{minipage}{0.45\linewidth}
      \begin{equation*}
        \begin{tikzcd}[cramped]
          {\partial_1X \sqcup \partial_0X} & {\partial_1X \sqcup Y} \\
          X & {X \cup Y}
          \arrow[from=1-1, to=1-2]
          \arrow[hook, from=1-1, to=2-1]
          \arrow[hook, from=1-2, to=2-2]
          \arrow[from=2-1, to=2-2]
          \arrow["\lrcorner"{anchor=center, pos=0.15, scale=1.5, rotate=180}, draw=none, from=2-2, to=1-1]
        \end{tikzcd}
      \end{equation*}
    \end{minipage}
  \end{center}
  Therefore, by the pushout lemma once again, the square on the bottom face is a
  pushout (because the slanted face is a pushout by assumption).
  By adhesiveness, the map $\partial_1X \sqcup \partial_0X \hookrightarrow X$ is
  a mono.
  By adhesiveness again, pushouts preserve monos, so the right row
  $\partial_1X \sqcup Y \hookrightarrow X \cup Y$ of the diagram on the right
  above is a mono.
  Because $\partial_1X \hookrightarrow \partial_1X \sqcup Y$ is a mono by
  adhesiveness, and the map $\partial_1X \hookrightarrow X \to X \cup Y$ factors
  as
  $\partial_1X \hookrightarrow \partial_1X \sqcup Y \hookrightarrow X \cup Y$,
  the result follows.
\end{proof}
We are now ready to verify the candidate left map is indeed in the left class.
\begin{lemma}\label{lem:htpy-retract-ext-tc}
  The dashed map below
  $\partial\underline{E} \times \set{0} \cup (\partial E \times I \cup E \times
  \set{1}) \to \underline{E} \times \set{0} \cup E \times I$, viewed as the
  iterated pushout of the second row in the back face into the pushout in the
  front face induced by the component inclusions
  $\partial\ul{E} \times \set{0} \hookrightarrow \ul{E} \times \set{0}$ and
  $\partial E \times I \cup E \times \set{1} \hookrightarrow E \times I$ is a
  trivial cofibration.
  \begin{equation*}
    \begin{tikzcd}[cramped, column sep=small, row sep=small]
      && {\partial E \times \set{1}} & {E \times \set{1}} \\
      \\
      {\partial E \times \set{0}} && {\partial E \times I} & {\partial E \times I \cup E \times \set{1}} \\
      \\
      {\partial\underline{E} \times \set{0}} && {\partial\underline{E} \times \set{0} \cup \partial E \times I} & {\partial\underline{E} \times \set{0} \cup \partial E \times I \cup E \times \set{1}} \\
      &&& {E \times \set{0}} &&& {E \times I} \\
      \\
      &&& {\underline{E} \times \set{0}} &&& {\underline{E}\times \set{0} \cup E \times I}
      \arrow[hook, from=1-3, to=1-4]
      \arrow["\sim"{description}, hook, from=1-3, to=3-3]
      \arrow["\sim"{description}, hook, from=1-4, to=3-4]
      \arrow["\sim"{description}, hook, from=1-4, to=6-7]
      \arrow["\sim"{description}, hook, from=3-1, to=3-3]
      \arrow["{\partial w}"', "{\sim}", from=3-1, to=5-1]
      \arrow[hook, from=3-3, to=3-4]
      \arrow["\sim"{description}, from=3-3, to=5-3]
      \arrow["\sim"{description}, from=3-4, to=5-4]
      \arrow["\sim"{description, pos=0.3}, hook, from=3-4, to=6-7]
      \arrow["\sim"{description}, hook, from=5-1, to=5-3]
      \arrow[hook, from=5-1, to=8-4]
      \arrow[hook, from=5-3, to=5-4]
      \arrow[dashed, from=5-4, to=8-7]
      \arrow["w"', "\sim", from=6-4, to=8-4]
      \arrow["\sim"{description}, from=6-7, to=8-7]
      \arrow["\sim"{description}, hook, from=8-4, to=8-7]
      \arrow[from=5-3, to=8-7, dotted]
      \arrow[dotted, from=1-4, to=8-7]
      \arrow[hook, from=3-1, to=6-4]
      \arrow[crossing over, "\sim"{description}, hook, from=6-4, to=6-7]
      \arrow["\lrcorner"{anchor=center, pos=0.15, scale=1.5, rotate=180}, draw=none, from=3-4, to=1-3]
      \arrow["\lrcorner"{anchor=center, pos=0.15, scale=1.5, rotate=180}, draw=none, from=5-3, to=3-1]
      \arrow["\lrcorner"{anchor=center, pos=0.15, scale=1.5, rotate=180}, draw=none, from=5-4, to=3-3]
      \arrow["\lrcorner"{anchor=center, pos=0.15, scale=1.5, rotate=180}, draw=none, from=8-7, to=6-4]
    \end{tikzcd}
  \end{equation*}
\end{lemma}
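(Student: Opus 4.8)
The plan is to show that the dashed map, which I write $m \colon S \to T$ with $S := \partial\underline{E} \times \set{0} \cup \partial E \times I \cup E \times \set{1}$ and $T := \underline{E} \times \set{0} \cup E \times I$, is both a weak equivalence and a cofibration; since under \Cref{asm:htpy-eqv-ext} the cofibrations are exactly the monomorphisms, it is then a trivial cofibration.

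To prove that $m$ is a weak equivalence, the first step is to recall that, the cofibrations being the monomorphisms, every object of $\bC$ is cofibrant, so the model structure is left proper and the pushout of a weak equivalence along a cofibration is a weak equivalence. Next I would observe that $m$, precomposed with the canonical map $j \colon \partial E \times I \cup E \times \set{1} \to S$ exhibiting $S$ as a pushout in the back face, equals the composite $\partial E \times I \cup E \times \set{1} \hookrightarrow E \times I \to T$, because on that component $m$ is induced by the inclusion $\partial E \times I \cup E \times \set{1} \hookrightarrow E \times I$. Here $\partial E \times I \cup E \times \set{1} \hookrightarrow E \times I$ is the pushout--product of the cofibration $\partial E \hookrightarrow E$ with the trivial cofibration $\set{1} \hookrightarrow I$, hence a trivial cofibration by \Cref{asm:htpy-eqv-ext}; and $E \times I \to T$ is the pushout of the weak equivalence $w \colon E \times \set{0} \to \underline{E} \times \set{0}$ along the cofibration $E \times \set{0} \hookrightarrow E \times I$, hence a weak equivalence by left properness. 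Thus $m \circ j$ is a weak equivalence. Finally $j$ itself is the pushout of the weak equivalence $\partial w \colon \partial E \times \set{0} \to \partial\underline{E} \times \set{0}$ along the cofibration $\partial E \times \set{0} \hookrightarrow \partial E \times I \cup E \times \set{1}$, hence again a weak equivalence, so $m$ is a weak equivalence by two-out-of-three.

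To prove that $m$ is a monomorphism, I would use that $\bC$ is adhesive and that the two endpoint inclusions $\set{0}, \set{1} \rightrightarrows I$ are disjoint. The three subobjects $\partial\underline{E} \times \set{0}$, $\partial E \times I$ and $E \times \set{1}$ whose colimit is $S$ map into the two subobjects $\underline{E} \times \set{0}$ and $E \times I$ of $T$, and by tracing through the pushout squares one checks that the ``$\set{0}$-end'' and ``$\set{1}$-end'' contributions are disjoint subobjects of $T$, their intersection lying over $\set{0} \cap \set{1} = 0$; this is precisely the hypothesis of \Cref{lem:adh-pushout-disjoint-subobj}. Invoking that lemma, together with the facts that pushouts of monomorphisms along monomorphisms are again monomorphisms and that the operations building the large cube (products with $I$, pullbacks along $i$, pushouts along cofibrations) all preserve monomorphisms, a diagram chase over the cube yields that $m$ is a monomorphism.

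I expect the monomorphism part to be the main obstacle: one has to reorganize the iterated pushouts defining $S$ and $T$ so that \Cref{lem:adh-pushout-disjoint-subobj} is genuinely applicable, carrying the disjointness of the two endpoints of $I$ through every gluing. The weak-equivalence part, by contrast, is a short two-out-of-three argument once the factorization of $m$ through $\partial E \times I \cup E \times \set{1}$ is spotted and left properness is invoked.
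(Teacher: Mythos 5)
Your weak-equivalence argument is the same as the paper's, and it is correct: you identify the two vertical pushouts (of $\partial w$ and of $w$, weak equivalences by left properness since every object is cofibrant), note that $\partial E\times I\cup E\times\{1\}\hookrightarrow E\times I$ is the pushout-product of $(\partial E\hookrightarrow E,\{1\}\hookrightarrow I)$, hence a trivial cofibration, and then close the square by two-out-of-three. The paper phrases this as ``apply two-out-of-three to the square on the right face,'' which is literally the square formed by your $j$, the pushout-product, the pushout of $w$, and $m$ — so same proof, same decomposition.

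The monomorphism part is where you and the paper diverge in how finished the argument is, and I would not give you full credit there. You correctly identify the two relevant ingredients (adhesiveness, and \Cref{lem:adh-pushout-disjoint-subobj} applied via disjointness of the endpoints), but the actual route is not ``trace disjointness of the $\{0\}$- and $\{1\}$-ends through $T$'' — rather, the paper isolates exactly two specific dotted maps and shows each is a mono on its own. First, $\partial\underline{E}\times\{0\}\cup\partial E\times I\to\underline{E}\times\{0\}\cup E\times I$ is a mono because adhesiveness of the front-face pushout square along $i$ identifies its domain with $i^*(\underline{E}\times\{0\}\cup E\times I)$, so the dotted map is literally the pullback inclusion. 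Second, $E\times\{1\}\to\underline{E}\times\{0\}\cup E\times I$ is a mono by applying \Cref{lem:adh-pushout-disjoint-subobj} to the front-face pushout with $X=E\times I$, $\partial_0 X = E\times\{0\}$, $\partial_1 X = E\times\{1\}$, $Y=\underline{E}\times\{0\}$, using disjointness of $E\times\{0\}$ and $E\times\{1\}$ inside $E\times I$ (not disjointness inside $T$, which is what the lemma is producing, not assuming). With both dotted maps monic and $S$ presented by the two right-hand pushout squares on the back face, adhesiveness then forces the dashed map to be a mono. Your draft stops at ``a diagram chase over the cube yields that $m$ is a monomorphism,'' and you flag it yourself as the main obstacle; that flag is accurate — the gap is precisely the identification of the two dotted maps and the exact hypotheses under which \Cref{lem:adh-pushout-disjoint-subobj} is invoked.
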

\begin{proof}
  Because $\partial w$ and $w$ are both weak eqivalences and all objects are
  cofibrant by assumption, $\bC$ is left proper and so their pushouts
  $\partial E \times I \cup E \times \set{1} \to \partial \underline{E} \times
  \set{0} \cup (\partial E \times I \cup E \times \set{1})$ and
  $E \times I \to \underline{E} \times \set{0} \cup E \times I$ given by the
  rightmost map in the second row of the back face and in the rightmost map in
  the front face are weak equivalences as well.
  And because $\partial E \times I \cup E \times \set{1}\to E \times I$ is the pushout-product
  of the (cofibration, trivial cofibration)-pair
  $(\partial E \hookrightarrow E, \set{1} \hookrightarrow I)$, it is also a weak
  equivalence.
  Thus, by the 2-out-of-3 property applied to the square on the right face, so
  is the dashed map
  $\partial\underline{E} \times \set{0} \cup \partial E \times I \cup E \times
  \set{1} \to \underline{E} \times \set{0} \cup E \times I$ which we are
  interested in.

  It remains to check that the dashed map above is a monomorphism.
  For this, it suffices to see that the two dotted maps above are monos, at
  which point the dashed map is a mono because the right two squares on the back
  face are pushouts and $\bC$ is adhesive.

  We observe that the dotted map on the bottom face is a mono because
  adhesiveness of $\bC$ implies that that dotted map is exactly the connecting
  map of the pullback
  $(\partial\underline{E} \times \set{0} \cup \partial E \times I) \cong
  i^*(\underline{E} \times \set{0} \cup E \times I) \hookrightarrow
  \underline{E} \times \set{0} \cup E \times I$.
  To observe that the dotted map on the right face is a mono, we apply
  \Cref{lem:adh-pushout-disjoint-subobj} to the pushout on the front face,
  noting that the subobjects
  $E \times \set{0}, E \times \set{1} \rightrightarrows E \times I$ are
  disjoint.
\end{proof}

We are now ready to reproduce the proof of homotopy retraction extension
property from \cite[Lemma 3.3.5]{kl21}.

\begin{proposition}[{\cite[Lemma 3.3.5]{kl21}}]\label{prop:hrep}
  As from \Cref{eqn:hrep}, given a homotopy equivalence $w \colon E \to \ul{E}$
  over $B$ whose restriction to over $\partial B$ is
  $\partial w \colon \partial E \to \partial\ul{E}$ with homotopy retraction
  $\partial r \colon \partial\ul{E} \to \partial E$ witnessed by homotopy
  $\partial H \colon \partial r \cdot \partial w \simeq \id_{\partial E}$, one
  can extend this homotopy data to a homotopy section $r \colon \ul{E} \to E$
  over $B$ witnessed by homotopy $H \colon r \cdot w \simeq \id_{E}$.
\end{proposition}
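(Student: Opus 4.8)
The plan is to extend the homotopy retraction data by solving a single lifting problem, mirroring the strategy of \cite[Lemma 3.3.5]{kl21} but phrased via the path object $P_B^I(E)$. First I would package the given data into a map out of the domain of the trivial cofibration identified in \Cref{lem:htpy-retract-ext-tc}, namely
\[
  \partial\underline{E} \times \set{0} \cup \partial E \times I \cup E \times \set{1}
  \longrightarrow
  E,
\]
assembled from three compatible pieces: on $\partial\underline{E} \times \set{0}$ use $\partial r \colon \partial\underline{E} \to \partial E \hookrightarrow E$, on $\partial E \times I$ use the given homotopy $\partial H \colon \partial r \cdot \partial w \simeq \id_{\partial E}$ (composed with $\partial E \hookrightarrow E$), and on $E \times \set{1}$ use $\id_E$. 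The content of checking this is a well-defined map out of the pushout is precisely that these pieces agree on the overlaps $\partial E \times \set{0}$ (where $\partial H|_{\set{0}} = \partial r \cdot \partial w$, which matches the restriction of the first piece once one composes with $\partial w$ — here I would use that $\partial w, \partial E, \partial\underline{E}$ are the $i$-pullbacks of $w, E, \underline{E}$) and $\partial E \times \set{1}$ (where $\partial H|_{\set{1}} = \id_{\partial E}$, matching the restriction of the third piece). One also records the evident map to $B$ so that everything is fibred.

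Next I would exhibit the lifting problem: the left map is the trivial cofibration
\[
  \partial\underline{E} \times \set{0} \cup \partial E \times I \cup E \times \set{1}
  \hooktrianglerightarrow
  \underline{E} \times \set{0} \cup E \times I
\]
from \Cref{lem:htpy-retract-ext-tc}, and the right map is the fibration $E \twoheadrightarrow B$ (using assumption (2) of \Cref{eqn:hrep} that $E$ is fibrant over $B$, together with \Cref{asm:htpy-eqv-ext}). The top map is the one just constructed, and the bottom map is $(\underline{E}\times\set{0} \xrightarrow{\id} \underline{E}, \; E \times I \xrightarrow{\proj} E)$ composed down to $B$; one checks this square commutes by restricting separately to the four corner pieces. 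A diagonal filler
\[
  K \colon \underline{E} \times \set{0} \cup E \times I \longrightarrow E
\]
then exists. I would set $r := K|_{\underline{E} \times \set{0}} \colon \underline{E} \to E$, which extends $\partial r$ because $K$ restricts correctly on $\partial\underline{E}\times\set{0}$, and $H := K|_{E \times I} \colon E \times I \to E$. By construction $H|_{\set{0}} = K|_{E\times\set{0}}$, and I need to check this equals $r \cdot w$: this follows because $K$ restricted to $E \times \set{0}$ agrees with the composite through $\underline{E}\times\set{0}$ by the way the pushout glues $E\times\set{0}$ and $\underline{E}\times\set{0}$ along $w$. Meanwhile $H|_{\set{1}} = K|_{E \times \set{1}} = \id_E$ by the chosen top map, and the commutativity of the bottom triangle gives that $H$ is fibred over $B$ (it factors through $\proj$). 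Finally $H$ extends $\partial H$ because $K$ restricts to $\partial H$ on $\partial E \times I$.

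The main obstacle I anticipate is the bookkeeping around the pushout identifications: verifying that the top map is a genuine morphism out of the iterated pushout $\underline{E}\times\set{0}\cup\partial E\times I\cup E\times\set{1}$ requires carefully tracking how $\partial w$ and $w$ are used to glue the three pieces and confirming agreement on every overlap (there are really two nontrivial overlaps, $\partial E\times\set{0}$ and $\partial E\times\set{1}$, plus the gluing of $E\times\set{0}$ to $\underline E\times\set 0$ via $w$), and then symmetrically checking that the extracted $r$ and $H$ satisfy $r\cdot w = H|_{\set 0}$ using the dual gluing in the codomain pushout. Everything else — fibrancy of $E$ over $B$, the trivial cofibration status of the left map — is handed to us by \Cref{lem:htpy-retract-ext-tc} and \Cref{asm:htpy-eqv-ext}, so the proof is essentially the construction of one lifting problem plus diagram-chasing on pushout/pullback squares.

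\begin{proof}
  We extend the homotopy data by solving a single lifting problem against the
  fibration $E \twoheadrightarrow B$.

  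First, we assemble a map out of the pushout
  $\partial\underline{E} \times \set{0} \cup \partial E \times I \cup E \times
  \set{1}$ into $E$.
  On the summand $\partial\underline{E} \times \set{0}$, we take the composite
  $\partial\underline{E} \xrightarrow{\partial r} \partial E \hookrightarrow E$;
  on the summand $\partial E \times I$, we take
  $\partial E \times I \xrightarrow{\partial H} \partial E \hookrightarrow E$;
  and on the summand $E \times \set{1}$, we take $\id_E$.
  Because $\partial H \colon \partial r \cdot \partial w \simeq \id_{\partial E}$,
  the restriction of the middle map to $\partial E \times \set{0}$ is
  $\partial r \cdot \partial w \colon \partial E \to \partial E \hookrightarrow E$,
  which matches the restriction of the $\partial\underline{E} \times \set{0}$
  summand along $\partial w$ (using that $\partial E, \partial\underline{E},
  \partial w$ are the pullbacks of $E, \underline{E}, w$ along $i$), and the
  restriction of $\partial H$ to $\partial E \times \set{1}$ is $\id_{\partial E}$,
  which matches the restriction of the $E \times \set{1}$ summand.
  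Hence these three pieces glue to a single map
  \begin{equation*}
    \partial\underline{E} \times \set{0} \cup \partial E \times I \cup E \times \set{1}
    \longrightarrow E .
  \end{equation*}

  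Now consider the square
  \begin{equation*}

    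\begin{tikzcd}[cramped]
      {\partial\underline{E} \times \set{0} \cup \partial E \times I \cup E \times \set{1}} & E \\
      {\underline{E} \times \set{0} \cup E \times I} & B
      \arrow[from=1-1, to=1-2]
      \arrow[hook, from=1-1, to=2-1]
      \arrow[two heads, from=1-2, to=2-2]
      \arrow[from=2-1, to=2-2]
    \end{tikzcd}
  \end{equation*}
  where the left map is the trivial cofibration of
  \Cref{lem:htpy-retract-ext-tc} and the bottom map is
  $(\underline{E} \times \set{0} \xrightarrow{\id} \underline{E},\; E \times I
  \xrightarrow{\proj} E)$ followed by $\underline{E} \twoheadrightarrow B$ and
  $E \twoheadrightarrow B$ respectively.
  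Commutativity is checked on each corner summand: on $E \times \set{1}$ and on
  $\partial E \times I$ it holds because $\partial H$ and the projection both
  live over $B$ (the homotopy $\partial H$ is fibred over $\partial B$, hence
  over $B$), and on $\partial\underline{E} \times \set{0}$ it holds because
  $\partial r$ lives over $\partial B$, hence over $B$.
  Since $E \twoheadrightarrow B$ is a fibration and the left map is a trivial
  cofibration, a diagonal filler
  \begin{equation*}
    K \colon \underline{E} \times \set{0} \cup E \times I \longrightarrow E
  \end{equation*}
  exists.

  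Set $r \coloneqq K|_{\underline{E} \times \set{0}} \colon \underline{E} \to E$
  and $H \coloneqq K|_{E \times I} \colon E \times I \to E$.
  Since $K$ restricted to $\partial\underline{E} \times \set{0}$ is
  $\partial r$, the map $r$ extends $\partial r$, so $r$ is a map over $B$.
  Restricting $K$ to $E \times \set{0}$ and using that the pushout glues
  $E \times \set{0}$ to $\underline{E} \times \set{0}$ along $w$, we get
  $H|_{\set{0}} = K|_{E \times \set{0}} = r \cdot w$; and $H|_{\set{1}} =
  K|_{E \times \set{1}} = \id_E$ by the chosen top map.
  Post-composing $H$ with $E \twoheadrightarrow B$ factors through $\proj$ by
  the lower triangle, so $H$ is a homotopy $r \cdot w \simeq \id_E$ fibred over
  $B$.
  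Finally, $K$ restricts to $\partial H$ on $\partial E \times I$, so $H$
  extends $\partial H$, as required.
\end{proof}
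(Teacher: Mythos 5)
Your proof is correct and takes essentially the same approach as the paper's: set up the single lifting problem against the fibration $E \twoheadrightarrow B$ whose left leg is the trivial cofibration from \Cref{lem:htpy-retract-ext-tc}, assemble the top map from $\partial r$, the given homotopy, and $\id_E$, and read off $r$ and $H$ from the diagonal filler. The one cosmetic point of departure: the paper explicitly transposes the given right homotopy $\partial H \colon \partial E \to P^I_{\partial B}(\partial E)$ to a left homotopy $(\partial H)^\dagger \colon \partial E \times I \to \partial E$ before using it on the $\partial E \times I$ summand, and transposes the resulting left homotopy $H^\dagger$ back to a right homotopy $H \colon E \to [I,E]$ at the end (so that the output conforms with the right-homotopy form fixed in \Cref{eqn:hrep} and used in \Cref{thm:heqv-ext}); you elide both transpositions and work throughout with left homotopies. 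This changes nothing mathematically, but since the setup specifies $\partial H$ as a right homotopy via the fibred path object, you should insert the $(-)^\dagger$ to be fully aligned with the statement. You also correctly verify the compatibility of the three pieces of the top map on the two overlaps $\partial E \times \set{0}$ and $\partial E \times \set{1}$, and correctly deduce $H|_{\set{0}} = r\cdot w$ from the pushout identification $\underline{E}\times\set{0}\cup_{E\times\set{0}} E\times I$ being along $w$ — both points the paper leaves implicit.
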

\begin{proof}
  We transpose the right homotopy
  $\partial H \colon \partial E \to P_{\partial B}^I(\partial E)$ to a left homotopy
  $(\partial H)^\dagger \colon \partial E \times I \to \partial E$ over $\partial B$.
  Now, we would like to extend $\partial r$ and $\partial H^\dagger$ to a
  homotopy retraction $r$ with left homotopy $H^\dagger \colon E \times I \to E$
  over $B$ so that taking the transpose of $H^\dagger$ gives a right homotopy
  $H \colon E \to [I,E]$ extending $\partial H$.
  This can be accomplished because by \Cref{lem:htpy-retract-ext-tc}, the left
  map as below is a trivial cofibration and by assumption, the right map below
  is a fibration.
  Thus, the extension of $\partial r$ and $\partial H^\dagger$ is accomplished.
  \begin{equation*}
    \begin{tikzcd}[cramped]
      {\partial\underline{E} \times \set{0} \cup \partial E \times I \cup E \times \set{1}} &&& E \\
      \\
      {\underline{E} \times \set{0} \cup E \times I} && E & B
      \arrow["\begin{array}{c} \begin{pmatrix} \partial\underline{E} \times \set{0} \cong \partial\underline{E} \xrightarrow{\partial r} \partial E \hookrightarrow E \\ \partial E \times I \xrightarrow{H^\dagger} \partial E \hookrightarrow E \\ E \times \set{1} \xrightarrow{\cong} E \end{pmatrix} \end{array}", from=1-1, to=1-4]
      \arrow["\begin{array}{c} \begin{pmatrix} \partial\underline{E} \times \set{0} \hookrightarrow \underline{E} \times \set{0} \\ \partial E \times I \hookrightarrow E \times I \\ E \times \set{1} \hookrightarrow E \times I \end{pmatrix} \end{array}"', hook, from=1-1, to=3-1]
      \arrow[two heads, from=1-4, to=3-4]
      \arrow["{(r, H^\dagger)}"{description}, dashed, from=3-1, to=1-4]
      \arrow["\begin{array}{c} \begin{pmatrix} \underline{E} \xrightarrow{w} E \\ E \times I \xrightarrow{\proj} E \end{pmatrix} \end{array}"', from=3-1, to=3-3]
      \arrow[two heads, from=3-3, to=3-4]
    \end{tikzcd}
  \end{equation*}
\end{proof}

\subsubsection{Extending Homotopy Sections} \label{subsubsec:htpy-sect-ext}
We next establish a homotopy section extension property similar to \cite[Lemma
3.3.6]{kl21}.
Similar to before, our problem setup fixes the following solid maps
\begin{equation}\label{eqn:hsep}\tag{\textsc{hsep}}
  \begin{tikzcd}[cramped]
    {\partial E} && E \\
    && {\partial \ul{E}} && {\ul{E}} \\
    & {\partial B} && B
    \arrow[from=1-1, to=1-3]
    \arrow["{\partial w}"{description}, curve={height=6pt}, from=1-1, to=2-3]
    \arrow[two heads, from=1-1, to=3-2]
    \arrow[two heads, from=1-3, to=3-4]
    \arrow["{\partial s}"{description}, curve={height=6pt}, from=2-3, to=1-1]
    \arrow[two heads, from=2-3, to=3-2]
    \arrow["s"{description}, curve={height=6pt}, dashed, from=2-5, to=1-3]
    \arrow[two heads, from=2-5, to=3-4]
    \arrow["i"', hook, from=3-2, to=3-4]
    \arrow[crossing over, hook, from=2-3, to=2-5]
    \arrow["w"{description}, curve={height=6pt}, from=1-3, to=2-5]
    \arrow["\lrcorner"{anchor=center, pos=0.15, scale=1.5}, draw=none, from=1-1, to=3-4]
    \arrow["\lrcorner"{anchor=center, pos=0.15, scale=1.5}, draw=none, from=2-3, to=3-4]
  \end{tikzcd}
\end{equation}
where
\begin{enumerate}
  \item $i$ is a cofibration.
  \item $E, \ul{E}$ are fibrant over $B$.
  \item $w \colon E \xrightarrow{\sim} \ul{E}$ is a weak equivalence over $B$.
  \item $\partial E, \partial\ul{E}, \partial w$ are the pullbacks of $E, \ul{E}, w$.
  \item $\partial s$ is a homotopy section of $\partial w$ over $\partial B$.
  \item
  $\partial H \colon \partial w \cdot \partial s \simeq \id_{\partial\ul{E}}$,
  as observed in the diagram below, is a right homotopy over $\partial B$ via
  the fibred path object $P^I_{\partial B}(\partial\ul{E})$.
  \begin{equation*}
    \begin{tikzcd}[cramped, row sep=small, column sep=small]
      && {\partial E} && {[\set{0}, \partial\ul{E}]} \\
      {\partial\ul{E}} && {P^I_{\partial B}(\partial\ul{E})} & {[I, \partial\ul{E}]} \\
      &&&& {[\set{1}, \partial\ul{E}]} \\
      && {\partial B} & {[I, \partial B]}
      \arrow["{\partial w}", from=1-3, to=1-5]
      \arrow["{\partial s}", from=2-1, to=1-3]
      \arrow[from=2-1, to=4-3]
      \arrow[from=2-3, to=2-4]
      \arrow[from=2-3, to=4-3]
      \arrow[from=2-4, to=1-5]
      \arrow[from=2-4, to=3-5]
      \arrow[from=2-4, to=4-4]
      \arrow[from=4-3, to=4-4]
      \arrow["{=}"{description, pos=0.6}, from=2-1, to=3-5, crossing over]
      \arrow["\lrcorner"{anchor=center, pos=0.125, scale=1.5}, draw=none, from=2-3, to=4-4]
      \arrow["{\partial H}"{description}, from=2-1, to=2-3]
    \end{tikzcd}
  \end{equation*}
\end{enumerate}
The goal is to extend both $\partial s$ to a homotopy section $s$ of $w$ with
homotopy $H \colon ws \simeq \id_{\ul{E}}$.

As before, this is done by solving a suitable lifting problem, which is defined
using the fibred mapping path objects.

\begin{definition}\label{def:mapping-path-obj}
  Let $f \colon X \to Y \in \sfrac{\bC}{C}$ be a map over $C$.
  The \emph{fibred mapping path object} of $f$ is given by the pullback of the
  endpoint evaluation $\ev_0 \colon P_C^\bI(Y) \to Y$ along $f \colon X \to Y$,
  as follows.
  \begin{equation*}
    \begin{tikzcd}[cramped]
      {P_C^I(f)} & {P_C^I(Y)} \\
      {X \times_C Y} & {Y \times_CY} \\
      X & Y
      \arrow[from=1-1, to=1-2]
      \arrow[from=1-1, to=2-1]
      \arrow["{\ev_\partial}", from=1-2, to=2-2]
      \arrow[from=2-1, to=2-2]
      \arrow[from=2-1, to=3-1]
      \arrow["{\proj_1}", from=2-2, to=3-2]
      \arrow["f"', from=3-1, to=3-2]
      \arrow["\lrcorner"{anchor=center, pos=0.05, scale=1.5}, draw=none, from=1-1, to=2-2]
      \arrow["\lrcorner"{anchor=center, pos=0.05, scale=1.5}, draw=none, from=2-1, to=3-2]
    \end{tikzcd}
  \end{equation*}
  It is equipped with two endpoint evaluation maps respectively denoted
  $\ev_0 \colon P_C^I(f) \to X$ and $\ev_1 \colon P_C^I(f) \to Y$.
\end{definition}

To show the homotopy section extension property, we set up a suitable lifting
problem lifting against the endpoint evaluation map of the fibred mapping path
object, so we must check that it lives in the right class.
\begin{lemma}\label{lem:mapping-path-obj-tf}
  Suppose that $f \colon X \to Y$ is a map of fibrant objects over $C$.
  Then, the endpoint evaluation map $\ev_1 \colon P_C^I(f) \to Y$ is a
  fibration.
  It is furthermore a trivial fibration if $f$ is a weak equivalence and the
  model structure of $\bC$ is right proper.
\end{lemma}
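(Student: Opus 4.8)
The plan is to treat the two assertions in turn. For the first I would exhibit $\ev_1\colon P_C^I(f)\to Y$ as a composite of pullbacks of fibrations; for the second — assuming $f$ a weak equivalence — I would upgrade it to a weak equivalence by a two-out-of-three argument against the constant-path section, which together with the first part gives a trivial fibration.

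First I would show that $\ev_1$ is a fibration. Unwinding \Cref{def:mapping-path-obj}, the square with left edge $P_C^I(f)\to X\times_C Y$, right edge $\ev_\partial\colon P_C^I(Y)\to Y\times_C Y$, and bottom edge $X\times_C Y\to Y\times_C Y$ is a pullback, and $\ev_1\colon P_C^I(f)\to Y$ is the composite of $P_C^I(f)\to X\times_C Y$ with the second projection to $Y$. Here $\ev_\partial$ is a fibration: by \Cref{def:pullback-Hom} it is the pullback-Hom in $\sfrac{\bC}{C}$ of the cofibration $C\times\partial I\hookrightarrow C\times I$ against the fibration $Y\twoheadrightarrow C$, so for any trivial cofibration $A\hookrightarrow B$ over $C$ the relevant pushout-product — computed in $\bC$, it is the pushout-product of $\partial I\hookrightarrow I$ with $A\hookrightarrow B$ — is a trivial cofibration by \Cref{asm:htpy-eqv-ext} and therefore lifts against $Y\twoheadrightarrow C$ (this is essentially \cite[Lemma 5.7]{struct-lift}). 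Pulling back along $X\times_C Y\to Y\times_C Y$, the map $P_C^I(f)\to X\times_C Y$ is a fibration; and the second projection $X\times_C Y\to Y$ is a fibration, being the pullback of $X\twoheadrightarrow C$ along $Y\to C$. Their composite $\ev_1$ is thus a fibration.

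For the second assertion, the key sub-step is that $\ev_0\colon P_C^I(Y)\to Y$ is a \emph{trivial} fibration. Indeed, $\ev_0$ is the pullback-Hom in $\sfrac{\bC}{C}$ of the endpoint inclusion $C\times\set{0}\hookrightarrow C\times I$ against $Y\twoheadrightarrow C$, so by the pullback-Hom adjunction it has the right lifting property against a cofibration $A\hookrightarrow B$ over $C$ as soon as $Y\twoheadrightarrow C$ lifts against the pushout-product of $C\times\set{0}\hookrightarrow C\times I$ with $A\hookrightarrow B$; computed in $\bC$ this is the pushout-product of $\set{0}\hookrightarrow I$ with $A\hookrightarrow B$, and since $\set{0}\hookrightarrow I$ is a monomorphism and, by two-out-of-three applied to $\set{0}\to I\to 1$, a weak equivalence, it is a trivial cofibration, whence this pushout-product is again a trivial cofibration by \Cref{asm:htpy-eqv-ext}(2) and so lifts against the fibration $Y\twoheadrightarrow C$. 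Hence $\ev_0\colon P_C^I(Y)\to Y$ is a trivial fibration (right properness of $\bC$ is where this sub-step would be run if one preferred to argue via the constant-path section $c\colon Y\to P_C^I(Y)$ instead). Since, by \Cref{def:mapping-path-obj}, the projection $\ev_0\colon P_C^I(f)\to X$ is the pullback of $\ev_0\colon P_C^I(Y)\to Y$ along $f$, it too is a trivial fibration, in particular a weak equivalence. The constant-path section $j\colon X\to P_C^I(f)$ satisfies $\ev_0\circ j=\id_X$, so $j$ is a weak equivalence by two-out-of-three; and $\ev_1\circ j=f$, so if $f$ is a weak equivalence then two-out-of-three forces $\ev_1\colon P_C^I(f)\to Y$ to be a weak equivalence. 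Combined with the first part, $\ev_1$ is a trivial fibration.

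The step I expect to be the main obstacle is the trivial-fibration claim for $\ev_0\colon P_C^I(Y)\to Y$: since $I$ is only assumed a good — not a very good — cylinder object for the terminal object, this cannot be read off from "$I\to 1$ is a trivial fibration", and one must instead exploit that the \emph{endpoint} inclusion $\set{0}\hookrightarrow I$ is a trivial cofibration together with the closure property in \Cref{asm:htpy-eqv-ext}(2); this (or, alternatively, right properness) is the only model-categorical input beyond stability of (trivial) fibrations under pullback and composition. A routine but necessary bookkeeping point is to check that the map $\ev_1$ furnished by \Cref{def:mapping-path-obj} coincides with the composite $P_C^I(f)\to X\times_C Y\to Y$ used above, which is a diagram chase in the defining pullback squares.
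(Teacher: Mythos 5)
Your first part coincides with the paper's: you both factor $\ev_1$ as $P_C^I(f)\to X\times_C Y\to Y$, show $P_C^I(f)\to X\times_C Y$ is a pullback of the fibration $\ev_\partial\colon P_C^I(Y)\to Y\times_C Y$, show $X\times_C Y\to Y$ is the pullback of the fibration $X\twoheadrightarrow C$, and conclude the composite is a fibration. Your second part, however, takes a genuinely different route. The paper uses right properness directly: since $P_C^I(f)\to P_C^I(Y)$ is a pullback of $f$ along a fibration, it is a weak equivalence, and combined with the (asserted) fact that $P_C^I(Y)\to Y$ is a weak equivalence, two-out-of-three gives the claim. You instead first show that $\ev_0\colon P_C^I(Y)\to Y$ is a trivial fibration by the pushout-product axiom applied to the trivial cofibration $\set{0}\hookrightarrow I$, pull this back to $\ev_0\colon P_C^I(f)\to X$, use the constant-path section $j$ with $\ev_0 j=\id_X$ and $\ev_1 j=f$, and apply two-out-of-three twice. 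Your argument does not actually invoke right properness: it needs only the pushout-product closure assumption. That is precisely the alternative alluded to in the paper's remark immediately following this lemma (``even without the right properness condition, we can still show \dots by manually constructing solutions to lifting problems''). So your proof is correct, and it is in fact sharper in the sense of isolating that the right-properness hypothesis is not strictly required for this step; the paper's proof buys a shorter argument at the cost of invoking the ambient hypothesis.
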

\begin{proof}
  We first note that $\ev_1 \colon P_C^I(f) \to Y$ factors via
  $P_C^I(f) \to X \times_C Y \to Y$ as in the curved left map as follows.
  Because $X$ is fibrant, the projection $X \times_C Y \to Y$ is fibrant, and
  $P_C^I(f) \to X \times_C Y$ arises as a pullback of the fibration
  $\ev_\partial \colon P_C^I(Y) \to Y \times_C Y$, it is also fibrant.
  Hence, $\ev_1 \colon P_C^I(f) \to X \times_C Y \to Y$ is the composition of
  two fibrations, thus proving fibrancy.
  \begin{equation*}
    \begin{tikzcd}[cramped, row sep=small, column sep=small]
      {P_C^I(f)} && {P_C^I(Y)} \\
      \\
      {X \times_C Y} && {Y \times_CY} \\
      & Y && Y \\
      X && Y \\
      & C && C
      \arrow[from=1-1, to=1-3]
      \arrow[two heads, from=1-1, to=3-1]
      \arrow["{\ev_\partial}"{description}, two heads, from=1-3, to=3-3]
      \arrow["{\ev_1}", curve={height=-12pt}, from=1-3, to=4-4]
      \arrow[from=3-1, to=3-3]
      \arrow[two heads, from=3-1, to=4-2]
      \arrow["{\proj_1}"'{pos=0.7}, two heads, from=3-1, to=5-1]
      \arrow["{\proj_2}"{description}, two heads, from=3-3, to=4-4]
      \arrow["{\proj_1}"{description, pos=0.7}, two heads, from=3-3, to=5-3]
      \arrow["{=}"{description, pos=0.3}, from=4-2, to=4-4]
      \arrow[two heads, from=4-4, to=6-4]
      \arrow["f"'{pos=0.7}, from=5-1, to=5-3]
      \arrow[two heads, from=5-1, to=6-2]
      \arrow[two heads, from=5-3, to=6-4]
      \arrow["{=}"', from=6-2, to=6-4]
      \arrow[two heads, from=4-2, to=6-2, crossing over]
      \arrow["{\ev_1}", curve={height=-12pt}, from=1-1, to=4-2, crossing over]
      \arrow["\lrcorner"{anchor=center, pos=0.15, scale=1.5}, draw=none, from=1-1, to=3-3]
      \arrow["\lrcorner"{anchor=center, pos=0.15, scale=1.5}, draw=none, from=3-1, to=5-3]
      \arrow["\lrcorner"{anchor=center, pos=0.15, scale=1.5, rotate=-45}, draw=none, from=3-1, to=6-2]
      \arrow["\lrcorner"{anchor=center, pos=0.15, scale=1.5, rotate=-45}, draw=none, from=3-3, to=6-4]
    \end{tikzcd}
  \end{equation*}

  When $f$ is a weak equivalence in a right proper model category, the map
  $P_C^I(f) \to P_C^I(Y)$ is a weak equivalence as it occurs as the pullback of
  $f$ along a fibration.
  But because the endpoint evaluation map $P_C^I(Y) \to Y$ is a weak
  equivalence, the curved top face in the diagram shows that by 2-out-of-3, the
  map $\ev_1 \colon P_C^I(f) \to Y$ is also a weak equivalence.
\end{proof}

\begin{remark}
  In \Cref{lem:mapping-path-obj-tf}, we have showed separately that the endpoint
  evaluation map of the fibred mapping path object of a weak equivalence is both
  a fibration and a weak equivalence.
  And to show weak equivalence, we have required right properness.
  We note that even without the right properness condition, we can still show
  that $\ev_1 \colon P_C^I(f) \to Y$ is a trivial fibration when $f$ is a weak
  equivalence by manually constructing solutions to lifting problems against all
  cofibrations.
\end{remark}

\begin{proposition}\label{prop:hsep}
  As from \Cref{eqn:hsep}, given a homotopy equivalence $w \colon E \to \ul{E}$
  over $B$ whose restriction to over $\partial B$ is
  $\partial w \colon \partial E \to \partial\ul{E}$ with homotopy section
  $\partial r \colon \partial\ul{E} \to \partial E$ witnessed by homotopy
  $\partial H \colon \partial w \cdot \partial s \simeq \id_{\partial\ul{E}}$,
  one can extend this homotopy data to a homotopy section
  $s \colon \ul{E} \to E$ over $B$ witnessed by homotopy
  $H \colon w \cdot s \simeq \id_{\ul{E}}$, provided $\bC$ is right proper.
\end{proposition}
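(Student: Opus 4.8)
The plan is to repackage the entire extension problem as one lifting problem against the endpoint evaluation of the fibred mapping path object of $w$, following \cite[Lemma 3.3.6]{kl21}. The starting observation is a reformulation: by \Cref{def:mapping-path-obj}, to give a map $\sigma \colon \ul{E} \to P^I_B(w)$ is the same as to give a pair $(s,H)$ consisting of a map $s \colon \ul{E} \to E$ and a right homotopy $H \colon \ul{E} \to P^I_B(\ul{E})$ over $B$ whose source endpoint is $w \cdot s$; and such a $\sigma$ is a section of $\ev_1 \colon P^I_B(w) \to \ul{E}$ precisely when the target endpoint of $H$ is $\id_{\ul{E}}$, i.e.\ when $s$ is a homotopy section of $w$ witnessed by $H \colon w \cdot s \simeq \id_{\ul{E}}$. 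Applying the same reformulation over $\partial B$ turns the data $(\partial s, \partial H)$ of \Cref{eqn:hsep} into a section $\partial\sigma \colon \partial\ul{E} \to P^I_{\partial B}(\partial w)$ of $\ev_1 \colon P^I_{\partial B}(\partial w) \to \partial\ul{E}$.

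Next I would use that the fibred mapping path object is stable under base change. Since fibred path objects $P^I_C(-)$ are stable under pullback along maps $C' \to C$ and pullbacks commute with pullbacks, pulling the square of \Cref{def:mapping-path-obj} for $w$ back along $i \colon \partial B \hookrightarrow B$ identifies $i^*\bigl(\ev_1 \colon P^I_B(w) \to \ul{E}\bigr)$ with $\ev_1 \colon P^I_{\partial B}(\partial w) \to \partial\ul{E}$. Consequently $\partial\sigma$, the identity $\id_{\ul{E}}$, and the inclusion $\partial\ul{E} \hookrightarrow \ul{E}$ assemble into a commutative square with left edge $\partial\ul{E} \hookrightarrow \ul{E}$ and right edge $\ev_1 \colon P^I_B(w) \to \ul{E}$; commutativity is the identity $\ev_1 \cdot \partial\sigma = (\partial\ul{E} \hookrightarrow \ul{E})$, which holds because the target endpoint of $\partial H$ is $\id_{\partial\ul{E}}$. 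The left edge is a monomorphism, being a pullback of $i$, hence a cofibration by \Cref{asm:htpy-eqv-ext}; the right edge is a trivial fibration by \Cref{lem:mapping-path-obj-tf}, since $w$ is a weak equivalence between objects fibrant over $B$ and $\bC$ is right proper. A diagonal filler $\sigma \colon \ul{E} \to P^I_B(w)$ therefore exists, and reading off the pair it corresponds to yields the homotopy section $s \coloneqq \ev_0 \cdot \sigma \colon \ul{E} \to E$ together with the right homotopy $H \colon w \cdot s \simeq \id_{\ul{E}}$; that $\sigma$ is a map over $B$ is automatic from $\ev_1 \cdot \sigma = \id_{\ul{E}}$ and $\ev_1$ being over $B$, and that $s,H$ restrict to $\partial s,\partial H$ over $\partial B$ is exactly the statement that $\sigma$ extends $\partial\sigma$ along $\partial\ul{E} \hookrightarrow \ul{E}$.

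The only genuine input here is \Cref{lem:mapping-path-obj-tf} together with right properness; everything else is pullback-stability and the definition of a trivial fibration. Accordingly, the main point requiring care is purely bookkeeping: performing the two reformulations precisely --- identifying the ``right homotopy via a fibred path object'' data of \Cref{eqn:hsep} with an honest section of $\ev_1 \colon P^I_{\partial B}(\partial w) \to \partial\ul{E}$, and checking that this identification is compatible with the base-change isomorphism $i^* P^I_B(w) \cong P^I_{\partial B}(\partial w)$ so that the square fed into the lifting property genuinely commutes.
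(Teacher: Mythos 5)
Your proof is correct and follows essentially the same route as the paper's: package $(\partial s,\partial H)$ as a map into the fibred mapping path object $P^I_{\partial B}(\partial w)$, identify this via base change with $i^*P^I_B(w)$, and solve the resulting lifting problem of the cofibration $\partial\ul{E}\hookrightarrow\ul{E}$ against the trivial fibration $\ev_1\colon P^I_B(w)\to\ul{E}$ supplied by \Cref{lem:mapping-path-obj-tf} and right properness. Your write-up is somewhat more explicit than the paper about the two reformulations (sections of $\ev_1$ versus homotopy-section data, and the compatibility of the base-change isomorphism), but the underlying argument is the same.
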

\begin{proof}
  By \Cref{lem:mapping-path-obj-tf}, the map $\ev_1 \colon P_B^I(w) \to \ul{E}$
  is a trivial fibration.
  Because $\partial{H} \colon \partial w \cdot \partial s \simeq \id$, we have a
  square as below giving rise to a map
  $(\partial s, \partial H) \colon \partial\ul{E} \to
  P_{\partial{B}}^I(\partial{w})$ into the fibred mapping path object of
  $\partial{w}$, which is the rebase of the fibred mapping path object of $w$
  under $\partial{B} \hookrightarrow B$.
  And because $\ev_1 \cdot H = \id_{\partial\ul{E}}$, it follows that we have
  the following commutative square.
  \begin{equation*}
    \begin{tikzcd}[cramped, column sep=small]
      {\partial\ul{E}} & {P_{\partial B}^I(\partial{w})} & {P_{B}^I(w)} \\
      & {\partial\ul{E}} \\
      {\ul{E}} && {\ul{E}}
      \arrow["{(\partial{s}, \partial{H})}", from=1-1, to=1-2]
      \arrow["{=}"{description}, from=1-1, to=2-2]
      \arrow[hook, from=1-1, to=3-1]
      \arrow[hook, from=1-2, to=1-3]
      \arrow[from=1-2, to=2-2]
      \arrow["{\ev_1}", "\sim"', two heads, from=1-3, to=3-3]
      \arrow[hook, from=2-2, to=3-3]
      \arrow["{=}"', from=3-1, to=3-3]
    \end{tikzcd}
  \end{equation*}
  This living problem thus admits a solution
  $(s,H) \colon \ul{E} \to E \times_{\ul{E}} P^I_B(\partial\ul{E})$, which
  extends $(\partial s, \partial H)$ by construction.
\end{proof}

\subsubsection{Extending Homotopy Equivalences} \label{subsubsec:htpy-eqv-ext}
We are now ready to prove the result of extending homotopy equivalences.

\begin{theorem}\label{thm:heqv-ext}
  Let $\bC$ be equipped with a model structure in which:
  \begin{enumerate}
    \item The cofibrations are precisely the monomorphisms.
    \item The pushout-product of a (cofibration, trivial cofibration)-pair is
    still a trivial cofibration.
    \item $I$ is a good, but not necessarily very good, cylinder object for the
    terminal object with the two disjoint endpoint inclusions
    $\set{0},\set{1} \rightrightarrows I$.
    \item Right properness holds
  \end{enumerate}
  Then, if there is a situation as below
  \begin{equation*}
    \begin{tikzcd}[cramped]
      {\partial E} && E \\
      && {\partial \underline E} && {\underline{E}} \\
      & {\partial B} && B
      \arrow[from=1-1, to=1-3]
      \arrow["{\partial w}"{description}, from=1-1, to=2-3]
      \arrow[two heads, from=1-1, to=3-2]
      \arrow["w"{description}, dashed, from=1-3, to=2-5]
      \arrow[two heads, from=2-3, to=3-2]
      \arrow["s"{description}, shift left=3, dashed, from=2-5, to=1-3]
      \arrow["r"{description}, shift right=3, dashed, from=2-5, to=1-3]
      \arrow[dashed, two heads, from=2-5, to=3-4]
      \arrow["i"', hook, from=3-2, to=3-4]
      \arrow[two heads, from=1-3, to=3-4]
      \arrow[crossing over, "{\partial s}"{description}, shift left=3, from=2-3, to=1-1]
      \arrow[crossing over, "{\partial r}"{description}, shift right=3, from=2-3, to=1-1]
      \arrow[crossing over, dashed, hook, from=2-3, to=2-5]
      \arrow["\lrcorner"{anchor=center, pos=0.15, scale=1.5}, draw=none, from=2-3, to=3-4]
      \arrow["\lrcorner"{anchor=center, pos=0.15, scale=1.5}, draw=none, from=1-1, to=3-4]
    \end{tikzcd}
  \end{equation*}
  in which
  \begin{enumerate}
    \item $i$ is a cofibration.
    \item $\partial E$ and $\partial\ul{E}$ are fibrant over $\partial B$.
    \item $\partial w$ has a homotopy retraction $\partial r$ with a homotopy
    $\partial H_r \colon \partial r \cdot \partial w \simeq \id_{\partial E}$
    via $P^I_{\partial B}(\partial E)$.
    \item $\partial w$ has a homotopy section $\partial s$ with a homotopy
    $\partial H_s \colon \id_{\partial\ul{E}} \simeq \partial w \cdot \partial
    s$ via $P^I_{\partial B}(\partial\ul{E})$.
  \end{enumerate}
  Then, one can extend
  \begin{enumerate}
    \item $\partial r$ to a homotopy retraction of
    $\unglue_{i,\ul{E}}(w)$ and $\partial H_r$ to a homotopy
    $H_r \colon r \cdot \unglue_{i,\ul\partial{E}}(w) \simeq \id_{E}$ via
    $P^I_{B}(E)$.
    \item $\partial s$ to a homotopy section of $\unglue_{i,\ul\partial{E}}(w)$
    and $\partial H_s$ to a homotopy
    $H_s \colon \id_{\ul{E}} \simeq \unglue_{i,\ul{E}}(w) \cdot s$ via
    $P^I_{B}(\ul{E})$.
  \end{enumerate}
\end{theorem}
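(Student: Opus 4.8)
The plan is to reduce the statement to the homotopy retraction and homotopy section extension results already established as \Cref{prop:hrep} and \Cref{prop:hsep}. Reading the diagram, $\ul{E}\twoheadrightarrow B$ is the given fibration (with $\partial\ul E\twoheadrightarrow\partial B$ its pullback along $i$), $\partial w\colon\partial E\to\partial\ul E$ is the given homotopy equivalence over $\partial B$ carrying the retraction data $(\partial r,\partial H_r)$ and section data $(\partial s,\partial H_s)$, and the dashed $E$, $w$ denote $E\coloneqq\Glue_{i,\ul E}(\partial w)$ and $w\coloneqq\unglue_{i,\ul E}(\partial w)\colon E\to\ul E$ from \Cref{constr:glue-op}, with $E\to B$ the composite $E\to\ul E\twoheadrightarrow B$. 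First I would invoke \Cref{lem:ext-retract} to record that $i^*E\cong\partial E$ and $i^*w\cong\partial w$ over $\partial B$, so that $\partial E$, $\partial\ul E$, $\partial w$ really are the pullbacks of $E$, $\ul E$, $w$; this is precisely the ``pullback'' clause appearing in \Cref{eqn:hrep} and \Cref{eqn:hsep}.

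The substantive step is to verify that $w\colon E\to\ul E$ is a weak equivalence between fibrant objects over $B$, so that $(E,\ul E,w)$ can play the role of the corresponding data in \Cref{eqn:hrep} and \Cref{eqn:hsep}. Since cofibrations are precisely the monomorphisms, every object is cofibrant, so the homotopy equivalence $\partial w$ between the fibrant objects $\partial E\twoheadrightarrow\partial B$ and $\partial\ul E\twoheadrightarrow\partial B$ is already a weak equivalence over $\partial B$. The hypotheses of \Cref{prop:ext-we} --- monos coincide with cofibrations, the pushout-product of a (cofibration, trivial cofibration)-pair is a trivial cofibration, and $\ul E\twoheadrightarrow B$ is a fibration --- are then in force, so \Cref{prop:ext-we} gives that $w=\unglue_{i,\ul E}(\partial w)$ is a weak equivalence between fibrant objects in $\sfrac{\bC}{B}$; in particular $E\twoheadrightarrow B$ is a fibration. (One could instead argue this directly by factoring $\partial w$ and combining \Cref{lem:ext-comp} with \Cref{lem:ext-preserve-fib}, but quoting \Cref{prop:ext-we} is cleanest.)

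With $w$ known to be a weak equivalence between fibrant objects over $B$, the data $(i,E,\ul E,w,\partial E,\partial\ul E,\partial w,\partial r,\partial H_r)$ satisfies every hypothesis of \Cref{eqn:hrep}, and \Cref{prop:hrep} produces a homotopy retraction $r\colon\ul E\to E$ of $w$ over $B$ together with a right homotopy $H_r\colon r\cdot w\simeq\id_E$ via $P^I_B(E)$ extending $\partial r$ and $\partial H_r$; this is conclusion~(1). Symmetrically $(i,E,\ul E,w,\partial E,\partial\ul E,\partial w,\partial s,\partial H_s)$ satisfies \Cref{eqn:hsep}, and since $\bC$ is right proper (hypothesis~4) \Cref{prop:hsep} produces a homotopy section $s\colon\ul E\to E$ of $w$ over $B$ together with a right homotopy $H_s\colon\id_{\ul E}\simeq w\cdot s$ via $P^I_B(\ul E)$ extending $\partial s$ and $\partial H_s$; this is conclusion~(2). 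The harmless discrepancy in the orientation of the section homotopy between \Cref{prop:hsep} and the statement here is absorbed by reversing the cylinder $I$.

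I expect the middle paragraph to be the only real obstacle: the theorem is essentially the assertion that gluing along a cofibration extends \emph{all} the homotopy data of a fibred homotopy equivalence, and the crucial input is that the glued comparison map $\unglue_{i,\ul E}(\partial w)$ is a weak equivalence over $B$, which one obtains by first upgrading the homotopy-equivalence hypothesis to a weak equivalence (using cofibrancy of all objects) and then applying the weak equivalence extension property \Cref{prop:ext-we}. Once $E\twoheadrightarrow B$ is known fibrant and $w$ a weak equivalence over $B$, everything that remains is a direct appeal to \Cref{prop:hrep} and \Cref{prop:hsep}.
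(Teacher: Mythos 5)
Your proposal is correct and follows essentially the same approach as the paper's proof: use the coincidence of cofibrations and monomorphisms to upgrade the fibred homotopy equivalence $\partial w$ to a weak equivalence, apply \Cref{prop:ext-we} to conclude that $\unglue_{i,\ul{E}}(\partial w)$ is a weak equivalence between fibrant objects over $B$, and then invoke \Cref{prop:hrep} and \Cref{prop:hsep} for the extended homotopy data. The only additions are your explicit mention of \Cref{lem:ext-retract} to verify the pullback clauses of \Cref{eqn:hrep} and \Cref{eqn:hsep} and your note about reversing the cylinder to fix the orientation of the section homotopy, both of which the paper leaves implicit.
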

\begin{proof}
  Because all cofibrations are monos, in the slice over $\partial B$, the map
  $w$ is a homotopy equivalence between cofibrant-fibrant objects, so it is a
  homotopy weak equivalence.
  Thus, \Cref{prop:ext-we} applies to show $\unglue_{i,\ul{E}}(w)$ is a weak
  equivalence.
  Now, the extended inverses and homotopies follow from
  \Cref{prop:hrep,prop:hsep}.
\end{proof}

\subsection{Related work}\label{subsec:ext-comparison}
The weak equivalence extension property of \Cref{subsec:wk-eqv-ext} has been
shown in various other works such as \cite[Proposition 5.1]{sat17} and
\cite[Proposition 116]{awo23a}.
These extend weak equivalence along a cofibration.
This is the content of \Cref{subsec:wk-eqv-ext}.
However, they do not handle the homotopy data.
That is, even if all objects are cofibrant, just extending the weak equivalence
is not enough to show that the map $\tgt \colon \HIso_\McU^\Id(\tMcU) \to \McU$
of \Cref{constr:hiso} is a trivial fibration when instantiated to model category
settings.
This is because the object $\HIso_\McU^\Id(\tMcU)$ represents homotopy
equivalences paired with homotopy data (i.e. the homotopies giving witness to
the homotopy sections and retractions), so to show that $\tgt \colon \HIso_\McU^\Id(\tMcU) \to \McU$ is a trivial fibration, one must also extend the homotopy data along cofibrations.

This is what \Cref{subsec:htpy-eqv-ext} shows.
The only other work that we are aware of that shows this homotopy equivalence
extension property is the original simplicial model of \cite{kl21}, but only for
the setting of simplicial sets.
In \Cref{subsec:htpy-eqv-ext}, we provide a generalisation of the homotopy
equivalence extension property to a larger class of model categories (namely
those satisfying \Cref{asm:htpy-eqv-ext}).


\section{Models in Presheaf Model Categories}\label{sec:presheaf-models}
With naive cubical type theory axiomatised in \Cref{def:naive-ctt} and the
homotopy isomorphism extension operation justified in \Cref{thm:heqv-ext}, we
are now ready to construct various models of naive cubical type theory.

\subsection{Classifying Fibrations and Cofibrations}
We would like to show that certain presheaf model structures $\widehat{\bA}$ are
universe models of naive cubical type theory where the type-theoretic fibrations
agree with the (small) model-theoretic fibrations.
In order to do this, we must find a generic small fibration $\tMcU \to \McU$ and
choose a strictly family of pullbacks for it.
This is done by using \cite[Proposition 10]{awo23b} to construct
Hofmann-Streicher universes classifying small fibrations.
To do so, we must construct a classification device for fibration structures.

\begin{construction}\label{constr:fib-data}
  For each $E \to B \in \widehat{\bA}$, set
  $\Fib_B(E) \hookrightarrow B \in \widehat{\bA}$ to be the sub-presheaf of $B$
  consisting its complexes $b$ along which the pullback $E \to B$ is a
  fibration.
  That is, for each $a \in \bA$
  \begin{align*}
    \Fib_B(E)_a \coloneqq \set{a \xrightarrow{b} B ~|~ b^*E \twoheadrightarrow a \text{ is a fibration} }
    \hookrightarrow B_a
  \end{align*}
  whose action on maps $a' \to a$ is defined by composition
  $(a \xrightarrow{b} B) \mapsto (a' \to a \xrightarrow{b} B)$.
  By stability of fibrations under pullback, this indeed defines a presheaf.
\end{construction}

To use \cite[Proposition 10]{awo23b}, we need to first ensure that this
construction of the fibration data given by $\Fib$ is stable under pullback, in
the following sense.

\begin{lemma}\label{lem:fib-data-pb}
  Let there be a map $E \to B \in \widehat{\bA}$ along with a map
  $f \colon X \to B$.
  Then, one has the following pullback
  \begin{equation*}
    \begin{tikzcd}[cramped]
      {\Fib_X(f^*E)} & {\Fib_B(E)} \\
      X & B
      \arrow[from=1-1, to=1-2]
      \arrow[hook, from=1-1, to=2-1]
      \arrow["\lrcorner"{anchor=center, pos=0.05, scale=1.5}, draw=none, from=1-1, to=2-2]
      \arrow[hook, from=1-2, to=2-2]
      \arrow[from=2-1, to=2-2, "f"']
    \end{tikzcd}
  \end{equation*}
\end{lemma}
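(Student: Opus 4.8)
The plan is to verify the pullback square by checking equality of the two sub-presheaves of $X$ on each object $a \in \bA$, using the concrete description of $\Fib$ from \Cref{constr:fib-data}. By construction, $\Fib_X(f^*E) \hookrightarrow X$ and $\Fib_B(E) \hookrightarrow B$ are subobjects, so the claimed pullback square amounts to showing that $\Fib_X(f^*E)$ is exactly the preimage $f^{-1}(\Fib_B(E))$ inside $X$; equivalently, for each $a \in \bA$ and each complex $x \colon a \to X$, one has $x \in \Fib_X(f^*E)_a$ if and only if $f \cdot x \in \Fib_B(E)_a$.

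The key step is then a statement about pullbacks of pullbacks. Unwinding the definition, $x \in \Fib_X(f^*E)_a$ means that $x^*(f^*E) \twoheadrightarrow a$ is a fibration, while $f \cdot x \in \Fib_B(E)_a$ means that $(f \cdot x)^*E \twoheadrightarrow a$ is a fibration. But by the pullback pasting lemma, $x^*(f^*E) \cong (f \cdot x)^*E$ as objects over $a$, since pulling back $E \to B$ first along $f$ and then along $x$ is the same as pulling back along the composite $f \cdot x$. Under this canonical isomorphism, the structure map to $a$ is identified, so one of these maps is a fibration precisely when the other is. This gives the desired equality of sub-presheaves on objects, and since both sides are subobjects defined by restricting along composition with the respective structure maps, the identification is automatically compatible with the presheaf action on morphisms $a' \to a$.

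Finally I would note that the square commutes (the horizontal maps being the evident restrictions of $f$ and of the inclusions) and that, being a square of monomorphisms into $B$ exhibiting $\Fib_X(f^*E)$ as the preimage of $\Fib_B(E)$ along $f$, it is a pullback square in $\widehat{\bA}$; this is just the standard fact that preimages of subobjects are computed as pullbacks. I do not anticipate a genuine obstacle here: the only thing to be careful about is tracking the canonical pasting isomorphism $x^*(f^*E) \cong (f \cdot x)^*E$ and confirming it respects the maps down to $a$, which is routine. The result is exactly the hypothesis needed to feed $\Fib$ into \cite[Proposition 10]{awo23b}.
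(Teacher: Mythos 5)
Your proof is correct and takes essentially the same approach as the paper: both verify the square objectwise by unwinding the definition of $\Fib$ from \Cref{constr:fib-data} and invoking the pasting identity $x^*(f^*E) \cong (fx)^*E$ (i.e., functoriality of pullback). Your write-up is slightly more explicit about the preimage-of-a-subobject point, but the underlying argument is identical.
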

\begin{proof}
  Evaluating at each $a \in \bA$, it suffices to show that the maps
  $x \colon a \to X$ such that $x^*f^*E \to a$ is a fibration is in bijective
  correspondence with those maps $b \colon a \to B$ factoring through $f \colon X \to B$
  such that $b^*E \to a$ is a fibration.
  This is exactly by functoriality of the pullback.
\end{proof}

Now, we obtain the Hoffman-Streicher universe as follows.
\begin{lemma}\label{lem:fib-class}
  Suppose $\widehat{\bA}$ is equipped with a cofibrantly generated model
  structure in which the generating trivial cofibrations all have representable codomains.
  Let $\kappa$ be an inaccessible cardinal.

  Then, there exists a $\kappa$-small fibration
  $\tMcU_\kappa \twoheadrightarrow \McU_\kappa$ classifying $\kappa$-small
  fibrations.
  That is, $\tMcU_\kappa \twoheadrightarrow \McU_\kappa$ is equipped with a
  strictly functorial choice of pullbacks and moreover if
  $E \twoheadrightarrow B$ is a $\kappa$-small fibration then there exists
  dashed maps making $E \twoheadrightarrow B$ a pullback of
  $\tMcU_\kappa \twoheadrightarrow \McU_\kappa$.
  \begin{equation*}
    \begin{tikzcd}[cramped]
      E & {\tMcU_\kappa} \\
      B & {\McU_\kappa}
      \arrow[dashed, from=1-1, to=1-2]
      \arrow[two heads, from=1-1, to=2-1]
      \arrow["\lrcorner"{anchor=center, pos=0.15, scale=1.5}, draw=none, from=1-1, to=2-2]
      \arrow[two heads, from=1-2, to=2-2]
      \arrow[dashed, from=2-1, to=2-2]
    \end{tikzcd}
  \end{equation*}
\end{lemma}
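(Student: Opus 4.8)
The plan is to apply \cite[Proposition 10]{awo23b}, which constructs Hofmann-Streicher-style universes classifying a class of maps that is both stable under pullback and "locally representable" in the appropriate sense. Concretely, I would proceed as follows. First, fix the inaccessible cardinal $\kappa$ and let $\mathcal{S}$ be the class of $\kappa$-small fibrations in $\widehat{\bA}$; by stability of fibrations under pullback and the fact that $\kappa$-smallness is pullback-stable for an inaccessible $\kappa$, the class $\mathcal{S}$ is closed under pullback. The key input that makes the Hofmann-Streicher construction go through is that, for each $E \to B \in \widehat{\bA}$, the sub-presheaf $\Fib_B(E) \hookrightarrow B$ of \Cref{constr:fib-data} "classifies" the locus over which $E \to B$ restricts to a fibration, and this classification is natural in $B$ --- which is exactly the content of \Cref{lem:fib-data-pb}.

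The main body of the argument would then be: apply \cite[Proposition 10]{awo23b} to the pullback-stable, $\Fib$-classified class of $\kappa$-small fibrations. That proposition yields a map $\pi_\kappa \colon \tMcU_\kappa \to \McU_\kappa$ together with a strictly functorial choice of pullbacks, such that a $\kappa$-small map $E \to B$ is a fibration if and only if it arises (up to the chosen pullback) from a map $B \to \McU_\kappa$. Here is where the hypothesis that $\widehat{\bA}$ is cofibrantly generated with generating trivial cofibrations having representable codomains enters: this is precisely what is needed to check that the predicate "$b^*E \twoheadrightarrow a$ is a fibration" appearing in \Cref{constr:fib-data} is itself expressible by a lifting condition against a \emph{small} set of maps with representable targets, so that $\Fib_B(E)$ is a well-defined sub-presheaf and the universe $\McU_\kappa = \Fib_{\McU'_\kappa}(\tMcU'_\kappa)$ can be carved out of the naive Hofmann-Streicher universe $\McU'_\kappa$ classifying all $\kappa$-small maps. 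Then $\tMcU_\kappa$ is defined by pulling back the generic $\kappa$-small map along $\McU_\kappa \hookrightarrow \McU'_\kappa$, and $\pi_\kappa$ is a fibration because the generic small family over $\McU_\kappa$ has, by construction, a fibration structure when restricted along every point.

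Finally, for the classifying property: given a $\kappa$-small fibration $E \twoheadrightarrow B$, the Hofmann-Streicher universe $\McU'_\kappa$ already supplies a classifying map $B \to \McU'_\kappa$; by \Cref{lem:fib-data-pb} and the definition of $\McU_\kappa$ as the fibration-locus sub-presheaf, the fact that $E \to B$ is everywhere a fibration forces this map to factor through $\McU_\kappa$, giving the desired dashed square. The strict functoriality of the pullbacks is inherited directly from the Hofmann-Streicher construction. I expect the main obstacle to be the verification that $\Fib_B(E)$ is genuinely a sub-presheaf expressible in a size-bounded way --- i.e.\ that "being a fibration" can be tested by lifting against a \emph{set} (not a proper class) of maps with representable codomains, so that the locus is detected at representables and \Cref{lem:fib-data-pb} suffices. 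This is exactly what the cofibrant generation hypothesis with representable codomains of trivial cofibrations is designed to supply, and checking it amounts to unwinding the small-object argument; everything else is a direct invocation of \cite[Proposition 10]{awo23b}.
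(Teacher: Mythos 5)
Your overall route matches the paper's: invoke \cite[Proposition 10]{awo23b}, feed it pullback-stability of the $\Fib$ sub-presheaf via \Cref{lem:fib-data-pb}, and carve the universe of small fibrations out of the Hofmann--Streicher universe of all $\kappa$-small maps. However, you have misplaced the role of the hypothesis that generating trivial cofibrations have representable codomains, and in doing so left the central verification ungrounded.

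The issue is this. Well-definedness of $\Fib_B(E) \hookrightarrow B$ as a sub-presheaf needs nothing beyond pullback-stability of fibrations; cofibrant generation and representability of codomains play no role there, and $\kappa$-smallness already keeps everything size-bounded. What \cite[Proposition 10]{awo23b} actually requires, after \Cref{lem:fib-data-pb}, is the biconditional: $\Fib_B(E) \hookrightarrow B$ is an isomorphism if and only if $E \to B$ is a fibration. The forward direction is immediate from pullback-stability. The \emph{converse} is where the representability hypothesis bites, and it is precisely the step you wave at with ``unwinding the small-object argument'' but do not supply. The actual argument is short and is not a small-object argument: take any generating trivial cofibration, which by hypothesis has the form $X \hookrightarrow a$ with $a$ representable. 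Any lifting problem against $E \to B$ has its lower map $a \to B$, and since $\Fib_B(E)_a = \widehat{\bA}(a,B)$, the pullback $a \times_B E \to a$ is a fibration; the lifting problem factors through that pullback and is solved there, hence solved for $E \to B$. Without spelling this out, the proposal does not establish the hypothesis that \cite[Proposition 10]{awo23b} requires, which is exactly the content the paper's proof supplies.
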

\begin{proof}
  By \Cref{lem:fib-data-pb} and \cite[Proposition 10]{awo23b}, it suffices to
  show that for any map $E \to B$, the inclusion $\Fib_B(E) \hookrightarrow B$
  admits a section (i.e. is an isomorphism) precisely when $E \twoheadrightarrow B$ is a
  fibration.

  By pullback-stability of fibrations, if $E \twoheadrightarrow B$ is a
  fibration then $\Fib_B(E) \cong B$.
  Conversely, suppose $\Fib_B(E) \cong B$ so that the goal is to show $E \to B$
  is a fibration.
  To this end, pick any generating trivial cofibration so that it must be of the
  form $X \to a$, with $a$ representable.
  Then, each lifting problem of $X \to a$ against $E \to B$ as given by the
  entire rectangle below can be solved by solving the lifting problem with
  $E \to B$ pulled back to $a$, since this pullback is a fibration as
  $\Fib_B(E)_a = \widehat{\bA}(a, B)$ by assumption, as observed in the left
  square below.
  \begin{equation*}
    \begin{tikzcd}[cramped]
      X & \bullet & E \\
      a & a & B
      \arrow[dashed, from=1-1, to=1-2]
      \arrow[curve={height=-12pt}, dashed, from=1-1, to=1-3]
      \arrow[hook, from=1-1, to=2-1]
      \arrow[dashed, from=1-2, to=1-3]
      \arrow[two heads, from=1-2, to=2-2]
      \arrow["\lrcorner"{anchor=center, pos=0.15, scale=1.5}, draw=none, from=1-2, to=2-3]
      \arrow[from=1-3, to=2-3]
      \arrow[dotted, from=2-1, to=1-2]
      \arrow["{=}"', from=2-1, to=2-2]
      \arrow[dashed, from=2-2, to=2-3]
    \end{tikzcd}
  \end{equation*}
\end{proof}

\subsection{Elegant Reedy Presheaf Models}
We now show that various presheaf model categories are models of naive cubical
type theory with propositionally computational filling.
In particular, the classical model structure on simplicial sets and several
model structures on various cubical sets are such examples.

\begin{proposition}\label{prop:psh-models}
  Suppose that $\widehat{\bA}$ is the presheaf category of a category $\bA$
  equipped with a cofibrantly-generated model structure such that
  \begin{enumerate}
    \item[mod-cof]\label{itm:mod-cof} The cofibrations are exactly the monomorphisms.
    \item[mod-gtc]\label{itm:mod-gtc} The generating trivial cofibrations have representable codomains.
    \item[mod-pp]\label{itm:mod-pp} The pushout-product of a (trivial
    cofibration, cofibration)-pair is a trivial cofibration.
    \item[mod-int]\label{itm:mod-int} $I$ is a good, but not necessarily very good, cylinder object for the
    terminal object with a $\min$-structure.
    \item[mod-rp]\label{itm:mod-rp} The model structure is right proper.
    \item[mod-fe]\label{itm:mod-fe} If $\alpha$ is an inaccessible cardinal then
    $\alpha$-small fibrations extend along trivial cofibrations.
  \end{enumerate}

  Then, assuming the existence of two arbitrarily large inaccessible cardinals
  $\kappa < \lambda$, this presheaf model category $\widehat{\bA}$ has the
  structure $(\tMcU \to \McU, \partial\Cof \hookrightarrow \Cof, \bI)$ of naive
  cubical type theory with weakly computational filling structure
  (\Cref{def:naive-ctt}) in which
  \begin{enumerate}[]
    \item The external ambient universe $(\tMcU \to \McU)$-fibrations are the
    $\lambda$-small fibrations and the internal universe
    $(\tMcU_0 \to \McU_0)$-fibrations are the $\kappa$-small fibrations.
    \item $\partial\Cof \hookrightarrow \Cof$ is the truth map into the
    subobject classifier.
    \item $\bI$ is the chosen cylinder object $I$ for the terminal object with a
    $\min$-structure as above.
    \item The strong homotopy isomorphism extension operation
    (\Cref{itm:nctt-hext}) induces a $\Glue$-type structure
    (\Cref{def:glue-type}).
  \end{enumerate}
\end{proposition}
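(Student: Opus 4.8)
The plan is to instantiate \Cref{def:naive-ctt} by taking $\tMcU\to\McU$ and $\tMcU_0\to\McU_0$ to be the Hofmann--Streicher universes $\tMcU_\lambda\twoheadrightarrow\McU_\lambda$ and $\tMcU_\kappa\twoheadrightarrow\McU_\kappa$ of \Cref{lem:fib-class} (available by \Cref{itm:mod-gtc}) classifying $\lambda$- and $\kappa$-small fibrations, $\partial\Cof\hookrightarrow\Cof$ to be the truth map into the subobject classifier $\Omega$, and $\bI$ to be the given cylinder object. Choosing $\kappa<\lambda$ large enough that $\bA$, $\bI$ and the generating (trivial) cofibrations are $\kappa$-small, condition \Cref{itm:mod-fe} makes $\McU_\kappa$ and $\McU_\lambda$ fibrant, so $\tMcU_\kappa\to\McU_\kappa$ and $\McU_\kappa\to1$ are $\lambda$-small fibrations and $\pi_0$ is a $\pi$-fibrant internal universe. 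The easy structures are then immediate: \Cref{itm:nctt-truth} holds by \Cref{lem:omega-truth}; \Cref{itm:nctt-int,itm:nctt-min} hold by \Cref{ex:omega-bipointed-int} together with \Cref{itm:mod-int}, disjointness of the endpoint cofibrations coming from the two endpoints being disjoint subobjects of $\bI$; and \Cref{itm:nctt-cof-sec} holds because diagonals are monos in a topos, hence classified by $\Omega$, which is precisely the square of \Cref{def:cof-sect}.

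For \Cref{itm:nctt-fib,itm:nctt-univ} one checks that $\lambda$- (resp.\ $\kappa$-) small fibrations are closed under the relevant operations and appeals to \Cref{lem:fib-class}: identity maps give $\Unit$; composites give $\Sigma$ (noting $\GenComp(\pi,\pi)$ is a composite of generic small fibrations); pushforwards along fibrations give $\Pi$; and pullback-Homs against the cofibration $\partial\bI\hookrightarrow\bI$ give heterogeneous $\Path$, since by \Cref{def:htr-path-square} the relevant map is such a pullback-Hom applied to $\GenComp(\pi,\pi)$. The last two closure properties are exactly \Cref{itm:mod-pp}, which is equivalent by adjunction to the statement that the pullback-Hom of a (cofibration, fibration)-pair is a fibration. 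Compatibility of the internal structures with the external ones (\Cref{def:int-univ-sigma-pi-id} and the compatibility clauses of \Cref{def:htr-path-type}) follows from uniqueness of the classifying maps of \Cref{lem:fib-class} and the description of the decoding map in \Cref{rmk:int-univ-gencomp}.

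For \Cref{itm:nctt-fill}, \Cref{rmk:biased-filling} identifies a $\mbbo{0}$-biased filling structure with a lifting structure of $(\Cof\vee\txtis_{\mbbo{0}})^*\partial\Cof\hookrightarrow\Cof\times\bI$ against $\Cof\times\tMcU\to\Cof\times\McU$ in $\sfrac{\bC}{\Cof}$; by \Cref{cor:cof-int-bd} and \Cref{ex:omega-pushout-product} the left map is the pushout-product of the trivial cofibration $\set{\mbbo{0}}\hookrightarrow\bI$ (an endpoint inclusion into a cylinder object is a weak equivalence, and a mono hence a cofibration by \Cref{itm:mod-cof}) with the cofibration $\partial\Cof\hookrightarrow\Cof$, so it is a trivial cofibration by \Cref{itm:mod-pp}. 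Since the right map is a fibration and every object of $\widehat{\bA}$ is cofibrant, the required section exists; the same reasoning handles \Cref{itm:nctt-fill-wkcomp}: in each of \Cref{def:Sigma-fill-comp,def:Pi-fill-comp,def:Path-fill-comp} the two lifts being compared agree on the boundary subobject, so their pairing factors through $\tMcU\hookrightarrow\tMcU\times_\McU\tMcU$ there and lifts to $\refl$, and extending along the trivial cofibration $\Gamma.\bI.(\phi^*\vee\txtis_{\mbbo{0}})\hookrightarrow\Gamma.\bI$ (again such a pushout-product) against the path fibration $\ev_\partial\colon\Id_\McU(\tMcU)\twoheadrightarrow\tMcU\times_\McU\tMcU$ produces the required structural relation, with the obvious modifications in the $\Pi$-case where the relation is phrased via the application maps.

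The main obstacle is \Cref{itm:nctt-hext}, where $\Id_0$ is the $\Id$-type structure coming from the heterogeneous $\Path$-type via \Cref{thm:nctt-Path-Id}. The plan is to apply \Cref{thm:heqv-ext} --- whose hypotheses are precisely \Cref{itm:mod-cof}, \Cref{itm:mod-pp}, \Cref{itm:mod-int} and \Cref{itm:mod-rp} --- to the \emph{generic} homotopy isomorphism extension problem over the generic cofibration $\partial\Cof\hookrightarrow\Cof$ built from \Cref{constr:gen-path-htpy,constr:gen-glue-prob,constr:gen-glue}: given a cofibration, a small fibration $\ul{E}\to B$, and over the boundary a $\Path_0$-homotopy isomorphism $\partial E\to\partial\ul{E}$, \Cref{lem:ext-preserve-fib} shows $\Glue_{i,\ul{E}}(\partial E)\to B$ is a $\kappa$-small fibration and \Cref{thm:heqv-ext} extends the homotopy isomorphism data to $\unglue_{i,\ul{E}}$; classifying the former and recording the latter yields the lift against $\Cof\times\tgt$ of \Cref{def:strong-hiso-ext-op}, which is pullback-stable precisely because it is performed once over the classifying object. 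The same instance of \Cref{lem:ext-preserve-fib} shows $\Glue(\GenHIso_{\McU_0}^{\Path_0}(\tMcU_0))\to\GenGlueProb_{\McU_0}^{\Path_0}(\tMcU_0)$ is a $\kappa$-small fibration, which by \Cref{lem:fib-class} supplies the $\Glue$-type structure of \Cref{def:glue-type}. The delicate point is that \Cref{thm:heqv-ext} rests on choices of factorisations and lifts, and it is the passage to the generic extension problem that converts these into a genuine, stable lifting structure; \Cref{itm:mod-rp} is used here only through \Cref{thm:heqv-ext} via the mapping path object argument of \Cref{lem:mapping-path-obj-tf}.
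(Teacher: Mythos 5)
The proposal takes essentially the same approach as the paper, and most of the structure-by-structure verification is correct. However, there is one concrete gap. For \Cref{itm:nctt-univ} you claim that compatibility of the internal $\Sigma$-, $\Pi$-, $\Unit$- and heterogeneous $\Path$-type structures with the external ones ``follows from uniqueness of the classifying maps of \Cref{lem:fib-class}.'' This is not correct: the Hofmann--Streicher universe of \Cref{lem:fib-class} provides \emph{existence} of classifying maps (a $\pi$-fibrancy structure), but there is no uniqueness --- several distinct names for the same fibration exist. Consequently, if one constructs $\Sigma_\kappa$ on $\tMcU_\kappa\to\McU_\kappa$ and $\Sigma_\lambda$ on $\tMcU_\lambda\to\McU_\lambda$ independently by classifying the generic composite, there is no reason the square of \Cref{def:int-univ-sigma-pi-id} commutes on the nose. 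The paper fixes this by observing that $\McU_\kappa\hookrightarrow\McU_\lambda$ is a sub-presheaf and then \emph{realigning} the logical structure on $\McU_\lambda$ using the realignment property of Hofmann--Streicher universes (\cite[Proposition 6]{awo23b}) so that it strictly restricts to the one chosen on $\McU_\kappa$. Without this step, the compatibility diagrams commute only up to isomorphism.

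Two smaller points, neither fatal. For \Cref{itm:nctt-fill-wkcomp} you sketch a direct argument by pairing the two lifts and lifting the pairing against $\ev_\partial\colon\Id_\McU(\tMcU)\twoheadrightarrow\tMcU\times_\McU\tMcU$; the paper simply appeals to \cite[Theorem 6.5]{struct-lift}, which packages precisely this kind of homotopy-between-lifts argument together with the requisite pullback-stability. Your sketch is morally the same but would need the stability under pullback of the witness $H$ spelled out, which \cite[Theorem 6.5]{struct-lift} already does. For \Cref{itm:nctt-hext} you construct the lift by hand over the generic gluing problem and then classify; the paper takes a cleaner route, first noting that $\tgt\colon\HIso_{\McU_0}^{\Id_0}(\tMcU_0)\to\McU_0$ is a \emph{trivial fibration} by \Cref{thm:heqv-ext} (the extension of homotopy data is exactly the lifting property against cofibrations), and then invoking \cite[Theorem 1.10]{struct-lift} once to produce the stable lifting structure of \Cref{def:strong-hiso-ext-op}, with the $\Glue$-type structure falling out. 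Your version is essentially unfolding the content of \cite[Theorem 1.10]{struct-lift} on this instance; that is fine, but it obscures where the stability comes from and duplicates work the cited theorem already performs.
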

\begin{proof}
  Let $\kappa < \lambda$ be two inaccessible cardinals.

  By \Cref{lem:fib-class}, we can find a classifier
  $\tMcU_\lambda \to \McU_\lambda$ for $\lambda$-small fibrations, which we take
  to be the ambient universe $\tMcU \to \McU$ for the structure of NCTT.
  Informed by \Cref{itm:mod-cof}, we take the cofibration classifier as the
  subobject classifier $\partial\Omega \hookrightarrow \Omega$ and informed by
  \Cref{itm:mod-int}, we take the given cylinder object $I$ as the interval
  $\bI$.
  We verify each point of \Cref{def:naive-ctt}.

  \begin{enumerate}
    \item[\Cref{itm:nctt-truth}] This follows by \Cref{lem:omega-truth}.
    \item[\Cref{itm:nctt-cof-sec}] Cofibrancy of sections is clear by the
    universal property of the subobject classifier.
    \item[\Cref{itm:nctt-int}] Likewise, cofibrancy of the interval endpoints is
    clear by universal property of the subobject classifier.
    \item[\Cref{itm:nctt-min}]   By \Cref{itm:mod-int}, the $\min$-structure required for $\bI$
    is assumed to exist.
    \item[\Cref{itm:nctt-fib}] We show each logical construction separately.
    \begin{itemize}
      \item[$\Unit$-type] Fibrations contain all isomorphisms, so
      $\tMcU_\lambda \to \McU_\lambda$ admits a $\Unit$-type structure.
      \item[$\Sigma$-type] Because $(\tMcU_\lambda \to \McU_\lambda)$-fibrations
      are $\lambda$-small fibrations by construction and $\lambda$-small
      fibrations are closed under composition, universality of
      $\tMcU_\lambda \to \McU_\lambda$ gives rise to a $\Sigma$-type structure
      on it.
      \item[$\Pi$-type] For $\Pi$-types, the result follows by \cite[Lemma
      1.5]{axm-univalence}, and the fact that $\lambda$-small fibrations are
      closed under pushforwards along $\lambda$-small fibrations by right
      properness \Cref{itm:mod-rp} and the inaccessibility of $\lambda$.
      \item[Het. $\Path$-type] For heterogeneous $\Path$-type structures, we
      note that by \cite[Theorem 5.9]{struct-lift} applied to
      \Cref{itm:mod-pp,itm:mod-int}, the endpoint evaluation map of the fibred
      path object
      \begin{equation*}
        \hP_{(\pi_\lambda)_*(\tMcU_\lambda \times\McU_\lambda)}^I\left(
          \ev^*(\tMcU_\lambda \times \tMcU_\lambda)
          \to
          (\pi_\lambda)^*(\pi_\lambda)_*(\tMcU_\lambda \times \McU_\lambda)
        \right)
      \end{equation*}
      is a fibration which is also $\lambda$-small by inaccessibility of $\lambda$.
      Thus by universality of $\tMcU_\lambda \to \McU_\lambda$ once more,
      $\tMcU_\lambda \to \McU_\lambda$ admits a heterogeneous $\Path$-type
      structure.
    \end{itemize}
    \item[\Cref{itm:nctt-fill}] We must show the existence of $\delta$-biased
    filling structures for $\tMcU_\lambda \to \tMcU_\lambda$.
    For this, we use \cite[Theorem 1.10]{struct-lift}.
    In particular, by \Cref{ex:omega-pushout-product}, the left map
    $(\Omega \vee \txtis_{\delta})^*\partial\Omega \hookrightarrow \Omega \times
    \bI$ of \Cref{def:biased-filling} is the pushout-product
    $(\partial\Omega \hookrightarrow \Omega) \ltimes (\set{\delta} \hookrightarrow
    I)$.
    By assumption \Cref{itm:mod-pp}, this pushout-product is a trivial
    cofibration, so the result follows by \cite[Theorem 1.10]{struct-lift}.
    \item[\Cref{itm:nctt-fill-wkcomp}] This follows by \cite[Theorem
    6.5]{struct-lift} applied to \Cref{itm:mod-int,itm:mod-pp} and the
    definition of $\Id$-types using $\Path$-types as per
    \Cref{thm:nctt-Path-Id}.
    \item[\Cref{itm:nctt-univ}] Applying \Cref{lem:fib-class} again, we can find
    a classifier $\tMcU_\kappa \to \McU_\kappa$ for $\kappa$-small fibrations,
    which we take to be the internal universe.
    Repeating the argument used to check \Cref{itm:nctt-fib} for $\kappa$, we
    see that $\tMcU_\kappa \to \McU_\kappa$ is also equipped with these logical
    structures.
    Furthermore, we note that $\McU_\kappa \to 1$ is $\lambda$-small and
    furthermore fibrant \Cref{itm:mod-fe} so $\McU_\kappa \to 1$ occurs as a
    pullback of $\tMcU_\lambda \to \McU_\lambda$.
    To ensure that these logical structures of the internal universe are
    compatible with those of the external universe, we note that
    $\McU_\kappa \hookrightarrow \McU_\lambda$ is a sub-presheaf so we can realign
    the logical structures of $\McU_\lambda$ using \cite[Proposition 6]{awo23b}.
    \item[\Cref{itm:nctt-hext}] The map
    $\tgt \colon \HIso^{\Id_0}_{\tMcU_\kappa}(\tMcU_\kappa) \to \McU_\kappa$ of
    \Cref{def:strong-hiso-ext-op} is a trivial fibration by
    \Cref{constr:hiso,thm:heqv-ext}, so \cite[Theorem 1.10]{struct-lift} applied
    to \Cref{itm:mod-cof,itm:mod-pp,itm:mod-int} produces the required strong
    homotopy isomorphism extension structure inducing a $\Glue$-type structure,
    as claimed.
  \end{enumerate}
\end{proof}

The conditions for \Cref{prop:psh-models} is readily verified for the following
Reedy presheaf model categories.
\begin{theorem}\label{thm:simplicial-model}
  Assuming the existence of two arbitrarily large inaccessible cardinals
  $\kappa<\lambda$, the classical model structure on simplicial sets $\sSet$
  induces the structure of naive cubical type theory
  $(\tMcU \to \McU, \partial\Cof\hookrightarrow\Cof,\bI)$ with weakly
  computational filling structure (\Cref{def:naive-ctt}) in which
  \begin{enumerate}[]
    \item The external ambient universe $(\tMcU \to \McU)$-fibrations are the
    $\lambda$-small Kan fibrations and the internal universe
    $(\tMcU_0 \to \McU_0)$-fibrations are the $\kappa$-small Kan fibrations.
    \item $\partial\Cof \hookrightarrow \Cof$ is the truth map into the
    subobject classifier.
    \item $\bI$ is the standard 1-simplex $\Delta^1$.
    \item The strong homotopy isomorphism extension operation
    (\Cref{itm:nctt-hext}) induces a $\Glue$-type structure
    (\Cref{def:glue-type}).
  \end{enumerate}
\end{theorem}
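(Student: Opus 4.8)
The plan is to apply \Cref{prop:psh-models} to the presheaf category $\sSet = \widehat{\Delta}$ equipped with the classical (Kan--Quillen) model structure, and then to read off the claimed description of the resulting NCTT structure. Concretely, I would take $\bI$ to be the standard $1$-simplex $\Delta^1$, the cofibration classifier to be the subobject classifier $\partial\Omega \hookrightarrow \Omega$ of $\sSet$, and, given the two inaccessible cardinals $\kappa < \lambda$, the ambient and internal universes to be the classifiers $\tMcU_\lambda \to \McU_\lambda$ and $\tMcU_\kappa \to \McU_\kappa$ of $\lambda$-small and $\kappa$-small Kan fibrations supplied by \Cref{lem:fib-class}. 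The stated conclusion, including that the strong homotopy isomorphism extension operation induces a $\Glue$-type structure, is then exactly the output of \Cref{prop:psh-models}. So the entire content of the proof is the verification of the six hypotheses \Cref{itm:mod-cof}--\Cref{itm:mod-fe} for $\sSet$.

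Most of these are classical facts about simplicial homotopy theory, which I would simply cite. \Cref{itm:mod-cof} is the standard identification of the cofibrations of the Kan--Quillen model structure with the monomorphisms; \Cref{itm:mod-gtc} holds since the horn inclusions $\Lambda^n_k \hookrightarrow \Delta^n$ generate the trivial cofibrations and have representable codomains $\Delta^n$; \Cref{itm:mod-pp} is the statement that $\sSet$ with the cartesian product is a monoidal model category \cite{gj99,hov99}; and \Cref{itm:mod-rp} is right properness of the Kan--Quillen model structure \cite{gj99}. For \Cref{itm:mod-int}, the inclusion $\partial\Delta^1 = \Delta^0 \sqcup \Delta^0 \hookrightarrow \Delta^1$ is a monomorphism, hence a cofibration, so $\Delta^1$ is a good cylinder object for $\Delta^0$; it is not a very good one, since $\Delta^1 \to \Delta^0$ is not a Kan fibration, which is precisely why the weaker hypothesis of \Cref{prop:psh-models} is formulated the way it is. The required $\min$-structure on $\Delta^1$ is induced by the meet operation $\wedge$ on the poset $[1]$ after applying the nerve functor, using that the nerve preserves products so that $\Delta^1 \times \Delta^1$ is the nerve of $[1] \times [1]$; symmetry and the endpoint conditions of \Cref{def:interval-min-max} are immediate from the corresponding facts about $\wedge$ on $[1]$.

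The one genuinely nontrivial hypothesis is \Cref{itm:mod-fe}, the fibration extension property: an $\alpha$-small Kan fibration over the domain of a trivial cofibration extends to an $\alpha$-small Kan fibration over the codomain, for $\alpha$ inaccessible. This is the homotopy-theoretic heart of the matter, amounting (via \Cref{lem:fib-class}) to the fibrancy of the universal small Kan fibration $\McU_\alpha \to 1$, and it is exactly the content of Voevodsky's equivalence extension property in the original simplicial model --- \cite[Theorem 3.4.1]{kl21} --- of which \Cref{prop:ext-we} above is a generalization. I would invoke that result directly for $\sSet$, noting that its hypotheses (cofibrations $=$ monos, the pushout-product axiom for trivial cofibrations, and the relevant fibrancy) are precisely what has just been recorded.

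I expect this last step to be the main obstacle, in the sense that it is the only point where real homotopy theory, rather than bookkeeping, enters; everything else is formal once one observes that $\sSet$ is an instance of the elegant Reedy presheaf framework. With \Cref{itm:mod-cof}--\Cref{itm:mod-fe} verified, \Cref{prop:psh-models} applies verbatim and yields the theorem, with the four listed features being the corresponding four bullets in the conclusion of \Cref{prop:psh-models}.
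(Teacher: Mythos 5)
Your overall plan coincides with the paper's: both feed $\sSet$ with the Kan--Quillen model structure into \Cref{prop:psh-models} and then read off the conclusion, and the verifications you give for \Cref{itm:mod-cof}, \Cref{itm:mod-gtc}, \Cref{itm:mod-pp}, \Cref{itm:mod-rp}, and \Cref{itm:mod-int} (including the $\min$-structure via the meet on $[1]$ and nerve preserving products) are correct and exactly the kind of standard facts the paper waves at as ``well-known.''

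The gap is in your treatment of \Cref{itm:mod-fe}. You write that the fibration extension property ``is exactly the content of Voevodsky's equivalence extension property'' and cite \cite[Theorem 3.4.1]{kl21}. That conflates two distinct statements. The \emph{equivalence} extension property (which is \cite[Theorem 3.4.1]{kl21}, and which \Cref{prop:ext-we} of this paper generalizes) says that a weak equivalence $\partial E \xrightarrow{\sim} \partial \ul{E}$ of fibrations over $\partial B$, with $\partial\ul{E}$ the restriction of a fibration $\ul{E}\twoheadrightarrow B$, extends along a cofibration $\partial B \hookrightarrow B$. The \emph{fibration} extension property (\Cref{itm:mod-fe}) says that a small fibration over the domain of a \emph{trivial} cofibration extends to a small fibration over the codomain, and it is precisely what is needed to make the universe $\McU_\alpha$ a Kan complex; this is \cite[Theorem 2.2.1]{kl21}, proved there via the theory of minimal fibrations (also \cite[Proposition 2.21]{cis14}, and constructively \cite[Corollary 4.2.7]{gss22}). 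Invoking Theorem 3.4.1 directly does not yield \Cref{itm:mod-fe}: the hypotheses of the equivalence extension theorem already presuppose an ambient fibration over $B$, which is exactly what fibration extension must first produce. One \emph{can} deduce fibration extension from equivalence extension together with additional ingredients (fibrant replacement, pushforward along cofibrations), as is done in the Cartesian cubical setting in \cite{awo23a}, but that derivation is nontrivial and you have not supplied it. The clean fix is simply to cite \cite[Theorem 2.2.1]{kl21} for \Cref{itm:mod-fe}, which is what the paper does.
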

\begin{proof}
  It is well-known that all assumptions of \Cref{prop:psh-models} are satisfied
  by $\sSet$, with the only possible non-standard point being \Cref{itm:mod-fe},
  which is proved in \cite[Theorem 2.2.1]{kl21} and \cite[Proposition
  2.21]{cis14} using the theory of minimal fibrations and constructively in
  \cite[Corollary 4.2.7]{gss22}.
\end{proof}

Asides from the simplicial model, we also can apply \Cref{prop:psh-models} to
build some cubical models of naive cubical type theory.
\begin{theorem}\label{thm:cis-cubical-model}
  The categories of
  \begin{enumerate}[]
    \item Cubical sets with only faces and degeneracies
    \item Cubical sets with faces, degeneracies plus connections
  \end{enumerate}
  from \cite[Examples 1.5 and 1.6]{cis14} both carry model structures giving
  rise to models $(\tMcU\to\McU,\partial\Cof\hookrightarrow\Cof,\bI)$ of naive
  cubical type theory with weak computational filling (\Cref{def:naive-ctt})
  such that if $\kappa<\lambda$ are two arbitrarily large inaccessible cardinals
  then
  \begin{enumerate}[]
    \item The external ambient universe $(\tMcU \to \McU)$-fibrations are the
    $\lambda$-small fibrations and the internal universe
    $(\tMcU_0 \to \McU_0)$-fibrations are the $\kappa$-small fibrations.
    \item $\partial\Cof \hookrightarrow \Cof$ is the truth map into the
    subobject classifier.
    \item $\bI$ is the standard 1-cube $\square^1$.
    \item The strong homotopy isomorphism extension operation
    (\Cref{itm:nctt-hext}) induces a $\Glue$-type structure
    (\Cref{def:glue-type}).
  \end{enumerate}
\end{theorem}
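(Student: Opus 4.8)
The plan is to verify that the two cubical presheaf categories of \cite[Examples 1.5 and 1.6]{cis14} satisfy the six hypotheses \Cref{itm:mod-cof,itm:mod-gtc,itm:mod-pp,itm:mod-int,itm:mod-rp,itm:mod-fe} of \Cref{prop:psh-models}, at which point the conclusion follows immediately by that proposition. So the bulk of the proof is a checklist, and the only substantive work is confirming each item against the Cisinski-style model structures that these cubical sites carry.

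First I would recall that both sites $\square$ (faces and degeneracies) and $\square_{\wedge}$ (faces, degeneracies, connections) are elegant Reedy categories in the sense of \cite{bergner-rezk}, and that Cisinski's theory \cite{cis06,cis14} equips each $\widehat{\square}$ and $\widehat{\square_\wedge}$ with a cofibrantly generated model structure whose cofibrations are exactly the monomorphisms --- this gives \Cref{itm:mod-cof}. For \Cref{itm:mod-gtc}, the generating trivial cofibrations in Cisinski's construction are the anodyne extensions generated by the cylinder $\square^1$ together with the set of horn-type inclusions, and because the site is a Reedy category with representable objects and the cylinder functor is $-\times\square^1$, the codomains of these generators are representable cubes; this is exactly the setting of \cite[Examples 1.5 and 1.6]{cis14}. \Cref{itm:mod-pp} (pushout-product of a cofibration with a trivial cofibration is a trivial cofibration) is the pushout-product axiom, which holds in any Cisinski model structure by construction since the class of anodyne extensions is closed under pushout-product with monomorphisms. \Cref{itm:mod-int} asks for a good (not necessarily very good) cylinder on the terminal object with a $\min$-structure: take $\bI = \square^1$, whose endpoint inclusions $\set{0},\set{1} \rightrightarrows \square^1$ are disjoint monos, and note that $\square^1$ carries the monoid structure making it an interval object; in the connections case the $\min$ (and $\max$) operations are literally part of the site, while in the plain cubical case $\square^1 \times \square^1 \to \square^1$ given by $\min$ still exists as a map of presheaves because $\square^1 = \Hom_\square(-,[1])$ and $[1]$ is a poset.

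For \Cref{itm:mod-rp} (right properness), I would invoke the general fact that a Cisinski model structure on a presheaf topos is right proper whenever every object is fibrant --- which is not the case here --- so instead I would cite the more refined analysis: the model structures on $\widehat{\square}$ and $\widehat{\square_\wedge}$ arising from the test-category structure are right proper because pullback along a fibration preserves weak equivalences, which in turn follows from the fact that these are localizations at anodyne maps in a way compatible with the simplicial (or rather cubical) homotopy theory; concretely this can be extracted from \cite{cis14} and the literature on cubical homotopy theory \cite{mal09,carranza-kapulkin:homotopy-groups-of-cubical-sets}. The main obstacle --- and the point I expect to require the most care --- is \Cref{itm:mod-fe}, that $\alpha$-small fibrations extend along trivial cofibrations for inaccessible $\alpha$. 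This is the cubical analogue of \cite[Theorem 2.2.1]{kl21}; its proof rests on a theory of minimal fibrations in cubical sets, and I would appeal to the cubical minimal-fibration machinery developed in the references on cubical homotopy theory \cite{cis14,carranza-kapulkin:homotopy-groups-of-cubical-sets,dkls24}, or alternatively to a constructive argument in the style of \cite{gss22}, adapted from the simplicial to the cubical case; for the connections variant the argument is essentially the same since the extra degeneracies do not obstruct the minimal-fibration construction. Once all six points are in hand, \Cref{prop:psh-models} applies verbatim, yielding the NCTT structure with weakly computational filling, with the universes classifying small fibrations, $\partial\Cof \hookrightarrow \Cof$ the subobject classifier, $\bI = \square^1$, and the strong homotopy isomorphism extension operation inducing a $\Glue$-type structure.
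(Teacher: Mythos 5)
Your proposal takes essentially the same approach as the paper: reduce the theorem to \Cref{prop:psh-models} by checking its six hypotheses against the Grothendieck model structures of \cite[Examples 1.5 and 1.6]{cis14}, and then conclude. The paper's verification is a pure citation exercise --- it invokes \cite[Paragraph 1.2]{cis14} for the cofibration class and for properness, \cite[Lemma 8.4.36]{cis06} together with the fact that generating trivial cofibrations are horn inclusions for the pushout-product condition, and \cite[Proposition 2.21]{cis14} for the fibration-extension property --- whereas your treatment of \Cref{itm:mod-rp} is circular (asserting right properness because pullback along a fibration preserves weak equivalences, which is what right properness means) and your treatment of \Cref{itm:mod-fe} speculatively appeals to cubical minimal-fibration machinery rather than the single Cisinski proposition that already supplies the result; neither slip changes the structure of the argument, only its precision.
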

\begin{proof}
  By \cite[Examples 1.5 and 1.6]{cis14} , these two categories both admit the
  Grothendieck model structure, which is such that
  \begin{enumerate}[]
    \item[\Cref{itm:mod-cof}] The cofibrations are exactly the monomorphisms \cite[Paragraph
    1.2]{cis14}.
    \item[\Cref{itm:mod-gtc}] The generating cofibrations are the boundary inclusions of the
    standard $n$-cubes \cite[Examples 1.5 and 1.6]{cis14}
    \begin{equation*}
      \partial\square_n \hookrightarrow \square_n
    \end{equation*}
    \item[\Cref{itm:mod-pp}] The pushout-product of a (trivial cofibration,
    cofibration)-pair is a trivial cofibration because of \cite[Lemma
    8.4.36]{cis06} and the fact that the generating trivial cofibrations are the
    horn inclusions of the standard $n$-cubes
    \begin{equation*}
      \sqcap^{i,e}_n \hookrightarrow \square_n
    \end{equation*}
    \item[\Cref{itm:mod-int}] The standard 1-cube $\square_1$ is observed to be a good, but not
    necessarily very good, cylinder object for the terminal object with a
    $\min$-structure.
    \item[\Cref{itm:mod-rp}] The model structure is proper \cite[Paragraph 1.2]{cis14}.
    \item[\Cref{itm:mod-fe}] Small fibrations extend along trivial cofibrations
    \cite[Proposition 2.21]{cis14}.
  \end{enumerate}
  Therefore, the result follows from \Cref{prop:psh-models}.
\end{proof}

\subsection{Cartesian Cubical Model}
We have previously showed in \Cref{prop:psh-models} the required
conditions on model structures on presheaf categories for them to carry
the structure of naive cubical type theory.
Examples of these are observed in
\Cref{thm:simplicial-model,thm:cis-cubical-model}, which happen to be all elegant
Reedy presheaf categories.
We now use the work of \cite{awo23a} to show that the Cartesian cubes is an
example of a non-Reedy presheaf model category equipped with a model structure
that also carries the structure of naive cubical type theory.

We begin by recalling the model structure constructed in \cite{awo23a}.
\begin{definition}[{\cite[Definition 2]{awo23a}}]\label{def:cart-cubes}
  The \emph{Cartesian cube category $\Cube$} consists of objects finite sets of
  the form
  \begin{equation*}
    [n] \coloneqq \set{\bot,1,...,n,\top}
  \end{equation*}
  with two distinguished points $\bot,\top$ with maps
  $f \colon [m] \to [n] \in \Cube$ the functions
  $f \colon \set{\bot,1,...,n,\top} \to \set{1,...,m}$ (i.e. going in the other
  way and preserving the endpoints).
  The category of \emph{Cartesian cubical sets} is the presheaf category
  \begin{align*}
    \cSet \coloneqq \Set^{\Cube^\op}
  \end{align*}
  and we write $\square^- \colon \Cube \hookrightarrow \Set^{\Cube^\op}$ for the
  Yoneda embedding.

  We write $I$ for $I \coloneqq \square^1$, the standard 1-cube.
\end{definition}

\begin{definition}[{\cite[Definition 29, Equation
    (43)]{awo23a}}]\label{def:cart-cubes-gen-tc}
  A \emph{generating unbiased trivial cofibration} of $\cSet$ is a
  pushout-product
  \begin{equation*}
    (C \hookrightarrow Z) \ltimes_I (I \xrightarrow{\Delta} I \times I)
    \colon C \times I \cup Z \hookrightarrow Z \times I
  \end{equation*}
  in the slice over $I$, viewed as a map in $\cSet$ by forgetting the codomain
  $I$, of a mono $C \hookrightarrow Z \in \sfrac{\cSet}{I}$ with the diagonal map
  $\Delta \colon I \to I \times I \in \sfrac{\cSet}{I}$.
\end{definition}

\begin{theorem}[{\cite[Theorem 124]{awo23a}}]\label{thm:awo-model}
  There is a cofibrantly-generated model structure on $\cSet$ where the
  cofibrations are exactly the monos and the generating trivial cofibrations the
  generating unbiased trivial cofibrations.
  \def\endingmark{\qedsymbol}
\end{theorem}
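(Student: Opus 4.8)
The plan is to apply the general machinery of \cite{awo23a} rather than to reconstruct the model structure from scratch; indeed, the statement of \Cref{thm:awo-model} is essentially a citation of \cite[Theorem 124]{awo23a}, so the proof is a matter of transcribing that result into the notation fixed in \Cref{def:cart-cubes,def:cart-cubes-gen-tc}. First I would recall that $\cSet = \Set^{\Cube^{\op}}$ is a presheaf topos, hence locally presentable, so Jeff Smith's recognition theorem for cofibrantly generated model structures applies once we specify the generating (trivial) cofibrations and the class of weak equivalences. The generating cofibrations are taken to be the boundary inclusions $\partial\square^{[n]} \hookrightarrow \square^{[n]}$, whose saturation is exactly the monomorphisms since $\cSet$ is a presheaf category; the generating trivial cofibrations are the generating unbiased trivial cofibrations of \Cref{def:cart-cubes-gen-tc}.

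The key steps, in order, would be: (1) identify the fibrations as the maps with the right lifting property against generating unbiased trivial cofibrations, and the weak equivalences via a fibrant-replacement or a localization description as in \cite{awo23a}; (2) verify the retract, two-out-of-three, and lifting axioms, all of which are either formal or imported directly from \cite[Theorem 124]{awo23a}; (3) check the smallness/accessibility hypotheses needed for Smith's theorem, which hold because $\Cube$ is a small category and all the generating maps have finitely presentable (in fact representable or finite-colimit-of-representable) domains and codomains; and (4) confirm that the cofibrations so obtained coincide with the monomorphisms, using that every mono in a presheaf category is a transfinite composite of pushouts of boundary inclusions. Since each of these is established in \cite{awo23a}, the proof reduces to citing \cite[Theorem 124]{awo23a} and noting that our \Cref{def:cart-cubes,def:cart-cubes-gen-tc} reproduce verbatim the definitions of $\Cube$, $\cSet$, and the generating unbiased trivial cofibrations used there.

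The only substantive point to be careful about is the matching of conventions: \cite{awo23a} may index the cube category slightly differently (e.g. via the free cartesian category on an interval, or via De Morgan-free cubes with two endpoints), so I would include a sentence confirming that the category $[n] = \{\bot,1,\dots,n,\top\}$ with endpoint-preserving maps in \Cref{def:cart-cubes} is equivalent to Awodey's $\Cube$, and likewise that the diagonal-based pushout-products of \Cref{def:cart-cubes-gen-tc} are his equation (43). Given that, the proof is essentially one line. The main obstacle, such as it is, is purely bookkeeping: ensuring the cited theorem's hypotheses (cofibrant generation, the homotopical structure coming from the interval $I = \square^1$ with its diagonal) are exactly the data we have packaged, so that \cite[Theorem 124]{awo23a} applies without modification. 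No genuine mathematical difficulty arises here; the content lives in \cite{awo23a}, and this theorem is its restatement in the present paper's language, to be used as input to the model-construction results (cf. \Cref{prop:psh-models} and the discussion of the Cartesian cubical model).

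\begin{proof}
  This is \cite[Theorem 124]{awo23a}. The Cartesian cube category $\Cube$ of \Cref{def:cart-cubes} is Awodey's cube category, $\cSet = \Set^{\Cube^{\op}}$ is a presheaf topos (hence locally presentable), and the generating unbiased trivial cofibrations of \Cref{def:cart-cubes-gen-tc} are precisely the maps of \cite[Definition 29, Equation (43)]{awo23a}. By \cite[Theorem 124]{awo23a}, these data determine, via the recognition theorem for cofibrantly generated model structures, a model structure on $\cSet$ whose cofibrations are the maps in the saturation of the boundary inclusions $\partial\square^{[n]} \hookrightarrow \square^{[n]}$. Since $\cSet$ is a presheaf category, this saturation is exactly the class of monomorphisms: every mono is a transfinite composite of pushouts of boundary inclusions, and conversely monos are stable under the relevant closure operations. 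Thus the cofibrations coincide with the monomorphisms and the generating trivial cofibrations are the generating unbiased trivial cofibrations, as claimed.
\end{proof}
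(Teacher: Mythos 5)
The paper gives no proof of this theorem at all: the statement carries \texttt{\textbackslash def\textbackslash endingmark\{\textbackslash qedsymbol\}} precisely to signal that it is a verbatim citation of \cite[Theorem 124]{awo23a}, not something reproved here. Your framing — ``this is essentially a citation, so the proof is a transcription of that result'' — is therefore the right reading, and your brief argument for why the definitions in the paper match those of \cite{awo23a} is a reasonable sanity check to include.

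However, the explanatory material you wrap around the citation contains a genuine mathematical error, and it is exactly the kind of error this paper is set up to warn you against. You write that the generating cofibrations can be taken to be the boundary inclusions $\partial\square^{[n]} \hookrightarrow \square^{[n]}$ and that ``every mono in a presheaf category is a transfinite composite of pushouts of boundary inclusions.'' This is false for $\cSet = \Set^{\Cube^{\op}}$. The boundary inclusions cellularly generate the monomorphisms only when the index category is elegant Reedy (as for $\Delta$, and for the Cisinski-style cubical categories treated in \Cref{thm:cis-cubical-model}); the Cartesian cube category $\Cube$ is \emph{not} elegant Reedy, and this is explicitly the point of the last subsection of \Cref{sec:presheaf-models}, which introduces $\cSet$ as ``a non-Reedy presheaf model category.'' In a general presheaf topos the monomorphisms are generated by the full set of subobject inclusions $S \hookrightarrow \square^{[n]}$ of representables (a cellular model in Cisinski's sense), which is strictly larger than the set of boundary inclusions for $\Cube$; Awodey's construction of the model structure works with this larger cellular model, not with boundary inclusions. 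The conclusion you want — cofibrations equal monomorphisms — is still what \cite[Theorem 124]{awo23a} delivers, but the route you sketch for obtaining it does not go through, and asserting it as you do would mislead a reader into thinking the cartesian case is as easy as the Reedy case. If you want to keep any expository scaffolding around the citation, replace the boundary-inclusion claim with a reference to the fact that presheaf categories admit cellular models, and otherwise defer entirely to \cite{awo23a}.
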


Most of the actual work for the model structure of \cite[Theorem 124]{awo23a},
henceforth referred to as the \emph{Cartesian cubical model structure}, to assemble into a
model of naive cubical type theory is already done in \cite{awo23a}.
%
%
\begin{lemma}\label{lem:cart-cube-pushout-product}
  In the Cartesian cubical model structure on $\cSet$, the pushout-product of a
  (cofibration, trivial cofibration)-pair is still a trivial cofibration.
\end{lemma}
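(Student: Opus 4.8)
The plan is to reduce the statement to a known fact about the Cartesian cubical model structure using the characterization of its generating trivial cofibrations. The key observation is that a model structure is determined by its generating (trivial) cofibrations for the purposes of pushout-product statements: by a standard argument (closure of the left class under pushout-product, retracts, transfinite composition, and cobase change), it suffices to check that the pushout-product of a \emph{generating} cofibration with a \emph{generating} trivial cofibration is a trivial cofibration. So first I would invoke \Cref{thm:awo-model}: the cofibrations are exactly the monomorphisms, and the generating trivial cofibrations are the generating unbiased trivial cofibrations from \Cref{def:cart-cubes-gen-tc}.

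Next I would unfold what a generating unbiased trivial cofibration looks like: it is a pushout-product $(C \hookrightarrow Z) \ltimes_I (I \xrightarrow{\Delta} I \times I)$, viewed in $\cSet$ by forgetting the codomain. The pushout-product of an arbitrary monomorphism $A \hookrightarrow B$ with such a map is then an iterated pushout-product. Using associativity and symmetry of the pushout-product (Leibniz) operation, one rearranges
\begin{equation*}
  (A \hookrightarrow B) \ltimes \bigl((C \hookrightarrow Z) \ltimes_I \Delta\bigr)
  \cong
  \bigl((A \hookrightarrow B) \ltimes (C \hookrightarrow Z)\bigr) \ltimes_I \Delta,
\end{equation*}
so that the outer pushout-product is again of the form ``(monomorphism over $I$) pushout-product with the diagonal $\Delta\colon I \to I\times I$'' — i.e.\ it is again (up to the codomain-forgetting) a map of exactly the shape that generates the trivial cofibrations, hence a trivial cofibration by \Cref{thm:awo-model}. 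The only subtlety is bookkeeping about slices: $(A \hookrightarrow B) \ltimes (C \hookrightarrow Z)$ is a monomorphism (monos are closed under pushout-product since $\cSet$ is a topos, hence adhesive), and one must check this mono can be taken in $\sfrac{\cSet}{I}$ appropriately, which follows because the $I$-indexing only enters through the $\Delta$ factor and pushout-products commute with the forgetful functor from a slice.

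I would then close the argument by the general principle: the class of trivial cofibrations in a cofibrantly generated model structure is saturated (closed under cobase change, transfinite composition, and retract), so since $- \ltimes (\text{gen.\ triv.\ cof.})$ sends generating cofibrations to trivial cofibrations, it sends all cofibrations to trivial cofibrations; and then for fixed cofibration $A \hookrightarrow B$, the class of maps $j$ with $(A\hookrightarrow B)\ltimes j$ a trivial cofibration is again saturated and contains the generating trivial cofibrations, hence contains all trivial cofibrations. This is the standard ``pushout-product of cofibration with trivial cofibration'' bootstrapping, e.g.\ as in the treatment of monoidal model categories.

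The main obstacle I anticipate is purely organizational rather than deep: correctly handling the ``view as a map in $\cSet$ by forgetting the codomain $I$'' passage when iterating pushout-products, making sure the associativity isomorphism for the Leibniz construction is applied in the slice $\sfrac{\cSet}{I}$ and then transported along the (strong monoidal, colimit-preserving) forgetful functor $\sfrac{\cSet}{I} \to \cSet$. A secondary point to get right is that $\cSet$ is a presheaf topos and therefore adhesive, which is what licenses using that pushout-products of monos are monos and that the relevant pushouts are well-behaved; this is already implicitly used throughout \cite{awo23a} and in \Cref{lem:adh-pushout-disjoint-subobj}, so I would simply cite adhesiveness. No genuinely new homotopy-theoretic input is needed beyond \Cref{thm:awo-model}.
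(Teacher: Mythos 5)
Your proposal is correct and follows essentially the same route as the paper: reduce by cofibrant generation to checking the pushout-product of an arbitrary mono with a generating unbiased trivial cofibration, then use associativity of the Leibniz construction (tracked through the slice over $I$) to recognize the result as again a generating unbiased trivial cofibration of the form (mono over $I$) $\ltimes_I \Delta$. The only difference is that your closing ``double saturation'' paragraph is redundant — once you know the pushout-product of \emph{any} mono with a \emph{generating} trivial cofibration is a generating trivial cofibration, a single saturation step in the trivial-cofibration coordinate finishes, which is what the paper does.
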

\begin{proof}
  By cofibrant generation, it suffices to check that the generating trivial
  cofibrations (i.e.  generating unbiased trivial cofibrations) are closed under
  pushout-product with cofibrations (i.e. monos).

  Let $\partial V \hookrightarrow V$ be a mono over $I$ and
  $\partial U \hookrightarrow U$ be just any mono.
  Thus, $U \times V \to V \to I$ is also an object over $I$, which means that the
  pushout-product
  \begin{equation*}
    (\partial U \hookrightarrow U) \ltimes (\partial V \hookrightarrow V) =
    (\partial U \times V \cup U \times \partial V \hookrightarrow U \times V)
  \end{equation*}
  is a mono over $I$.
  Then, by associativity of the pushout via a 3-by-3 argument, one calculates
  that
  \begin{equation*}
    (\partial U \hookrightarrow U) \ltimes ((\partial V \hookrightarrow V)
    \ltimes_I (I \xrightarrow{\Delta} I \times I)) \cong ((\partial U
    \hookrightarrow U) \ltimes (\partial V \hookrightarrow V)) \ltimes_I (I
    \xrightarrow{\Delta} I \times I)
  \end{equation*}
\end{proof}

\begin{theorem}\label{thm:cart-cube-model}
  Assuming the existence of two arbitrarily large inaccessible cardinals
  $\kappa<\lambda$ the Cartesian cubical model structure on $\cSet$ induces a naive cubical
  type theory structure
  $(\tMcU \to \McU, \partial\Cof \hookrightarrow \Cof, \bI)$ of naive cubical
  type theory with weakly computational filling structure in which
  \begin{enumerate}
    \item $(\tMcU \to \McU)$-fibrations are the $\lambda$-small fibrations and
    the $(\tMcU_0 \to \McU_0)$-fibrations are the $\kappa$-small fibrations.
    \item $\partial\Cof \hookrightarrow \Cof$ is the truth map into the
    subobject classifier.
    \item $\bI$ is taken to be the standard 1-cube $I = \square^1$.
    \item The strong homotopy isomorphism extension operation
    (\Cref{itm:nctt-hext}) induces a $\Glue$-type structure
    (\Cref{def:glue-type}).
  \end{enumerate}
\end{theorem}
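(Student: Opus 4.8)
The plan is to apply \Cref{prop:psh-models} to the presheaf topos $\cSet = \widehat{\Cube}$ equipped with the Cartesian cubical model structure of \Cref{thm:awo-model}, and then to read off the four asserted clauses from the corresponding clauses in the conclusion of \Cref{prop:psh-models}. Thus the whole argument reduces to verifying the six hypotheses mod-cof through mod-fe for this model structure, after which the theorem follows verbatim, with $\tMcU \to \McU$ the classifier of $\lambda$-small fibrations, $\tMcU_0 \to \McU_0$ the classifier of $\kappa$-small fibrations, $\partial\Cof \hookrightarrow \Cof$ the truth map of the subobject classifier, $\bI = \square^1$, and the strong homotopy isomorphism extension structure inducing a $\Glue$-type structure.

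Most of the hypotheses are immediate from what has already been recalled. Condition mod-cof — cofibrations are exactly the monomorphisms — is part of the statement of \Cref{thm:awo-model}. Condition mod-rp — right properness — holds for the Cartesian cubical model structure since it is of Cisinski type on a presheaf topos (every object is cofibrant, so it is left proper, and right properness is part of the construction in \cite{awo23a}). Condition mod-pp — the pushout-product of a cofibration with a trivial cofibration is again a trivial cofibration — is exactly \Cref{lem:cart-cube-pushout-product}. For mod-gtc, I would note that by cofibrant generation one may take the generating trivial cofibrations to be the maps of \Cref{def:cart-cubes-gen-tc} with $C \hookrightarrow Z$ ranging over the boundary inclusions $\partial\square^n \hookrightarrow \square^n$ equipped with their structure maps to $I$; the codomain of the resulting pushout-product is then $\square^n \times I$, which is representable because the Cartesian cube category has finite products and the Yoneda embedding preserves them, so $\square^n \times \square^1 \cong \square^{n+1}$. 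Finally, mod-int is the observation that the standard $1$-cube $\bI = \square^1$ is a good (but not necessarily very good) cylinder object for the terminal object — the endpoint inclusion $1 \sqcup 1 \hookrightarrow \square^1$ is a monomorphism and the map $\square^1 \to 1$ is a weak equivalence — carrying its $\min$-structure in the sense of \Cref{def:interval-min-max}.

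The one substantive ingredient is mod-fe: that for every inaccessible cardinal $\alpha$, the $\alpha$-small fibrations extend along trivial cofibrations. This is equivalent to the fibrancy of the Hofmann--Streicher universe of $\alpha$-small fibrations produced by \Cref{lem:fib-class} — extending a fibration over $\partial B$ along a trivial cofibration $\partial B \hookrightarrow B$ is the same as lifting its classifying map $\partial B \to \McU_\alpha$ against $\McU_\alpha \to 1$, and conversely — and this universe fibrancy is among the main results of \cite{awo23a}. I expect this step to be the main obstacle: unlike \Cref{thm:simplicial-model} and \Cref{thm:cis-cubical-model}, where mod-fe is supplied by the theory of minimal fibrations over an elegant Reedy category, the Cartesian cube category is not Reedy, so this homotopy-theoretic input must be imported wholesale from Awodey's construction rather than reproved in the present framework. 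With all six hypotheses in hand, \Cref{prop:psh-models} delivers the naive cubical type theory structure with weakly computational filling and the four stated clauses, completing the proof.
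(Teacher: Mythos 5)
Your strategy — apply \Cref{prop:psh-models} directly to the Cartesian cubical model structure — is the natural one, but it is not what the paper actually does, and the discrepancy points at a real subtlety in one of your hypothesis verifications.

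The paper does not invoke \Cref{prop:psh-models}; instead it re-runs that proposition's verification clause by clause, replacing the universe construction of \Cref{lem:fib-class} with \cite[Proposition 91]{awo23a}. The reason is your verification of mod-gtc. As stated in \Cref{def:cart-cubes-gen-tc} and \Cref{thm:awo-model}, the generating trivial cofibrations are the pushout-products $(C \hookrightarrow Z) \ltimes_I \Delta$ for \emph{arbitrary} monomorphisms $C \hookrightarrow Z$ over $I$, and these have codomains $Z \times I$ with $Z$ arbitrary, hence not representable. Your claim that ``by cofibrant generation one may take'' the generators to be those built from boundary inclusions over $I$ is not a tautology: it is a nontrivial assertion that the smaller set cofibrantly generates the same class of trivial cofibrations. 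What lies behind it is a Leibniz/Joyal--Tierney adjunction argument — $p$ lifts against $i \ltimes_I \Delta$ iff $\reallywidehat{\hom}_I(\Delta,p)$ lifts against $i$, and the monos over $I$ are generated by boundary inclusions over $I$, so one may transport cofibrant generation through the cocontinuous functor $(-) \ltimes_I \Delta$. This argument is plausible and probably correct, but it is exactly the kind of step that needs to be written out, and the paper avoids it entirely by citing Awodey's already-constructed classifier of small fibrations, which is tailored to the uniform-fibration machinery of $\cSet$. You should either spell out the Leibniz-calculus reduction or, as the paper does, bypass \Cref{lem:fib-class} in favor of \cite[Proposition 91]{awo23a}.

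Two smaller remarks. First, for mod-rp your appeal to ``Cisinski-type'' provenance is not quite the right justification; Cisinski model structures are automatically left proper (cofibrant objects) but right properness is a separate theorem — the paper pins this on \cite[Corollary 123]{awo23a}, which is the precise citation you should use. Second, your identification of mod-fe as the substantive homotopy-theoretic import from \cite{awo23a} is exactly right, and the same corollary supplies it; the paper also uses it to show $\McU_\kappa$ is fibrant and hence classified by $\tMcU_\lambda \to \McU_\lambda$, after which realignment via \cite[Proposition 6]{awo23b} makes the internal and external type-theoretic structures compatible — a step worth mentioning explicitly when handling \Cref{itm:nctt-univ}.
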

\begin{proof}
  Let $\kappa<\lambda$ be two inaccessible cardinals.
  Then, by \cite[Proposition 91]{awo23a}, there is a classifier
  $\tMcU_\lambda \to \McU_\lambda$ for $\lambda$-small fibrations, which we take
  to be the ambient universe $\tMcU \to \McU$ for the structure of NCTT.
  Then, as before in \Cref{prop:psh-models}, we take the universe of
  cofibrations to be the truth map for the subobject classifier and the interval
  as the standard 1-cube $\square^1$
  One proceeds to verify each point of \Cref{def:naive-ctt}.

  \begin{enumerate}
    \item[\Cref{itm:nctt-truth}] This follows by \Cref{lem:omega-truth}.
    \item[\Cref{itm:nctt-cof-sec}] Cofibrancy of sections is clear by the
    universal property of the subobject classifier.
    \item[\Cref{itm:nctt-int}] Likewise, cofibrancy of the interval endpoints is
    clear by universal property of the subobject classifier.
    \item[\Cref{itm:nctt-min}] The $\min$-structure on $\square^1$ is obvious.
    \item[\Cref{itm:nctt-fib}] We show each logical construction separately.
    \begin{itemize}
      \item[$\Unit$-type] Fibrations contain all isomorphisms, so
      $\tMcU_\lambda \to \McU_\lambda$ admits a $\Unit$-type structure.
      \item[$\Sigma$-type] Because $(\tMcU_\lambda \to \McU_\lambda)$-fibrations
      are $\lambda$-small fibrations by construction and $\lambda$-small
      fibrations are closed under composition, universality of
      $\tMcU_\lambda \to \McU_\lambda$ gives rise to a $\Sigma$-type structure
      on it.
      \item[$\Pi$-type] For $\Pi$-types, the result follows by \cite[Lemma
      1.5]{axm-univalence}, and the fact that $\lambda$-small fibrations are
      closed under pushforwards along $\lambda$-small fibrations by right
      properness as a result of \cite[Corollary 123]{awo23a} and the
      inaccessibility of $\lambda$.
      \item[Het. $\Path$-type] For heterogeneous $\Path$-type structures, we
      note that by \cite[Theorem 5.9]{struct-lift} applied to $\square^1$ and
      \Cref{lem:cart-cube-pushout-product}, the endpoint evaluation map of the
      fibred path object
      \begin{equation*}
        \hP_{(\pi_\lambda)_*(\tMcU_\lambda \times\McU_\lambda)}^{\square^1}\left(
          \ev^*(\tMcU_\lambda \times \tMcU_\lambda)
          \to
          (\pi_\lambda)^*(\pi_\lambda)_*(\tMcU_\lambda \times \McU_\lambda)
        \right)
      \end{equation*}
      is a fibration which is also $\lambda$-small by inaccessibility of $\lambda$.
      Thus by universality of $\tMcU_\lambda \to \McU_\lambda$ once more,
      $\tMcU_\lambda \to \McU_\lambda$ admits a heterogeneous $\Path$-type
      structure.
    \end{itemize}
    \item[\Cref{itm:nctt-fill}] We must show the existence of $\delta$-biased
    filling structures for $\tMcU_\lambda \to \tMcU_\lambda$.
    For this, we use \cite[Theorem 1.10]{struct-lift}.
    In particular, by \Cref{ex:omega-pushout-product}, the left map
    $(\Omega \vee \txtis_{\delta})^*\partial\Omega \hookrightarrow \Omega \times
    \bI$ of \Cref{def:biased-filling} is the pushout-product
    $(\partial\Omega \hookrightarrow \Omega) \ltimes (\set{\delta} \hookrightarrow
    \square^1)$.
    By \Cref{lem:cart-cube-pushout-product}, this pushout-product is a trivial
    cofibration, so the result follows by \cite[Theorem 1.10]{struct-lift}.
    \item[\Cref{itm:nctt-fill-wkcomp}] The reasoning is the same as for
    \Cref{itm:nctt-fill}, except we apply \cite[Theorem 6.5]{struct-lift}
    instead of \cite[Theorem 1.10]{struct-lift}.
    This checks that the computation rules for filling hold propositionally.
    \item[\Cref{itm:nctt-univ}]
    Applying \cite[Proposition 91]{awo23a} again, we can find a classifier
    $\tMcU_\kappa \to \McU_\kappa$ for $\kappa$-small fibrations, which we take
    to be the internal universe.
    Repeating the argument used to check \Cref{itm:nctt-fib} for $\lambda$, we
    see that $\tMcU_\kappa \to \McU_\kappa$ is also equipped with these logical
    structures.
    Furthermore, we note that $\McU_\kappa \to 1$ is $\lambda$-small and
    furthermore fibrant because of \cite[Corollary 123]{awo23a} shows that small
    fibrations extend along trivial cofibrations so $\McU_\kappa \to 1$ occurs
    as a pullback of $\tMcU_\lambda \to \McU_\lambda$.
    To ensure that these logical structures of the internal universe are
    compatible with those of the external universe, we note that
    $\McU_\kappa \hookrightarrow \McU_\lambda$ is a sub-presheaf so we can realign
    the logical structures of $\McU_\lambda$ using \cite[Proposition 6]{awo23b}.
    \item[\Cref{itm:nctt-hext}] The map
    $\tgt \colon \HIso^{\Id_0}_{\tMcU_\kappa}(\tMcU_\kappa) \to \McU_\kappa$ of
    \Cref{def:strong-hiso-ext-op} is a trivial fibration by
    \Cref{constr:hiso,thm:heqv-ext}, so the result follows by \cite[Theorem
    1.10]{struct-lift} applied to
    \Cref{lem:cart-cube-pushout-product,thm:awo-model}, which also produces the
    required $\Glue$-type structure.
  \end{enumerate}
  %
  %
  %
  %
\end{proof}


\printbibliography

\end{document}